\documentclass[11pt]{article}
\usepackage[T1]{fontenc}
\usepackage{lmodern,amsmath,amssymb,amsthm,mathtools,mathrsfs,booktabs,microtype}
\usepackage{enumitem}
\usepackage{xcolor}
\usepackage[colorlinks=true,linkcolor=blue!45!black,citecolor=blue!45!black,
  urlcolor=blue!55!black]{hyperref}
\usepackage[nameinlink,noabbrev]{cleveref}
\usepackage[margin=1in]{geometry}

\setlist{itemsep=2pt,topsep=4pt}
\allowdisplaybreaks

\newtheorem{theorem}{Theorem}[section]
\newtheorem{proposition}[theorem]{Proposition}
\newtheorem{lemma}[theorem]{Lemma}
\newtheorem{corollary}[theorem]{Corollary}
\theoremstyle{definition}
\newtheorem{definition}[theorem]{Definition}
\newtheorem{example}[theorem]{Example}
\newtheorem{remark}[theorem]{Remark}
\DeclareMathOperator{\GR}{GR}
\DeclareMathOperator{\ord}{ord}
\DeclareMathOperator{\rad}{rad}

\DeclareMathOperator{\spn}{sp}
\DeclareMathOperator{\Tor}{Tor}
\DeclareMathOperator{\Hull}{Hull}
\DeclareMathOperator{\wt}{wt_H}
\DeclareMathOperator{\remn}{rem}
\DeclareMathOperator{\Tr}{Tr}
\DeclareMathOperator{\rank}{rank}
\newcommand{\F}{\mathbb F}
\newcommand{\Z}{\mathbb Z}
\newcommand{\C}{\mathcal C}
\newcommand{\D}{\mathcal D}
\newcommand{\cc}[3]{c_{#1/#2}(#3)}

\newcommand{\bAS}{b_{\mathrm{AS}}}

\numberwithin{equation}{section}

\begin{document}
\bibliographystyle{abbrv}

\title{Support periods of polynomials and support-primitive decomposition of constacyclic codes over Galois rings}
\author{Li Zhu$^{1}$ and Hongfeng Wu$^{2}$\thanks{Corresponding author.
E-mail addresses: lizhumath@pku.edu.cn (L. Zhu),
whfmath@gmail.com (H. Wu).}
\\[0.5ex]
\small $^{1}$School of Mathematical Sciences, Guizhou Normal University, Guiyang, China
\\
\small $^{2}$College of Science, North China University of Technology, Beijing, China}

\date{}
\maketitle

\thispagestyle{plain}
\setcounter{page}{1}

\begin{abstract}
Let \(A=\GR(p^e,r)\) have residue field \(\F_q\), and let
\(\lambda\in A^\times\). We relate coefficient compression of
monic divisors of \(X^N-\lambda\) to support-primitive decomposition
of constacyclic codes. At a simple-root length, the support period
of every nonconstant monic divisor is preserved at each coefficient
precision and equals the order of the multiplicative stabilizer of
its root set. For a cyclic Frobenius-orbit polynomial, the multiple
equal-difference structure determines its support period, core degree,
and least relative degree of an unramified extension over which
all its basic-irreducible factors are binomials.

Coordinate projections give every constacyclic code over a finite
commutative ring a unique maximal interleaving into copies of one
support-primitive code. At simple-root lengths over \(A\), the
intrinsic period of a code is the greatest common divisor of the
periods of its torsion codes. The decomposition yields distance and
duality formulas and the arithmetic Singleton bound for free codes.
At every length, a nonzero code over \(A\) has the same minimum
Hamming distance as its highest torsion code, which gives a refined
bound. At repeated-root lengths,
identical torsion filtrations can occur with different intrinsic
periods, while a class of nonzero codes obtained by substitution from
simple-root generators retains the same core and minimum distance.

Trace matrices recover the coefficients of basic-irreducible cores
using unit pivots, and local Hensel corrections lift compressed
factors directly. For \(X^n-1\) with \(p\nmid n\), lifting and
output have quasi-linear bit cost in the compressed representation
when the arithmetic data and residue factorizations are given.

\medskip
\noindent\textbf{Keywords.} Galois ring; constacyclic code;
support-primitive decomposition; multiple equal-difference
representation; Hensel lifting; arithmetic Singleton bound.

\smallskip
\noindent\textbf{2020 Mathematics Subject Classification.}
11T71; 94B15; 11T06; 13F20.
\end{abstract}

\section{Introduction}\label{sec:introduction}

For a monic polynomial with nonzero constant coefficient over a
field, a substitution \(f(X)=h(X^s)\) with \(s\) prime to the
characteristic has two equivalent descriptions: its positive
support exponents are divisible by \(s\), and its roots, counted
with multiplicities, are invariant under multiplication by the
\(s\)-th roots of unity. When \(f\) is the monic divisor generating a
constacyclic code,
the same substitution gives an interleaving of \(s\) identical
shorter codes. Over a Galois ring, these descriptions involve
Hensel lifting, the torsion filtration, and the distinction between
simple and repeated roots.

Let \(A=\GR(p^e,r)\), with residue field \(\F_q\), where \(q=p^r\).
We consider monic divisors of \(X^N-\lambda\) and ideals of
\(A[X]/(X^N-\lambda)\), for \(\lambda\in A^\times\).
The support period of a nonconstant polynomial is the greatest
common divisor of its positive exponents with nonzero coefficients.
If this period is \(s\), the unique expression \(f(X)=f^\sharp(X^s)\)
defines its support-primitive core, of degree \(\deg f/s\).

The multiple equal-difference (MED) structure of cyclotomic cosets
describes their partitions into equally spaced subsets \cite{MED}.
For constacyclic codes over finite fields,
support-primitive decomposition gives a canonical shorter code
with the same minimum distance \cite{SP}. The arithmetic Singleton
bound expresses a refinement of the Singleton bound in terms of
the MED structure of the defining set \cite{AS}. The results below
connect these constructions through polynomial and code support
periods over \(A\).

For every nonconstant monic divisor \(f\) of \(X^N-\lambda\) over
a commutative ring, its support period \(s\) divides \(N\), and
its core divides \(Y^{N/s}-\lambda\). When \(p\nmid N\), Hensel
uniqueness preserves \(s\) at every coefficient precision, although
individual coefficients may become nonzero during lifting
(Theorems~\ref{coef:compression} and~\ref{coef:precision}).
The inverse \((N\lambda)^{-1}Xf'\) of the complementary factor
modulo \(f\) gives a local correction that lifts the core directly.
For a basic-irreducible core, a trace matrix with unit determinant
recovers its coefficients even when Newton's identities involve
nonunit multipliers.

At a simple-root length, the support period also equals the order
of the multiplicative stabilizer of the root set. Normalization
of the principal-unit part of \(\lambda\) reduces this description
to Teichm\"uller roots. For a cyclic Frobenius-orbit polynomial,
the MED index determines the core degree and the least relative
degree of an unramified extension over which all its
basic-irreducible factors are binomials. For primitive-order layers,
the identity \(\Phi_d(X)=\Phi_D(X^\mu)\) gives a factorwise
correspondence; layers with the same reduced order \(D\) share
their core factors.

For codes, stability under projections onto coordinate classes
modulo \(s\) defines an intrinsic support period. The admissible
divisors of \(N\) are closed under least common multiples. Hence
every constacyclic code over a finite commutative ring has a
unique maximal interleaving into copies of one support-primitive
code, for arbitrary lengths and without a freeness assumption
(Theorem~\ref{code:intrinsic}). For a monic-divisor code, this
period agrees with the polynomial support period, with an
ambient-length convention for the full code. For an arbitrary
simple-root ideal over \(A\), it is the greatest common divisor
of the periods of all torsion codes
(Theorem~\ref{code:torsion-period}). Equivalently, it is the
stabilizer order of the function assigning to each root the
exponent of its Chinese remainder component.

The interleaving gives formulas for minimum distance, weight
enumerators, covering radii, duals, and hulls. The Singleton bound
for the core yields the arithmetic Singleton bound for free codes.
The highest torsion code, which determines minimum distance, can
have a larger period than the full code and therefore give a
stronger bound (Theorem~\ref{app:singleton}).

At repeated-root lengths, a nonzero code generated by a \(p^a\)-power
substitution from a simple-root generator has the same support-primitive
core, minimum distance, and arithmetic Singleton value as the original
code. The intrinsic
decomposition also applies to arbitrary ideals, but residue
multiplicities and even the full torsion chain need not determine
their intrinsic periods. Explicit examples show how these periods
depend on the lifting coefficients.

At lengths dividing \(q+1\), quadratic trace factors determine hulls
explicitly. Pairing factors with opposite traces at length \(p+1\)
gives exact support periods and minimum distances.

Section~\ref{sec:preliminaries} reviews Galois rings, constacyclic
codes, and MED representations. Sections~\ref{sec:coefficients}
and~\ref{sec:roots} establish the coefficient and root descriptions.
Section~\ref{sec:codes} develops the intrinsic decomposition and
its torsion interpretation, and Section~\ref{sec:applications}
derives the coding-theoretic consequences.
Section~\ref{sec:examples} gives explicit families, and
Appendix~\ref{sec:complexity} gives a certificate for compressed
factorizations and analyzes simultaneous lifting and output costs.

\section{Preliminaries}\label{sec:preliminaries}

Throughout the paper, \(p\) is a prime, \(e,r\) are positive integers,
\(q=p^r\), and
\[
 A=\GR(p^e,r).
\]
Reduction modulo \(p\) is denoted by a bar.  The symbols \(N\) and
\(\lambda\) denote a positive code length and an element of
\(A^\times\), respectively.  A simple-root length means that \(p\nmid N\).
The notation \(\ord_n(q)\) is used only when \(\gcd(n,q)=1\), with
\(\ord_1(q)=1\), and \(\rad(n)\) is the product of the distinct prime
divisors of \(n\), with \(\rad(1)=1\).
We use the following standard facts on Galois rings and constacyclic
codes, together with the ED/MED results in \cite{MED,AS,SP}.

\subsection{Galois rings and monic polynomials}

The ring \(A\) is the finite unramified extension of \(\Z/p^e\Z\) of
degree \(r\).  It is a local principal ideal ring, its maximal ideal is
\(pA\), its residue field is \(\F_q\), and its ideals are exactly
\[
 A\supset pA\supset\cdots\supset p^{e-1}A\supset p^eA=0.
\]
In particular, \(a\in A\) is a unit if and only if \(\bar a\ne0\).
Every nonzero element has the form \(p^j u\), where \(0\le j<e\) and
\(u\in A^\times\).  The integer \(j\) is uniquely determined, although
the unit \(u\) need not be.  We use the convention \(v_p(0)=e\) in
\(A\).  For \(1\le s\le e\), put
\[
 A_s=A/p^sA\simeq\GR(p^s,r).
\]
These descriptions, and the extension theory used here, may be found in
\cite{Wan,McDonald}.

A polynomial is called basic irreducible if it is monic and its reduction
is irreducible of the same degree.  If \(\Pi(Z)\in(\Z/p^e\Z)[Z]\) is
basic irreducible of degree \(r\), then
\[
 A\simeq(\Z/p^e\Z)[Z]/(\Pi(Z)).
\]
Thus \(1,Z,\ldots,Z^{r-1}\) is a free basis over \(\Z/p^e\Z\).
For \(s,t\ge1\) with \(s+t\le e\), multiplication by \(p^s\)
identifies \(A_t\) with \(p^sA_{s+t}\) as \(A\)-modules. Its inverse
defines exact coefficientwise division by \(p^s\) on this ideal.

The Teichm\"uller set of \(A\) is
\[
 \mathcal T_A=\{a\in A:a^q=a\}.
\]
Reduction identifies \(\mathcal T_A\) bijectively with \(\F_q\), and
\(\mathcal T_A^\times\) is cyclic of order \(q-1\).
Every element of \(A\) has a unique expansion
\[
 a=a_0+pa_1+\cdots+p^{e-1}a_{e-1},\qquad a_i\in\mathcal T_A.
\]
The unit group decomposes as
\[
 A^\times=\mathcal T_A^\times\times(1+pA).
\]
Here \(1+pA\) is a finite abelian \(p\)-group.  Consequently, raising to
an integer power prime to \(p\) is an automorphism of this group.
It follows that the principal-unit part of a constacyclic multiplier
has a unique \(N\)-th root whenever \(p\nmid N\).

For every positive integer \(m\), an unramified extension of \(A\) of
degree \(m\) is isomorphic to \(B=\GR(p^e,rm)\).
Its \(A\)-automorphism group is cyclic, generated by the relative
Frobenius automorphism
\[
 \sigma\left(\sum_{j=0}^{e-1}p^j b_j\right)
   =\sum_{j=0}^{e-1}p^j b_j^q,\qquad b_j\in\mathcal T_B.
\]
The fixed ring of \(\sigma\) is \(A\), and \(\sigma(b)=b^q\) for
Teichm\"uller elements. On general elements, \(\sigma\) is given by
the displayed expansion rather than the \(q\)-th power map.

For \(z\in B\), the relative trace and norm are
\[
 \Tr_{B/A}(z)=\sum_{i=0}^{m-1}\sigma^i(z),\qquad
 \operatorname{N}_{B/A}(z)=\prod_{i=0}^{m-1}\sigma^i(z).
\]
Both lie in \(A\). The coefficients of an orbit polynomial are
elementary symmetric functions of the conjugates, valid at every
precision without division. In Newton's identity of index \(j\),
solving for the corresponding elementary symmetric function by
division is valid in \(A\) when \(p\nmid j\).
Proposition~\ref{coef:coefficients} gives the division-free formulas.

Monic division is valid over every commutative coefficient ring.
For a monic polynomial \(G\) of positive degree and an arbitrary
polynomial \(P\), there are unique \(Q,R\) such that
\[
 P=QG+R,\qquad \deg R<\deg G.
\]
We write \(R=\remn_G(P)\).  Multiplication by a monic polynomial is
injective, and the degree of a nonzero product with a monic factor is
the sum of the degrees.  Two monic polynomials over \(A\) are comaximal
if their reductions are coprime.  These elementary facts underlie the
usual coprime Hensel lemma \cite{Wan,McDonald,ModernComputerAlgebra}.

\begin{proposition}\label{pre:hensel}
Let \(P\in A[X]\) be monic, and suppose that
\(\bar P=f_1\cdots f_h\), where the \(f_i\) are pairwise coprime monic
polynomials over \(\F_q\).
There is a unique factorization
\[
 P=F_1\cdots F_h,\qquad
 F_i\text{ monic},\quad \bar F_i=f_i,\quad \deg F_i=\deg f_i.
\]
The factors \(F_i\) are pairwise comaximal.  These factorizations are
compatible with reduction \(A\longrightarrow A_s\).  In particular,
every monic divisor \(f\) of a squarefree \(\bar P\) has a unique monic
lift \(F\mid P\) of the same degree.  If \(f\) is irreducible of degree
\(m\), then \(A[X]/(F)\simeq\GR(p^e,rm)\).
\end{proposition}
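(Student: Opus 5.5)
We prove the statement by induction on the precision \(e\). The case \(h=1\) is trivial, and the general case reduces to \(h=2\) by grouping: write \(\bar P=f_1\cdot(f_2\cdots f_h)\), lift this two-factor splitting, and recurse on the second factor. Its reduction \(f_2\cdots f_h\) is a product of pairwise coprime factors, and a monic lift of it is again monic. So the core is the two-factor statement: if \(\bar P=gh\) with \(g,h\) monic and coprime over \(\F_q\), then there are unique monic \(G,H\) with \(P=GH\), \(\bar G=g\), \(\bar H=h\), and \(\deg G=\deg g\), \(\deg H=\deg h\).

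\emph{Existence.} Suppose monic \(G_k,H_k\) of the right degrees satisfy \(P\equiv G_kH_k \pmod{p^k}\) for some \(k<e\). Choose \(u,v\in A[X]\) with \(ug+vh=1\) after reduction, using coprimality over the field \(\F_q\). Write \(P-G_kH_k=p^kE\), with \(\deg E<\deg P\) since the leading terms cancel. We look for
\[
G_{k+1}=G_k+p^k\delta,\qquad H_{k+1}=H_k+p^k\eta,\qquad \deg\delta<\deg g,\ \deg\eta<\deg h .
\]
Modulo \(p^{k+1}\) the requirement becomes \(\bar E=\bar\delta h+\bar\eta g\). Take \(\bar\delta=\remn_g(\bar E\bar v)\). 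Then \(\bar E-\bar\delta h\) is divisible by \(g\), and the degree count gives \(\deg\bar\eta<\deg h\). Monic division over a field makes this solvable, so the lifted factors stay monic with the correct degrees. Repeating the step \(e\) times gives the factorization over \(A\).

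\emph{Uniqueness.} This is the main point. Suppose \(GH=G'H'\) are two such factorizations. We show by induction on \(k\) that \(G\equiv G'\) and \(H\equiv H' \pmod{p^k}\). If \(G'=G+p^k\delta\) and \(H'=H+p^k\eta\), then subtracting gives \(p^k(\delta H+\eta G)+p^{2k}\delta\eta=0\). Reducing the coefficientwise quotient by \(p^k\) modulo \(p\), using the division identification \(A_t\simeq p^sA_{s+t}\), yields \(\bar\delta h+\bar\eta g=0\) with \(\deg\bar\delta<\deg g\). Since \(g\mid\bar\delta h\) and \(\gcd(g,h)=1\), we get \(g\mid\bar\delta\), so \(\bar\delta=0\), and then \(\bar\eta=0\). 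The care needed is in applying the division by \(p^k\) exactly when \(2k\ge k+1\) or in the top layer. A cleaner alternative avoids this: from \(GH=G'H'\) and a Bezout relation \(UG+VH=1\) over \(A\), which we get by lifting \(ug+vh=1\) and inverting \(1+p(\cdot)\) via nilpotence, deduce \(G\mid G'H'\). Then comaximality of \(G\) and \(H'\), whose reductions are \(g\) and \(h\), gives \(G\mid G'\). Both are monic of the same degree, so \(G=G'\). We use this divisibility argument.

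\emph{Remaining claims.} Pairwise comaximality holds because the reductions \(f_i\) are coprime. Compatibility with \(A\to A_s\) follows from uniqueness applied over \(A_s\), which is itself a Galois ring. For a divisor \(f\) of squarefree \(\bar P\), write \(\bar P=f\cdot(\bar P/f)\); these are coprime, and lifting gives \(F\). Conversely, any monic \(F\mid P\) with \(\bar F=f\) and the same degree has a monic cofactor of complementary degree, so uniqueness applies. Finally, if \(f\) is irreducible of degree \(m\), then \(F\) is basic irreducible. Hence \(A[X]/(F)\) is local with maximal ideal \((p)\) and residue field \(\F_q[X]/(f)=\F_{q^{m}}\), and it is free of rank \(m\) over \(A\), so it has characteristic \(p^e\) and size \(p^{erm}\). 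By the characterization of Galois rings in \cite{Wan,McDonald}, it is \(\GR(p^e,rm)\).
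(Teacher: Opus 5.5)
Your proof is correct and is the standard coprime Hensel argument the paper relies on; the paper gives no proof of its own, citing Wan, McDonald and von zur Gathen--Gerhard and noting that monic division, injectivity of multiplication by a monic polynomial, and comaximality of lifts of coprime residues underlie the lemma. Those are exactly the tools you use: linear lifting for existence, and a comaximality and degree argument for uniqueness. One small tidying point: \(G\mid G'H'\) is immediate from \(GH=G'H'\), so the real input is the comaximality of \(G\) and \(H'\), and you should state that \(H=H'\) then follows by cancelling the monic factor \(G\).
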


When \(p\nmid N\) and \(\lambda\in A^\times\), the polynomial
\(X^N-\bar\lambda\) is squarefree.  Its roots are nonzero and its
derivative is \(NX^{N-1}\).  Proposition~\ref{pre:hensel} therefore
applies to \(X^N-\lambda\), even when \(\lambda\) is not a
Teichm\"uller unit.  In an unramified extension containing the residue
roots, each such root has a unique lift satisfying \(X^N=\lambda\).
Distinct lifted roots have unit difference, and relative Frobenius
permutes them according to its action on their reductions.

Thus monic divisors of \(X^N-\lambda\) correspond bijectively to
monic divisors of its squarefree reduction. This correspondence
also applies to products indexed by any selected union of root orbits.

\subsection{Cyclotomic cosets and orbit factors}

Let \(n\) be positive and coprime to \(p\).
For \(\gamma\in\Z/n\Z\), define the \(q\)-cyclotomic coset
\[
 \cc{n}{q}{\gamma}
   =\{\gamma,\gamma q,\ldots,\gamma q^{m_\gamma-1}\},
 \qquad
 g_\gamma=\gcd(\gamma,n),\qquad d_\gamma=n/g_\gamma.
\]
The gcd may be computed from any integer representative, with
\(\gcd(0,n)=n\). The coset has size
\(m_\gamma=\ord_{d_\gamma}(q)\), and its \emph{leader} is its least
representative in \(\{0,\ldots,n-1\}\).
Its \emph{primitive form} is
\(\cc{d_\gamma}{q}{\gamma/g_\gamma}\), which is independent of the
chosen integer representative. A coset modulo \(n\) is called
\emph{primitive} if its representatives are units modulo \(n\);
the unique coset modulo \(1\) is also regarded as primitive.
The finite-field orbit correspondence is described in
\cite{LidlNiederreiter}, and its Galois-ring version follows from
unramified extension theory and coprime lifting \cite{Wan}.

\begin{proposition}\label{pre:orbit}
Fix a primitive Teichm\"uller \(n\)-th root of unity \(\xi\) in an
unramified extension of \(A\).  The polynomial
\[
 F_{\cc{n}{q}{\gamma}}(X)
   =\prod_{j=0}^{m_\gamma-1}(X-\xi^{\gamma q^j})
\]
belongs to \(A[X]\), is basic irreducible, and has degree \(m_\gamma\).
As the coset varies, these polynomials are exactly the pairwise distinct
basic-irreducible factors of \(X^n-1\).
Over \(\GR(p^e,rt)\), a factor of degree \(m_\gamma\) splits into
\(\gcd(t,m_\gamma)\) basic-irreducible factors, each of degree
\(m_\gamma/\gcd(t,m_\gamma)\); their root sets are the \(q^t\)-orbits
inside \(\cc{n}{q}{\gamma}\).
\end{proposition}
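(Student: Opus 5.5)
The argument is Galois-theoretic and has four parts: coefficients in \(A\), basic irreducibility, the identification with the factors of \(X^n-1\), and the splitting over \(\GR(p^e,rt)\). Take \(B\) to be an unramified extension of \(A\) of degree \(M\) with \(n\mid q^M-1\). Then \(\mathcal T_B^\times\) is cyclic of order \(q^M-1\) and contains a primitive \(n\)-th root of unity \(\xi\). Its reduction \(\bar\xi\) is a primitive \(n\)-th root of unity in \(\F_{q^M}\), because reduction is injective on \(\mathcal T_B\).

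\emph{Coefficients in \(A\).} The relative Frobenius \(\sigma\) acts on Teichmüller elements by \(b\mapsto b^q\). It therefore sends \(\xi^{\gamma q^j}\) to \(\xi^{\gamma q^{j+1}}\), and since \(\gamma q^{m_\gamma}\equiv\gamma\pmod n\), it permutes the roots of \(F_{\cc{n}{q}{\gamma}}\) cyclically. The coefficients are elementary symmetric functions of these roots, so they are fixed by \(\sigma\). The fixed ring of \(\sigma\) is \(A\), so the coefficients lie in \(A\).

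\emph{Basic irreducibility.} Reducing modulo \(p\) gives \(\prod_j(X-\bar\xi^{\gamma q^j})\). The exponents \(\gamma q^j\) are distinct modulo \(n\) for \(0\le j<m_\gamma\), so this is the minimal polynomial of \(\bar\xi^\gamma\) over \(\F_q\), by the standard finite-field orbit correspondence. Hence the reduction is irreducible of degree \(m_\gamma\), and \(F_{\cc{n}{q}{\gamma}}\) is basic irreducible.

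\emph{The factors of \(X^n-1\).} The cosets partition \(\Z/n\Z\), and \(X^n-1=\prod_{\gamma\in\Z/n\Z}(X-\xi^\gamma)\). This identity holds in \(B[X]\) because the \(\xi^\gamma\) are \(n\) roots of \(X^n-1\) with pairwise unit differences. Indeed, their reductions are distinct, so a standard induction via monic division shows that the product of the linear factors divides \(X^n-1\), and comparing degrees gives equality. Grouping the linear factors by coset writes \(X^n-1\) as the product of the \(F_{\cc{n}{q}{\gamma}}\).

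Their reductions are the distinct irreducible factors of the squarefree polynomial \(X^n-1\) over \(\F_q\). By the uniqueness in Proposition~\ref{pre:hensel}, any basic-irreducible monic factor of \(X^n-1\) is the unique lift of one of these irreducible reductions. So the factors are exactly the \(F_{\cc{n}{q}{\gamma}}\), and they are pairwise distinct because their reductions are.

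\emph{Splitting over \(\GR(p^e,rt)\).} Apply the same construction with \(q^t\) in place of \(q\); the Frobenius is now \(\sigma^t\), which acts on Teichmüller elements by \(q^t\)-powering. The set \(\cc{n}{q}{\gamma}\) is a cyclic orbit of size \(m_\gamma\) under multiplication by \(q\). Multiplication by \(q^t\) therefore acts as the \(t\)-th power of an \(m_\gamma\)-cycle, which splits into \(\gcd(t,m_\gamma)\) cycles, each of length \(m_\gamma/\gcd(t,m_\gamma)\). Each such \(q^t\)-orbit gives a basic-irreducible factor over \(\GR(p^e,rt)\) by the argument above, and the product of these factors is \(F_{\cc{n}{q}{\gamma}}\).

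The main delicate point is working at every precision \(p^e\) rather than over the residue field. Two facts are needed: the fixed ring of \(\sigma\) is exactly \(A\), and the product of distinct lifted roots with unit differences equals \(X^n-1\). Both are taken from the preliminaries, the second through Hensel uniqueness.
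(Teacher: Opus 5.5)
Your proof is correct and follows the route the paper indicates. The paper gives no written proof here and only cites the finite-field orbit correspondence together with unramified extension theory and coprime Hensel lifting; your argument via Frobenius-fixed coefficients, residue minimal polynomials and Hensel uniqueness fills that in faithfully. One small imprecision is in the last step. Writing the Frobenius over \(\GR(p^e,rt)\) as \(\sigma^t\) requires working in a common unramified extension that contains both \(\GR(p^e,rt)\) and \(\xi\), for instance one of degree divisible by \(\operatorname{lcm}(M,t)\), rather than in your \(B\) of degree \(M\).
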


The factor indexed by \(\cc{n}{q}{\gamma}\) has root order
\(d_\gamma\) and degree \(\ord_{d_\gamma}(q)\). Reduction preserves
the prime-to-\(p\) order of a Teichm\"uller root. Specifying \(\xi\)
fixes the labels, and changing \(\xi\) permutes them.
Equal factor degrees can correspond to different root orders;
this distinction enters the seed construction in
Proposition~\ref{root:seed}. Explicit representatives and sizes
are given in \cite{ZhuRepresentatives}.

For a nonconstant squarefree polynomial \(f\in\F_q[X]\) with \(f(0)\ne0\),
its order is the least positive \(n\) such that \(f\mid X^n-1\).
This order is prime to \(p\).  After fixing a primitive \(n\)-th root
\(\zeta\), its defining set is
\[
 T_f=\{\gamma\in\Z/n\Z:f(\zeta^\gamma)=0\}.
\]
It is a union of \(q\)-cyclotomic cosets, and \(|T_f|=\deg f\).
One may also use a larger root-of-unity modulus, provided it is stated
explicitly.  For example, if \(p\nmid N\),
\(\bar\lambda\) has order \(b\), and \(\eta\) is a primitive \(Nb\)-th
root chosen so that \(\eta^N=\bar\lambda\), the roots of
\(X^N-\bar\lambda\) are indexed by
\[
 1+b\Z/Nb\Z.
\]

\subsection{Constacyclic codes, torsion, and duality}

Identify \(A^N\) with
\[
 R_{N,\lambda}=A[X]/(X^N-\lambda)
\]
through coefficient vectors of degree less than \(N\).
An \(A\)-linear code is \(\lambda\)-constacyclic when it is invariant
under
\[
 (a_0,\ldots,a_{N-1})\longmapsto
 (\lambda a_{N-1},a_0,\ldots,a_{N-2}).
\]
Such codes are exactly the ideals of \(R_{N,\lambda}\).
The shift is invertible because \(\lambda\) is a unit.
An element of the quotient always has a unique representative of
degree less than \(N\); support and Hamming weight of codewords refer
to that representative.

If \(F\) is a monic divisor of \(X^N-\lambda\), then
\(\C=(F)\) is free over \(A\), with basis
\[
 F,XF,\ldots,X^{N-\deg F-1}F
\]
when \(\deg F<N\), and with empty basis when \(\deg F=N\).
Thus its rank is \(N-\deg F\).  The constant coefficient of \(F\) is
a unit.  General ideals over a ring with \(e>1\) need not have a monic
divisor as a generator.  The simple-root description below is the
standard Chinese remainder formulation of the structure theory of
codes over finite chain rings \cite{NortonSalageanStructure,Wan}.

\begin{proposition}\label{pre:crt}
Suppose \(p\nmid N\), and write
\[
 X^N-\lambda=\prod_{i=1}^h F_i
\]
for its factorization into distinct basic-irreducible factors.
Put \(d_i=\deg F_i\) and \(A_i=A[X]/(F_i)\).
The Chinese remainder map identifies
\[
 R_{N,\lambda}\simeq\prod_{i=1}^h A_i,\qquad
 A_i\simeq\GR(p^e,rd_i).
\]
Every ideal \(\C\) has a unique description
\[
 \C\simeq\prod_{i=1}^h p^{a_i}A_i,
 \qquad a_i\in\{0,\ldots,e\},
 \qquad
 |\C|=q^{\sum_i(e-a_i)d_i}.
\]
It is free as an \(A\)-module if and only if every \(a_i\) is either
\(0\) or \(e\).  In that case it is generated by the monic divisor
\(\prod_{a_i=e}F_i\).
\end{proposition}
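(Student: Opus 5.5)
The plan is to first establish the Chinese remainder isomorphism, then classify ideals componentwise, and finally read off freeness. Since \(p\nmid N\), Proposition~\ref{pre:hensel} applied to \(P=X^N-\lambda\) gives the factorization \(X^N-\lambda=\prod_i F_i\) into pairwise comaximal basic-irreducible factors, lifting the factorization of the squarefree reduction \(X^N-\bar\lambda\). Comaximality means that \(\bar F_i,\bar F_j\) are coprime for \(i\ne j\). From \(\bar u\bar F_i+\bar v\bar F_j=1\) and any lifts \(u,v\), we get \(uF_i+vF_j=1-pw\). Because \(p\) is nilpotent in \(A[X]\), \(1-pw\) is a unit, so \((F_i)+(F_j)=A[X]\). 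The Chinese remainder theorem for pairwise comaximal ideals, whose product equals their intersection, then gives
\[
R_{N,\lambda}\simeq\prod_i A[X]/(F_i).
\]
The last clause of Proposition~\ref{pre:hensel} gives \(A_i\simeq\GR(p^e,rd_i)\).

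Next, I classify the ideals. Every ideal of a finite product of rings is the product of ideals of the factors, obtained by multiplying by the idempotents. Each \(A_i\) is a Galois ring, so its ideals form the chain \(p^{a}A_i\) with \(0\le a\le e\), and distinct \(a\) give distinct ideals. Hence the exponents \(a_i\) are uniquely determined. For the cardinality, note that \(|p^aA_i|=|A_i/p^{e-a}A_i|=(q^{d_i})^{e-a}\). Multiplying over \(i\) gives \(|\C|=q^{\sum_i(e-a_i)d_i}\).

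For freeness, the point I expect to need the most care is the ``only if'' direction. Suppose some \(0<a_i<e\), and suppose \(\C\) were free of rank \(k\), so \(|\C|=q^{ek}\) and \(\C\) is \(A\)-isomorphic to \(A^k\). Then \(p^{e-1}\C\) would have size \(q^k\), and \(\C[p]=\{c:pc=0\}\) would equal \(p^{e-1}\C\). I compute both componentwise. The component of \(p^{e-1}\C\) at \(i\) is \(p^{e-1+a_i}A_i\), which is nonzero only when \(a_i=0\). The component of \(\C[p]\) at \(i\) is \(p^{e-1}A_i\) whenever \(a_i<e\). So a component with \(0<a_i<e\) lies in \(\C[p]\) but not in \(p^{e-1}\C\), a contradiction. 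Conversely, if every \(a_i\in\{0,e\}\), put \(G=\prod_{a_i=e}F_i\). Under the CRT isomorphism, \((G)\) maps to the product of \(A_i\) over those \(i\) with \(F_i\nmid G\), and to \(0\) at the others. This holds because \(G\) is a unit modulo \(F_j\) when \(F_j\nmid G\), by comaximality, and is zero modulo \(F_i\) when \(F_i\mid G\). So \(\C=(G)\). Since \(G\) is a monic divisor of \(X^N-\lambda\), the earlier discussion of monic-divisor codes shows that \((G)\) is free with basis \(X^jG\).
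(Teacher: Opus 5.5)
Your proof is correct. The paper gives no proof of this proposition: it cites it as the standard Chinese remainder formulation for codes over chain rings \cite{NortonSalageanStructure,Wan}, and your argument (comaximal Hensel factors, CRT, then componentwise chain ideals and cardinalities) is that standard route.

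The one place you take a different path is the freeness criterion. The paper later reads freeness off the module type \(\C\cong_A\bigoplus_{a_i<e}(A/(p^{e-a_i}))^{d_i}\), which tacitly relies on uniqueness of elementary divisors over \(A\). Your comparison of \(\{c\in\C:pc=0\}\) with \(p^{e-1}\C\) is a self-contained test that avoids that theorem.
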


The CRT describes the ideal and its module structure. Hamming weights
are measured in coefficient coordinates, since the CRT map is generally
not a Hamming isometry.

For an arbitrary \(A\)-linear code \(\C\subseteq A^N\), define its
torsion codes by
\[
 (\C:p^j)=\{v\in A^N:p^jv\in\C\},
 \qquad
 \Tor_j(\C)=\overline{(\C:p^j)},\qquad 0\le j<e.
\]
They are codes over \(\F_q\), with
\[
 \bar\C=\Tor_0(\C)\subseteq\Tor_1(\C)\subseteq\cdots
 \subseteq\Tor_{e-1}(\C).
\]
If \(\C\) is constacyclic, each \(\Tor_j(\C)\) is
\(\bar\lambda\)-constacyclic.  We refer to
\cite{NortonSalageanStructure} for this filtration and its module
interpretation.

The elementary-divisor description of an arbitrary linear code is
also useful.  There are uniquely determined nonnegative integers
\(k_0,\ldots,k_{e-1}\) such that, as an \(A\)-module,
\[
 \C\simeq\bigoplus_{a=0}^{e-1}(p^aA)^{k_a}.
\]
An invertible change of ambient coordinates may be used to obtain this
normal form; such a change is not, in general, a coordinate permutation.
The resulting identities for module size and torsion dimensions are
\[
 |\C|=q^{\sum_{a=0}^{e-1}(e-a)k_a},\qquad
 \dim_{\F_q}\Tor_j(\C)=\sum_{a=0}^{j}k_a.
\]
Thus \(\dim\bar\C=k_0\), whereas the last torsion code has dimension
\(\sum_a k_a\), the minimum number of module generators
\cite{NortonSalageanStructure}.

\begin{proposition}\label{pre:torsion}
Under the hypotheses of Proposition~\ref{pre:crt}, the canonical monic
lift \(g_j\in A[X]\) and the generator \(\bar g_j\) of
\(\Tor_j(\C)\) are
\[
 g_j=\prod_{a_i>j}F_i,\qquad
 \bar g_j=\prod_{a_i>j}\bar F_i\quad(0\le j<e).
\]
Here an empty product is \(1\); the generator of the zero residue code
is \(X^N-\bar\lambda\).  Moreover,
\[
 \C=\langle g_0,pg_1,\ldots,p^{e-1}g_{e-1}\rangle,
 \qquad g_{j+1}\mid g_j,
\]
and
\[
 \log_q|\C|=\sum_{j=0}^{e-1}\dim_{\F_q}\Tor_j(\C).
\]
Thus the torsion chain determines the ideal in the simple-root case.
For every nonzero \(A\)-linear code, without a restriction on its
length, one has the standard distance identity
\begin{equation}\label{eq:pre-distance}
 d_H(\C)=d_H(\Tor_{e-1}(\C)).
\end{equation}
\end{proposition}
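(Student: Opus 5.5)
The plan is to work in the CRT picture of Proposition~\ref{pre:crt} and first identify the colon ideals \((\C:p^j)\) componentwise. Under the isomorphism \(R_{N,\lambda}\simeq\prod_i A_i\), an element \(v=(v_i)\) satisfies \(p^jv\in\C\) if and only if \(p^jv_i\in p^{a_i}A_i\) for every \(i\). Since each \(A_i\) is a Galois ring with ideals \(p^kA_i\), this holds exactly when \(v_i\in p^{\max(a_i-j,0)}A_i\). Hence \((\C:p^j)\simeq\prod_i p^{\max(a_i-j,0)}A_i\).

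Next I reduce modulo \(p\). The CRT decomposition is compatible with reduction, because the \(\bar F_i\) are the pairwise coprime irreducible factors of \(X^N-\bar\lambda\) (Proposition~\ref{pre:hensel}). The image of \(p^{k}A_i\) in \(\F_q[X]/(\bar F_i)\) is the whole field if \(k=0\) and zero if \(k\ge1\). So \(\Tor_j(\C)\) is the \(\F_q\)-code whose CRT components are nonzero exactly for \(a_i\le j\). Its generator is therefore the product of the \(\bar F_i\) with \(a_i>j\), which is \(\bar g_j\). The monic lift \(g_j=\prod_{a_i>j}F_i\) is the canonical one by the uniqueness in Proposition~\ref{pre:hensel}. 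The divisibility \(g_{j+1}\mid g_j\) is immediate, since \(\{a_i>j+1\}\subseteq\{a_i>j\}\). The convention that the zero code is generated by \(X^N-\bar\lambda\) covers the case where every \(a_i>j\).

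For the generation statement, note that \(p^jg_j\) has CRT components \(p^j\cdot(\text{unit})\) where \(a_i\le j\), and \(0\) where \(a_i>j\). Here \(F_k\) is a unit modulo \(F_i\) for \(k\ne i\) by comaximality, and \(g_j\equiv0\) modulo \(F_i\) when \(a_i>j\). The ideal \(\langle g_0,pg_1,\ldots,p^{e-1}g_{e-1}\rangle\) thus has \(i\)-th component generated by \(p^j\) with \(j\) the least index satisfying \(j\ge a_i\), namely \(j=a_i\). If \(a_i=e\) the component is zero. This recovers \(p^{a_i}A_i\), and the ideal equals \(\C\). The size formula follows by summing: \(\sum_j\dim\Tor_j(\C)=\sum_j\sum_{a_i\le j}d_i=\sum_i(e-a_i)d_i\), which matches \(|\C|\) from Proposition~\ref{pre:crt}. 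The chain of generators \(g_j\) determines all the \(a_i\), and hence \(\C\).

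The distance identity \eqref{eq:pre-distance} for an arbitrary nonzero linear code is the main point, and I expect it to need the most care. First I show \(\ge\). Take a nonzero \(c\in\C\) and write \(c=p^kw\) with \(k\) maximal, so that \(\bar w\ne0\). Since \(p^kw\in\C\), we have \(w\in(\C:p^k)\), and therefore \(p^{e-1}w=p^{e-1-k}c\in\C\). Hence \(\bar w\in\Tor_{e-1}(\C)\). The supports satisfy \(\operatorname{supp}\bar w\subseteq\operatorname{supp}w=\operatorname{supp}c\), so \(\wt(c)\ge d_H(\Tor_{e-1}(\C))\). For \(\le\), take \(\bar v\in\Tor_{e-1}(\C)\) of minimum weight with \(v\in(\C:p^{e-1})\). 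Then \(p^{e-1}v\in\C\) is nonzero. Multiplication by \(p^{e-1}\) identifies \(A/pA\) with \(p^{e-1}A\), so \(\operatorname{supp}(p^{e-1}v)=\operatorname{supp}\bar v\). This gives \(d_H(\C)\le\wt(\bar v)\). The only subtlety is that \(\Tor_{e-1}(\C)\ne0\) when \(\C\ne0\), which the first argument already shows.
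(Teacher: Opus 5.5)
Your proposal is correct and follows the paper's own route: the generation and size statements are the CRT-component computation used in the proof of Lemma~\ref{code:torsion-description}, and the distance identity is the argument of Proposition~\ref{app:highest-torsion}, namely the weight-preserving map \(\bar v\mapsto p^{e-1}v\) together with scaling a nonzero codeword by \(p^{e-1-k}\), where \(k\) is its least coordinate valuation. The one difference is that the paper takes the formula \(\Tor_j(\C)=(\bar g_j)\) as standard from Norton--S\u{a}l\u{a}gean, whereas your computation \((\C:p^j)\simeq\prod_i p^{\max(a_i-j,0)}A_i\) proves it directly; in the support step, note that \(c_i=0\) forces \(w_i\in p^{e-k}A\subseteq pA\), so \(\operatorname{supp}\bar w\subseteq\operatorname{supp}c\) holds for every choice of \(w\), not just for one with \(\operatorname{supp}w=\operatorname{supp}c\).
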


The distance identity is the socle reduction theorem of Norton and
S\u{a}l\u{a}gean \cite{NortonSalageanDistance}. The map
\[
 \Tor_{e-1}(\C)\longrightarrow \C\cap p^{e-1}A^N,
 \qquad \bar v\longmapsto p^{e-1}v
\]
is a Hamming isometry. A proof of the distance identity and of the
equality \(\Tor_j(\C)=\bar\C\) for free codes is given in
Proposition~\ref{app:highest-torsion}.

The integer \(a_i\) is the depth of the ideal \(p^{a_i}A_i\) in the
\(i\)-th CRT component. The \(j\)-th torsion layer selects the factors
with \(a_i>j\). These depths describe ideals above a squarefree
residue polynomial; root multiplicities instead describe repeated
factors of a polynomial over the residue field.

For an arbitrary code, write
\[
 t(\C)=\dim_{\F_q}\Tor_{e-1}(\C)
\]
for its minimum number of \(A\)-module generators.
We reserve \(\rank_A\C\) for free codes, where it equals \(t(\C)\).
The generator-number and cardinality Singleton bounds are respectively
\begin{equation}\label{eq:pre-singleton}
 d_H(\C)\le N-t(\C)+1,\qquad
 d_H(\C)\le N-\log_{|A|}|\C|+1
 \quad(\C\ne0).
\end{equation}
The second upper bound may be rounded down. These bounds are discussed
in \cite{NortonSalageanDistance}; see \cite{HuffmanPless} for the
field case. A nonzero free code of rank \(k\) is \emph{maximum
distance separable} (MDS) if \(d_H=N-k+1\).

The Euclidean dual is
\[
 \C^\perp=\{v\in A^N:\textstyle\sum_{i=0}^{N-1}v_ic_i=0
                 \text{ for every }c\in\C\}.
\]
Galois rings are finite Frobenius rings; hence
\[
 (\C^\perp)^\perp=\C,\qquad
 |\C|\,|\C^\perp|=|A|^N.
\]
These duality facts are part of the general theory developed in
\cite{Wood}.  The dual of a \(\lambda\)-constacyclic code is
\(\lambda^{-1}\)-constacyclic.  Coordinate permutations preserve the
Euclidean inner product and the Hamming metric.  We write
\(\Hull(\C)=\C\cap\C^\perp\).

For a monic polynomial \(P\) with unit constant coefficient, write
\[
 P^\dagger(X)=P(0)^{-1}X^{\deg P}P(X^{-1})
\]
for its monic reciprocal. Throughout, a monic polynomial is called
self-reciprocal when \(P^\dagger=P\).
If \(X^N-\lambda=FH\) and
\(\C=(F)\), then \(\C^\perp\) is generated by \(H^\dagger\) in
\(R_{N,\lambda^{-1}}\); see \cite{Wan,NortonSalageanStructure}.
More generally the standard torsion-duality relation is
\[
 \Tor_j(\C^\perp)=\Tor_{e-1-j}(\C)^\perp\qquad(0\le j<e),
\]
where the dual on the right is taken over \(\F_q\).
In particular, \(\overline{\C^\perp}=\Tor_{e-1}(\C)^\perp\).
For a free code this becomes \(\overline{\C^\perp}=\bar\C^\perp\).

\subsection{Equal-difference and multiple equal-difference structures}

We recall the MED structure of a cyclotomic coset \cite{MED}
and of a general defining set \cite[Section 3]{AS}.

\begin{definition}\label{pre:eddefinition}
An equal-difference set in \(\Z/n\Z\) is a coset
\[
 E=\gamma+d\Z/n\Z
   =\{\gamma,\gamma+d,\ldots,\gamma+(n/d-1)d\},
 \qquad d\mid n.
\]
The positive divisor \(d\) is its common difference.
A multiple equal-difference representation, abbreviated MED
representation, of a subset \(T\subseteq\Z/n\Z\) is a partition of
\(T\) into equal-difference sets having the same common difference.
One representation is coarser than another if every part of the former
is a union of parts of the latter.
\end{definition}

An ED set with common difference \(d\) has \(n/d\) elements.
For a set of size \(m\), a MED representation with this common
difference has \(md/n\) parts and is uniquely determined whenever
it exists. The finest representation, with \(d=n\), consists of
singletons. Divisibility of common differences reverses the order
given by coarseness of the corresponding partitions.

For a defining set \(T_f\subseteq\Z/n\Z\), put
\[
 \Sigma_f=\{a\in\Z/n\Z:T_f+a=T_f\},\qquad
 \Sigma_f^*=\{d\in\Z_{>0}:d\mid n,\ d\bmod n\in\Sigma_f\}.
\]
Thus \(\Sigma_f\) is a subgroup of the additive cyclic group, while
\(\Sigma_f^*\) is a set of positive integer divisors.  Writing
\[
 d_0=\min\Sigma_f^*
\]
includes the trivial-stabilizer case, where \(d_0=n\).

\begin{proposition}\label{pre:medsets}
Let \(f\in\F_q[X]\) be monic, squarefree, nonconstant, and satisfy
\(f(0)\ne0\).  Use a root-of-unity modulus \(n\) prime to \(p\) with
\(f\mid X^n-1\).
Then \(\Sigma_f=d_0\Z/n\Z\), and
\[
 |\Sigma_f|=\frac n{d_0},\qquad
 \Sigma_f^*=\{d:d_0\mid d,\ d\mid n\}.
\]
For each \(d\in\Sigma_f^*\), the cosets of \(d\Z/n\Z\) contained in
\(T_f\) form its unique MED representation of common difference \(d\).
Every MED representation arises this way.  If \(d\mid d'\), the
representation of common difference \(d\) is coarser than that of
common difference \(d'\).  In particular, \(d_0\) gives the unique
coarsest representation.
\end{proposition}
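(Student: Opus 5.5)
The plan is to treat \(\Sigma_f\) as a subgroup of the cyclic group \(\Z/n\Z\) and to read every claim off its subgroup structure. First, \(\Sigma_f\) is a subgroup. It is the stabilizer of \(T_f\) under the translation action of \(\Z/n\Z\) on its subsets, and it contains \(0\). Every subgroup of \(\Z/n\Z\) has the form \(d\Z/n\Z\) for a unique positive divisor \(d\) of \(n\). For \(d\mid n\), we have \(d\bmod n\in\Sigma_f\) if and only if \(d\Z/n\Z\subseteq\Sigma_f\). Hence \(\Sigma_f^*\) is the set of positive divisors \(d\) of \(n\) with \(d\Z/n\Z\subseteq\Sigma_f\). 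This set is nonempty, since it contains \(n\). Its minimum \(d_0\) generates \(\Sigma_f\), because the generator divisor \(d\) of \(\Sigma_f=d\Z/n\Z\) lies in \(\Sigma_f^*\), and every element of \(\Sigma_f^*\) is a multiple of it. So \(\Sigma_f=d_0\Z/n\Z\) and \(|\Sigma_f|=n/d_0\). Moreover, for \(d\mid n\), the inclusion \(d\Z/n\Z\subseteq d_0\Z/n\Z\) holds if and only if \(d_0\mid d\), which gives the description of \(\Sigma_f^*\).

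Next comes the link with MED representations. A subset \(T\) is a union of cosets of \(d\Z/n\Z\) if and only if \(T+d=T\). Indeed, invariance under \(d\) means \(T\) is a union of orbits of translation by \(d\), and these orbits are exactly the cosets \(\gamma+d\Z/n\Z\). Distinct cosets are disjoint, so if \(d\in\Sigma_f^*\), the cosets of \(d\Z/n\Z\) contained in \(T_f\) partition \(T_f\). This is a MED representation of common difference \(d\).

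Conversely, take any MED representation with common difference \(d\). Each part is an ED set \(\gamma+d\Z/n\Z\), which is a coset, so \(T_f\) is a union of cosets. Then \(T_f+d=T_f\), so \(d\in\Sigma_f^*\). The parts are necessarily all the cosets of \(d\Z/n\Z\) meeting \(T_f\), which gives uniqueness and shows that every representation arises this way.

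For coarseness, suppose \(d\mid d'\) with both in \(\Sigma_f^*\). Then \(d'\Z/n\Z\subseteq d\Z/n\Z\), so each coset of \(d\Z/n\Z\) is a disjoint union of cosets of \(d'\Z/n\Z\). Each part of the \(d\)-representation is therefore a union of parts of the \(d'\)-representation. Since every \(d\in\Sigma_f^*\) is a multiple of \(d_0\), the \(d_0\)-representation is coarser than all the others. It is the unique coarsest one, because two representations each coarser than the other have the same partition, hence the same part size \(n/d\) and the same \(d\).

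None of these steps is a real obstacle. The only care needed is to distinguish the residue \(d\bmod n\) from the positive divisor \(d\), including the case \(d=n\), where the residue is \(0\). The hypotheses that \(f\) is squarefree and that \(f(0)\ne0\) serve only to make \(T_f\) a well-defined subset of \(\Z/n\Z\).
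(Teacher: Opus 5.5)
Your proof is correct and follows essentially the same route as the paper. The paper cites \cite[Theorem 3.17 and Corollary 3.20]{AS} for this statement, and in the proof of Proposition~\ref{root:medsets} it sketches the same subgroup argument: ED sets of difference \(a\) are cosets of \(a\Z/n\Z\), a MED partition of that difference exists exactly when \(a\Z/n\Z\subseteq\Sigma_f\), and \(\Sigma_f\) is the largest such subgroup. One small refinement: when you recover \(d\) from the coarsest partition via the part size, you need \(T_f\ne\emptyset\), and that follows from \(f\) being nonconstant, not from squarefreeness or \(f(0)\ne0\).
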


This is the stabilizer classification in
\cite[Theorem 3.17 and Corollary 3.20]{AS}.
The corresponding root products have a particularly simple form:
\[
 \prod_{j=0}^{n/d-1}(X-\zeta^{\gamma+jd})
       =X^{n/d}-\zeta^{\gamma n/d}.
\]
Thus MED representations give factorizations into binomials of a common
degree over the splitting field.  The binomial belonging to a part
\(\gamma+d\Z/n\Z\) is defined over \(\F_{q^t}\) precisely when
\(d\mid\gamma(q^t-1)\).
If \(n\) is the exact order of \(f\), all binomials in that representation
are defined over \(\F_{q^t}\) precisely when \(d\mid q^t-1\);
see \cite[Section 3.3]{AS}.
Each binomial is irreducible over that field exactly when its ED
part is a single Frobenius orbit.

For a nonconstant polynomial \(f\) with \(f(0)\ne0\), its support period is
\[
 \spn(f)=\gcd\{j\ge1:[X^j]f\ne0\}.
\]
In the squarefree finite-field situation just described, the known
root--coefficient correspondence gives
\begin{equation}\label{eq:pre-support-stabilizer}
 \spn(f)=|\Sigma_f|=\frac n{d_0};
\end{equation}
see \cite[Sections 3--4]{SP}.
Equivalently, for \(s\mid n\), the polynomial belongs to \(\F_q[X^s]\)
if and only if its root set is a union of cosets of the subgroup of
\(s\)-th roots of unity.  For constant divisors and trivial codes, the ambient period in
\eqref{eq:ambient-period} supplies the corresponding convention.

For a single cyclotomic coset the coarsest representation is explicit.
Put
\[
 u_\gamma=\ord_{\rad(d_\gamma)}(q),\qquad
 \omega_\gamma=
 \begin{cases}
  2u_\gamma,&8\mid d_\gamma
             \text{ and }q^{u_\gamma}\equiv3\pmod4,\\
  u_\gamma,&\text{otherwise}.
 \end{cases}
\]
For \(d_\gamma=1\), this gives \(u_\gamma=\omega_\gamma=1\).
The integer \(\omega_\gamma\) is the MED index.

\begin{proposition}\label{pre:med}
The coset \(\cc{n}{q}{\gamma}\) is ED if and only if
\[
 \rad(d_\gamma)\mid q-1,\qquad
 8\mid d_\gamma\ \Longrightarrow\ q\equiv1\pmod4.
\]
Equivalently, it is ED if and only if
\[
 m_\gamma\mid n,\qquad
 \gamma q\equiv\gamma\pmod{n/m_\gamma}.
\]
One has \(\omega_\gamma\mid m_\gamma\), and the unique coarsest MED
representation is
\[
 \cc{n}{q}{\gamma}
   =\bigsqcup_{j=0}^{\omega_\gamma-1}
         \cc{n}{q^{\omega_\gamma}}{\gamma q^j}.
\]
Every part has size
\(\mu_\gamma=m_\gamma/\omega_\gamma\), and their common difference is
\[
 h_\gamma=\frac n{\mu_\gamma}
          =\frac{n\omega_\gamma}{m_\gamma}.
\]
More generally, all parts of the \(q^t\)-orbit decomposition are ED if
and only if \(\omega_\gamma\mid t\).
The leader is
\[
 \min_{0\le j<\omega_\gamma}
        \bigl(\gamma q^j\bmod h_\gamma\bigr),
\]
where residues in the minimum are taken in \(\{0,\ldots,h_\gamma-1\}\).
\end{proposition}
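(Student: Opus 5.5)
The plan is to reduce Proposition~\ref{pre:med} to the stabilizer classification of Proposition~\ref{pre:medsets}, applied to the single coset \(T=\cc{n}{q}{\gamma}\). We first pass to primitive form. Multiplication by \(g_\gamma\) identifies \(\Z/d_\gamma\Z\) with \(g_\gamma\Z/n\Z\), and it carries \(\cc{d_\gamma}{q}{\gamma/g_\gamma}\) onto \(T\). A common difference \(d'\) modulo \(d_\gamma\) therefore corresponds to \(g_\gamma d'\) modulo \(n\). Hence it suffices to treat \(\gcd(\gamma,n)=1\), \(d=d_\gamma=n\), where \(T=\gamma\langle q\rangle\subseteq(\Z/n\Z)^\times\). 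The additive stabilizer \(\Sigma=h\Z/n\Z\) of \(T\) satisfies \(T+h=T\). Since \(T\) consists of units, the set \(T\bmod h\) consists of the classes of units.

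The key step is to identify the stabilizer. Because \(T\) is a union of cosets of \(h\Z/n\Z\) and \(T\subseteq(\Z/n\Z)^\times\), every element of \(1+h\Z\) lies in \(\gamma^{-1}T\cdot T^{-1}\)\ldots{} It is cleaner to argue multiplicatively. \(T+h=T\) holds if and only if \(\langle q\rangle\) contains, or more precisely is a union of cosets of, the kernel \(K_h=\{x\in(\Z/n\Z)^\times: x\equiv1\pmod h\}\) of the reduction map \((\Z/n\Z)^\times\to(\Z/h\Z)^\times\). This needs every prime of \(n\) to divide \(h\); otherwise \(\gamma+h\Z/n\Z\) contains nonunits. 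Thus the MED parts are exactly the sets \(\gamma q^j K_{h}\), and \(T\) is ED with difference \(h\) iff \(K_h\subseteq\langle q\rangle\). The coarsest representation corresponds to the largest kernel \(K_h\) contained in \(\langle q\rangle\), with \(\rad(n)\mid h\). For \(\rad(n)\mid h\), the group \(K_h\) is \(\{x\equiv1\bmod h\}\), which has order \(n/h\). For odd parts it is cyclic and generated by any element of exact "level" \(h\).

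Next we use standard \(p\)-adic lifting-the-exponent facts. Let \(u=\ord_{\rad(n)}(q)\). Then \(q^u\equiv1\pmod{\rad(n)}\), and \(\langle q^u\rangle=\langle q\rangle\cap K_{\rad(n)}\). For an odd prime \(\ell\), or for \(\ell=2\) with \(q^u\equiv1\pmod 4\), the group \(K_{\ell^a}\subseteq(\Z/\ell^b)^\times\) is cyclic, and its subgroups are exactly the \(K_{\ell^c}\). So \(\langle q^u\rangle=K_{h}\) for the \(h\) with \(h\) determined by the valuations \(v_\ell(q^u-1)\) capped at \(v_\ell(n)\). In the exceptional case \(8\mid n\), \(q^u\equiv3\pmod4\), the group \(\langle q^u\rangle\) is not of the form \(K_h\) at \(2\). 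However, \(\langle q^{2u}\rangle\) is of that form, and no coarser subgroup \(K_h\) fits inside \(\langle q\rangle\). This uses the structure \((\Z/2^b)^\times=\{\pm1\}\times K_4\). In every case the largest \(K_h\subseteq\langle q\rangle\) is \(\langle q^{\omega}\rangle\). This yields \(\omega\mid m\) and the displayed decomposition into the \(\omega\) cosets \(\cc{n}{q^\omega}{\gamma q^j}\), each of size \(\mu=m/\omega\), with \(h=n/\mu\). The ED criterion is the case \(\omega=1\), \(\mu=m\). Writing it out gives \(\rad(d)\mid q-1\) together with the \(8\mid d\) condition. The equivalent form \(m\mid n\), \(\gamma q\equiv\gamma\pmod{n/m}\) says that \(T\) lies in one class modulo \(n/m\) and has \(m\) elements, hence fills it.

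The \(q^t\)-statement follows because the \(q^t\)-orbits are cosets of \(\langle q^t\rangle\). They are ED iff \(\langle q^t\rangle\) is some \(K_h\). By the analysis above, this holds iff \(\langle q^t\rangle\subseteq\langle q^\omega\rangle\) has the kernel shape, i.e.\ iff \(\omega\mid t\). Finally, the parts are cosets \(x+h\Z/n\Z\), each with least representative \(x\bmod h\). The leader is therefore the minimum of \(\gamma q^j\bmod h\) over part representatives. The main obstacle is the \(2\)-adic case: showing that for \(q^u\equiv 3\pmod 4\) and \(8\mid d\), no exponent smaller than \(2u\) yields a kernel subgroup. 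For this I would check directly that \(-1\in\langle q^u\rangle\bmod 8\) prevents \(\langle q^{u}\rangle\) from being contained in, and equal to, any \(K_{2^c}\) with \(c\ge2\).
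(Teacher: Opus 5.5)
Your route differs from the paper's, since the paper gives no proof of this proposition and imports the ED criteria, MED classification and leader formula from \cite{MED}. Your self-contained group-theoretic argument is sound in outline. After passing to primitive form, for a unit \(x\) and \(\rad(n)\mid h\) one has \(x+h\Z/n\Z=x(1+h\Z/n\Z)=xK_h\), while for \(\rad(n)\nmid h\) the translate contains nonunits. So the stabilizer in Proposition~\ref{pre:medsets} corresponds to the largest \(K_h\) inside the cyclic group \(\langle q\rangle\). Every assertion then follows once that subgroup is identified with \(\langle q^{\omega}\rangle\): the part sizes, \(h_\gamma=n/\mu_\gamma\), the ED criteria (the case \(\omega=1\)), the \(q^t\) criterion, and the leader. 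As a by-product you get the closed form \(h=\gcd(q^{\omega}-1,n)\) in primitive form.

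Three small repairs are needed.
\begin{itemize}
\item The sentence ``\(T\) is ED with difference \(h\) iff \(K_h\subseteq\langle q\rangle\)'' should say ``admits an MED representation of common difference \(h\)''. ED means \(\langle q\rangle=K_h\).
\item Your valuation description of \(\langle q^u\rangle\) uses a fact you should state: \(K_{\rad(n)}\) is the direct product of the \(\ell\)-groups \(\{x\in(\Z/\ell^{a_\ell}\Z)^\times:x\equiv1\bmod\ell\}\), so every subgroup is the product of its projections.
\item Your case list omits \(v_2(n)=2\) with \(q^u\equiv3\pmod4\). This case is harmless, because \((\Z/4\Z)^\times=K_2\).
\end{itemize}

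The step you single out as the main obstacle fails as proposed. The assertion \(-1\in\langle q^u\rangle\bmod 8\) is false when \(q^u\equiv3\pmod 8\): for \(q=3\), \(n=8\) one has \(u=1\) and \(\langle 3\rangle\bmod 8=\{1,3\}\). Moreover, excluding \(K_{2^c}\) with \(c\ge2\) needs only \(q^u\equiv3\pmod4\); the substantive case is \(c=1\).

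The right tool is cyclicity. Every subgroup of \(\langle q\rangle\) is cyclic. But for \(a=v_2(n)\ge3\), any \(K_h\) with \(v_2(h)=1\) has \(2\)-component \((\Z/2^a\Z)^\times\cong C_2\times C_{2^{a-2}}\), which is not cyclic. Hence every \(K_h\subseteq\langle q\rangle\) lies in \(\langle q^u\rangle\cap\{x\equiv1\bmod4\}=\langle q^{2u}\rangle\).

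The group \(\langle q^{2u}\rangle\) is itself of kernel shape. Its components lie in the cyclic groups \(\{x\equiv1\bmod\ell\}\) for odd \(\ell\) and \(\{x\equiv1\bmod4\}\) at \(\ell=2\), and every subgroup of these is of kernel type. Since \(q^u\notin\langle q^{2u}\rangle\), this also gives \(2u\mid m\), that is, \(\omega\mid m\) in the exceptional case, which you assert without argument. The same fact, that every subgroup of \(\langle q^{\omega}\rangle\) has kernel shape, is what the ``if'' direction of the \(q^t\) criterion needs.
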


The ED criteria, MED classification, and leader formula appear in
\cite[Sections 3, 4, and 6]{MED}. Here \(\omega_\gamma\) is the number
of coarsest parts, \(\mu_\gamma\) their size, and \(h_\gamma\) their
common difference. The additive stabilizer is
\(h_\gamma\Z/n\Z\), of order \(\mu_\gamma\).
The least extension degree over which the corresponding irreducible
polynomial factors into irreducible binomials is \(\omega_\gamma\)
\cite[Section 5]{MED}.

More precisely, the possible common differences for the coset are
\[
 \left\{\frac{nt}{m_\gamma}:
       \omega_\gamma\mid t,\ t\mid m_\gamma\right\}.
\]
The value indexed by \(t\) corresponds to the decomposition into its
\(q^t\)-orbits, and to the irreducible factorization over
\(\F_{q^t}\subseteq\F_{q^{m_\gamma}}\).
For an arbitrary extension degree, the same decomposition is obtained
by replacing that degree by its greatest common divisor with
\(m_\gamma\).
For a union of several cosets, however, coarsest ED parts can pass
between different cosets.  The stabilizer of the union, rather than
only the individual MED indices, is consequently the datum in
Proposition~\ref{pre:medsets}.

Finally, recall the finite-field arithmetic Singleton bound.
Let \(0\ne\mathcal E=(f)\subsetneq
\F_q[X]/(X^N-\bar\lambda)\), where \(p\nmid N\), and use a
root-of-unity modulus \(n\) for \(f\).  If \(d_0\) is the least common
difference of the defining set, then
\begin{equation}\label{eq:pre-AS}
 d_H(\mathcal E)\le
 \bAS(\mathcal E):=\frac{(\deg f)d_0}{n}+1
                  =\frac{\deg f}{\spn(f)}+1.
\end{equation}
This is the bound of \cite[Section 4]{AS}; its interpretation as the
ordinary Singleton bound of the finite-field support-primitive core
is given in \cite[Section 5]{SP}.
The full-space code has distance one. The ambient-period formulation
in Section~\ref{sec:applications} includes this case.

\section{Coefficient-side description of support periods}\label{sec:coefficients}

\subsection{Compression by the support period}

For a nonconstant polynomial \(f=\sum_{i=0}^d f_iX^i\) over a
commutative ring, put
\[
 \spn(f)=\gcd\{i\geq1:f_i\ne0\}.
\]
The integer \(\spn(f)\) is the \emph{support period} of \(f\)
\cite{SP}. For a nonzero constant, set \(\spn(f)=1\).
A polynomial of support period one is \emph{support-primitive}.
The unique expression
\[
 f(X)=f^\sharp(X^{\spn(f)})
\]
defines its \emph{support-primitive core}. The support includes every
index with a nonzero coefficient, including nonzero nonunits.

\begin{theorem}\label{coef:compression}
Let \(R\) be a commutative ring with identity, let
\(\lambda\in R^\times\), and let \(f\) be a nonconstant monic divisor of
\(X^N-\lambda\). If \(s=\spn(f)\), then
\[
 s\mid N,\qquad
 f(X)=h(X^s),\qquad h\mid Y^{N/s}-\lambda,\qquad \spn(h)=1.
\]
Both \(f(0)\) and \(h(0)\) are units. More generally, for any \(s\mid N\)
and any monic \(h\in R[Y]\),
\[
 h(X^s)\mid X^N-\lambda
 \quad\Longleftrightarrow\quad
 h(Y)\mid Y^{N/s}-\lambda.
\]
\end{theorem}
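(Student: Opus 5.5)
We first prove the general equivalence and then deduce the statements about \(s=\spn(f)\). The tool is monic division over an arbitrary commutative ring. Since quotient and remainder are unique, everything can be tracked through the grading by residues of exponents modulo \(s\).

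\emph{Step 1: unit constant terms.} Write \(X^N-\lambda=f g\) with \(f\) monic. Since \(f\) is monic, \(g\) is monic of degree \(N-\deg f\). Comparing constant terms gives \(f(0)g(0)=-\lambda\in R^\times\), so \(f(0)\) is a unit. Once \(f(X)=h(X^s)\) is established, \(h(0)=f(0)\) is also a unit.

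\emph{Step 2: the equivalence for \(s\mid N\) and monic \(h\).} The direction \(\Leftarrow\) is immediate. If \(Y^{N/s}-\lambda=h(Y)k(Y)\), substitute \(Y=X^s\).

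For \(\Rightarrow\), divide in \(R[Y]\) to get \(Y^{N/s}-\lambda=Q(Y)h(Y)+P(Y)\) with \(\deg P<\deg h\), and then substitute \(X^s\). This gives \(X^N-\lambda=Q(X^s)h(X^s)+P(X^s)\) with \(\deg P(X^s)<s\deg h=\deg h(X^s)\). By uniqueness of monic division by \(h(X^s)\) in \(R[X]\), the hypothesis \(h(X^s)\mid X^N-\lambda\) forces \(P(X^s)=0\). Hence \(P=0\) and \(h\mid Y^{N/s}-\lambda\).

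\emph{Step 3: \(s\mid N\) and compression.} By definition of \(s=\spn(f)\), every exponent of \(f\) with a nonzero coefficient is divisible by \(s\). This includes \(0\), and it includes \(\deg f\) because the leading coefficient is \(1\neq0\). So \(f=h(X^s)\) with \(h\) monic and \(\spn(h)=1\), since \(h\) is nonconstant and the gcd of its positive support exponents is \(s/s=1\).

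To show \(s\mid N\), we use the decomposition \(R[X]=\bigoplus_{c\in\Z/s\Z}M_c\), where \(M_c\) is spanned by the monomials \(X^i\) with \(i\equiv c\pmod s\). Multiplication by \(f\in M_0\) preserves each \(M_c\). Write \(g=\sum_c g_c\) with \(g_c\in M_c\). Then \(fg=X^N-\lambda\) decomposes as \(\sum_c fg_c\), and for \(c\ne0\) and \(c\ne N\bmod s\), the component \(fg_c\) must vanish. Since \(f\) is monic, multiplication by \(f\) is injective, so \(g_c=0\) for those \(c\).

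Suppose \(N\not\equiv0\pmod s\). Then \(fg_0=-\lambda\), a nonzero constant. But \(fg_0\) is either zero or has degree at least \(\deg f\ge1\), which is a contradiction. Hence \(s\mid N\).

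Step 2 then yields \(h\mid Y^{N/s}-\lambda\).

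\emph{Main obstacle.} The only delicate point is working over an arbitrary ring, where there are zero divisors and nonzero nonunit coefficients. The arguments above use only two properties of monic polynomials: they are non-zero-divisors, and monic division is unique. These hold over every commutative ring, so I expect no real difficulty beyond checking that the residue-class grading argument never divides by anything.
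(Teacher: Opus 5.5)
Your proof is correct and matches the paper's argument for \(s\mid N\): you use the same grading of \(R[X]\) by exponent classes modulo \(s\), injectivity of multiplication by the monic \(f\), and the degree contradiction in \(fg_0=-\lambda\). The only variation is that you derive the forward implication of the general equivalence from uniqueness of monic division after substituting \(Y=X^s\), whereas the paper reuses the direct-sum decomposition of the cofactor; both arguments are equally elementary and hold over any commutative ring.
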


\begin{proof}
Write \(X^N-\lambda=fQ\). Multiplication by a monic polynomial is
injective in \(R[X]\), and \(Q\) is monic. Evaluation at zero shows that \(f(0)\)
and \(Q(0)\) are units. By definition \(f=h(X^s)\), with \(h\) monic and
of positive degree. Use the direct sum
\[
 R[X]=\bigoplus_{j=0}^{s-1}X^jR[X^s]
\]
to write \(Q=\sum_{j=0}^{s-1}X^jQ_j(X^s)\). If \(N\not\equiv0\pmod s\),
comparison of the component of residue zero gives
\(hQ_0=-\lambda\), contradicting \(\deg h>0\). Thus \(s\mid N\).
Comparing the other components
and using injectivity of multiplication by \(h\) gives \(Q_j=0\)
for \(j\ne0\), and then
\(hQ_0=Y^{N/s}-\lambda\).
The gcd of the positive exponents of \(h\) is one. The same direct-sum
argument proves the more general equivalence; its reverse implication
also follows immediately by substitution.
\end{proof}

For a fixed ambient length \(N\), define
\begin{equation}\label{eq:ambient-period}
 \sigma_N(f)=\gcd\bigl(\{N\}\cup\{i\geq1:f_i\ne0\}\bigr).
\end{equation}
For nonconstant monic divisors, \(\sigma_N(f)=\spn(f)\), whereas
\(\sigma_N(1)=N\). This ambient period includes the constant divisor
in the correspondence with defining-set stabilizers and code periods.

\begin{proposition}\label{coef:grading}
Let \(R\) be a commutative ring and \(s\geq1\). For \(f\in R[X]\),
the following equivalence holds in \(R[X,T]/(T^s-1)\):
\[
 f(TX)=f(X)\quad\Longleftrightarrow\quad f\in R[X^s].
\]
If \(s\mid N\) and \(\lambda\in R\), the map
\[
 \delta_s:R[X]/(X^N-\lambda)\longrightarrow
 R[X]/(X^N-\lambda)\otimes_R R[T]/(T^s-1),\qquad
 X\longmapsto X\otimes T
\]
is a well-defined algebra map. For an ideal \(\mathcal I\), the
inclusion \(\delta_s(\mathcal I)\subseteq
\mathcal I\otimes_R R[T]/(T^s-1)\) is equivalent to stability under
projection onto each exponent class modulo \(s\).
\end{proposition}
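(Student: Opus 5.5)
The proof has three parts, corresponding to the three assertions of Proposition~\ref{coef:grading}.

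\emph{The equivalence $f(TX)=f(X)\iff f\in R[X^s]$.} Write $f=\sum_i f_iX^i$. The ring $R[X,T]/(T^s-1)$ is a free $R[X]$-module with basis $1,T,\ldots,T^{s-1}$. Expanding $f(TX)=\sum_i f_iT^{i\bmod s}X^i$ and collecting by powers of $T$ gives
\[
 f(TX)=\sum_{j=0}^{s-1}T^j\,\pi_j(f),\qquad \pi_j(f)=\sum_{i\equiv j\ (s)}f_iX^i .
\]
Comparing coefficients in this basis with $f(X)=1\cdot f$, the equality $f(TX)=f(X)$ holds if and only if $\pi_0(f)=f$ and $\pi_j(f)=0$ for $j\ne0$. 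That is exactly the condition $f_i=0$ whenever $s\nmid i$, i.e.\ $f\in R[X^s]$.

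\emph{Well-definedness of $\delta_s$.} Define the $R$-algebra map $R[X]\to R[X]/(X^N-\lambda)\otimes_R R[T]/(T^s-1)$ by $X\mapsto X\otimes T$. It sends $X^N-\lambda$ to $X^N\otimes T^N-\lambda\otimes1$. Since $s\mid N$, we have $T^N=1$, so the image is $(X^N-\lambda)\otimes1=0$. Hence the map factors through the quotient.

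\emph{Ideals.} Identify the tensor product with the free module $\bigoplus_{j<s}R_{N,\lambda}\otimes T^j$. For $c\in R_{N,\lambda}$, take the representative of degree $<N$. Because $s\mid N$, reduction modulo $X^N-\lambda$ preserves exponent classes modulo $s$, so the same computation as above gives
\[
 \delta_s(c)=\sum_{j}\pi_j(c)\otimes T^j .
\]
Here $\pi_j$ is the projection onto coordinates with index $\equiv j\pmod s$; it is $R$-linear and well defined on $R_{N,\lambda}$.

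Since $R[T]/(T^s-1)$ is free over $R$ with basis $T^j$, we have $\mathcal I\otimes_R R[T]/(T^s-1)=\bigoplus_j \mathcal I\otimes T^j$, viewed as a submodule of $\bigoplus_j R_{N,\lambda}\otimes T^j$. There is no flatness issue, because the second factor is free. Therefore $\delta_s(c)$ lies in $\mathcal I\otimes R[T]/(T^s-1)$ if and only if every $\pi_j(c)\in\mathcal I$. Quantifying over $c\in\mathcal I$ gives the stated equivalence with stability under all projections.

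The only point requiring care is the identification of $\mathcal I\otimes_R R[T]/(T^s-1)$ with a submodule of the ambient tensor product. This follows from freeness of $R[T]/(T^s-1)$ over $R$, which makes the functor $-\otimes_R R[T]/(T^s-1)$ exact.
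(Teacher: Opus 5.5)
Your proposal is correct and follows essentially the same route as the paper. You compare coefficients in the free basis $1,T,\ldots,T^{s-1}$ for the first equivalence, use $T^N=1$ for well-definedness, and deduce the last assertion from $\delta_s(c)=\sum_j\pi_j(c)\otimes T^j$ together with freeness of $R[T]/(T^s-1)$ over $R$; your remark that reduction modulo $X^N-\lambda$ preserves exponent classes is harmless but unnecessary, since $\delta_s$ is an algebra map evaluated directly on the representative of degree less than $N$.
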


\begin{proof}
The elements \(1,T,\ldots,T^{s-1}\) form a free \(R\)-basis. Comparing
their coefficients, and then those of \(X\), proves the first assertion.
The relation \(X^N=\lambda\) is preserved since \(T^N=1\).
Writing a word as \(\sum_{j=0}^{s-1}v_j\), where \(v_j\) contains its
coordinates of indices congruent to \(j\), gives
\(\delta_s(v)=\sum_jv_j\otimes T^j\). Freeness of the second factor
proves the last assertion.
\end{proof}

The map \(\delta_s\) is the coaction of the group scheme \(\mu_s\),
expressed through the grading by exponent classes. In the simple-root
case, evaluation at roots of unity recovers the symmetries of
Section~\ref{sec:roots}. When \(p\mid s\), stability under those
evaluations alone need not imply stability under the coordinate
projections; see Example~\ref{root:repeated-counterexample}.

\subsection{Stability at every precision}

Throughout this subsection let \(A=\GR(p^e,r)\), let \(q=p^r\), and let
\(A_j=A/p^jA\), for \(1\leq j\leq e\). A bar denotes reduction modulo
\(p\).

\begin{theorem}\label{coef:precision}
Suppose that \(p\nmid N\), \(\lambda\in A^\times\), and \(f\) is a
nonconstant monic divisor of \(X^N-\lambda\). Then
\[
 \spn(f\bmod p^j)=\spn(\bar f)\qquad(1\leq j\leq e).
\]
If \(s=\spn(\bar f)\) and \(\bar f=\bar h(X^s)\), then
\[
 f=h(X^s),
\]
where \(h\) is the unique monic Hensel lift of \(\bar h\) which divides
\(Y^{N/s}-\lambda\). In particular, the support-primitive core commutes
with reduction and with Hensel lifting.
\end{theorem}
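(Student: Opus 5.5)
The argument runs in two directions, both based on Theorem~\ref{coef:compression} and Hensel uniqueness (Proposition~\ref{pre:hensel}).

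\emph{Step 1: lifting the compression from the residue field.} Put \(s=\spn(\bar f)\) and write \(\bar f=\bar h(X^s)\).

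Since \(\bar f\) is a nonconstant monic divisor of \(X^N-\bar\lambda\) over \(\F_q\), Theorem~\ref{coef:compression} gives \(s\mid N\), \(\bar h\mid Y^{N/s}-\bar\lambda\) and \(\spn(\bar h)=1\). Because \(p\nmid N/s\), the polynomial \(Y^{N/s}-\bar\lambda\) is squarefree. Proposition~\ref{pre:hensel} then supplies a unique monic lift \(h\in A[Y]\) with \(\bar h\) as reduction, \(\deg h=\deg\bar h\) and \(h\mid Y^{N/s}-\lambda\).

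By the reverse implication of Theorem~\ref{coef:compression}, \(h(X^s)\mid X^N-\lambda\). The polynomial \(h(X^s)\) is monic, reduces to \(\bar f\), and has degree \(\deg f\). Since \(X^N-\bar\lambda\) is squarefree, uniqueness of the monic lift in Proposition~\ref{pre:hensel} forces \(f=h(X^s)\). In particular, every positive exponent in the support of \(f\) is divisible by \(s\).

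\emph{Step 2: the support period at each precision.} Fix \(1\le j\le e\) and put \(t_j=\spn(f\bmod p^j)\). Step~1 shows that \(s\mid t_j\).

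For the reverse divisibility, let \(i\) be a positive exponent with \([X^i]\bar f\neq0\). Then \([X^i]f\) is a unit, so it is nonzero modulo \(p^j\). Hence the support of \(\bar f\) is contained in the support of \(f\bmod p^j\), and \(t_j\mid\spn(\bar f)=s\). It follows that \(t_j=s\).

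This step also uses that \(f\bmod p^j\) is nonconstant, which holds because it is monic of degree \(\deg f>0\). No separate argument at precision \(j\) is needed, since the bound \(s\mid t_j\) comes from the exact identity \(f=h(X^s)\) over \(A\), reduced modulo \(p^j\).

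\emph{Step 3: compatibility with reduction and lifting.} We have \(\spn(f)=s\), and the core of \(f\) is \(h\). The compatibility of Hensel factorizations with \(A\to A_j\) in Proposition~\ref{pre:hensel} gives \(h\bmod p^j\) as the Hensel lift of \(\bar h\) over \(A_j\). By Step~2 applied over \(A_j\), this is also the core of \(f\bmod p^j\). So taking the core commutes with both reduction and Hensel lifting.

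The main obstacle is the possibility that a coefficient of \(f\) vanishes modulo \(p\) but not over \(A\), at an exponent not divisible by \(s\). Such a coefficient would make the period drop under lifting. The argument does not try to rule this out coefficient by coefficient. Instead it excludes it structurally, via the uniqueness of the Hensel lift applied to the candidate \(h(X^s)\). The only delicate point is the degree bookkeeping in that uniqueness: \(\deg h(X^s)=s\deg\bar h=\deg\bar f\).
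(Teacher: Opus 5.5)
Your proof is correct and follows the paper's argument: lift \(\bar h\) to the monic divisor \(h\) of \(Y^{N/s}-\lambda\), use Hensel uniqueness for \(X^N-\lambda\) to obtain \(f=h(X^s)\), and get the reverse divisibility from the fact that the support of \(\bar f\) lies inside the support of \(f\bmod p^j\). The only difference is cosmetic: you reduce the identity over \(A\) modulo \(p^j\), whereas the paper reruns the same argument over \(A_j\). Both are valid.
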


\begin{proof}
Theorem~\ref{coef:compression} over \(\F_q\) gives \(s\mid N\) and
\(\bar h\mid Y^{N/s}-\bar\lambda\). The latter polynomial is squarefree,
since \(p\nmid N/s\). Lift the coprime factorization containing
\(\bar h\) to obtain a monic divisor \(h\) of \(Y^{N/s}-\lambda\).
Then \(h(X^s)\) divides \(X^N-\lambda\) and reduces to \(\bar f\).
Uniqueness in Hensel lifting for the squarefree polynomial
\(X^N-\bar\lambda\) gives \(f=h(X^s)\).
All exponents of \(f\) are therefore divisible by \(s\), while
reduction shows that \(\spn(f)\) divides \(\spn(\bar f)=s\).
The two integers are equal. The same argument in \(A_j\) proves the
assertion at every precision.
\end{proof}

Individual coefficient positions may change under lifting, even though
their gcd is preserved.

\begin{example}\label{coef:individual-support}
Over \(\Z/4\Z\), one has
\[
 X^7-1=(X-1)(X^3+2X^2+X+3)(X^3+3X^2+2X+3).
\]
The second factor reduces to \(X^3+X+1\) over \(\F_2\), although its
\(X^2\)-coefficient is nonzero over \(\Z/4\Z\). Both polynomials have
support period one.
\end{example}

\begin{proposition}\label{coef:normalization}
Suppose that \(p\nmid N\). Write
\(\lambda=\lambda_0u\), where \(\lambda_0\) is a Teichm\"uller unit
and \(u\in1+pA\). There is a unique \(v\in1+pA\) with \(v^N=u\).
The substitution \(X=vY\) identifies the quotients for
\(X^N-\lambda\) and \(Y^N-\lambda_0\). If \(f\) is monic of degree \(d\),
its normalized polynomial is
\[
 f_0(Y)=v^{-d}f(vY).
\]
Normalization preserves the support, the support period, and monicity.
The quotient isomorphism induced by \(X\mapsto vY\) is a diagonal
Hamming isometry on coefficient vectors.
\end{proposition}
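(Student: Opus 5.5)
The plan is to verify the claims of Proposition~\ref{coef:normalization} in order: existence and uniqueness of \(v\), then the ring isomorphism, then the effect on polynomials, and finally the isometry.

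\textbf{Existence and uniqueness of \(v\).} These follow from the preliminaries. The group \(1+pA\) is a finite abelian \(p\)-group and \(p\nmid N\). Hence \(x\mapsto x^N\) is an automorphism of \(1+pA\), so \(u\) has a unique preimage \(v\in1+pA\). The decomposition \(A^\times=\mathcal T_A^\times\times(1+pA)\) makes \(\lambda_0\) and \(u\) well defined.

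\textbf{The quotient isomorphism.} Consider the \(A\)-algebra automorphism \(\phi\) of \(A[X]\to A[Y]\) determined by \(X\mapsto vY\). It is invertible because \(v\) is a unit, with inverse \(Y\mapsto v^{-1}X\). We have
\[
\phi(X^N-\lambda)=v^NY^N-\lambda_0u=u\,(Y^N-\lambda_0).
\]
Since \(u\) is a unit, \(\phi\) carries the ideal \((X^N-\lambda)\) onto \((Y^N-\lambda_0)\) and induces an isomorphism \(R_{N,\lambda}\simeq R_{N,\lambda_0}\).

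\textbf{Effect on a monic \(f\).} Write \(f=\sum_i f_iX^i\) with \(f_d=1\). Then
\[
f_0(Y)=v^{-d}f(vY)=\sum_i v^{i-d}f_iY^i .
\]
Each coefficient is multiplied by the unit \(v^{i-d}\), so \(f_i\ne0\) if and only if \(v^{i-d}f_i\ne0\). The support is therefore unchanged, and hence so are \(\spn\) and \(\sigma_N\). The leading coefficient becomes \(v^0\cdot1=1\), so \(f_0\) is monic. The generators are compatible: \(f\) and \(f_0\) differ by the unit \(v^d\) under \(\phi\), so \(\phi((f))=(f_0)\), and \(f\mid X^N-\lambda\) if and only if \(f_0\mid Y^N-\lambda_0\).

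\textbf{The isometry.} A representative \(\sum_{i<N}a_iX^i\) of degree less than \(N\) maps to \(\sum_{i<N}v^ia_iY^i\), which again has degree less than \(N\) and is therefore the canonical representative. On coefficient vectors the map is thus the diagonal matrix \(\operatorname{diag}(1,v,\ldots,v^{N-1})\) with unit entries. Such a map preserves the zero pattern of every vector, and hence preserves Hamming weight.

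The only point that needs care is the uniqueness of the \(N\)-th root, and that is already settled by the \(p\)-group fact. The rest is routine bookkeeping.
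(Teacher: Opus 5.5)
Your proposal is correct and follows the paper's argument: the \(N\)-th power map is an automorphism of the \(p\)-group \(1+pA\), the substitution sends \(X^N-\lambda\) to \(u(Y^N-\lambda_0)\), and every coefficient is multiplied by a unit. You also spell out details the paper leaves implicit, namely the factor \(v^{i-d}\) on the coefficient of index \(i\), the divisibility correspondence under the substitution, and the diagonal matrix \(\mathrm{diag}(1,v,\ldots,v^{N-1})\) acting on canonical representatives.
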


\begin{proof}
The finite abelian group \(1+pA\) is a \(p\)-group. Raising to the
\(N\)-th power is an automorphism because \(\gcd(N,p)=1\).
The substitution carries the first relation to \(u(Y^N-\lambda_0)\).
Every coefficient is multiplied by a unit, which proves the remaining
assertions.
\end{proof}

\subsection{Coefficient formulas and trace reconstruction}

Elementary symmetric functions and exterior powers recover
coefficients from roots without division.

\begin{proposition}\label{coef:coefficients}
Let \(f=X^d+a_1X^{d-1}+\cdots+a_d\) be monic over a commutative ring
\(R\), and let \(M\) be multiplication by the class of \(X\) on the
free module \(R[X]/(f)\). Then
\begin{equation}\label{eq:exterior-coefficients}
 a_j=(-1)^j\Tr\bigl(\textstyle\bigwedge^jM\bigr)\qquad(1\leq j\leq d).
\end{equation}
If \(f=\prod_{i=1}^d(X-\alpha_i)\) in an extension ring, then
\[
 a_j=(-1)^j\sum_{1\leq i_1<\cdots<i_j\leq d}
                  \alpha_{i_1}\cdots\alpha_{i_j}.
\]
Writing \(E_j=(-1)^ja_j\), \(E_0=1\), and
\(P_j=\sum_i\alpha_i^j\), Newton's identities take the form
\begin{equation}\label{eq:newton}
 jE_j=\sum_{i=1}^j(-1)^{i-1}E_{j-i}P_i.
\end{equation}
All three identities hold without any assumption on the
characteristic.
\end{proposition}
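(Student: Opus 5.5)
We prove the three identities in turn, working over an arbitrary commutative ring \(R\) throughout. Division is never used.

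For \eqref{eq:exterior-coefficients}, reduce to a universal case. Let \(U=\Z[a_1,\ldots,a_d]\) and let \(f_U=X^d+a_1X^{d-1}+\cdots+a_d\) be the generic monic polynomial. The companion matrix \(M\) of \(f\) is the image of the companion matrix of \(f_U\) under the specialization \(U\to R\). Traces of exterior powers are polynomial in the entries, so they commute with this specialization. It therefore suffices to prove the identity over \(U\), which is a domain.

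Over the algebraic closure of \(\operatorname{Frac}(U)\), write \(f_U=\prod(X-\alpha_i)\). Use the standard identity
\[
 \det(X-M)=\sum_j(-1)^j\Tr\bigl(\textstyle\bigwedge^jM\bigr)X^{d-j}.
\]
It is a polynomial identity in the matrix entries, proved by expanding the determinant, or by checking it on diagonalizable matrices over \(\C\) and using Zariski density. Combine it with the fact that the characteristic polynomial of the companion matrix is \(f\). This last fact is a direct cofactor expansion, or follows from Cayley–Hamilton together with \(\{1,X,\ldots,X^{d-1}\}\) being a basis. Comparing coefficients of \(X^{d-j}\) gives the identity.

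For the elementary symmetric formula, suppose \(f=\prod(X-\alpha_i)\) in an extension ring \(S\supseteq R\). Expand the product directly; this is valid in any commutative ring. Comparing coefficients in \(S[X]\) gives \(a_j=(-1)^je_j(\alpha)\).

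Newton's identities \eqref{eq:newton} are an identity in \(\Z[\alpha_1,\ldots,\alpha_d]\) between polynomials with integer coefficients. I would prove them there by one of two routes:
\begin{itemize}
\item the generating-function argument \(\frac{d}{dT}\) applied to \(\prod(1-\alpha_iT)\), using the logarithmic derivative computed formally in \(\Z[\alpha][[T]]\), with no division needed since \(\prod(1-\alpha_iT)\) is a unit there;
\item or the combinatorial expansion of \(\sum_{i}(-1)^{i-1}E_{j-i}P_i\), which telescopes to \(jE_j\).
\end{itemize}
The identity then holds in every commutative ring by the ring map \(\Z[\alpha]\to S\). The coefficients \(j\) are integers, so no characteristic assumption is needed.

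The main obstacle is purely expository: making sure the reduction to a universal domain is stated correctly, so that the exterior-power traces of the companion matrix specialize compatibly. The computations themselves are routine.
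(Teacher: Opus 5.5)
Your proof is correct and follows the paper's argument: the characteristic polynomial of the companion operator is \(f\), and its coefficients are signed sums of principal minors, i.e.\ the traces \(\Tr(\bigwedge^jM)\); the second formula is direct expansion; and Newton's identities are proved in \(\Z[\alpha_1,\ldots,\alpha_d]\) and then specialized. The only difference is your detour through the generic domain \(U=\Z[a_1,\ldots,a_d]\), together with a factorization over an algebraic closure that you never use; this is harmless but unnecessary, since both the principal-minor expansion of \(\det(X-M)\) and the equality \(\det(X-M)=f\) hold directly over any commutative ring.
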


\begin{proof}
The characteristic polynomial of the companion operator \(M\) is
\(f\). For any matrix over a commutative ring,
the coefficient of \(Z^j\) in \(\det(I+ZM)\) is the sum of its
principal \(j\)-minors, which is
\(\Tr(\bigwedge^jM)\). This proves \eqref{eq:exterior-coefficients}.
Expansion of a product proves the second formula. Newton's identities
are identities over the integral polynomial ring in the variables
\(\alpha_i\); specialization proves \eqref{eq:newton}.
\end{proof}

For a basic-irreducible divisor of \(X^N-\lambda\), with \(p\nmid N\),
the roots in the proposition are the Frobenius conjugates of a root
\(\alpha\) in an unramified extension.
After Teichm\"uller normalization, these are ordinary \(q\)-power
conjugates. In \(A\), equation~\eqref{eq:newton} determines \(E_j\)
by division only when \(p\nmid j\). The product and exterior-power
formulas apply for every \(j\) and can be evaluated directly on the
compressed core.

Power traces also determine all coefficients through a linear system
whose determinant is a unit. This remains valid when individual Newton
identities have nonunit multipliers.

\begin{proposition}[Reconstruction from a trace matrix]
\label{v2:trace-reconstruction}
Let \(B=\GR(p^e,rd)\) be an unramified extension of
\(A=\GR(p^e,r)\), and let \(\beta\in B\) have residue of degree
\(d\) over \(\F_q\). Write its basic-irreducible polynomial as
\[
 H(Y)=Y^d+c_{d-1}Y^{d-1}+\cdots+c_1Y+c_0.
\]
For relative Frobenius \(\sigma\), write
\(\Tr_{B/A}(z)=\sum_{a=0}^{d-1}\sigma^a(z)\).
For \(k\ge0\), put \(t_k=\Tr_{B/A}(\beta^k)\), and define
\[
 \mathcal H_\beta=(t_{i+j})_{0\le i,j<d},\qquad
 \boldsymbol t=(t_d,t_{d+1},\ldots,t_{2d-1})^{\mathsf T}.
\]
Then \(\det\mathcal H_\beta\in A^\times\), and the coefficients of
\(H\) are the unique solution of
\begin{equation}\label{v2:eq-trace-reconstruction}
 \mathcal H_\beta(c_0,c_1,\ldots,c_{d-1})^{\mathsf T}
       =-\boldsymbol t.
\end{equation}
Equivalently, a monic polynomial \(K\in A[Y]\) of degree \(d\)
equals \(H\) if and only if
\[
 \Tr_{B/A}\bigl(\beta^iK(\beta)\bigr)=0
       \qquad(0\le i<d).
\]
These assertions hold at every coefficient precision, and the linear
system can be solved using only unit pivots.
\end{proposition}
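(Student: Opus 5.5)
The plan is to factor the trace matrix as \(\mathcal H_\beta=V^{\mathsf T}V\), where \(V\) is the Vandermonde matrix of the conjugates, and then read off both the unit determinant and the linear system from this factorization. Put \(\beta_a=\sigma^a(\beta)\) for \(0\le a<d\) and \(V=(\beta_a^{\,j})_{0\le a,j<d}\), a \(d\times d\) matrix over \(B\). Then
\[
 (V^{\mathsf T}V)_{ij}=\sum_{a}\beta_a^{\,i+j}=\sum_a\sigma^a(\beta^{i+j})=t_{i+j},
\]
because \(\sigma\) is a ring automorphism. Hence \(\mathcal H_\beta=V^{\mathsf T}V\) and \(\det\mathcal H_\beta=\prod_{a<b}(\beta_b-\beta_a)^2\), which is the discriminant of \(H\).

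Next I show this determinant is a unit. Since \(\bar\beta\) has degree \(d\) over \(\F_q\), its conjugates \(\bar\beta^{q^a}\) are pairwise distinct. The reduction of \(\sigma\) is the \(q\)-power map, so \(\overline{\beta_a}=\bar\beta^{q^a}\), and each difference \(\beta_b-\beta_a\) has nonzero residue. These differences are therefore units in \(B\). The determinant lies in \(A\), being a symmetric function (or directly, since each \(t_k\in A\)), and it is a unit of \(B\). So it is a unit of \(A\), because a unit of \(A\) is detected by a nonzero residue. One small point to check is that \(H(Y)=\prod_a(Y-\beta_a)\) is the basic-irreducible polynomial of \(\beta\). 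This holds because the product is \(\sigma\)-fixed, hence lies in \(A[Y]\), it has \(\beta\) as a root, and its reduction is the minimal polynomial of \(\bar\beta\).

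Then I derive the system. For each \(i\) and \(a\), \(\beta_a^{\,i}H(\beta_a)=0\). Expanding gives
\[
 \beta_a^{\,i+d}+\sum_j c_j\beta_a^{\,i+j}=0 .
\]
Summing over \(a\) yields \(t_{i+d}+\sum_j c_jt_{i+j}=0\), which is exactly \eqref{v2:eq-trace-reconstruction}. Uniqueness follows because \(\det\mathcal H_\beta\) is a unit. For the equivalent formulation, take a monic \(K=Y^d+\sum_j k_jY^j\). By \(A\)-linearity of the trace, \(\Tr_{B/A}(\beta^iK(\beta))=t_{i+d}+\sum_jk_jt_{i+j}\). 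So the vanishing conditions are precisely the system, with \(K\)'s coefficients as the unknowns, and the unique solution gives \(K=H\).

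Finally I handle precision and pivots. Everything above commutes with reduction \(A\to A_s\) and \(B\to B_s\): traces are compatible with reduction, and the determinant stays a unit. So the same statements hold at every precision. For the pivots, I reduce modulo \(p\). The matrix \(\overline{\mathcal H_\beta}\) is invertible over \(\F_q\), so Gaussian elimination over \(\F_q\) with row permutations finds nonzero pivots. Lifting the same sequence of row operations, each pivot entry in \(A\) has nonzero residue and is therefore a unit. This gives an LU-with-permutation solve over \(A\) that uses only unit pivots. I expect the only delicate point to be the claim that \(\sigma\) reduces to Frobenius on general (non-Teichmüller) elements. This follows from the Teichmüller expansion, since modulo \(p\) only the \(b_0^q\) term survives.
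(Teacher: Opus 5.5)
Your proof is correct and follows the paper's argument essentially step for step: the factorization \(\mathcal H_\beta=V^{\mathsf T}V\) with unit differences of conjugates, the orbit product as the basic-irreducible relation, tracing \(\beta^iH(\beta)=0\) to get the system, and unit pivots obtained by lifting elimination from the residue field. The one point the paper states explicitly and you leave implicit is that the orbit product is the \emph{unique} monic degree-\(d\) relation of \(\beta\) over \(A\). This follows at once from your own observation that any such relation satisfies the trace system, whose solution is unique.
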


\begin{proof}
Let \(\sigma\) be relative Frobenius and put
\(\beta_a=\sigma^a(\beta)\), for \(0\le a<d\). Their residues
are distinct, so every difference \(\beta_b-\beta_a\), with
\(a\ne b\), is a unit. The Vandermonde matrix
\(V=(\beta_a^j)_{0\le a,j<d}\) satisfies
\[
 \mathcal H_\beta=V^{\mathsf T}V,
 \qquad
 \det\mathcal H_\beta=
     \prod_{0\le a<b<d}(\beta_b-\beta_a)^2.
\]
The determinant belongs to \(A\) and is a unit in \(B\), hence is a
unit in \(A\).

The orbit product \(\prod_a(Y-\beta_a)\) is fixed by Frobenius,
vanishes at \(\beta\), and reduces to the irreducible polynomial
of \(\bar\beta\). It is therefore a monic basic-irreducible relation
of degree \(d\). Any two monic degree-\(d\) relations differ by a
polynomial \(K\) of degree below \(d\) with \(K(\beta)=0\).
Taking \(\Tr_{B/A}(\beta^iK(\beta))\), for \(0\le i<d\),
and using the invertibility of \(\mathcal H_\beta\) makes every
coefficient of \(K\) zero. Thus the orbit product is \(H\).
Multiplying \(H(\beta)=0\) by \(\beta^i\) and taking
the trace gives \eqref{v2:eq-trace-reconstruction}. Invertibility of
\(\mathcal H_\beta\) proves uniqueness and the stated criterion
for \(K\). Its reduction modulo every \(p^j\) is still invertible.
Gaussian elimination can therefore choose each pivot with nonzero
residue and continue on the invertible remaining block; every pivot
is a unit.
\end{proof}

\subsection{A local lifting formula}

The binomial ambient polynomial gives a particularly simple inverse
for the complementary factor.

\begin{theorem}\label{coef:hensel}
Let \(\lambda\in A^\times\), \(p\nmid N\), and
\(1\leq t\leq s\) with \(s+t\leq e\).
Suppose that a positive-degree monic polynomial
\(G_s\in A_s[X]\) divides \(f=X^N-\lambda\). Choose any monic lift \(G\) to \(A_{s+t}[X]\),
with \(\deg G=d\), and form
\[
 R=\remn_G(f),\qquad E=R/p^s\in A_t[X].
\]
Here \(R/p^s\) is defined coefficientwise by the inverse of the
canonical \(A\)-module isomorphism
\(A_t\longrightarrow p^sA_{s+t}\), \(a\longmapsto p^sa\).
Put
\begin{equation}\label{eq:local-correction}
 \Delta=\remn_{G\bmod p^t}
      \bigl(E(N\lambda)^{-1}XG'\bigr)\in A_t[X].
\end{equation}
All factors on the right-hand side are reduced modulo \(p^t\).
Then \(G+p^s\Delta\) is the unique monic divisor of \(f\) modulo
\(p^{s+t}\) which reduces to \(G_s\).
\end{theorem}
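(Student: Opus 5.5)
The plan is a one-step Newton correction for the factor \(G\) of \(f=X^N-\lambda\), using the binomial shape of \(f\) to invert the cofactor explicitly. Uniqueness then comes from Proposition~\ref{pre:hensel}.

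\textbf{Division step.} Over \(A_{s+t}\), monic division gives \(f=QG+R\) with \(Q\) monic and \(\deg R<d\). Reducing modulo \(p^s\), \(G\) becomes \(G_s\), which divides \(f\). Uniqueness of monic division over \(A_s\) then forces \(R\equiv0\pmod{p^s}\). Hence every coefficient of \(R\) lies in \(p^sA_{s+t}\), and \(E=R/p^s\in A_t[X]\) is well defined, with \(f=QG+p^sE\) in \(A_{s+t}[X]\).

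\textbf{Inverting the cofactor.} The key claim is that \(Q\cdot (N\lambda)^{-1}XG'\equiv1\) modulo \((G,p^t)\) in \(A_t[X]\). Since \(t\le s\), we have \(f\equiv QG\pmod{p^t}\). Differentiating gives \(f'\equiv Q'G+QG'\), so \(Xf'\equiv XQG'\) modulo \((G,p^t)\). On the other hand, \(Xf'=NX^N=N(f+\lambda)\equiv N\lambda\) modulo \((G,p^t)\), because \(G\mid f\) modulo \(p^t\). Here \(N\lambda\) is a unit because \(p\nmid N\), which proves the claim.

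\textbf{Verifying the correction.} By definition, \(\Delta\equiv EQ^{-1}\) modulo \(G\) in \(A_t[X]\), and \(\deg\Delta<d\). So monic division gives \(E=\Delta Q+\Theta G\) in \(A_t[X]\) for some \(\Theta\). Lift \(\Delta\) and \(\Theta\) arbitrarily to \(A_{s+t}[X]\), keeping \(\deg\Delta<d\). Since multiplication by \(p^s\) only sees classes modulo \(p^t\), the products \(p^s\Delta\) and \(p^s\Theta\) are independent of the lifts. Then, in \(A_{s+t}[X]\),
\[
 f=QG+p^s\Delta Q+p^s\Theta G=(G+p^s\Delta)(Q+p^s\Theta)-p^{2s}\Delta\Theta .
\]
Because \(2s\ge s+t\), the last term vanishes. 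The polynomial \(G+p^s\Delta\) is monic of degree \(d\) and reduces to \(G_s\) modulo \(p^s\), so it is a monic divisor of \(f\) of the required kind.

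\textbf{Uniqueness.} The polynomial \(\bar f=X^N-\bar\lambda\) is squarefree, and \(\bar G_s\) is a monic divisor of it. By the compatibility of Hensel factorizations with reduction in Proposition~\ref{pre:hensel}, there is a unique monic divisor of \(f\) over \(A_{s+t}\) of degree \(d\) reducing to \(\bar G_s\). Any monic divisor reducing to \(G_s\) has degree \(d\) and reduces to \(\bar G_s\), so it must equal \(G+p^s\Delta\).

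The main obstacle I expect is bookkeeping: keeping track of which identities hold modulo \(p^t\) and which modulo \(p^{s+t}\), and justifying that \(p^s\cdot(\text{class mod }p^t)\) is well defined. The hypothesis \(t\le s\) is used twice: to reduce \(f\equiv QG\) modulo \(p^t\), and to kill \(p^{2s}\Delta\Theta\).
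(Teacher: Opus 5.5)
Your proposal is correct and follows essentially the same route as the paper: you invert the cofactor modulo \((G,p^t)\) via \(Xf'=NX^N\equiv N\lambda\), write \(E-\Delta Q\) as a multiple of \(G\), absorb that multiple into the corrected cofactor using \(2s\ge s+t\), and get uniqueness from the coprime Hensel lemma (Proposition~\ref{pre:hensel}). The differences are minor refinements: you justify \(R\equiv0\pmod{p^s}\) explicitly, and you work with the division quotient reduced modulo \(p^t\) rather than with the complementary factor modulo \(p^s\), which the paper uses and then implicitly identifies with that quotient.
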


\begin{proof}
Let \(Q\in A_s[X]\) be the complementary factor modulo \(p^s\).
Differentiating \(f=QG_s\) and reducing modulo \(G_s\) gives
\[
 QG_s'\equiv NX^{N-1}\pmod {G_s}.
\]
Since \(X^N\equiv\lambda\pmod {G_s}\), multiplication by \(X\) yields
\begin{equation}\label{eq:inverse}
 Q^{-1}\equiv (N\lambda)^{-1}XG_s'\pmod {G_s}.
\end{equation}
Reduce this identity modulo \(p^t\).
Monic division by \(G\) gives \(f=\widetilde QG+p^sE\),
where \(p^sE\) is independent of the chosen lift of \(E\).
By \eqref{eq:local-correction},
\(E-\widetilde Q\Delta\) is divisible by \(G\bmod p^t\); write it
as \(UG\) modulo \(p^t\). Since \(t\leq s\),
\[
 f=(\widetilde Q+p^sU)(G+p^s\Delta)\pmod {p^{s+t}}.
\]
Uniqueness follows from \eqref{eq:inverse}, or from the coprime
Hensel theorem.
\end{proof}

Starting at \(s=1\), this construction permits a precision sequence
\(s\mapsto s+\min(s,e-s)\). It also applies directly to the
support-primitive core, with \(N\) replaced by \(N/\spn(G_s)\).

\begin{proposition}\label{coef:enlargement}
Use the hypotheses and notation of Theorem~\ref{coef:hensel}, and let
\(k\ge1\) satisfy \(p\nmid k\). The correction \(\Delta\) is unchanged
if the ambient polynomial \(X^N-\lambda\) is replaced by
\(X^{kN}-\lambda^k\), with the same monic lift \(G\).
\end{proposition}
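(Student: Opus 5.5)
We must show that \(\Delta\) depends only on \(G\) and \(E\) modulo \(G\), and that both are unchanged when \(X^N-\lambda\) is replaced by \(X^{kN}-\lambda^k\). Write \(f=X^N-\lambda\) and \(f_k=X^{kN}-\lambda^k\). Since \(p\nmid kN\), Theorem~\ref{coef:hensel} applies to \(f_k\). The hypothesis also holds for \(f_k\): \(f\mid f_k\) over every coefficient ring, because
\[
 f_k=f\cdot\Bigl(\sum_{i=0}^{k-1}X^{iN}\lambda^{k-1-i}\Bigr),
\]
so \(G_s\mid f\) implies \(G_s\mid f_k\) modulo \(p^s\).

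\emph{Step 1: the correction numerator \(E\) modulo \(G\).} The ring \(A_{s+t}[X]/(G)\) is free with monomial basis, so remainders are computed there. Modulo \(G\), write \(X^N\equiv\lambda+R\) with \(R\in p^sA_{s+t}[X]\). Then
\[
 X^{kN}-\lambda^k\equiv(\lambda+R)^k-\lambda^k
 =k\lambda^{k-1}R+\binom k2\lambda^{k-2}R^2+\cdots .
\]
Every term after the first contains \(p^{2s}\), which vanishes in \(A_{s+t}\) because \(t\le s\). Hence \(R_k:=\remn_G(f_k)=k\lambda^{k-1}R\) exactly: the right side already has degree below \(d\), and \(k\lambda^{k-1}\) is a scalar. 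Dividing by \(p^s\) coefficientwise gives \(E_k=k\lambda^{k-1}E\) in \(A_t[X]\). Division by \(p^s\) is \(A\)-linear, so scalar multiplication commutes with it.

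\emph{Step 2: the inverse factor.} The formula for \(f_k\) uses \((kN\lambda^k)^{-1}XG'\). Combining this with Step 1,
\[
 E_k(kN\lambda^k)^{-1}XG'=k\lambda^{k-1}E\cdot k^{-1}N^{-1}\lambda^{-k}XG'=E(N\lambda)^{-1}XG'.
\]
Here \(k\) and \(N\) are units in \(A\) since \(p\nmid kN\). Taking \(\remn_{G\bmod p^t}\) gives the same \(\Delta\).

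\emph{Consistency check.} Uniqueness of the lift gives the same conclusion independently. The unique monic divisor of \(f\) modulo \(p^{s+t}\) reducing to \(G_s\) also divides \(f_k\). By Hensel uniqueness for the squarefree \(X^{kN}-\bar\lambda^k\) (Proposition~\ref{pre:hensel}), it is therefore the lift for \(f_k\) too. So \(G+p^s\Delta=G+p^s\Delta_k\), and \(p^s(\Delta-\Delta_k)=0\) with \(\deg(\Delta-\Delta_k)<d\), which gives \(\Delta=\Delta_k\) in \(A_t[X]\). This route also settles the claim.

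The only point needing care is the truncation in Step 1, namely that \(p^{2s}=0\) in \(A_{s+t}\); it holds because \(t\le s\).
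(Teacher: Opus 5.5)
Your proof is correct and follows the paper's proof: you write \(X^N\equiv\lambda+R\) modulo \(G\), use \(t\le s\) to kill every term containing \(R^2\) so that the new remainder is \(k\lambda^{k-1}R\), and cancel this factor against \((kN\lambda^k)^{-1}XG'=(k\lambda^{k-1})^{-1}(N\lambda)^{-1}XG'\) in \eqref{eq:local-correction}. Your additional uniqueness check is a valid independent route; the degree bound there is unnecessary, since injectivity of \(A_t\to p^sA_{s+t}\) already gives \(\Delta=\Delta_k\).
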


\begin{proof}
In the quotient by the chosen lift, write \(X^N=\lambda+R\), where
the coefficients of \(R\) are multiples of \(p^s\).
Since \(t\leq s\), the new remainder is
\(k\lambda^{k-1}R\) modulo \(p^{s+t}\).
The complementary-factor inverse in \eqref{eq:inverse} becomes
\((kN\lambda^k)^{-1}XG'\), which is
\((k\lambda^{k-1})^{-1}\) times the old inverse.
The two changes cancel in \eqref{eq:local-correction}.
\end{proof}

Consequently a cyclic factor may be lifted using the exact order of
its roots, and its compressed core may be lifted at the still smaller
order supplied by Theorem~\ref{root:layer}. These changes affect
arithmetic cost but not the resulting factor.

\subsection{Canonical blocks at repeated-root lengths}

Substitution into the simple-root factorization gives canonical local
blocks at repeated-root lengths.

\begin{proposition}\label{coef:repeated-blocks}
Let \(R=\Z/p^e\Z\), let \(N=p^am\), with \(p\nmid m\), and let
\[
 Y^m-1=\prod_C F_C(Y)
\]
be its factorization into monic basic irreducibles. Then
\begin{equation}\label{eq:canonical-blocks}
 X^N-1=\prod_C F_C(X^{p^a})
\end{equation}
is a product of pairwise comaximal polynomials. For each \(C\), the
quotient \(T_C=R[X]/(F_C(X^{p^a}))\) is local, with maximal ideal
and residue field
\[
 \mathfrak m_C=(p,F_C(X))/(F_C(X^{p^a})),\qquad
 T_C/\mathfrak m_C\simeq\F_p[X]/(\overline{F_C}).
\]
The factorization is unique, up to reordering, among factorizations
into monic pairwise comaximal factors whose quotient rings are local.
If \(F_C(Y)=H_C(Y^{s_C})\), where
\(s_C=\spn(F_C)\), then
\[
 F_C(X^{p^a})=H_C(X^{p^as_C}),\qquad
 \spn(F_C(X^{p^a}))=p^as_C.
\]
These assertions remain valid at every coefficient precision.
\end{proposition}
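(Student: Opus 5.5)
We split the statement into four parts: the product formula, pairwise comaximality, the locality and residue field of each block, and uniqueness. The support-period formula at the end is then a short coda.

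\emph{Product formula and comaximality.} Substituting \(Y=X^{p^a}\) into \(Y^m-1=\prod_C F_C(Y)\) gives \eqref{eq:canonical-blocks} at once. For comaximality, reduce modulo \(p\). Over \(\F_p\) we have \(\overline{F_C}(X^{p^a})=\overline{F_C}(X)^{p^a}\), because \(\overline{F_C}\) has coefficients in \(\F_p\) and Frobenius is additive. The \(\overline{F_C}\) are pairwise coprime factors of the squarefree polynomial \(Y^m-1\), so these \(p^a\)-th powers are pairwise coprime as well. By the criterion recalled before Proposition~\ref{pre:hensel}, the monic lifts are comaximal.

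\emph{Locality and residue field.} The ring \(T_C\) is finite. Since \(p\) is nilpotent, every maximal ideal of \(T_C\) contains \(p\). So the maximal ideals of \(T_C\) correspond to the maximal ideals of
\[
T_C/pT_C=\F_p[X]/(\overline{F_C}^{\,p^a}).
\]
The latter ring is local with maximal ideal \((\overline{F_C})\), because \(\overline{F_C}\) is irreducible; this is where \(R=\Z/p^e\Z\) (residue field \(\F_p\)) is used. Pulling back gives \(\mathfrak m_C=(p,F_C(X))\) and \(T_C/\mathfrak m_C\simeq\F_p[X]/(\overline{F_C})\). Every element of \(\mathfrak m_C\) is nilpotent, since \(p\) is and \(F_C(X)^{p^a}\in(p,F_C(X^{p^a}))\).

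\emph{Uniqueness.} This is the step I expect to need the most care. By the Chinese remainder theorem, \(R[X]/(X^N-1)\simeq\prod_C T_C\) is a finite product of local rings. A finite commutative ring decomposes uniquely as a product of local rings, via its complete set of primitive idempotents. Now take another factorization into monic pairwise comaximal factors \(G_i\) with local quotients. It yields a second such decomposition, so the ideals \((G_i)\) must coincide with the ideals \((F_C(X^{p^a}))\) up to order. To pass from equal ideals to equal polynomials, I would argue as follows. Each \(G_i\) divides \(X^N-1\), and the ideal \((G_i)\) equals the kernel of the projection onto the corresponding factor. Two monic divisors generating the same ideal of the quotient ring, both dividing \(X^N-1\), must be equal: each divides the other, since multiplication by a monic polynomial is injective and degrees add, and both are monic.

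\emph{Support period.} The formula \(F_C(X^{p^a})=H_C(X^{p^as_C})\) follows by substitution. For the period, note that the positive exponents of \(F_C(X^{p^a})\) are exactly \(p^a\) times those of \(F_C\). Hence the gcd of these exponents is \(p^a\spn(F_C)=p^as_C\).

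\emph{Every precision.} All of the above is unchanged over \(\Z/p^j\Z\). The factorization of \(Y^m-1\) reduces compatibly by Proposition~\ref{pre:hensel}. The support period is preserved at each precision by Theorem~\ref{coef:precision}, applied to \(F_C\mid Y^m-1\) with \(p\nmid m\).
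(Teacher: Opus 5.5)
Your proof is correct. The product identity, comaximality, the locality and residue-field computation, the support period, and the precision remark follow the paper's proof essentially step for step. Only the uniqueness step takes a different route.

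The paper argues at the residue level. Pairwise coprimality forces each primary factor \(\overline{F_C}^{\,p^a}\) of \(X^N-1\bmod p\) to lie entirely inside one reduced factor \(\bar G_i\). Locality survives reduction modulo the nilpotent ideal generated by \(p\), so no \(\bar G_i\) can contain two distinct primary factors. Hence both factorizations lift the same coprime, non-squarefree residue factorization, and the uniqueness clause of Proposition~\ref{pre:hensel} identifies them.

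You instead use the intrinsic uniqueness of the decomposition of \(R[X]/(X^N-1)\) into local rings via primitive idempotents. This matches the kernel ideals \((G_i)\) with the ideals \((F_C(X^{p^a}))\) up to order. You then pass from equal ideals to equal monic divisors of \(X^N-1\) by mutual divisibility and degree comparison.

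Your route needs no lifting argument and applies verbatim to monic comaximal factorizations with local quotients over an arbitrary finite commutative ring. The paper's route reuses the coprime Hensel lemma already in place, and it identifies each block as the unique monic lift of its residue primary factor.
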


\begin{proof}
Substitution proves the product identity and the support assertion.
Over \(\F_p\),
\(\overline{F_C}(X^{p^a})=\overline{F_C}(X)^{p^a}\).
Distinct factors therefore have coprime reductions, so are comaximal
over \(R\). The reduction
\(T_C/pT_C\simeq\F_p[X]/(\overline{F_C}^{p^a})\) is local,
with maximal ideal generated by \(\overline{F_C}\).
Since \(pT_C\) is nilpotent and hence lies in every maximal ideal,
taking inverse images proves the asserted maximal ideal and residue
field of \(T_C\).

For uniqueness, consider another factorization with the stated
properties. Pairwise coprimality assigns each entire primary factor
\(\overline{F_C}^{p^a}\) to one residue block. A block with local
quotient cannot contain two distinct residue primary factors, since
reduction modulo the nilpotent ideal generated by \(p\) preserves
locality. Thus both factorizations lift the same coprime primary
factorization, and Proposition~\ref{pre:hensel} identifies their
factors up to order. Precision stability follows from
Theorem~\ref{coef:precision} applied before substitution.
\end{proof}

Over \(A=\GR(p^e,r)\), the same substitution gives comaximal local
blocks. Their residue factors satisfy
\[
 \bar F(X^{p^a})=
 \bigl(\bar F^{[-a]}(X)\bigr)^{p^a},
\]
where \(\bar F^{[-a]}\) is obtained by applying the inverse of the
\(a\)-fold \(p\)-power Frobenius to every coefficient. This automorphism
permutes the irreducible factors of \(Y^m-1\), so distinct substituted
blocks still have coprime reductions. Over \(\F_p\), the coefficient
twist is the identity.

In an unramified extension containing the relevant Teichm\"uller roots,
each block is a product of factors \(X^{p^a}-\beta\).
Choose the Teichm\"uller unit \(\theta\) satisfying
\(\theta^{p^a}=\beta\). Then
\begin{equation}\label{eq:teich-block}
 X^{p^a}-\beta=
 \sum_{j=1}^{p^a}\binom{p^a}{j}
          \theta^{p^a-j}(X-\theta)^j.
\end{equation}
For \(1\leq j<p^a\), the coefficient in this expansion has truncated
\(p\)-adic valuation
\(\min\{e,a-v_p(j)\}\). Indeed
\(\binom{p^a}{j}=(p^a/j)\binom{p^a-1}{j-1}\), and the second factor
is prime to \(p\). Formula \eqref{eq:teich-block} describes the
nilpotent thickening of the root. The blocks in
\eqref{eq:canonical-blocks} give the canonical decomposition of the
quotient ring into local rings; their individual divisors require
further data.

\section{Root-side description of support periods}\label{sec:roots}

At a simple-root length, distinct roots of the ambient constacyclic
polynomial have unit differences in a suitable unramified extension.
This property identifies support periods with multiplicative
stabilizers and relates coefficient compression to MED representations
of defining sets. The corresponding finite-field formulation is given
in \cite{SP}.

Throughout the simple-root part of this section, let
\[
 A=\GR(p^e,r),\qquad q=p^r,\qquad
 A_j=A/p^jA,\qquad A^{(t)}=\GR(p^e,rt),
\]
and assume that \(p\nmid N\) and \(\lambda\in A^\times\).  All unramified
extensions which occur together are embedded in a common unramified
extension.  By a root set we mean the set of distinct roots in such an
extension; the roots of \(X^N-\lambda\) are simple.

\subsection{Unit normalization and multiplicative stabilizers}

Proposition~\ref{coef:normalization} gives a unique \(v\in1+pA\) such that
\[
 \lambda=\lambda_0v^N.
\]
Here \(\lambda_0\) is the Teichm\"uller representative of
\(\bar\lambda\).  If \(F\) is a monic divisor of degree \(d\), the
corresponding monic divisor of \(Y^N-\lambda_0\) is
\[
 F_0(Y)=v^{-d}F(vY)
\]
and its roots are obtained from those of \(F\) by multiplication
by \(v^{-1}\).  The coefficient transformation multiplies the
coefficient of index \(i\) by the unit \(v^{i-d}\), at every
precision.

Choose an unramified extension \(B/A\) in which \(X^N-\lambda\)
splits. Such an extension exists because the distinct residue roots
lift uniquely by Hensel's lemma. Write
\[
 \mu_N(B)=\{\zeta\in B^\times:\zeta^N=1\}.
\]
This is a cyclic group of order \(N\), and the full root set of
\(X^N-\lambda\) is a torsor under \(\mu_N(B)\).  Reduction is a bijection
on this root set and on \(\mu_N(B)\).  For distinct elements of either
set the difference is a unit.

\begin{definition}\label{root:stabilizer-definition}
For a monic divisor \(F\) of \(X^N-\lambda\), let \(R_F\) be its root
set in \(B\).  Its multiplicative stabilizer is
\[
 S_F=\{\zeta\in\mu_N(B):\zeta R_F=R_F\}.
\]
The subgroup \(S_F\), and its order, are independent of the choice of
the splitting extension.
\end{definition}

\begin{theorem}\label{root:stabilizer}
Let \(F\) be a monic divisor of positive degree of \(X^N-\lambda\).
Then, for every \(1\le j\le e\),
\begin{equation}\label{root:period-stabilizer}
 \spn(F\bmod p^j)=\spn(\bar F)=\spn(F)=|S_F|.
\end{equation}
In particular, \(s=|S_F|\) divides \(N\).  There is a unique monic
polynomial \(H\in A[Y]\) such that
\begin{equation}\label{root:core-substitution}
 F(X)=H(X^s).
\end{equation}
It divides \(Y^{N/s}-\lambda\), is support-primitive, and satisfies
\begin{equation}\label{root:core-root-product}
 H(Y)=\prod_{\alpha\in R_F/S_F}(Y-\alpha^s),
\end{equation}
where one representative is chosen from each \(S_F\)-orbit.  All
reductions \(H\bmod p^j\) are support-primitive.
\end{theorem}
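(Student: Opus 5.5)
The first two equalities in \eqref{root:period-stabilizer} are already Theorem~\ref{coef:precision}: it gives \(\spn(F\bmod p^j)=\spn(\bar F)\) for every \(1\le j\le e\), and the case \(j=e\) is \(\spn(F)\). So the new content is the identity \(\spn(F)=|S_F|\). For this I will prove two inclusions between divisors of \(N\): every \(s\) with \(F\in A[X^s]\) satisfies \(\mu_s(B)\subseteq S_F\), and conversely \(F\in A[X^{|S_F|}]\). Since \(S_F\) is a subgroup of the cyclic group \(\mu_N(B)\), it equals \(\mu_{|S_F|}(B)\). Subgroups of \(\mu_N(B)\) correspond to divisors of \(N\) and are ordered by inclusion as their orders are by divisibility, so the two inclusions together give \(\spn(F)=|S_F|\).

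\emph{Period implies symmetry.} Let \(s=\spn(F)\). By Theorem~\ref{coef:compression}, \(s\mid N\) and \(F=H(X^s)\). For \(\zeta\in\mu_s(B)\) we then have \(F(\zeta X)=F(X)\), so \(\zeta R_F\subseteq R_F\). Since multiplication by \(\zeta\) is injective, this is an equality. Hence \(\mu_s(B)\subseteq S_F\), and \(s\mid |S_F|\).

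\emph{Symmetry implies period.} Let \(t=|S_F|\), so \(S_F=\mu_t(B)\). The roots are simple: distinct roots of \(X^N-\lambda\) differ by units, and Proposition~\ref{pre:hensel} applies. Hence \(F=\prod_{\alpha\in R_F}(X-\alpha)\) in \(B[X]\). To justify this, note that \(\prod(X-\alpha)\) is a monic divisor of \(X^N-\lambda\) with the same reduction as \(F\), so Hensel uniqueness over \(B\) identifies the two. Group the roots into \(S_F\)-orbits \(\alpha\mu_t\). For each orbit, \(\prod_{\zeta\in\mu_t}(X-\alpha\zeta)=X^t-\alpha^t\). This identity follows from \(Y^t-1=\prod_{\zeta\in\mu_t}(Y-\zeta)\) over \(B\), which in turn holds by the same Hensel uniqueness, since \(p\nmid t\). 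Therefore
\[
F(X)=\prod_{\alpha\in R_F/S_F}(X^t-\alpha^t)=H(X^t),\qquad H(Y)=\prod_{\alpha\in R_F/S_F}(Y-\alpha^t).
\]
This gives \(t\mid\spn(F)\), and with the first inclusion \(t=s\). It also proves \eqref{root:core-root-product}.

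\emph{The core.} Since \(F\) has coefficients in \(A\) and \(X\mapsto X^s\) is injective on coefficients, \(H\) has coefficients in \(A\). It is unique because the monomials \(X^{si}\) are distinct. Theorem~\ref{coef:compression} gives \(H\mid Y^{N/s}-\lambda\) and \(\spn(H)=1\). Support-primitivity of every reduction \(H\bmod p^j\) comes from \(\spn(F\bmod p^j)=s\) together with \(F\bmod p^j=(H\bmod p^j)(X^s)\): if \(H\bmod p^j\) had period \(c>1\), then \(F\bmod p^j\) would have period \(cs\), a contradiction. The reduction is nonconstant because it is monic of positive degree.

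The only delicate point is the orbit product step. One has to make sure that the factorization of \(F\) into linear factors over \(B\) is legitimate over a ring, rather than just over a field. The simple-root hypothesis settles this through Hensel uniqueness, which is why I invoke it at both places where a product over roots is identified with a polynomial.
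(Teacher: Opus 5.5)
Your proof is correct. Two parts match the paper: the treatment of the precision equalities and of the reductions of the core, and the direction \(\spn(F)\mid|S_F|\), where \(F=H(X^s)\) gives \(\mu_s(B)\subseteq S_F\). The route differs in the direction \(|S_F|\mid\spn(F)\).

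The paper takes a generator \(\zeta\) of \(S_F\) and observes that \(\zeta^{-\deg F}F(\zeta X)\) and \(F(X)\) have the same roots, hence are equal. Comparing unit constant terms gives \(\zeta^{\deg F}=1\), and so \(F(\zeta X)=F(X)\). It then argues coefficientwise: \((\zeta^i-1)a_i=0\), and \(\zeta^i-1\) is a unit whenever \(|S_F|\nmid i\). The orbit identity \(\prod_{\zeta\in\mu_s(B)}(X-\alpha\zeta)=X^s-\alpha^s\) appears only afterwards, to derive \eqref{root:core-root-product}.

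You apply that orbit identity immediately with \(t=|S_F|\). In one computation this gives \(F\in B[X^t]\cap A[X]=A[X^t]\) together with the root-product formula for the core. No constant-term comparison or coefficient-annihilation step is needed.

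What each route buys:
\begin{itemize}
\item Your version is more economical, since the orbit product does double duty.
\item The paper's version works directly with the invariance \(F(\zeta X)=F(X)\). It therefore pinpoints what the simple-root hypothesis supplies: a unit multiplier \(\zeta^i-1\). This fits the coaction viewpoint of Proposition~\ref{coef:grading}. It is also exactly the step that fails in Example~\ref{root:repeated-counterexample}, where invariance under \(X\mapsto-X\) leaves the odd coefficient \(2\) annihilated only by the nonunit \(-2\).
\end{itemize}

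Both arguments rest on the same input, the splitting \(F=\prod_{\alpha\in R_F}(X-\alpha)\) over \(B\). You justify it via Hensel uniqueness somewhat more explicitly than the paper. Your proof leaves implicit two small facts, neither of which is a gap. First, each orbit \(\alpha\mu_t(B)\) has exactly \(t\) elements, because \(\alpha\) is a unit. Second, the reductions of the elements of \(R_F\) are precisely the roots of \(\bar F\).
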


\begin{proof}
Put \(s_0=\spn(F)\).  By
Theorems~\ref{coef:compression} and~\ref{coef:precision},
the support period at every precision is \(s_0\), this integer
divides \(N\), and \(F(X)=H(X^{s_0})\) for a unique monic
divisor \(H\) of \(Y^{N/s_0}-\lambda\).  All reductions of
\(H\) are support-primitive.

The action of \(S_F\) on
\(R_F\) is free, since every root is a unit.  Set \(s_1=|S_F|\);
then \(s_1\mid\deg F\) and \(s_1\mid N\).  If \(\zeta\) generates
\(S_F\), the monic polynomials
\(\zeta^{-\deg F}F(\zeta X)\) and \(F(X)\) have the same roots.
Their linear-factor products in \(B[X]\) are therefore equal.
Comparison of their unit constant terms gives
\(\zeta^{\deg F}=1\).  Consequently \(F(\zeta X)=F(X)\).
For a coefficient \(a_i\) of \(F\) this gives
\[
 (\zeta^i-1)a_i=0.
\]
Since \(\zeta\) is a prime-to-\(p\) Teichm\"uller root of unity,
\(\zeta^i-1\) is a unit whenever \(s_1\nmid i\).  Thus \(a_i=0\)
at every such index, and \(s_1\mid s_0\).
Conversely, \(F(X)=H(X^{s_0})\) implies that multiplication by
each \(s_0\)-th root of unity preserves \(R_F\).  Since \(s_0\mid N\),
this gives \(\mu_{s_0}(B)\subseteq S_F\), so \(s_0\mid s_1\).
Therefore \(s_0=s_1\).

Each orbit has the form \(\alpha\mu_s(B)\), and
\[
 \prod_{\zeta\in\mu_s(B)}(X-\alpha\zeta)=X^s-\alpha^s.
\]
Multiplying these identities proves
\eqref{root:core-root-product}.  Distinct orbits have distinct
\(s\)-th powers, and
\((\alpha^s)^{N/s}=\lambda\).
\end{proof}

\begin{remark}\label{root:ambient-period}
The ambient period \(\sigma_N(F)\) from \eqref{eq:ambient-period}
satisfies \(\sigma_N(F)=|S_F|\) for every monic divisor, including
\(F=1\). Indeed, the empty root set has stabilizer \(\mu_N(B)\),
and \(\sigma_N(1)=N\). The full root set also has stabilizer
\(\mu_N(B)\), with \(\spn(X^N-\lambda)=N\). These cases correspond
to the full and zero codes in Section~\ref{sec:codes}.
\end{remark}

\begin{remark}\label{root:support-cancellation}
For a nonconstant divisor, the stabilizer determines the greatest
common divisor of its positive support exponents.
Coefficients within the compressed polynomial may still vanish;
Example~\ref{coef:individual-support} shows that these cancellations
can depend on the precision.
\end{remark}

\subsection{Teichm\"uller defining sets and their coarsest MED representation}

Suppose first that \(\lambda\) is a Teichm\"uller unit of order
\(\ell\), so \(\ell\mid q-1\).  Put \(M=N\ell\) and choose a
primitive Teichm\"uller \(M\)-th root \(\zeta\) with
\(\zeta^N=\lambda\). Such a choice is possible because the reduction map
\((\Z/M\Z)^\times\longrightarrow(\Z/\ell\Z)^\times\) is surjective.
The roots of \(X^N-\lambda\) are precisely
\[
 \{\zeta^\gamma:\gamma\in\Z/M\Z,\ \gamma\equiv1\pmod\ell\}.
\]
For a monic divisor \(F\) of positive degree, let \(T_F\) be its
defining set in this fiber.  It is a union of the \(q\)-cyclotomic
cosets \(\cc{M}{q}{\gamma}\), and
\[
 F(X)=\prod_{\gamma\in T_F}(X-\zeta^\gamma).
\]
Define the additive stabilizer
\[
 \Sigma_F=\{a\in\Z/M\Z:T_F+a=T_F\}.
\]
If \(a\in\Sigma_F\), the nonemptiness of \(T_F\) implies
\(a\equiv0\pmod\ell\).  Consequently
\(\zeta^a\in\mu_N(B)\), and exponentiation identifies
\(\Sigma_F\) with \(S_F\).

\begin{proposition}\label{root:medsets}
Write \(\Sigma_F=d\Z/M\Z\), where \(d\) is a positive divisor
of \(M\).  Then \(\ell\mid d\), and the coarsest MED
representation of \(T_F\) is its partition into cosets of
\(d\Z/M\Z\).  If
\[
 s=\frac{M}{d},\qquad
 T_F=\coprod_{i=1}^{\deg F/s}
             \bigl(\gamma_i+d\Z/M\Z\bigr),
\]
then
\begin{equation}\label{root:med-core}
 \spn(F)=s,\qquad
 F(X)=\prod_{i=1}^{\deg F/s}(X^s-\zeta^{s\gamma_i}),
 \qquad
 F^\sharp(Y)=
       \prod_{i=1}^{\deg F/s}(Y-\zeta^{s\gamma_i}).
\end{equation}
The last polynomial belongs to \(A[Y]\), divides
\(Y^{d/\ell}-\lambda\), and remains support-primitive at every
coefficient precision.
\end{proposition}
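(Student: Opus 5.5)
Since \(\Sigma_F\) is a subgroup of the cyclic group \(\Z/M\Z\), it has the form \(d\Z/M\Z\) for a unique positive divisor \(d\) of \(M\). The plan is to transport everything through the identification \(\Sigma_F\simeq S_F\) just established, and then invoke Theorem~\ref{root:stabilizer}.

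First, \(\ell\mid d\). Every element of \(\Sigma_F\) is divisible by \(\ell\), because \(T_F\) is nonempty and lies in the fiber \(1+\ell\Z/M\Z\). In particular \(d\equiv0\pmod\ell\). Exponentiation \(a\mapsto\zeta^a\) then maps \(\Sigma_F\) isomorphically onto \(S_F\subseteq\mu_N(B)\). Hence \(|S_F|=|\Sigma_F|=M/d=s\), and Theorem~\ref{root:stabilizer} gives \(\spn(F)=s\). That theorem also gives the precision statement: all reductions of the core are support-primitive.

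For the MED assertion, I would apply the stabilizer classification of Proposition~\ref{pre:medsets} directly inside \(\Z/M\Z\). That proposition is stated for defining sets of squarefree polynomials over \(\F_q\). The reduction \(\bar F\) is squarefree, divides \(X^M-1\), and has defining set \(T_F\) with respect to \(\bar\zeta\). So the MED representations of \(T_F\) are exactly the partitions into cosets of \(d'\Z/M\Z\) with \(d\mid d'\mid M\), and the coarsest one is the partition into cosets of \(d\Z/M\Z\). Alternatively, one can argue directly: a partition into cosets of \(d'\Z/M\Z\) exists if and only if \(d'\in\Sigma_F\).

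Next, write \(T_F=\coprod_i(\gamma_i+d\Z/M\Z)\). Each part has \(M/d=s\) elements, so there are \(\deg F/s\) parts. The displayed binomial identity gives
\[
 \prod_{j=0}^{s-1}(X-\zeta^{\gamma_i+jd})=X^s-\zeta^{s\gamma_i}.
\]
This identity holds over \(B\): \(\zeta^d\) is a primitive \(s\)-th root of unity, and \(\prod_{\eta\in\mu_s}(X-\alpha\eta)=X^s-\alpha^s\) is a polynomial identity in \(\Z[\eta][X]\) evaluated at Teichm\"uller roots. Multiplying over the parts gives the formula for \(F\). The core formula then follows from uniqueness of \(F^\sharp\) with \(F=F^\sharp(X^s)\); it also agrees with \eqref{root:core-root-product}, since the \(S_F\)-orbits are exactly the sets \(\zeta^{\gamma_i+d\Z}\).

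The remaining claims are bookkeeping. Theorem~\ref{root:stabilizer} shows \(F^\sharp=H\in A[Y]\) and that it divides \(Y^{N/s}-\lambda\). Since \(N/s=N\ell/M\cdot d/\ell=d/\ell\), this is the stated divisor \(Y^{d/\ell}-\lambda\). One can also check the roots directly: \((\zeta^{s\gamma_i})^{d/\ell}=\zeta^{M\gamma_i/\ell}\), and this equals \(\zeta^{N\gamma_i}=\lambda^{\gamma_i}=\lambda\) because \(\gamma_i\equiv1\pmod\ell\).

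The only delicate point is the legitimacy of Proposition~\ref{pre:medsets} with modulus \(M\) rather than the exact order of \(\bar F\). That proposition explicitly allows any root-of-unity modulus prime to \(p\) with \(f\mid X^n-1\), and \(\bar F\mid X^N-\bar\lambda\mid X^M-1\), so this step goes through.
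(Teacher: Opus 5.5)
Your proposal is correct and follows essentially the same route as the paper. The inclusion \(\Sigma_F\subseteq\ell\Z/M\Z\) gives \(\ell\mid d\), and the coarsest MED representation comes from the subgroup form of Proposition~\ref{pre:medsets}. Each coset then contributes the binomial \(X^s-\zeta^{s\gamma_i}\), and Theorem~\ref{root:stabilizer} (via \(|\Sigma_F|=s\) and \(N/s=d/\ell\)) supplies the period, the core in \(A[Y]\) dividing \(Y^{d/\ell}-\lambda\), and its support-primitivity at every precision.
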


\begin{proof}
An equal-difference subset with common difference \(a\mid M\)
is a coset of \(a\Z/M\Z\).  A partition of \(T_F\) into such
subsets exists exactly when \(a\Z/M\Z\subseteq\Sigma_F\).
Among these subgroups, \(\Sigma_F\) is the largest one, giving
the unique coarsest partition; this is the subgroup form of
the MED correspondence in Proposition~\ref{pre:medsets}.
The inclusion \(\Sigma_F\subseteq\ell\Z/M\Z\) gives
\(\ell\mid d\).  Each part contributes
\[
 \prod_{j=0}^{s-1}(X-\zeta^{\gamma_i+jd})
      =X^s-\zeta^{s\gamma_i}.
\]
Theorem~\ref{root:stabilizer} now gives all assertions, since
\(|\Sigma_F|=s\) and \(N/s=d/\ell\).  In particular, the
product defining \(F^\sharp\) has coefficients in \(A\);
the individual binomial constants need not belong to \(A\).
\end{proof}

For a general unit \(\lambda=\lambda_0v^N\), apply this proposition
to \(F_0=v^{-\deg F}F(vY)\).  The same defining set and subgroup
describe \(F\), whose roots are \(v\zeta^\gamma\).  If \(s=|\Sigma_F|\)
and \(h=\deg F/s\), its core is
\begin{equation}\label{root:twisted-core}
 F^\sharp(Y)=v^{sh}F_0^\sharp(Y/v^s).
\end{equation}
Hence normalization preserves the defining-set symmetry and the
compression exponent for every unit parameter at a simple-root length.

Translations of a defining set may interchange distinct Frobenius
orbits. Consequently, a product can have a larger support period than
the gcd of the periods of its irreducible factors. Over a ring of odd
residue characteristic, \(X-1\) and
\(X+1\) each have period \(1\), whereas their product \(X^2-1\)
has period \(2\).

\subsection{Single-orbit compression and unramified binomial factorization}

In this subsection, let \(p\nmid n\) and fix a primitive
Teichm\"uller \(n\)-th root \(\xi\).  For
\[
 c=\cc{n}{q}{a},\qquad
 d=\frac{n}{\gcd(a,n)},\qquad
 m=|c|=\ord_d(q),
\]
write
\[
 u=\ord_{\rad(d)}(q),\qquad
 \omega=\omega(d,q)=
 \begin{cases}
  2u,&8\mid d\ \text{and}\ q^u\equiv3\pmod4,\\
  u,&\text{otherwise}.
 \end{cases}
\]
As in the preliminaries, \(\ord_1(q)=1\), so the values for
\(d=1\) are \(m=\omega=1\).  Let
\[
 F_c(X)=\prod_{b\in c}(X-\xi^b).
\]
Proposition~\ref{pre:med} supplies the ED/MED arithmetic used below.

\begin{proposition}\label{root:binomial-extension}
The polynomial \(F_c\) is a binomial if and only if \(c\) is ED.
In that case
\begin{equation}\label{root:ed-binomial}
 F_c(X)=X^m-\xi^{ma}.
\end{equation}
For any \(t\ge1\), put \(g=\gcd(t,m)\) and \(\nu=m/g\).
The basic-irreducible factors of \(F_c\) over \(A^{(t)}\)
are all binomials if and only if \(\omega\mid t\); in that case
\begin{equation}\label{root:extension-product}
 F_c(X)=\prod_{j=0}^{g-1}
             (X^\nu-\xi^{\nu aq^j}).
\end{equation}
The least unramified extension degree for a complete
basic-irreducible binomial factorization of \(X^n-1\) is
\(\omega(n,q)\), and the admissible degrees are exactly its
positive multiples.
\end{proposition}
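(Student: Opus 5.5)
The plan is to reduce every assertion to the root-side description in Theorem~\ref{root:stabilizer}, applied with \(\lambda=1\), \(N=n\), and to the ED/MED arithmetic of Proposition~\ref{pre:med}.

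\emph{First claim.} Suppose \(F_c\) is a binomial. Since \(F_c\) is monic of degree \(m\) with unit constant term, it has the form \(X^m-\kappa\). Hence \(\spn(F_c)=m\), and Theorem~\ref{root:stabilizer} gives \(|S_{F_c}|=m\). The stabilizer acts freely on the \(m\) roots, so the root set is a single coset of \(\mu_m(B)\). In exponents, \(c\) is a coset of the subgroup of order \(m\) in \(\Z/n\Z\), namely \((n/m)\Z/n\Z\), so \(c\) is ED. Conversely, if \(c=a+(n/m)\Z/n\Z\), the product formula preceding Proposition~\ref{pre:med} gives \(F_c=\prod_{j}(X-\xi^{a+jn/m})=X^m-\xi^{ma}\), which is \eqref{root:ed-binomial}. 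The identity holds over the extension containing \(\xi\), and both sides are already known to lie in \(A[X]\).

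\emph{Second claim.} By Proposition~\ref{pre:orbit}, over \(A^{(t)}\) the polynomial \(F_c\) splits into \(g\) basic-irreducible factors of degree \(\nu\). Their root sets are the \(q^t\)-orbits in \(c\), namely \(\cc{n}{q^t}{aq^j}\) for \(0\le j<g\). Apply the first claim with \(q\) replaced by \(q^t\) and \(A\) by \(A^{(t)}\): each factor is a binomial if and only if its \(q^t\)-orbit is ED. Proposition~\ref{pre:med} says all parts of the \(q^t\)-orbit decomposition are ED exactly when \(\omega\mid t\). In that case each factor equals \(X^\nu-\xi^{\nu aq^j}\), which gives \eqref{root:extension-product}.

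One point needs care. The first claim was stated for a coset with respect to a fixed base field. Over \(A^{(t)}\) the residue field is \(\F_{q^t}\), and \(\xi\) is still a primitive Teichm\"uller \(n\)-th root. So the statement transfers verbatim, with \(m\) replaced by \(\nu=|\cc{n}{q^t}{aq^j}|\).

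\emph{Last assertion.} The basic-irreducible factors of \(X^n-1\) are the \(F_c\) over all cosets \(c\) modulo \(n\). A complete binomial factorization over \(A^{(t)}\) therefore requires \(\omega(d_c,q)\mid t\) for every \(c\). I expect this to be the main obstacle: showing that the least common multiple of \(\omega(d,q)\) over all \(d\mid n\) equals \(\omega(n,q)\), the value for the coset of \(1\).

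To prove it, I would show that \(d\mid n\) implies \(\omega(d,q)\mid\omega(n,q)\).
\begin{itemize}
\item Since \(\rad(d)\mid\rad(n)\), we get \(u(d)\mid u(n)\).
\item If \(\omega(d)=2u(d)\), then \(8\mid d\mid n\) and \(q^{u(d)}\equiv3\pmod4\). There are two subcases.
  \begin{itemize}
  \item If \(u(n)/u(d)\) is odd, then \(q^{u(n)}\equiv3\pmod4\), so \(\omega(n)=2u(n)\), which is divisible by \(2u(d)\).
  \item If \(u(n)/u(d)\) is even, then \(2u(d)\mid u(n)\mid\omega(n)\).
  \end{itemize}
\end{itemize}
Since \(\omega(n,q)\) itself must divide \(t\) (from \(c=\cc{n}{q}{1}\)), the admissible degrees are exactly the multiples of \(\omega(n,q)\).
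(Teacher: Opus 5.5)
Your proposal is correct and matches the paper's proof for the first two claims: a binomial \(F_c=X^m-\kappa\) has support period \(m\), so Theorem~\ref{root:stabilizer} makes its roots a single \(\mu_m(B)\)-coset, and base change via Proposition~\ref{pre:orbit} together with the MED criterion \(\omega\mid t\) gives \eqref{root:extension-product}. For the last assertion, the paper avoids your case analysis by observing that the ED criterion with \(q^t\) in place of \(q\) (namely \(\rad(d)\mid q^t-1\), and \(q^t\equiv1\pmod 4\) when \(8\mid d\)) passes from \(d=n\) to every divisor \(d\mid n\); this is equivalent to your lemma \(\omega(d,q)\mid\omega(n,q)\), which you verify correctly from the definition of \(\omega\).
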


\begin{proof}
If \(c\) is ED, its root set is
\(\xi^a\mu_m(B)\), whose product is
\eqref{root:ed-binomial}. Conversely, if \(F_c\) is a binomial,
then \(\spn(F_c)=m\). By Theorem~\ref{root:stabilizer}, its \(m\)
roots form a single orbit of \(\mu_m(B)\), so \(c\) is ED.

By Proposition~\ref{pre:orbit}, base change to \(A^{(t)}\)
splits \(c\) into the \(g\) orbits
\(\cc{n}{q^t}{aq^j}\), \(0\le j<g\), each of size \(\nu\).
Apply the first assertion to these orbits.  The MED theorem
says that all are ED exactly when \(\omega\mid t\), and in
that event gives \eqref{root:extension-product}.

The ED criterion for a primitive order
\(d\mid n\), with \(q^t\) in place of \(q\), is
\[
 \rad(d)\mid q^t-1,\qquad
 q^t\equiv1\pmod4\quad\hbox{if }8\mid d.
\]
These conditions for \(d=n\) imply those for every divisor
\(d\), and they are necessary because the primitive order
\(n\) occurs.  The MED theorem identifies the set of such
\(t\) as the positive multiples of \(\omega(n,q)\).
\end{proof}

\begin{theorem}\label{root:orbit-compression}
With the notation above, put
\[
 \mu=\frac{m}{\omega},\qquad n_0=\frac n\mu,\qquad
 D=\frac d\mu.
\]
Then \(\mu\mid d\), and \(\cc{n_0}{q}{a}\) has size
\(\omega\).  For \(\zeta=\xi^\mu\), the polynomial
\begin{equation}\label{root:orbit-core}
 H_c(Y)=\prod_{j=0}^{\omega-1}(Y-\zeta^{aq^j})
\end{equation}
is basic irreducible over \(A\), has root order \(D\), and
satisfies
\[
 F_c(X)=H_c(X^\mu),\qquad
 \deg H_c=\ord_D(q)=\omega.
\]
For every \(1\le j\le e\) and every \(v\ge1\),
\begin{equation}\label{root:orbit-maximality}
 F_c\bmod p^j\in A_j[X^v]\quad\Longleftrightarrow\quad v\mid\mu.
\end{equation}
In particular, the support period of \(F_c\) is \(\mu\)
at every precision.
\end{theorem}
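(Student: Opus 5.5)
Everything is read off the MED arithmetic of Proposition~\ref{pre:med} and the root-side description of support periods in Theorem~\ref{root:stabilizer}. By Proposition~\ref{pre:med}, the coarsest MED representation of \(c\) has \(\omega\) parts \(\cc{n}{q^\omega}{aq^j}\), each of size \(\mu=m/\omega\) and common difference \(h=n/\mu\). The additive stabilizer of \(c\) in \(\Z/n\Z\) is \(h\Z/n\Z\), of order \(\mu\). Since \(h\) is a common difference with \(h\mid n\), we get \(\mu\mid n\) and \(n_0=h\).

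To see \(\mu\mid d\), I use that each part is a coset \(aq^j+h\Z/n\Z\) lying inside the set of elements of additive order exactly \(d\). Hence \(a+h\Z/n\Z\subseteq\{x:\gcd(x,n)=g\}\) with \(g=\gcd(a,n)\). I would argue that the stabilizer \(h\Z/n\Z\) fixes \(\gcd(\cdot,n)\) on \(c\) only if it lies in \(g\Z/n\Z\), i.e.\ \(g\mid h\). Then \(\mu=n/h\) divides \(n/g=d\). Alternatively, the stabilizer of any orbit of primitive order \(d\) lies in the subgroup \((n/d)\Z/n\Z\), via the translation description \(T_F+a=T_F\) restricted to elements of fixed gcd; I regard this as routine.

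For the core, I apply Theorem~\ref{root:stabilizer} to \(F=F_c\), a divisor of \(X^n-1\). The multiplicative stabilizer \(S_{F_c}\) corresponds to \(h\Z/n\Z\), so \(|S_{F_c}|=\mu\). This gives \(\spn(F_c\bmod p^j)=\mu\) for every \(j\) and \(F_c(X)=H(X^\mu)\), where
\[
 H(Y)=\prod_{\alpha\in R_F/S_F}(Y-\alpha^\mu).
\]
The orbit representatives are \(\xi^{aq^j}\) for \(0\le j<\omega\), one from each MED part, because each part is an orbit of \(\mu_\mu(B)=\langle\xi^h\rangle\). Their \(\mu\)-th powers are \(\zeta^{aq^j}\) with \(\zeta=\xi^\mu\) a primitive \(n_0\)-th root. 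Hence \(H=H_c\). The orbit \(\cc{n_0}{q}{a}\) contains these \(\omega\) values, which are distinct by the theorem's last assertion, so it has size \(\omega\). The root order of \(\zeta^a\) is \(n_0/\gcd(a,n_0)\), and this should equal \(d/\mu=D\). Equivalently, \(\gcd(a,n/\mu)=\gcd(a,n)\), which follows from \(g\mid h\) as above. So \(\ord_D(q)=\omega=\deg H_c\). Basic irreducibility holds because the roots form one \(q\)-orbit of size equal to the degree (Proposition~\ref{pre:orbit} at modulus \(n_0\)).

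Finally, for \eqref{root:orbit-maximality}: if \(v\mid\mu\), substitution gives \(F_c\in A[X^v]\). Conversely, if \(F_c\bmod p^j\in A_j[X^v]\), then \(v\) divides every positive support exponent, hence \(v\mid\spn(F_c\bmod p^j)=\mu\). The main obstacle is the divisibility \(g\mid h\), that is, \(\mu\mid d\) together with the identification \(D=\) root order. If the stabilizer argument is awkward, I would fall back on the explicit structure \(\mu=m/\omega=\ord_d(q)/\omega(d,q)\), which depends only on \(d\), and verify \(\mu\mid d\) via lifting-the-exponent: \(\ord_d(q)/\ord_{\rad(d)}(q)\) divides \(d/\rad(d)\), with the \(8\mid d\) case treated separately.
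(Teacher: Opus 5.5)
Your proof is correct, but its central step takes a different route from the paper's. You take the identification of the additive stabilizer of \(c\) with \(h\Z/n\Z\) (of order \(\mu\)) directly from the MED preliminaries. Theorem~\ref{root:stabilizer} then gives, in one step, the substitution \(F_c=H(X^\mu)\), the orbit-product form of the core, and the support period at every precision. You obtain \(\mu\mid d\) and the root order \(D\) from \(g=\gcd(a,n)\) dividing \(h\).

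The paper proceeds differently:
\begin{itemize}
\item It proves \(F_c=H_c(X^\mu)\) directly by multiplying over the fibers \(aq^j+\{0,n_0,\ldots,(\mu-1)n_0\}\).
\item It gets \(\mu\mid d\) from the primitive forms of the parts being ED modulo \(d\).
\item It re-derives \(|S_{F_c}|=\mu\). The fibers give \(\mu\mid s\). Conversely, the stabilizer orbits are full fibers modulo \(n/s\), so the \(q^{m/s}\)-components are ED, and the criterion \(\omega\mid t\) of Proposition~\ref{pre:med} forces \(s\mid\mu\).
\end{itemize}
Your version is shorter and lets one theorem do all the work. However, it relies on the full stabilizer classification: the uniqueness of the coarsest MED representation and its identification with the stabilizer, as in Proposition~\ref{pre:medsets}. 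The paper's argument uses only the weaker ED criterion for \(q^t\)-components, so it effectively verifies that classification for a single orbit rather than assuming it.

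Two small points. First, the step you call the main obstacle is a one-liner. The part containing \(a\) is \(a+h\Z/n\Z\), so \(a+h\in c\). Multiplication by \(q\) preserves \(\gcd(\cdot,n)\), so \(g\mid a\) and \(g\mid a+h\), hence \(g\mid h\). The lifting-the-exponent fallback is unnecessary. Second, for \(|\cc{n_0}{q}{a}|=\omega\) you only showed that there are at least \(\omega\) distinct residues. The reverse inequality holds because every \(aq^j\) lies in one of the \(\omega\) parts, and each part is a single residue class modulo \(n_0=h\).
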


\begin{proof}
The coarsest MED components of \(c\) are the \(\omega\) ED
orbits for \(q^\omega\), each of size \(\mu\).  Their primitive
forms are ED subsets modulo \(d\); hence \(\mu\mid d\).
In \(\Z/n\Z\) the components are
\begin{equation}\label{root:orbit-fibers}
 aq^j+\{0,n_0,\ldots,(\mu-1)n_0\},
 \qquad 0\le j<\omega.
\end{equation}
They are disjoint full fibers of reduction modulo \(n_0\).
Their distinct images form the orbit
\(\cc{n_0}{q}{a}\) of size \(\omega\).
Since \(\zeta\) has order \(n_0\), the orbit-factor
correspondence proves that \(H_c\) is basic irreducible.
Taking products over the fibers proves the substitution
identity.  The element \(\xi^a\) has order \(d\), so
\((\xi^a)^\mu\) has order \(D\).  This gives
\(\deg H_c=\ord_D(q)=\omega\).

Let \(s=|S_{F_c}|\). The fibers
\eqref{root:orbit-fibers} show that \(\mu\mid s\).
The stabilizer acts freely on \(m\) roots, so \(s\mid m\).
Every ratio of two roots is a \(d\)-th root of unity; hence
\(s\mid d\) as well.  On exponents, stabilizer orbits are
the full fibers of reduction modulo \(n/s\).  The image of
the transitive \(q\)-orbit \(c\) is a \(q\)-orbit of size
\(m/s\).  Thus \(q^{m/s}\) fixes each image.  Its orbit
on a point of \(c\) has size exactly \(s\), and is contained
in the corresponding fiber; it therefore equals that fiber.
All \(q^{m/s}\)-components are ED.  Proposition~\ref{pre:med}
gives \(\omega\mid m/s\), or \(s\mid m/\omega=\mu\).
We conclude that \(s=\mu\).
Theorem~\ref{root:stabilizer} now gives the support period
at every precision.  Membership in \(A_j[X^v]\) is
equivalent to \(v\) dividing every positive support exponent,
which proves \eqref{root:orbit-maximality}.
\end{proof}

\begin{corollary}\label{root:coefficient-descent}
Put \(b_j=\xi^{\mu aq^j}\), \(0\le j<\omega\), and let
\(T_k=e_k(b_0,\ldots,b_{\omega-1})\), with \(T_0=1\).
Then \(T_k\in A\) and
\begin{equation}\label{root:descended-coefficients}
 F_c(X)=\sum_{k=0}^{\omega}(-1)^kT_kX^{\mu(\omega-k)}
       =\prod_{j=0}^{\omega-1}
          \sigma^j(X^\mu-b_0),
\end{equation}
where \(\sigma\) is relative Frobenius on \(A^{(\omega)}/A\).
In particular, \(F_c\) has at most \(\omega+1\) nonzero terms.
Its coefficients are elementary symmetric functions in a
degree-\(\omega\) unramified algebra.
\end{corollary}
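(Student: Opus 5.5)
The plan is to deduce the corollary directly from Theorem~\ref{root:orbit-compression}. That theorem already gives \(F_c(X)=H_c(X^\mu)\) with
\[
 H_c(Y)=\prod_{j=0}^{\omega-1}(Y-\zeta^{aq^j}),\qquad \zeta=\xi^\mu .
\]
Since \(b_j=\xi^{\mu aq^j}=\zeta^{aq^j}\), expanding the product by Proposition~\ref{coef:coefficients} gives
\[
 H_c(Y)=\sum_{k=0}^{\omega}(-1)^kT_kY^{\omega-k}.
\]
The \(T_k\) are, up to sign, the coefficients of the basic-irreducible polynomial \(H_c\in A[Y]\), so \(T_k\in A\). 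Substituting \(Y=X^\mu\) yields the first equality in \eqref{root:descended-coefficients}. The bound of at most \(\omega+1\) nonzero terms follows at once, since only the exponents \(\mu(\omega-k)\), \(0\le k\le\omega\), can occur.

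For the second equality I will identify the \(b_j\) as relative Frobenius conjugates of \(b_0\). First, \(b_0\) is a Teichm\"uller element whose order divides \(D\). Its residue has degree \(\ord_D(q)=\omega\) over \(\F_q\) by the theorem. Therefore \(b_0\) lies in \(A^{(\omega)}\), because the Teichm\"uller group of \(A^{(\omega)}\) contains all roots of unity of order dividing \(q^\omega-1\), and \(D\mid q^\omega-1\). On Teichm\"uller elements \(\sigma\) is the \(q\)-th power map, so \(\sigma^j(b_0)=b_0^{q^j}=b_j\). Since \(\sigma\) fixes \(X\) and acts coefficientwise, \(\sigma^j(X^\mu-b_0)=X^\mu-b_j\). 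The product over \(0\le j<\omega\) is then \(H_c(X^\mu)=F_c(X)\).

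For the final sentence, \(T_k\) is the \(k\)-th elementary symmetric function of the \(\omega\) conjugates \(\sigma^j(b_0)\) in the degree-\(\omega\) unramified algebra \(A^{(\omega)}\). Equivalently, \(T_k\) is the trace of \(\bigwedge^k\) of multiplication by \(b_0\), by \eqref{eq:exterior-coefficients} applied to \(A[Y]/(H_c)\simeq A^{(\omega)}\).

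I expect the main obstacle to be the membership \(b_0\in A^{(\omega)}\) together with the fact that \(\sigma\) acts by \(q\)-powers on it. These justify writing the product via relative Frobenius rather than merely as a product over the orbit. Both follow from the Teichm\"uller description in the preliminaries and the degree computation \(\ord_D(q)=\omega\), so I expect this step to be short. The rest is substitution.
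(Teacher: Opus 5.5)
Your proposal is correct and follows essentially the paper's argument: substitute \(X^\mu\) into the expanded core \(H_c\) from Theorem~\ref{root:orbit-compression}, then recognize the \(b_j=b_0^{q^j}\) as the relative Frobenius orbit of the Teichm\"uller element \(b_0\in A^{(\omega)}\). The only difference is cosmetic: you obtain \(T_k\in A\) from \(H_c\in A[Y]\), whereas the paper notes that the coefficients are Frobenius-fixed; both are immediate.
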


\begin{proof}
The elements \(b_j\) are precisely the Frobenius orbit of a
root of \(H_c\).  Expanding \eqref{root:orbit-core} proves
the first formula, and its orbit product gives the second.
The coefficients are fixed by Frobenius and hence belong
to \(A\).
\end{proof}

The last product is the polynomial norm of \(X^\mu-b_0\).
It computes the coefficients in degree \(\omega\), as do the
identities in Proposition~\ref{coef:coefficients}. Vanishing symmetric
functions \(T_k\) can further reduce the number of nonzero terms.

\begin{corollary}[Trace reconstruction of the compressed core]
\label{v2:compressed-traces}
Use the notation of Theorem~\ref{root:orbit-compression}, and put
\(\alpha=\xi^a\), \(\beta=\alpha^\mu\), and
\(B_0=A[\beta]\simeq\GR(p^e,r\omega)\).
The coefficients of \(H_c\) are determined by the
\(\omega\times\omega\) trace system of
Proposition~\ref{v2:trace-reconstruction}, with
\[
 t_k=\Tr_{B_0/A}(\beta^k),\qquad 0\le k<2\omega.
\]
If the traces in \(B=A[\alpha]\simeq\GR(p^e,rm)\) are used instead,
then
\begin{equation}\label{v2:eq-compressed-traces}
 \Tr_{B/A}(\alpha^{\mu k})=\mu t_k,
 \qquad \mu\in A^\times.
\end{equation}
Thus these traces also determine the compressed coefficients by
division by the unit \(\mu\).
\end{corollary}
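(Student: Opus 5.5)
The plan has two parts. First I identify the data: \(\beta\) is a root of \(H_c\), and \(A[\beta]\) is \(\GR(p^e,r\omega)\). Then I compare traces along the tower \(A\subseteq B_0\subseteq B\).

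By Theorem~\ref{root:orbit-compression}, \(H_c\) is basic irreducible of degree \(\omega\). One of its roots is \(\zeta^{a}=\xi^{\mu a}=\alpha^\mu=\beta\). Hence \(\bar\beta\) has degree \(\omega\) over \(\F_q\). Proposition~\ref{pre:hensel} then gives \(A[\beta]\simeq A[Y]/(H_c)\simeq\GR(p^e,r\omega)\). Likewise \(\alpha=\xi^a\) is a root of the basic-irreducible \(F_c\) of degree \(m\), so \(B=A[\alpha]\simeq\GR(p^e,rm)\).

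For the first claim I apply Proposition~\ref{v2:trace-reconstruction} with \(d=\omega\) and this \(\beta\). The \(H\) there is the basic-irreducible polynomial of \(\beta\). By the uniqueness part of that proof, it is the Frobenius orbit product \(\prod_{j<\omega}(Y-\beta^{q^j})\), since \(\beta\) is Teichm\"uller. This product is exactly \(H_c\) in \eqref{root:orbit-core}. So the \(\omega\times\omega\) system with \(t_k=\Tr_{B_0/A}(\beta^k)\), \(0\le k<2\omega\), determines the coefficients of \(H_c\) with a unit determinant.

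For \eqref{v2:eq-compressed-traces} I use transitivity of trace in the unramified tower \(A\subseteq B_0\subseteq B\). Here \([B:B_0]=m/\omega=\mu\), and the relative Frobenius of \(B/B_0\) is \(\sigma^\omega\). Since \(\beta^k\in B_0\), one gets \(\Tr_{B/B_0}(\beta^k)=\mu\beta^k\). Therefore
\[
\Tr_{B/A}(\alpha^{\mu k})=\Tr_{B_0/A}\bigl(\Tr_{B/B_0}(\beta^k)\bigr)=\mu t_k .
\]
To avoid invoking transitivity, one can count directly instead. The conjugates \(\sigma^i(\alpha)=\alpha^{q^i}\), for \(0\le i<m\), give \(\alpha^{\mu k q^i}=\beta^{kq^i}\). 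The exponent \(\beta^{kq^i}\) depends only on \(i\bmod\omega\), because \(\beta^{q^\omega}=\beta\). Each residue class of \(i\) occurs \(\mu\) times, so the sum is \(\mu t_k\).

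Finally, \(\mu\) divides \(m=\ord_d(q)\). Also \(p\nmid d\), and \(\mu\mid d\) by Theorem~\ref{root:orbit-compression}. Hence \(p\nmid\mu\), so \(\mu\in A^\times\). Dividing by \(\mu\) recovers each \(t_k\), and the system above then yields the compressed coefficients.

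The main obstacle is small. It is to check that the elementwise identification \(\sigma(z)=z^q\) is legitimate here, which it is because \(\alpha\) and \(\beta\) are Teichm\"uller. It is also to check that the orbit of \(\beta\) has exact length \(\omega\); this holds because \(\beta\) has order \(D\) and \(\ord_D(q)=\omega\).
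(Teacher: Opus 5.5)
Your proposal is correct and follows essentially the paper's argument. The paper likewise obtains \eqref{v2:eq-compressed-traces} by noting that each of the \(\omega\) conjugates of \(\beta^k\) occurs \(\mu\) times in the \(m\)-term Frobenius trace, deduces \(\mu\in A^\times\) from \(\mu\mid d\mid n\), and invokes Proposition~\ref{v2:trace-reconstruction} for the invertibility of the trace matrix; your additional checks (that the orbit product there is exactly \(H_c\), and the equivalent route through transitivity of trace in \(A\subseteq B_0\subseteq B\)) are valid elaborations of that same argument.
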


\begin{proof}
Theorem~\ref{root:orbit-compression} gives
\([B_0:A]=\omega\), \([B:A]=m\), and \(m=\mu\omega\).
In the \(m\)-term Frobenius trace of \(\beta^k\), each of its
\(\omega\) conjugates occurs \(\mu\) times, proving
\eqref{v2:eq-compressed-traces}. The same theorem gives
\(\mu\mid d\mid n\), so \(p\nmid\mu\) and \(\mu\) is a unit
of \(A\). The trace matrix of \(\beta\) is invertible by
Proposition~\ref{v2:trace-reconstruction}.
\end{proof}

\begin{proposition}[Reciprocity and compression]
\label{v2:reciprocal-compression}
Let \(F\) be a nonconstant monic polynomial with unit constant
coefficient, and write \(F(X)=H(X^s)\), where
\(s=\spn(F)\). Then
\[
 F^\dagger(X)=H^\dagger(X^s),\qquad
 \spn(F^\dagger)=s.
\]
In particular, \(F\) is self-reciprocal if and only if \(H\) is
self-reciprocal. For a cyclotomic-orbit factor,
\[
 F_c^\dagger=F_{-c},\qquad H_c^\dagger=H_{-c}.
\]
If \(\deg H_c=2\), and \(\beta\) is a Teichm\"uller root of \(H_c\), then
\begin{equation}\label{v2:eq-reciprocal-quadratic-core}
 H_c(Y)=Y^2-TY+U,\qquad
 F_c(X)=X^{2\mu}-TX^\mu+U,
\end{equation}
where \(T=\beta+\beta^q\in A^\times\) and
\(U=\beta^{q+1}\in A^\times\). Hence \(F_c\) has exactly three
nonzero terms at every coefficient precision. It is self-reciprocal
if and only if \(U=1\), equivalently if the order \(D\) of
\(\beta\) divides \(q+1\).
\end{proposition}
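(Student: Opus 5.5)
The plan is to treat the general reciprocity assertion first, then the orbit-factor identities, then the quadratic case.

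\emph{General reciprocity.} Write \(\deg H=h\), so \(\deg F=sh\). Since \(F(0)=H(0)\) is a unit, direct substitution gives
\[
F^\dagger(X)=H(0)^{-1}X^{sh}H(X^{-s})=H^\dagger(X^s).
\]
Reversing the coefficients of \(F\) sends the support \(\{i\}\) to \(\{sh-i\}\). Every exponent involved, including the top exponent \(sh\), is a multiple of \(s\), so the reversed positive exponents have gcd \(s\) again. Equivalently, apply the same argument to \(F^\dagger\), using \((F^\dagger)^\dagger=F\); this is legitimate because the leading coefficient of \(F\) is \(1\), so \(F^\dagger(0)\) is a unit. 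This yields \(\spn(F^\dagger)=s\). The core is unique, so \(F=F^\dagger\) holds if and only if \(H=H^\dagger\).

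\emph{Orbit factors.} For \(F_c=\prod_{b\in c}(X-\xi^b)\), the roots of \(F_c^\dagger\) are the inverses \(\xi^{-b}\). This uses the factorization \(\prod(1-\alpha X)=\prod(-\alpha)\prod(X-\alpha^{-1})\) together with the normalization by \(F(0)\), which is \(\prod(-\alpha)\). Hence \(F_c^\dagger=F_{-c}\). The orbit \(-c\) has the same \(d\), hence the same \(m,\omega,\mu\). Theorem~\ref{root:orbit-compression} then gives \(F_{-c}=H_{-c}(X^\mu)\). Comparing with \(F_c^\dagger=H_c^\dagger(X^\mu)\), uniqueness of the core gives \(H_c^\dagger=H_{-c}\). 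One can also check this directly from \eqref{root:orbit-core} by inverting roots.

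\emph{Quadratic case.} When \(\deg H_c=2\), the roots are \(\beta\) and \(\beta^q\), both Teichm\"uller, with \(\sigma(\beta)=\beta^q\). So \(T=\beta+\beta^q\) and \(U=\beta^{q+1}\) are Frobenius-fixed and lie in \(A\); \(U\) is a unit.

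The step I expect to be the main obstacle is showing \(T\) is a unit. Since \(\spn(H_c)=1\) at every precision by Theorem~\ref{root:stabilizer}, the linear coefficient of \(H_c\) must be nonzero modulo \(p\); otherwise \(\bar H_c\) would lie in \(\F_q[Y^2]\) and have period \(2\). Hence \(\bar T\ne0\). This gives exactly three nonzero terms of \(F_c=X^{2\mu}-TX^\mu+U\) at every precision.

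For self-reciprocity, \(H_c^\dagger(Y)=Y^2-TU^{-1}Y+U^{-1}\). This equals \(H_c\) if and only if \(U^2=1\) and \(T=TU^{-1}\). Because \(T\) is a unit, this reduces to \(U=1\). Since \(\beta\) has order \(D\) and is Teichm\"uller, \(U=\beta^{q+1}=1\) holds if and only if \(D\mid q+1\).
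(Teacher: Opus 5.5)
Your proposal is correct and follows the paper's proof essentially step for step: \(F^\dagger=H^\dagger(X^s)\) by substitution, reflection of the support, root inversion for \(F_c^\dagger=F_{-c}\), uniqueness of the core for \(H_c^\dagger=H_{-c}\), support-primitivity of the residue to force \(\bar T\ne0\), and a coefficient comparison of \(H_c^\dagger\) with \(H_c\). Your sentence beginning ``Equivalently'' supplies the converse divisibility \(\spn(F^\dagger)\mid s\) that the first observation alone does not give, whereas the paper uses the fact that both supports contain \(0\) and \(\deg F\). Where you invoke \(\spn(\bar H_c)=1\), the paper invokes \(\spn(\overline{F_c})=\mu\); this is the same fact.
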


\begin{proof}
If \(h=\deg H\), then \(\deg F=sh\) and \(F(0)=H(0)\).
The definition of the monic reciprocal gives
\[
 F^\dagger(X)=H(0)^{-1}X^{sh}H(X^{-s})=H^\dagger(X^s).
\]
For any monic polynomial \(P\) with unit constant coefficient and
degree \(d\), the support of \(P^\dagger\) is the reflection of
the support of \(P\) about \(d/2\). Since both supports contain
\(0\) and \(d\), their positive exponents have the same gcd.
This proves the period assertion and, by injectivity of substitution,
the equivalence of self-reciprocity.

Reciprocity sends every root \(\xi^b\) to \(\xi^{-b}\), giving
\(F_c^\dagger=F_{-c}\). The two orbits have the same compression
exponent, so their compressed factors satisfy
\(H_c^\dagger=H_{-c}\). In degree two, the roots \(\beta\) and
\(\beta^q\) give \eqref{v2:eq-reciprocal-quadratic-core}, with
\(U\) a unit. If \(\bar T=0\), then
\(\overline{F_c}=X^{2\mu}+\bar U\) has support period
\(2\mu\), contrary to Theorem~\ref{root:orbit-compression}. Thus \(T\) is also a unit,
and all three displayed coefficients remain nonzero at every precision.
Finally,
\(H_c^\dagger(Y)=Y^2-U^{-1}TY+U^{-1}\). Since \(T\) is a unit,
comparison with \(H_c\) shows that self-reciprocity is equivalent
to \(U=1\), which is equivalent to \(D\mid q+1\).
\end{proof}

\subsection{Primitive-order layers}

The compression in Theorem~\ref{root:orbit-compression} applies
simultaneously to all basic-irreducible factors of a cyclotomic
polynomial.

\begin{theorem}\label{root:layer}
Let \(d\ge1\) be prime to \(p\), and put
\[
 m_d=\ord_d(q),\qquad
 \omega_d=\omega(d,q),\qquad
 \mu_d=m_d/\omega_d,\qquad
 D_d=d/\mu_d.
\]
For \(d=1\) all four quantities are \(1\).  Then
\begin{equation}\label{root:layer-arithmetic}
 \rad(D_d)=\rad(d),\qquad
 \ord_{D_d}(q)=\omega_d,\qquad
 \varphi(d)=\mu_d\varphi(D_d).
\end{equation}
Reduction modulo \(D_d\) gives a bijection between
primitive \(q\)-cyclotomic cosets modulo \(d\) and
primitive \(q\)-cyclotomic cosets modulo \(D_d\).
Moreover,
\begin{equation}\label{root:cyclotomic-substitution}
 \Phi_d(X)=\Phi_{D_d}(X^{\mu_d})\qquad\text{in }\Z[X].
\end{equation}
If \(\Phi_{D_d}(Y)=\prod_H H(Y)\) is its
basic-irreducible factorization over \(A\), then
\(\Phi_d(X)=\prod_H H(X^{\mu_d})\) is the
basic-irreducible factorization of \(\Phi_d\).
\end{theorem}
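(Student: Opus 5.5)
The plan is to derive everything from Theorem~\ref{root:orbit-compression}, applied to primitive cosets of order \(d\). First take \(n=d\) and any unit \(a\) modulo \(d\), so that the order attached to the coset \(c=\cc{d}{q}{a}\) is \(d\) itself. The quantities \(m,\omega,\mu\) then depend only on \(d\), and they equal \(m_d,\omega_d,\mu_d\). That theorem gives the following: \(\mu_d\mid d\); the reduced coset \(\cc{D_d}{q}{a}\) has size \(\omega_d\); and \(F_c(X)=H_c(X^{\mu_d})\), where \(H_c\) is basic irreducible of root order \(D_d\) with \(\ord_{D_d}(q)=\omega_d\). This already proves the middle identity in \eqref{root:layer-arithmetic}.

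For \(\rad(D_d)=\rad(d)\), I would use the MED arithmetic in Proposition~\ref{pre:med} together with the elementary fact that \(\ord_{\ell^k}(q)\), for \(\ell^k\) exceeding the first level where \(q\) becomes \(1\), grows by exactly the factor \(\ell\) at each step. The plan is to show that \(\mu_d=m_d/\omega_d\) divides \(d/\rad(d)\), since every prime of \(\mu_d\) occurs in \(d\) with strictly larger multiplicity. Equivalently, \(q^{\omega_d}\equiv1\pmod{\rad(d)}\), with the additional \(4\)-condition, so the group generated by \(q^{\omega_d}\) modulo \(d\) is a subgroup of \(1+\rad(d)\Z\) of \(\ell\)-power-compatible order. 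I would write \(q^{\omega_d}=1+\rad(d)\kappa\), adjusted at \(2\) by the factor \(4\) when \(8\mid d\), and use the lifting-the-exponent lemma to get \(\ord_d(q^{\omega_d})\mid d/\rad(d)\). I expect this valuation bookkeeping, especially the prime \(2\) case distinguishing \(8\mid d\) and \(q^{u}\equiv3\pmod4\), to be the main obstacle. A fallback is to apply Theorem~\ref{root:orbit-compression} directly: the stabilizer \(\mu_d\)-subgroup acts on roots of exact order \(d\), and if a prime \(\ell\) divided \(D_d\) to exponent zero while dividing \(d\), then \(\alpha^{\mu_d}\) would have order prime to \(\ell\). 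Then the fiber \(\alpha\mu_{\mu_d}\) would contain elements of order not equal to \(d\), contradicting the fact that all elements of \(c\) have order \(d\). I prefer this second argument: every element of \(\alpha\mu_{\mu_d}\) lies in \(c\), so it has exact order \(d\), and this forces every prime of \(\mu_d\) to divide \(D_d\).

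Given \(\rad(D_d)=\rad(d)\) and \(D_d\mu_d=d\), the standard formula \(\varphi(d)=d\prod_{\ell\mid d}(1-1/\ell)\) gives \(\varphi(d)=\mu_d\varphi(D_d)\). Reduction modulo \(D_d\) maps units modulo \(d\) onto units modulo \(D_d\), since the two moduli have the same primes. Each fiber has size \(\mu_d\), and by the fiber description \eqref{root:orbit-fibers} each fiber lies in a single \(q\)-coset modulo \(d\). Hence the induced map on primitive cosets is well defined and surjective. It is injective because each coset modulo \(d\) is exactly the full preimage of its image: its size is \(m_d=\mu_d\omega_d\), which equals fiber size times image size.

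For \eqref{root:cyclotomic-substitution}, both sides are monic in \(\Z[X]\). Over \(\mathbb C\), the roots of \(\Phi_{D_d}(X^{\mu_d})\) are the \(x\) with \(x^{\mu_d}\) a primitive \(D_d\)-th root of unity. Because the primes of \(\mu_d\) divide \(D_d\), such \(x\) have exact order \(D_d\mu_d=d\). The degrees agree by the totient identity, and the roots are distinct, so the two polynomials are equal. Finally, the factorization over \(A\) follows from the coset bijection: the product of the \(H\) over primitive cosets modulo \(D_d\) is \(\Phi_{D_d}\). Moreover, \(H_c(X^{\mu_d})=F_c\) runs over all basic-irreducible factors of \(\Phi_d\), which are the orbit factors of Proposition~\ref{pre:orbit} indexed by primitive cosets modulo \(d\). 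Substituting \(\zeta=\xi^{\mu_d}\) matches these labels. The case \(d=1\) is trivial.
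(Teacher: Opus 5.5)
Your proposal is correct and follows the paper's proof essentially step for step. You get the middle identity from Theorem~\ref{root:orbit-compression}, and the radical equality from the fact that the whole fiber over \(a\) modulo \(D_d\) lies in the primitive coset (your \(\alpha\mu_{\mu_d}\) argument is the multiplicative form of the paper's observation that \(\{1+jD_d:0\le j<\mu_d\}\) consists of units). The coset bijection comes from the full-preimage count, and \eqref{root:cyclotomic-substitution} from comparing exact orders and degrees.

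The one step you leave implicit is routine. If \(\ell\nmid D_d\), then \(\ell^{v_\ell(d)}\mid\mu_d\), so multiplying \(\alpha\) by the inverse of its \(\ell\)-primary part stays in the fiber and lowers the order. The lifting-the-exponent route you sketched first is not needed.
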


\begin{proof}
The case \(d=1\) is immediate.  Suppose \(d>1\).
The ED component of \(\cc{d}{q}{1}\) containing \(1\)
is
\[
 \langle q^{\omega_d}\rangle
   =\{1+jD_d:0\le j<\mu_d\}
   \subseteq(\Z/d\Z)^\times.
\]
If a prime \(l\mid d\) did not divide \(D_d\), it would
divide \(\mu_d\).  Among the indices \(0\le j<l\le\mu_d\)
one would satisfy \(1+jD_d\equiv0\pmod l\), contrary
to this inclusion.  Therefore the radicals agree.
Euler's formula now gives
\(\varphi(d)/\varphi(D_d)=d/D_d=\mu_d\).
The order formula follows from
Theorem~\ref{root:orbit-compression}.

Because the radicals agree, each unit modulo \(D_d\)
has exactly \(\mu_d\) lifts modulo \(d\), all units.
Theorem~\ref{root:orbit-compression} shows that every
primitive orbit modulo \(d\) is the full inverse image
of its projected orbit modulo \(D_d\).  Conversely,
choose any lift of a unit in a primitive orbit modulo
\(D_d\); its orbit modulo \(d\) has that entire inverse
image.  This proves both surjectivity and injectivity
of the asserted orbit correspondence.

To prove the integral identity, let \(z\) be a complex
root of \(\Phi_{D_d}(X^{\mu_d})\), and let \(h\) be
its order.  Then \(z^d=1\) and
\[
 h\mid d,\qquad h/\gcd(h,\mu_d)=D_d.
\]
For every prime \(l\mid d\), the integer \(D_d\) has
positive \(l\)-adic valuation.  Taking valuations in
the latter equality therefore forces
\[
 v_l(h)=v_l(D_d)+v_l(\mu_d)=v_l(d).
\]
Thus \(h=d\).  Conversely, the \(\mu_d\)-th power
of a primitive \(d\)-th root has order \(D_d\).
The two monic polynomials have these same simple roots
and equal degrees, proving
\eqref{root:cyclotomic-substitution}.  Finally, the
orbit bijection and \eqref{root:orbit-core} identify
each \(H(X^{\mu_d})\) with exactly one primitive-order
\(d\) basic-irreducible factor over \(A\).
\end{proof}

The MED fibers select an exponent \(\mu_d\) for which the integral
identity \eqref{root:cyclotomic-substitution} preserves the
basic-irreducible factors. The same core factors can therefore be used
for every layer with a common value of \(D_d\).

\subsection{Compatible lengths, labels, and primitive seeds}

For \(n'=kn\), choosing primitive roots with
\(\xi=(\xi')^k\) identifies the orbit factors at the two lengths
under exponent scaling. A primitive \(n\)-th root also yields
every basic-irreducible factor of \(X^n-1\) by Frobenius orbit
products. For the analogous single-seed construction over
\(\Z/p^e\Z\), see \cite{WangSeed}.

\begin{proposition}\label{root:length-scaling}
Let \(n'=kn\) with \(p\nmid n'\), choose a primitive
Teichm\"uller \(n'\)-th root \(\xi'\), and put
\(\xi=(\xi')^k\).  For \(c=\cc{n}{q}{a}\), let
\(kc\) denote its image under multiplication by \(k\).
Then
\[
 \cc{n'}{q}{ka}=kc,\qquad
 \min(kc)=k\min(c),\qquad
 F_{kc}^{(n')}(X)=F_c^{(n)}(X).
\]
The orbit size, root order, primitive form, MED index,
support period, and compressed polynomial are preserved.
\end{proposition}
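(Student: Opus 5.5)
The proof is a direct computation at the level of exponents, followed by the orbit–factor correspondence of Proposition~\ref{pre:orbit}. Since \(\xi=(\xi')^k\) and \(\xi'\) has order \(n'=kn\), the element \(\xi\) is a primitive Teichm\"uller \(n\)-th root of unity. This makes \(F^{(n)}_c\) well defined with respect to \(\xi\). The map \(\gamma\mapsto k\gamma\) from \(\Z/n\Z\) to \(\Z/n'\Z\) is an injective group homomorphism, and its image is \(k\Z/n'\Z\).

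The first identity follows because this map commutes with multiplication by \(q\). Hence the image of \(c=\{aq^j\}\) is \(\{kaq^j\}=\cc{n'}{q}{ka}\), and the sizes agree by injectivity. For the leader, take representatives in \(\{0,\dots,n-1\}\). Multiplication by \(k\) sends them to \(\{0,k,\dots,k(n-1)\}\subseteq\{0,\dots,n'-1\}\), preserves order, and these are the least nonnegative representatives modulo \(n'\). Therefore \(\min(kc)=k\min(c)\). The factor identity then comes from the root products:
\[
 F^{(n')}_{kc}(X)=\prod_{b\in c}(X-(\xi')^{kb})=\prod_{b\in c}(X-\xi^{b})=F^{(n)}_c(X).
\]

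The invariants follow next. The root order is \(n'/\gcd(ka,n')=n/\gcd(a,n)=d\), since \(\gcd(ka,kn)=k\gcd(a,n)\). For the primitive form, \(ka/\gcd(ka,kn)=a/\gcd(a,n)\), so the primitive forms coincide. Both the orbit size \(m=\ord_d(q)\) and the MED index \(\omega(d,q)\) depend only on \(d\) and \(q\), so they are preserved. The support period and the compressed polynomial are attributes of the polynomial itself, and the two polynomials are literally equal, so these agree as well. As a consistency check, \(\mu=m/\omega\) is preserved, and Theorem~\ref{root:orbit-compression} gives the same \(H_c\) in both settings. In the length-\(n\) setting \(H_c\) is built from \(\xi^\mu\), and in the length-\(n'\) setting from \((\xi')^{\mu}\) raised to \(ka\); these produce the same roots \(\xi^{\mu aq^j}\).

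The main obstacle is only bookkeeping. One must avoid confusing the moduli \(n\) and \(n'\) when computing gcds and leaders, and one must check that the leader statement uses least representatives in the stated ranges. No deep step is involved.
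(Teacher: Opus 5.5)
Your proposal is correct and follows essentially the same route as the paper's proof. Both arguments use the injective, \(q\)-equivariant embedding \(\gamma\mapsto k\gamma\) of \(\Z/n\Z\) into \(\Z/n'\Z\), the remainder computation for leaders, the identity \(\gcd(ka,kn)=k\gcd(a,n)\) for the root order and primitive form, and the compatible-root product \(\prod_{b\in c}(X-(\xi')^{kb})=\prod_{b\in c}(X-\xi^b)\). The paper identifies the compressed polynomials through the common stride and the injectivity of \(H(Y)\mapsto H(X^\mu)\), which is the same content as your observation that the two polynomials are literally equal.
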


\begin{proof}
Multiplication by \(k\) embeds \(\Z/n\Z\) into
\(\Z/n'\Z\) and commutes with multiplication by \(q\).
If \(b\) is the least nonnegative remainder of
\(aq^j\) modulo \(n\), then \(kb\) is the least
nonnegative remainder of \(kaq^j\) modulo \(n'\).
This proves the orbit and minimum assertions.
Moreover,
\[
 \gcd(ka,kn)=k\gcd(a,n),\qquad
 \frac{n'}{\gcd(ka,n')}=\frac n{\gcd(a,n)},
\]
and division by the gcds gives the same primitive
exponent.  The arithmetic invariants are consequently
unchanged.  Compatible roots give
\[
 \prod_{b\in c}(X-(\xi')^{kb})
      =\prod_{b\in c}(X-\xi^b).
\]
Finally, the common stride and injectivity of
\(H(Y)\mapsto H(X^\mu)\) identify the compressed
polynomials.
\end{proof}

Let \(L(n,q)\) be the set of least
nonnegative coset representatives, and let
\(L^\times(d,q)\) be the representatives of primitive
cosets modulo \(d\), with \(L^\times(1,q)=\{0\}\).
The preceding proposition gives the disjoint decomposition
\begin{equation}\label{root:leader-layers}
 L(n,q)=\coprod_{d\mid n}\frac nd\,L^\times(d,q).
\end{equation}
The MED leader formula from Proposition~\ref{pre:med}
takes the form
\[
 \min\cc{n}{q}{a}
   =\min_{0\le j<\omega}
           \bigl(aq^j\bmod(n/\mu)\bigr),
\]
where remainders are least nonnegative.  In a table
at length \(kn\), selecting leaders divisible by \(k\)
recovers the factors at length \(n\).  Selecting leaders
\(b\) with \(\gcd(b,kn)=k\) recovers exactly the
primitive-order \(n\) layer.  If primitive roots are
selected independently, these statements first require
the unit reindexing described below.

\begin{proposition}\label{root:seed}
Let \(n\ge2\), \(p\nmid n\), and \(L=\ord_n(q)\).
Suppose \(G\mid Y^n-1\) is basic irreducible of degree
\(L\), and let
\[
 B=A[Y]/(G),\qquad \beta=Y+(G).
\]
If \(\beta\) has exact order \(n\), then \(B\) is a
splitting ring for \(X^n-1\), and, using \(\beta\)
to index roots, every basic-irreducible factor is
\begin{equation}\label{root:seed-products}
 F_{\cc{n}{q}{a}}(X)=
       \prod_{j=0}^{|\cc{n}{q}{a}|-1}
                      (X-\beta^{aq^j}).
\end{equation}
Exact order is certified by the finite set of tests
\begin{equation}\label{root:seed-order-tests}
 \beta^{n/l}\ne1\qquad\text{for every prime }l\mid n.
\end{equation}
If an isomorphism to a previously chosen splitting
ring sends \(\beta\) to \(\xi^u\), and \(uv\equiv1\pmod n\),
the factors with the original \(\xi\)-labels are
obtained in \eqref{root:seed-products} by replacing
\(a\) with \(va\).
\end{proposition}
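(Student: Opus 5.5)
The plan is to first identify \(B\) with the standard splitting ring, then derive the factor formula and the relabelling rule from Proposition~\ref{pre:orbit}, and only then prove the order certificate.

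\emph{The ring \(B\) and its powers of \(\beta\).} Since \(G\) is basic irreducible of degree \(L\), Proposition~\ref{pre:hensel} gives \(B\simeq\GR(p^e,rL)\). Because \(G\mid Y^n-1\), the element \(\beta\) satisfies \(\beta^n=1\). The element \(\beta\) is a unit, and \(1+pB\) is a \(p\)-group while \(p\nmid n\); hence \(\beta\) lies in the Teichm\"uller part \(\mathcal T_B^\times\). Assume \(\beta\) has exact order \(n\). Then its powers \(\beta^\gamma\), for \(\gamma\in\Z/n\Z\), are \(n\) distinct roots of \(X^n-1\). Their reductions are the distinct powers of \(\bar\beta\), so pairwise differences are units. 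Therefore \(\prod_\gamma(X-\beta^\gamma)\) is a monic degree-\(n\) divisor of \(X^n-1\), and hence equals it. This shows that \(B\) is a splitting ring.

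\emph{The factor formula.} Now apply Proposition~\ref{pre:orbit} with \(\xi=\beta\), which is a primitive Teichm\"uller \(n\)-th root in an unramified extension of \(A\). It says directly that the orbit products \eqref{root:seed-products} lie in \(A[X]\), are basic irreducible, and are exactly the distinct basic-irreducible factors. Strictly, Proposition~\ref{pre:orbit} fixes \(\xi\) in an extension containing all roots, and \(B\) is such an extension. If one prefers an independent check that the coefficients lie in \(A\), note that relative Frobenius on \(B\) sends \(\beta\mapsto\beta^q\), since \(\beta\) is Teichm\"uller. So each orbit product is \(\sigma\)-fixed.

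\emph{The order certificate.} Sufficiency of the tests \eqref{root:seed-order-tests} is standard. The order of \(\beta\) divides \(n\), and if it were a proper divisor it would divide \(n/l\) for some prime \(l\mid n\). Necessity is immediate. I expect this step to be the main obstacle only in one respect. The tests must be stated for \(\beta\) in \(B\), but it is worth remarking that they can equivalently be checked on residues. This holds because reduction is injective on \(\mu_n(B)\): distinct roots of unity have unit difference.

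\emph{Relabelling.} Let \(\phi\) be the given isomorphism, with \(\phi(\beta)=\xi^u\). Since \(\beta\) and \(\xi\) both have order \(n\), \(u\) is a unit modulo \(n\). The isomorphism is an \(A\)-algebra map, so it fixes coefficients in \(A\), and it sends \(\prod_j(X-\beta^{bq^j})\) to \(\prod_j(X-\xi^{ubq^j})=F_{\cc{n}{q}{ub}}\) in the \(\xi\)-labels. Setting \(b=va\) gives \(ub\equiv a\pmod n\). Hence the \(\xi\)-label \(a\) is produced by substituting \(va\) for \(a\) in \eqref{root:seed-products}. Multiplication by \(v\) commutes with multiplication by \(q\), so it permutes cosets, and coset sizes are preserved.
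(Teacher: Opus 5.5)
Your proposal is correct and follows essentially the same route as the paper's proof. You identify \(B\cong\GR(p^e,rL)\), observe that \(\beta\) is Teichm\"uller because \(\beta^n=1\) with \(p\nmid n\), apply Proposition~\ref{pre:orbit} with \(\xi=\beta\) (Frobenius acting by \(\beta\mapsto\beta^q\)), use the prime-divisor argument for the order tests, and transport \(\beta^{vaq^j}\mapsto\xi^{aq^j}\) for the relabelling; your explicit check that \(\prod_\gamma(X-\beta^\gamma)=X^n-1\) via unit differences, and that the isomorphism fixes \(A\), only spells out steps the paper states tersely.
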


\begin{proof}
Basic irreducibility gives \(B\cong\GR(p^e,rL)\).
Since \(\beta^n=1\) and \(n\) is prime to \(p\),
\(\beta\) is a Teichm\"uller unit.  Exact order
makes its powers the full group of \(n\)-th roots
of unity.  Relative Frobenius sends \(\beta\)
to \(\beta^q\), so its orbit products are precisely
the factors in \eqref{root:seed-products}, by
Proposition~\ref{pre:orbit}.  If the order of
\(\beta\) were a proper divisor of \(n\), it would
divide \(n/l\) for some prime \(l\mid n\); this
proves the order certificate.  Under the stated
isomorphism, \(\beta^{vaq^j}\) maps to
\(\xi^{aq^j}\), giving the reindexing.
\end{proof}

A smaller root order can have the same multiplicative order of \(q\),
so the degree of \(G\) alone does not imply that \(\beta\) has order
\(n\). The tests \eqref{root:seed-order-tests} verify the required
order. For a simple-root constacyclic polynomial with Teichm\"uller
parameter of order \(\ell\), a primitive seed of order \(N\ell\)
supplies the roots in the defining-set fiber. Scaling as in
Proposition~\ref{coef:normalization} then covers arbitrary unit
parameters.

If \(\alpha\) is a root of a basic-irreducible factor of degree \(m\),
then \(1,\alpha,\ldots,\alpha^{m-1}\) is an \(A\)-basis of
\(A[\alpha]\). The unique relation
\[
 \alpha^m=-\sum_{i=0}^{m-1}c_i\alpha^i
\]
recovers the factor \(X^m+\sum_{i=0}^{m-1}c_iX^i\).
The change-of-basis matrix to any other supplied \(A\)-basis
is invertible modulo \(p\), and hence over \(A\), so the coordinates
can be computed using unit pivots. Scaling a Teichm\"uller root
by a unit of \(A\) preserves this power-basis property.
Applying this construction to \(\alpha^\mu\) in
Theorem~\ref{root:orbit-compression} recovers the
degree-\(\omega\) core. The trace construction is given in
Corollary~\ref{v2:compressed-traces}.

\subsection{Obstructions at repeated-root lengths}

Over a finite field, the repeated-root formula of
\cite{SP} expresses the support period as
\(p^{v_f}|\Sigma_f|\), where \(v_f\) is the smallest
\(p\)-adic valuation of a positive root multiplicity
and \(\Sigma_f\) stabilizes the multiplicity function.
This formula depends on the characteristic-\(p\) Frobenius identity.
Over a Galois ring, two monic divisors of \(X^N-\lambda\) can have
identical residue multiplicities but different support periods,
although both generate free codes.

\begin{example}\label{root:repeated-counterexample}
Let \(A=\Z/4\Z\), \(N=4\), and \(\lambda=1\).  Set
\[
 F_1=X^2+1,\qquad F_2=X^2+2X+1=(X-1)^2.
\]
Both are monic divisors of \(X^4-1\), since
\[
 F_1(X)(X^2-1)=X^4-1,\qquad
 F_2(X)(X^2+2X+3)=X^4-1
       \quad\text{in }A[X].
\]
They have the same reduction
\(\bar F_1=\bar F_2=(X-1)^2\), but
\[
 \spn(F_1)=2,\qquad \spn(F_2)=1.
\]
The residue multiplicity function has one root of multiplicity \(2\)
and stabilizer of order \(1\), so the finite-field formula yields
\(2\) for both reductions. It therefore does not determine the ring
support period.

The associated free codes have identical torsion codes;
their different projection periods are computed in
Example~\ref{code:repeated-counterexample}.

The same example gives
\[
 F_2(-X)=F_2(X),\qquad F_2\notin A[X^2].
\]
Here \((-1-1)\cdot2=0\) in \(A\), so invariance under \(X\mapsto-X\)
does not force the odd coefficient to vanish.
\end{example}

\begin{remark}\label{root:no-unramified-roots}
When \(p\mid N\), a general unit parameter need not
admit the normalization in
Proposition~\ref{coef:normalization}, nor need its roots
exist in an unramified extension.  For example,
over \(\Z/p^2\Z\) the polynomial
\[
 X^p-(1+p)
\]
has no root in any \(\GR(p^2,t)\).  A root would
reduce to \(1\), because \(x^p=1\) has only that
root in a field of characteristic \(p\).  It
would consequently have the form \(1+pa\), whose
\(p\)-th power is \(1\) modulo \(p^2\), a
contradiction.  In particular, for \(p=2\) the
polynomial \(X^2-3\) has no root in any unramified
extension of \(\Z/4\Z\).
Thus the coefficient and ideal structure of repeated-root blocks
requires information beyond the residue roots and their multiplicities.
\end{remark}

\section{Support-primitive decomposition of constacyclic codes}
\label{sec:codes}

Coordinate projections give an intrinsic formulation of support-primitive
decomposition over a finite commutative ring, including for ideals without
a monic generator.  For simple-root constacyclic codes over a Galois ring,
the resulting period is determined by the full torsion filtration and by
the CRT exponent function on the roots.  This formulation extends the
finite-field decomposition of \cite{SP}.

\subsection{Coordinate projections and an intrinsic period}

In this subsection, let \(A\) be a nonzero finite commutative ring with identity,
let \(\lambda\in A^\times\), and put
\[
 R_{N,\lambda}=A[X]/(X^N-\lambda).
\]
We identify this ring with \(A^N\) by the unique representatives of degree less
than \(N\).  A \(\lambda\)-constacyclic code is an ideal
\(\C\triangleleft R_{N,\lambda}\).  Multiplication by \(X\) is invertible,
because \(X^{-1}=\lambda^{-1}X^{N-1}\).

Fix a positive divisor \(s\) of \(N\), and write \(M=N/s\).  For
\(0\le j<s\), set
\begin{align*}
 V_j^{(s)}
 &=\left\{\sum_{i=0}^{N-1}a_iX^i:
                 a_i=0\ \text{if }i\not\equiv j\pmod s\right\},\\
 \pi_j^{(s)}\left(\sum_{i=0}^{N-1}a_iX^i\right)
 &=\sum_{\substack{0\le i<N\\i\equiv j\pmod s}}a_iX^i.
\end{align*}
Thus \(R_{N,\lambda}=\bigoplus_{j=0}^{s-1}V_j^{(s)}\) as an
\(A\)-module.  The substitution
\[
 \iota_s:R_{M,\lambda}=A[Y]/(Y^M-\lambda)\longrightarrow R_{N,\lambda},
 \qquad Y\longmapsto X^s
\]
is an injective ring homomorphism, and
\begin{equation}
 R_{N,\lambda}
 =\bigoplus_{j=0}^{s-1}X^j\iota_s(R_{M,\lambda})
 \label{code:ambient-splitting}
\end{equation}
is a free module decomposition over \(\iota_s(R_{M,\lambda})\).

\begin{theorem}
\label{code:projection}
Let \(\C\triangleleft R_{N,\lambda}\) and \(s\mid N\).  The following
conditions are equivalent:
\begin{enumerate}
\item \(\pi_j^{(s)}(\C)\subseteq\C\) for every \(0\le j<s\);
\item
 \(\displaystyle \C=\bigoplus_{j=0}^{s-1}(\C\cap V_j^{(s)})\);
\item there is an ideal \(\D\triangleleft R_{N/s,\lambda}\) such that
 \begin{equation}
  \C=\bigoplus_{j=0}^{s-1}X^j\iota_s(\D).
  \label{code:extension-form}
 \end{equation}
\end{enumerate}
When these conditions hold, the ideal \(\D\) is unique and equals
\(\iota_s^{-1}(\C\cap V_0^{(s)})\).  The map
\begin{equation}
 \Phi_s:\D^{\oplus s}\longrightarrow\C,\qquad
 (d_0,\ldots,d_{s-1})\longmapsto
       \sum_{j=0}^{s-1}X^j d_j(X^s)
 \label{code:interleaver}
\end{equation}
is an \(A\)-linear isomorphism obtained by a permutation of coordinates.
In particular, it preserves Hamming weight.
\end{theorem}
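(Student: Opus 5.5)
The plan is to prove the cycle (1)\(\Rightarrow\)(2)\(\Rightarrow\)(3)\(\Rightarrow\)(1), then establish uniqueness of \(\D\) and the isometry property of \(\Phi_s\).

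\emph{(1)\(\Rightarrow\)(2).} For \(c\in\C\) we have \(c=\sum_j\pi_j^{(s)}(c)\). Each summand lies in \(\C\cap V_j^{(s)}\), so \(\C\) is the sum of these intersections. The sum is direct because \(R_{N,\lambda}=\bigoplus_j V_j^{(s)}\).

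\emph{(2)\(\Rightarrow\)(3).} Put \(\D=\iota_s^{-1}(\C\cap V_0^{(s)})\). First check that \(V_0^{(s)}=\iota_s(R_{M,\lambda})\) and that \(V_j^{(s)}=X^jV_0^{(s)}\) for \(0\le j<s\). This uses representatives of degree \(<N\): multiplying a monomial \(X^{i}\) with \(i\equiv0\) by \(X^j\), \(j<s\), gives a degree below \(N\) with no wraparound, because \(i\le N-s\).

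\(\D\) is an ideal. It is an additive subgroup, and multiplication by \(Y\) corresponds to multiplication by \(X^s\). Multiplication by \(X^s\) preserves both \(\C\) and \(V_0^{(s)}\), since \(X^s\cdot X^i\) either stays in class \(0\) or wraps to \(\lambda X^{i+s-N}\) with \(s\mid i+s-N\). \(\D\) is also an \(A\)-submodule, so it is closed under \(A[Y]\).

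Next show \(\C\cap V_j^{(s)}=X^j\iota_s(\D)\). For \(v\in\C\cap V_j^{(s)}\), write \(v=X^jw\) with \(w\in V_0^{(s)}\). Then \(w=X^{-j}v\in\C\) because \(X\) is invertible in \(R_{N,\lambda}\), so \(w\in\C\cap V_0^{(s)}\). The reverse inclusion holds since \(\C\) is an ideal and \(X^jV_0^{(s)}=V_j^{(s)}\).

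\emph{(3)\(\Rightarrow\)(1).} If \(\C=\bigoplus_j X^j\iota_s(\D)\), then each summand lies in \(V_j^{(s)}\). So \(\pi_j^{(s)}\) of a codeword is its \(j\)-th summand, which lies in \(\C\).

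\emph{Uniqueness of \(\D\).} Intersecting \eqref{code:extension-form} with \(V_0^{(s)}\) gives \(\C\cap V_0^{(s)}=\iota_s(\D)\), and \(\iota_s\) is injective. So \(\D=\iota_s^{-1}(\C\cap V_0^{(s)})\), and any \(\D\) satisfying (3) must equal this ideal.

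\emph{\(\Phi_s\) is a weight-preserving isomorphism.} Take \(d_j=\sum_{k<M}d_{j,k}Y^k\) in reduced form. Then \(X^jd_j(X^s)=\sum_k d_{j,k}X^{j+sk}\) with \(j+sk<N\), so these exponents are the reduced representatives. The coordinate \(j+sk\) of the image is therefore \(d_{j,k}\). The map \((j,k)\mapsto j+sk\) is a bijection \(\{0,\dots,s-1\}\times\{0,\dots,M-1\}\to\{0,\dots,N-1\}\). Hence \(\Phi_s\) is the restriction to \(\D^{\oplus s}\) of a coordinate permutation \(A^{sM}\to A^N\). It is \(A\)-linear, injective, and Hamming-weight preserving, and its image is \(\C\) by (3).

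The only step needing care is the degree bookkeeping showing that multiplication by \(X^j\) with \(j<s\) maps \(V_0^{(s)}\) onto \(V_j^{(s)}\) without reduction modulo \(X^N-\lambda\). This one observation is what lets \(\Phi_s\) be read off as a pure coordinate permutation. Everything else is formal.
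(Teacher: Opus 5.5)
Your proposal is correct and follows essentially the same route as the paper: the contraction \(\D=\iota_s^{-1}(\C\cap V_0^{(s)})\), invertibility of \(X\) to identify \(\C\cap V_j^{(s)}\) with \(X^j\iota_s(\D)\), and the coordinate bijection \((j,k)\mapsto j+sk\) for \(\Phi_s\). The only difference is that you close the cycle with (3)\(\Rightarrow\)(1), whereas the paper proves (1)\(\Leftrightarrow\)(2) and (3)\(\Rightarrow\)(2); this difference is immaterial.
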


\begin{proof}
The first condition implies the second by applying the ambient direct sum
decomposition to every element of \(\C\).  Conversely, the summands in the
second condition are supported on disjoint coordinate sets, so their
components are precisely the images under the corresponding projections.

Suppose that the second condition holds.  The set
\(\D=\iota_s^{-1}(\C\cap V_0^{(s)})\) is an ideal of \(R_{N/s,\lambda}\):
both \(\C\) and \(V_0^{(s)}=\iota_s(R_{N/s,\lambda})\) are stable under
multiplication by the latter subring.  Since \(X\) is invertible and
\(\C\) is an ideal, multiplication by \(X^j\) gives
\[
 X^j(\C\cap V_0^{(s)})=\C\cap V_j^{(s)}.
\]
For the reverse inclusion, multiplication by \(X^{-j}\) sends a word
supported on \(j,j+s,\ldots,j+(M-1)s\) to one supported on
\(0,s,\ldots,(M-1)s\).  This proves the third condition.
The third condition implies the second.  The contraction formula proves
uniqueness.

Finally, using representatives of degree less than \(M\), the map
\(\Phi_s\) sends the coefficient in position \(i\) of the \(j\)-th input
word to position \(si+j\).  These positions are distinct and exhaust
\(\{0,\ldots,N-1\}\), so \(\Phi_s\) is the restriction of an ambient
coordinate permutation.
\end{proof}

\begin{definition}
\label{code:period-definition}
A divisor \(s\mid N\) satisfying the equivalent conditions of
Theorem~\ref{code:projection} is called an admissible period of \(\C\).
Write \(\mathcal S(\C)\) for the set of admissible periods.
\end{definition}

\begin{theorem}
\label{code:intrinsic}
The set \(\mathcal S(\C)\) is exactly the set of positive divisors of a
uniquely determined divisor \(s(\C)\) of \(N\).  For \(s=s(\C)\), the
ideal \(\D\) in Theorem~\ref{code:projection} satisfies \(s(\D)=1\).
It is uniquely determined as an ideal of \(R_{N/s,\lambda}\).
Moreover,
\[
 s(0)=s(R_{N,\lambda})=N.
\]
\end{theorem}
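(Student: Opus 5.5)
The plan is to use condition (1) of Theorem~\ref{code:projection}, stability under the coordinate projections, which is the easiest to manipulate. Set \(P_s=\{\pi_j^{(s)}:0\le j<s\}\).

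\emph{Downward closure.} If \(s\in\mathcal S(\C)\) and \(t\mid s\), then each \(\pi_k^{(t)}\) is a sum of the projections \(\pi_j^{(s)}\) with \(j\equiv k\pmod t\), because a class modulo \(t\) is a union of classes modulo \(s\). Hence \(\pi_k^{(t)}(\C)\subseteq\C\), and \(\mathcal S(\C)\) is closed under taking divisors. It contains \(1\), since \(\pi_0^{(1)}\) is the identity.

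\emph{Closure under lcm.} Take \(s,t\in\mathcal S(\C)\) and put \(L=\operatorname{lcm}(s,t)\), which divides \(N\). By the Chinese remainder theorem, the class \(i\equiv k\pmod L\) is exactly the set of indices with \(i\equiv k\pmod s\) and \(i\equiv k\pmod t\), so \(\pi_k^{(L)}=\pi_{k\bmod s}^{(s)}\circ\pi_{k\bmod t}^{(t)}\) on \(A^N\). These projections commute, since both are diagonal idempotents on coordinates. Applying the two inclusions in turn gives \(\pi_k^{(L)}(\C)\subseteq\C\).

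\emph{Existence and uniqueness of \(s(\C)\).} A nonempty finite set of divisors of \(N\) that is closed under divisors and under lcm is the divisor set of its largest element; call that element \(s(\C)\). Uniqueness of \(\D\) is already part of Theorem~\ref{code:projection}.

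\emph{Primitivity of the core.} Let \(s=s(\C)\), \(M=N/s\), and let \(\D\) be the core. Suppose \(u\in\mathcal S(\D)\), so \(u\mid M\). I would show that \(su\in\mathcal S(\C)\), which forces \(u=1\) by maximality. Under \(\iota_s\), the exponent \(i\) of \(Y\) becomes the exponent \(si\) of \(X\). Hence \(\iota_s\circ\pi_a^{(u)}=\pi_{sa}^{(su)}\circ\iota_s\) on \(R_{M,\lambda}\), and multiplying by \(X^j\) shifts classes. A class \(k\pmod{su}\) can be written as \(k=j+sa\) with \(0\le j<s\). On the summand \(X^j\iota_s(R_{M,\lambda})\), the projection \(\pi_k^{(su)}\) acts as \(X^j\iota_s\pi_a^{(u)}\iota_s^{-1}X^{-j}\), and it kills the other summands \(X^{j'}\iota_s(R_{M,\lambda})\) with \(j'\ne j\). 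These summands occupy the coordinate positions \(j'+s\cdot(\text{class})\). The one place needing care is the wrap-around: \(X^j\) times a representative of degree below \(M\) stays below degree \(N\) only after passing to positions \(si+j\), which is the permutation described in the proof of Theorem~\ref{code:projection}. This is where the index bookkeeping \(si+j\) must be checked, and it is the main (mild) obstacle.

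Once that is checked, the form \eqref{code:extension-form} together with \(\pi_a^{(u)}(\D)\subseteq\D\) gives \(\pi_k^{(su)}(\C)\subseteq\C\), so \(s(\D)=1\).

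\emph{Extreme cases.} For \(\C=0\), every projection preserves \(0\). For \(\C=R_{N,\lambda}\), every projection preserves the whole space. In both cases \(N\in\mathcal S(\C)\), so \(s=N\).
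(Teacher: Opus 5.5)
Your proposal is correct and follows essentially the same route as the paper: you get downward closure from sums of projections, lcm closure from $\pi_k^{(L)}=\pi_{k\bmod s}^{(s)}\pi_{k\bmod t}^{(t)}$, and $s(\D)=1$ by showing $su\in\mathcal S(\C)$ through the classes $j+sa$, while the endpoint codes are trivially stable. The wrap-around you flag is vacuous: a word $X^j\iota_s(d)$ with $\deg d<M$ and $0\le j<s$ occupies only positions $si+j\le N-1$, and such a position is congruent to $j+sa \pmod{su}$ exactly when $i\equiv a\pmod u$, so your bookkeeping already closes the argument.
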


\begin{proof}
The divisor \(1\) is admissible.  If \(t\mid s\mid N\), each projection
modulo \(t\) is a sum of projections modulo \(s\).  Hence \(s\in
\mathcal S(\C)\) implies \(t\in\mathcal S(\C)\).
If \(s,t\in\mathcal S(\C)\) and \(l=\operatorname{lcm}(s,t)\), then
for \(0\le j<l\),
\[
 \pi_j^{(l)}
 =\pi_{j\bmod s}^{(s)}\pi_{j\bmod t}^{(t)}.
\]
Indeed, both sides retain exactly the coordinates satisfying the two
indicated congruences, or equivalently the single congruence modulo \(l\).
Thus \(l\) is admissible.  The least common multiple of all elements of
the finite set \(\mathcal S(\C)\) is consequently its greatest element,
denoted by \(s(\C)\).  Closure under taking divisors proves the first
assertion.

Let \(\D\) be the ideal obtained at this greatest period.  If \(t>1\)
were an admissible period of \(\D\), then \(t\mid N/s\).  Applying
Theorem~\ref{code:projection} twice would express \(\C\) as the
interleaving of \(st\) copies of one ideal.  Explicitly, the new residue
classes have indices \(j+sk\), with \(0\le j<s\) and \(0\le k<t\).
Thus \(st\in\mathcal S(\C)\), contradicting maximality.  Uniqueness
follows from the contraction description in Theorem~\ref{code:projection}.
Finally, both the zero ideal and the whole ring are stable under every
coordinate projection.
\end{proof}

\begin{definition}
\label{code:core-definition}
The integer \(s(\C)\) is the intrinsic support period of \(\C\).
The ideal \(\C_{\mathrm{sp}}\) obtained by contraction at this period
is its support-primitive core.  The coordinate decomposition
\[
 \C\cong \C_{\mathrm{sp}}^{\oplus s(\C)}
\]
is its support-primitive decomposition.
\end{definition}

The intrinsic period refers to the specified constacyclic coordinate
structure.  Both the zero code and the full code have cores of length
\(1\), consistently with Theorem~\ref{app:duality}.

\begin{remark}
\label{code:coaction}
The decomposition \eqref{code:ambient-splitting} makes \(R_{N,\lambda}\)
a ring graded by \(\mathbb Z/s\mathbb Z\).  Thus admissibility means
exactly that \(\C\) is a homogeneous ideal for this grading.
Equivalently, put \(H_s=A[T]/(T^s-1)\), with its usual Hopf algebra
structure \(\Delta(T)=T\otimes T\).  The substitution
\[
 \delta_s:R_{N,\lambda}\longrightarrow R_{N,\lambda}\otimes_A H_s,
 \qquad X\longmapsto X\otimes T
\]
is well defined because \(s\mid N\), and
\[
 \delta_s(c)=\sum_{j=0}^{s-1}\pi_j^{(s)}(c)\otimes T^j.
\]
Since \(1,T,\ldots,T^{s-1}\) form an \(A\)-basis of \(H_s\),
admissibility is equivalent to
\(\delta_s(\C)\subseteq\C\otimes_A H_s\).
The latter tensor product embeds in \(R_{N,\lambda}\otimes_A H_s\)
because \(H_s\) is free over \(A\).
No invertibility assumption on \(s\) in \(A\) is required.  The coaction
records the action of the group scheme \(\mu_s\), including information
that may be lost on passing to ordinary roots of unity.
\end{remark}

\subsection{Monic divisors and free codes}

For a monic divisor \(f\) of \(X^N-\lambda\), the ambient period
defined in \eqref{eq:ambient-period} reduces to
\begin{equation}
 \sigma_N(f)=
 \begin{cases}
 N,&f=1,\\
 \spn(f),&\deg f>0.
 \end{cases}
 \label{code:sigma}
\end{equation}
Thus \(\spn(1)=1\), whereas \(\sigma_N(1)=N\) agrees with the intrinsic
period of the full code.  For the zero-code generator,
\(\spn(X^N-\lambda)=N\).

\begin{proposition}
\label{code:monic-period}
Let \(A\) be a nonzero finite commutative ring with identity,
\(\lambda\in A^\times\),
and \(f\mid X^N-\lambda\) a monic divisor.  Put \(\C=(f)\).
For every \(s\mid N\),
\[
 s\in\mathcal S(\C)\quad\Longleftrightarrow\quad f\in A[X^s].
\]
Consequently,
\begin{equation}
 s(\C)=\sigma_N(f).
 \label{code:monic-period-equality}
\end{equation}
If \(1\le\deg f<N\), write
\(s=\spn(f)\), \(f(X)=F(X^s)\), and \(M=N/s\).
Then
\[
 F\mid Y^M-\lambda,\qquad
 \C_{\mathrm{sp}}=(F)\subseteq R_{M,\lambda},
 \qquad
 \rank_A\C=s\,\rank_A\C_{\mathrm{sp}}=N-\deg f.
\]
\end{proposition}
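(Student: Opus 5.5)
The plan is to prove the equivalence \(s\in\mathcal S(\C)\Leftrightarrow f\in A[X^s]\) first, then read off the period and the core. Write \(X^N-\lambda=fQ\) with \(Q\) monic. Every element of \(\C=(f)\) has a unique representative \(gf\) with \(\deg g<N-\deg f\), obtained by monic division. The structure of \(\C\) will be read from the multiples of \(f\).

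For the reverse implication, suppose \(f=F(X^s)\) with \(s\mid N\). If \(f=1\), then \(\C\) is the full ring and every divisor is admissible, consistent with \(1\in A[X^s]\). Otherwise Theorem~\ref{coef:compression} gives \(F\mid Y^M-\lambda\). Put \(\D=(F)\triangleleft R_{M,\lambda}\). The ideal \((f)\) is generated as an \(\iota_s(R_{M,\lambda})\)-module by \(X^jf=X^j\iota_s(F)\), so \(\C=\sum_j X^j\iota_s(\D)\). The sum is direct by \eqref{code:ambient-splitting}. Hence condition (3) of Theorem~\ref{code:projection} holds.

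For the forward implication, suppose \(s\) is admissible. Then \(\pi_j^{(s)}(f)\in\C\) for each \(j\), where \(f\) is viewed as a word of degree \(<N\). When \(\deg f=N\) we have \(\C=0\), \(f=X^N-\lambda\in A[X^s]\), and \(s(0)=N\), so the statement holds trivially. Assume \(\deg f<N\), and let \(d=\deg f\). Each component \(\pi_j^{(s)}(f)\) has degree \(\le d\) and lies in \((f)\). The key fact is that a nonzero element of \((f)\) whose representative has degree \(\le d\) is a multiple \(cf\) with \(c\in A\). This holds because the representative equals \(gf\) with \(\deg g<N-d\), and a product with a monic factor has degree equal to the sum of the degrees. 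For \(j\neq0\), the component \(\pi_j^{(s)}(f)\) has zero coefficient at \(X^d\) when \(s\nmid d\), and zero constant term always. Compare it with \(cf\), whose constant term is \(c f(0)\), where \(f(0)\) is a unit. This forces \(c=0\), so \(\pi_j^{(s)}(f)=0\) for \(j\ne0\). Hence \(f=\pi_0^{(s)}(f)\in A[X^s]\). This argument needs no case distinction on \(s\mid d\).

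The period formula follows. For nonconstant proper \(f\), the largest \(s\mid N\) with \(f\in A[X^s]\) is \(\spn(f)\), using \(\spn(f)\mid N\) from Theorem~\ref{coef:compression}. The cases \(f=1\) and \(f=X^N-\lambda\) give \(N\), matching \eqref{code:sigma}. In the case \(1\le\deg f<N\), the decomposition above identifies \(\C_{\mathrm{sp}}=\iota_s^{-1}(\C\cap V_0^{(s)})\) with \(\D=(F)\) via the uniqueness clause in Theorem~\ref{code:projection}. For the ranks, \((F)\) is free of rank \(M-\deg F\) with the monic-divisor basis. Since \(\Phi_s\) is an isomorphism \(\D^{\oplus s}\cong\C\), we get \(\rank\C=s(M-\deg F)=N-\deg f\), because \(\deg f=s\deg F\).

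The main obstacle is the forward direction: showing that a projection of \(f\) lying in \(\C\) must be a scalar multiple of \(f\). This needs care over a non-domain, and it is handled by the monic-degree additivity together with the unit constant term.
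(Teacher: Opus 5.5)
Your proof is correct and follows essentially the same route as the paper's. The reverse direction multiplies the ambient splitting by \(f\), and the forward direction combines the unique multiplier representation \(gf\) with \(\deg g<N-\deg f\), degree additivity for a monic factor, and the unit constant term \(f(0)\).

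The only differences are cosmetic. You show \(\pi_j^{(s)}(f)=0\) for \(j\ne0\) using their zero constant terms, whereas the paper shows \(\pi_0^{(s)}(f)=f\) using its unit constant term. You also obtain the rank from the interleaver isomorphism rather than from the two explicit monic-divisor bases.
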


\begin{proof}
If \(f=F(X^s)\), multiplication of the direct sum
\eqref{code:ambient-splitting} by \(f\) gives
\[
 (f)=\bigoplus_{j=0}^{s-1}X^j\iota_s((F)).
\]
Monic division gives \(F\mid Y^M-\lambda\).  Theorem~\ref{code:projection}
therefore proves admissibility.

For the converse, the cases \(f=1\) and \(f=X^N-\lambda\) are immediate.
Assume \(0<d=\deg f<N\) and that \(s\) is admissible.
The constant coefficient of \(f\) is a unit, since the constant
coefficients in \(f h=X^N-\lambda\) multiply to \(-\lambda\).
Every element of \((f)\) has a unique representative \(a f\) with
\(\deg a<N-d\): divide a multiplier by
\(h=(X^N-\lambda)/f\), and use injectivity of multiplication by the
monic polynomial \(f\).
In particular,
\(\pi_0^{(s)}(f)=a f\) as ordinary polynomials for such an \(a\).
The left side is nonzero and has degree at most \(d\).
Since \(f\) is monic, \(\deg(a f)=\deg a+d\), so \(a\) is constant.
Comparing constant coefficients and using \(f(0)\in A^\times\)
gives \(a=1\).  Thus \(\pi_0^{(s)}(f)=f\), as required.

Theorem~\ref{coef:compression} gives \(s\mid N\) and
\(F\mid Y^{N/s}-\lambda\) at the maximal polynomial support period.
The uniqueness of the core then identifies it with \((F)\).
The monic multiples \(f,Xf,\ldots,X^{N-d-1}f\) form an \(A\)-basis
of \(\C\); applying the same argument to \(F\) proves the rank formulas.
The endpoint conventions complete \eqref{code:monic-period-equality}.
\end{proof}

\begin{theorem}
\label{code:free-characterization}
Let \(A=\GR(p^e,r)\).  A constacyclic ideal
\(\C\triangleleft R_{N,\lambda}\) is free as an \(A\)-module if and
only if it is generated by a monic divisor of \(X^N-\lambda\).
That monic divisor is unique.
\end{theorem}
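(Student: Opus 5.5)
The forward direction is immediate from the preliminaries: if \(\C=(f)\) for a monic divisor \(f\) of \(X^N-\lambda\), then \(f,Xf,\ldots,X^{N-\deg f-1}f\) is an \(A\)-basis of \(\C\), so \(\C\) is free. For the converse I will avoid the CRT entirely, because \(N\) is arbitrary and may be a repeated-root length. Instead I will work with the torsion filtration and a degree argument. Assume \(\C\) is free of rank \(k\). In the elementary-divisor normal form this means \(k_0=k\) and \(k_a=0\) for \(a>0\). The dimension formula \(\dim\Tor_j(\C)=\sum_{a\le j}k_a\) then gives \(\dim\Tor_j(\C)=k\) for all \(j\). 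Together with the inclusions \(\bar\C\subseteq\Tor_j(\C)\), this yields \(\Tor_j(\C)=\bar\C\) for every \(0\le j<e\). Let \(\bar g\) be the monic generator of the \(\bar\lambda\)-constacyclic code \(\bar\C\), and put \(d=\deg\bar g\), so that \(k=N-d\).

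The key step is to show that \(\C\) contains no nonzero word whose representative has degree \(<d\). Suppose \(0\ne v\in\C\) with \(\deg v<d\). Write \(v=p^jw\) with \(0\le j<e\) and \(\bar w\ne0\), using exact coefficientwise division by \(p^j\). Then \(w\in(\C:p^j)\), so \(\bar w\in\Tor_j(\C)=\bar\C\). However, every nonzero element of \((\bar g)\subseteq\F_q[X]/(X^N-\bar\lambda)\) has a representative of degree at least \(d\). Indeed, such a representative is \(a\bar g\) with \(\deg a<N-d\), and \(\bar g\) is monic. This contradicts \(\deg\bar w\le\deg v<d\).

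Hence the \(A\)-linear projection \(\C\to A^{N-d}\) onto the coefficients of \(X^d,\ldots,X^{N-1}\) is injective. Since \(|\C|=|A|^{k}=|A|^{N-d}\), it is bijective. Taking the preimage of the vector with a \(1\) in position \(d\) and zeros above produces an element \(f\in\C\) that is monic of degree \(d\).

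Next I will show that \(f\) divides \(X^N-\lambda\) and generates \(\C\). Monic division gives \(X^N-\lambda=Qf+r\) with \(\deg r<d\). Here \(r\equiv-Qf\) lies in \(\C\), so \(r=0\) by the key step, and \(f\mid X^N-\lambda\). Then \((f)\subseteq\C\), and \((f)\) is free of rank \(N-d\), so \(|(f)|=|A|^{N-d}=|\C|\). Therefore \(\C=(f)\).

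For uniqueness, suppose two monic divisors \(f,f'\) generate \(\C\). Each has degree \(N-\rank_A\C=d\), so \(f-f'\in\C\) has degree \(<d\) and must vanish. The step I expect to need the most care is the key step itself. It relies on the torsion equality for free codes, which I derive from the elementary-divisor dimension formula rather than from the later Proposition~\ref{app:highest-torsion}. It also relies on the correct handling of exact division by \(p^j\).
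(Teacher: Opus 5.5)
Your proof is correct, but it takes a different route from the paper's.

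\textbf{The paper's route.} It works directly with a basis matrix. An \(A\)-basis of \(\C\) has linearly independent reductions (the \(p^{e-1}u\) injectivity argument), so \(\dim\bar\C=K\). The paper then lifts the residue basis \(\bar f,X\bar f,\ldots,X^{K-1}\bar f\) and inverts the submatrix in the top \(K\) columns, which is unit-triangular modulo \(p\); this produces the monic \(f\). Finally it notes that \(f,Xf,\ldots,X^{K-1}f\) is again a basis with unit-triangular top columns. That triangularity is what excludes nonzero words of degree \(<d\), and the displayed basis gives \(\C=(f)\).

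\textbf{Your route.} You take the elementary-divisor invariants and the torsion-dimension identity from the preliminaries to obtain \(\Tor_j(\C)=\bar\C\) for all \(j\). You then prove the ``no low-degree words'' step first, by a content argument: divide a low-degree word by \(p^{j}\), its minimal coordinate valuation, and land in \(\Tor_j(\C)=\bar\C\), where no nonzero word has degree below \(d\). Counting then replaces the basis bookkeeping. Injectivity together with \(|\C|=|A|^{N-d}\) yields \(f\), and \(|(f)|=|\C|\) yields \(\C=(f)\).

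\textbf{What each buys.} The paper's argument is self-contained linear algebra over the local ring. It is also the argument the paper later reuses in Proposition~\ref{app:highest-torsion} to prove \(\Tor_j=\bar\C\) for free codes, so the paper cannot take that equality as an input here. Your version puts that weight on the cited Norton--S\u{a}l\u{a}gean normal form instead. In exchange, it exposes the structural reason: the least degree of a nonzero word of \(\C\) is controlled by its torsion codes, a degree analogue of the socle argument behind the distance identity.

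\textbf{One small omission.} When \(k=0\) there is no ``position \(d\)'' in \(A^{N-d}=A^0\). You should state separately that in this case \(\C=0=(X^N-\lambda)\), as the paper does for \(K=0\).
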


\begin{proof}
A monic divisor generates a free code by
Proposition~\ref{code:monic-period}.
Conversely, suppose \(\C\) is free of rank \(K\), and choose an \(A\)-basis.
The matrix of its inclusion in \(A^N\) has rank \(K\) modulo \(p\).  Indeed, if a vector \(u\in A^K\) with a unit coordinate
had image in \(pA^N\), then the nonzero vector \(p^{e-1}u\) would have
zero image, contradicting injectivity.  Hence the residue code
\(\overline{\C}\) has dimension \(K\).

The cases \(K=0,N\) give the generators \(X^N-\lambda\) and \(1\),
respectively.  Suppose \(0<K<N\), put \(d=N-K\), and let
\(\bar f\) be the monic generator of the field constacyclic code
\(\overline{\C}\).  Thus \(\deg\bar f=d\).
Lift the \(K\) vectors
\(\bar f,X\bar f,\ldots,X^{K-1}\bar f\) to \(\C\).
The lifts form an \(A\)-basis, since their change-of-basis matrix is
invertible modulo \(p\).  The submatrix in columns \(d,d+1,\ldots,N-1\) is invertible,
because its reduction is triangular with diagonal entries \(1\).
There is therefore a unique element \(f\in\C\) whose coefficient in
degree \(d\) is \(1\) and whose higher coefficients are zero.  Its reduction
is the unique residue-code word with those highest \(K\) coordinates,
namely \(\bar f\).

The \(K\) vectors \(f,Xf,\ldots,X^{K-1}f\) have the same residue
basis and hence form another \(A\)-basis of \(\C\).
Their highest \(K\) columns are triangular with unit diagonal.
It follows that an element of \(\C\) of degree less than \(d\) is zero.
Divide \(X^N-\lambda\) by the monic polynomial \(f\).
The remainder belongs to \(\C\) and has degree less than \(d\), so it
vanishes.  Thus \(f\mid X^N-\lambda\), and the displayed basis proves
\(\C=(f)\).

Finally, any monic divisor generating \(\C\) has degree \(N-K\)
by the basis formula.  If \(f'\) is another such divisor, the unique
multiplier representation gives \(f'=a f\) with \(a\) constant;
monicity forces \(a=1\).
\end{proof}

Proposition~\ref{code:monic-period} and
Theorem~\ref{code:free-characterization} hold for arbitrary lengths.
For \(e>1\), the principal ideal \(pR_{N,\lambda}\) is nonfree and has no
monic element.  Its intrinsic period is \(N\), and its core is \(pA\).

\subsection{Simple-root ideals, torsion codes, and root labels}

Throughout this subsection, let \(A=\GR(p^e,r)\), let \(q=p^r\),
and suppose \(p\nmid N\).  Write the basic-irreducible factorization as
\[
 X^N-\lambda=\prod_{i=1}^h F_i(X),\qquad m_i=\deg F_i.
\]
Proposition~\ref{pre:crt} gives
\begin{equation}
 R_{N,\lambda}\cong\prod_{i=1}^h A_i,\qquad
 A_i=A[X]/(F_i)\cong\GR(p^e,rm_i).
 \label{code:crt}
\end{equation}
Each ideal \(\C\) is therefore specified by a unique tuple
\begin{equation}
 (a_1,\ldots,a_h)\in\{0,1,\ldots,e\}^h,\qquad
 \C\cong\prod_{i=1}^h p^{a_i}A_i.
 \label{code:exponent-tuple}
\end{equation}
Here \(p^eA_i=0\).  The module structure is
\begin{equation}
 |\C|=q^{\sum_i(e-a_i)m_i},\qquad
 \C\cong_A\bigoplus_{a_i<e}
             \bigl(A/(p^{e-a_i})\bigr)^{m_i}.
 \label{code:module-type}
\end{equation}
In particular, the free ideals are precisely those for which
\(a_i\in\{0,e\}\) for every \(i\).

For the torsion codes of Section~\ref{sec:preliminaries}, set
\begin{equation}
 g_j(X)=\prod_{a_i>j}F_i(X),\qquad 0\le j<e,
 \label{code:torsion-generators}
\end{equation}
where an empty product is \(1\).

\begin{lemma}
\label{code:torsion-description}
With the preceding notation,
\begin{equation}
 \Tor_j(\C)=(\bar g_j),\qquad
 \C=\sum_{j=0}^{e-1}p^j(g_j)
       \quad\text{in }R_{N,\lambda}.
 \label{code:torsion-sum}
\end{equation}
\end{lemma}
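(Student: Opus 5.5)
The plan is to work entirely inside the Chinese remainder decomposition \eqref{code:crt}. Write \(e_i\in R_{N,\lambda}\) for the idempotents, so \(\C=\bigoplus_i p^{a_i}A_ie_i\), and let \(\epsilon_i\) denote the CRT component map \(R_{N,\lambda}\to A_i\).

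\textbf{Step 1: identify the principal ideal \((g_j)\).} Since \(g_j=\prod_{a_i>j}F_i\), its image in \(A_i\) vanishes exactly when \(a_i>j\). For \(a_i\le j\), \(F_i\) is comaximal with every factor of \(g_j\), so the image is a unit. Hence
\[
(g_j)\cong\prod_{a_i\le j}A_i\times\prod_{a_i>j}0 .
\]

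\textbf{Step 2: the sum formula.} Multiplying Step 1 by \(p^j\) gives
\[
p^j(g_j)\cong\prod_{a_i\le j}p^jA_i .
\]
Fix a component \(i\) and sum over \(0\le j<e\). The component-\(i\) contributions are the ideals \(p^jA_i\) with \(a_i\le j<e\). Their sum is \(p^{a_i}A_i\) when \(a_i<e\), and \(0=p^eA_i\) when \(a_i=e\). A sum of CRT-product ideals is computed componentwise, so this yields the second identity in \eqref{code:torsion-sum}.

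\textbf{Step 3: compute \((\C:p^j)\) componentwise.} For \(v\) with components \(v_i\in A_i\cong\GR(p^e,rm_i)\), the condition \(p^jv\in\C\) means \(p^jv_i\in p^{a_i}A_i\) for every \(i\). In the chain ring \(A_i\) this is equivalent to
\[
v_i\in p^{\max(a_i-j,0)}A_i .
\]
Therefore \((\C:p^j)\) is the ideal with exponent tuple \((\max(a_i-j,0))_i\). Its CRT components are units-containing, i.e.\ equal to \(A_i\), exactly when \(a_i\le j\), and lie in \(pA_i\) when \(a_i>j\).

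\textbf{Step 4: reduce modulo \(p\) (the main obstacle).} The delicate point is that reduction must be taken on coefficient vectors, not on CRT components. I will use that the CRT isomorphism is compatible with reduction: \(\bar R_{N,\lambda}\cong\prod\F_q[X]/(\bar F_i)\), via Proposition~\ref{pre:hensel} and the fact that the \(\bar F_i\) are coprime. The reduction map \((\C:p^j)\to\bar R_{N,\lambda}\) then acts componentwise. A component \(A_i\) maps onto \(\F_q[X]/(\bar F_i)\), while \(p^{\ge1}A_i\) maps to \(0\). So \(\Tor_j(\C)\) is the ideal of \(\bar R_{N,\lambda}\) whose components are full exactly for \(a_i\le j\). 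By the field analogue of Step 1, this is \((\bar g_j)\), with the convention that \(\bar g_j=X^N-\bar\lambda\) generates the zero code when every \(a_i\le j\) fails.

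Care is needed that reduction of an element of an ideal lands in the residue ideal componentwise, and that every residue element lifts. The lifting holds because \(A_i\to\F_q[X]/(\bar F_i)\) is surjective and the CRT idempotents reduce to the residue idempotents. This is consistent with Proposition~\ref{pre:torsion}.
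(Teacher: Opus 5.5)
Your proposal is correct and matches the paper's argument for the sum formula: the image of \(g_j\) in \(A_i\) is zero or a unit by comaximality, so the \(i\)-th component of the sum is \(\sum_{j\ge a_i}p^jA_i=p^{a_i}A_i\). The only difference is that the paper obtains \(\Tor_j(\C)=(\bar g_j)\) by citing Proposition~\ref{pre:torsion}, while you derive it directly from \((\C:p^j)\cong\prod_i p^{\max(a_i-j,0)}A_i\) and componentwise reduction, which makes the lemma self-contained.
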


\begin{proof}
The first equality is the torsion-generator formula in
Proposition~\ref{pre:torsion}.  For the second, the image of \(g_j\) in
\(A_i\) is zero if \(a_i>j\) and is a unit if \(a_i\le j\), by
comaximality of the factors.  The \(i\)-th CRT component of the sum is
therefore
\(\sum_{j\ge a_i}p^jA_i=p^{a_i}A_i\), with the empty sum interpreted as
zero.
\end{proof}

\begin{theorem}
\label{code:torsion-period}
For a simple-root constacyclic ideal over \(A\),
\begin{equation}
 \mathcal S(\C)=\bigcap_{j=0}^{e-1}\mathcal S(\Tor_j(\C)),
 \qquad
 s(\C)=\gcd_{0\le j<e}\sigma_N(g_j)
       =\gcd_{0\le j<e}\sigma_N(\bar g_j).
 \label{code:period-torsion-gcd}
\end{equation}
The endpoint convention in \eqref{code:sigma} is used for both
coefficient rings.
\end{theorem}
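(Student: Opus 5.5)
The plan is to prove the first equality, \(\mathcal S(\C)=\bigcap_j\mathcal S(\Tor_j(\C))\), as two inclusions, and then read off the gcd formulas.

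\emph{Step 1: \(\mathcal S(\C)\subseteq\mathcal S(\Tor_j(\C))\).} Suppose \(s\in\mathcal S(\C)\) and \(v\in(\C:p^j)\). Since \(\pi_i^{(s)}\) is \(A\)-linear and \(p^jv\in\C\), we get \(p^j\pi_i^{(s)}(v)=\pi_i^{(s)}(p^jv)\in\C\). Hence \((\C:p^j)\) is stable under every projection. Reduction modulo \(p\) commutes with the coordinate projections, so \(\Tor_j(\C)\) is stable as well.

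\emph{Step 2: \(\bigcap_j\mathcal S(\Tor_j(\C))\subseteq\mathcal S(\C)\).} Suppose \(s\) is admissible for every \(\Tor_j(\C)=(\bar g_j)\).
\begin{itemize}
\item Proposition~\ref{code:monic-period}, applied over \(\F_q\), gives \(\bar g_j\in\F_q[X^s]\). If \(\bar g_j\) is nonconstant, its support period is \(s\)-divisible.
\item Theorem~\ref{coef:precision} (or Theorem~\ref{root:stabilizer}) then gives \(\spn(g_j)=\spn(\bar g_j)\). So \(g_j\in A[X^s]\) whenever \(g_j\ne1\). The endpoint \(g_j=X^N-\lambda\) lies in \(A[X^s]\) because \(s\mid N\), and \(g_j=1\) is trivial.
\item Proposition~\ref{code:monic-period} over \(A\) now shows that each ideal \((g_j)\) is stable under all projections modulo \(s\).
\item By Lemma~\ref{code:torsion-description}, \(\C=\sum_j p^j(g_j)\). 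Each summand \(p^j(g_j)\) is stable, since projections are \(A\)-linear, and a sum of projection-stable submodules is projection-stable. Hence \(s\in\mathcal S(\C)\).
\end{itemize}

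\emph{Step 3: the gcd formulas.} By Theorem~\ref{code:intrinsic}, each \(\mathcal S(\cdot)\) is the set of divisors of the corresponding intrinsic period. An intersection of divisor sets is the divisor set of the gcd, so \(s(\C)=\gcd_j s(\Tor_j(\C))\).

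It remains to identify each \(s(\Tor_j(\C))\), using Proposition~\ref{code:monic-period} over \(\F_q\) and over \(A\):
\begin{itemize}
\item Over \(\F_q\), \(s(\Tor_j(\C))=\sigma_N(\bar g_j)\).
\item Over \(A\), \(s((g_j))=\sigma_N(g_j)\), and Step 2 shows \(\mathcal S((g_j))=\mathcal S((\bar g_j))\). Hence \(\sigma_N(g_j)=\sigma_N(\bar g_j)\), giving the last equality.
\end{itemize}

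The main obstacle is the endpoint bookkeeping in Step 3, and it must be checked carefully. When \(\bar g_j=1\) the convention gives \(\sigma_N(1)=N\) on both sides. When \(\bar g_j=X^N-\bar\lambda\) the residue code is zero and again \(\sigma_N=N\). In both cases \(g_j\) is correspondingly \(1\) or \(X^N-\lambda\), since \(\deg g_j=\deg\bar g_j\) by the Hensel correspondence.

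The remaining point needing care is that Theorem~\ref{coef:precision} requires \(p\nmid N\) to transfer \(s\)-divisibility from \(\bar g_j\) to \(g_j\). Without it, Example~\ref{root:repeated-counterexample} shows the transfer can fail. This is exactly where the simple-root hypothesis is used.
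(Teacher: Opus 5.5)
Your proof is correct and follows the paper's argument essentially step for step. The forward inclusion uses \(p^j\pi_k^{(s)}(v)=\pi_k^{(s)}(p^jv)\); the converse goes through Proposition~\ref{code:monic-period} over \(\F_q\), Theorem~\ref{coef:precision} and the decomposition \(\C=\sum_jp^j(g_j)\) of Lemma~\ref{code:torsion-description}; and Theorem~\ref{code:intrinsic} then gives the gcd. The only cosmetic difference is that you obtain \(\sigma_N(g_j)=\sigma_N(\bar g_j)\) by comparing the admissible-period sets of \((g_j)\) and \((\bar g_j)\), whereas the paper reads it off directly from Theorem~\ref{coef:precision}.
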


\begin{proof}
Suppose \(s\in\mathcal S(\C)\).  If
\(\bar v\in\Tor_j(\C)\), choose a lift with \(p^jv\in\C\).
For every residue class \(k\pmod s\),
\[
 p^j\pi_k^{(s)}(v)=\pi_k^{(s)}(p^jv)\in\C.
\]
Reduction therefore shows that \(\Tor_j(\C)\) is stable under all
the same projections.

Conversely, suppose \(s\) is admissible for every torsion code.
By Proposition~\ref{code:monic-period} over \(\F_q\), each
\(\bar g_j\) belongs to \(\F_q[X^s]\).
The precision-stability theorem
(Theorem~\ref{coef:precision}) implies
\(g_j\in A[X^s]\); for \(g_j=1\) this is immediate.
Each ideal \((g_j)\), and hence each \(p^j(g_j)\), is stable under
the projections modulo \(s\).  Lemma~\ref{code:torsion-description}
then proves that their sum \(\C\) is stable.
This proves the equality of admissible-period sets.  By
Theorem~\ref{code:intrinsic}, their intersection consists of the
divisors of the gcd in \eqref{code:period-torsion-gcd}.
Theorem~\ref{coef:precision} identifies the two gcd expressions.
\end{proof}

The full torsion filtration also has a root-side description.  Let \(Z\)
be the set of roots of
\(X^N-\bar\lambda\) in an algebraic closure of \(\F_q\).  Define
\[
 a_\C:Z\longrightarrow\{0,\ldots,e\},\qquad
 a_\C(\alpha)=a_i\quad\text{if }\bar F_i(\alpha)=0.
\]
This function is constant on Frobenius orbits.  The group
\(\mu_N\) of \(N\)-th roots of unity acts on \(Z\) by multiplication.
Put
\[
 K(\C)=\{\zeta\in\mu_N:
            a_\C(\zeta\alpha)=a_\C(\alpha)\text{ for every }\alpha\in Z\}.
\]

\begin{corollary}
\label{code:colored-stabilizer}
For \(p\nmid N\),
\begin{equation}
 s(\C)=|K(\C)|.
 \label{code:colored-period}
\end{equation}
Thus the support-primitive decomposition of an arbitrary simple-root
constacyclic ideal is determined by the stabilizer of its CRT exponent
function.
\end{corollary}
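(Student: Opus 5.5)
We reduce the statement to Theorem~\ref{code:torsion-period} and to the stabilizer description of Theorem~\ref{root:stabilizer} (in its residue-field form, Remark~\ref{root:ambient-period}), applied layer by layer. By Theorem~\ref{code:torsion-period}, \(s(\C)=\gcd_{0\le j<e}\sigma_N(\bar g_j)\), where \(\bar g_j=\prod_{a_i>j}\bar F_i\) generates \(\Tor_j(\C)\). The root set of \(\bar g_j\) in \(Z\) is exactly the superlevel set
\[
 Z_j=\{\alpha\in Z:a_\C(\alpha)>j\}.
\]
By Remark~\ref{root:ambient-period}, applied over \(\F_q\) (equivalently the case \(e=1\) of Theorem~\ref{root:stabilizer}, where \(\mu_N\) acts on the roots of the squarefree \(X^N-\bar\lambda\)), we have \(\sigma_N(\bar g_j)=|S_j|\). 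Here \(S_j=\{\zeta\in\mu_N:\zeta Z_j=Z_j\}\). This includes the endpoint cases \(Z_j=\emptyset\) and \(Z_j=Z\), which give \(|S_j|=N\), in agreement with \eqref{code:sigma}.

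The next step is group-theoretic. Each \(S_j\) is a subgroup of the cyclic group \(\mu_N\) of order \(N\). Subgroups of a cyclic group are determined by their orders, and intersection corresponds to gcd of orders. Hence
\[
 \gcd_j|S_j|=\Bigl|\bigcap_j S_j\Bigr|.
\]
It then suffices to show \(\bigcap_j S_j=K(\C)\).

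For the inclusion \(K(\C)\subseteq\bigcap_jS_j\): if \(\zeta\) preserves \(a_\C\), then \(\zeta\) preserves every superlevel set \(Z_j\). For the reverse inclusion, the function is recovered from its superlevel sets:
\[
 a_\C(\alpha)=\#\{j\in\{0,\ldots,e-1\}:\alpha\in Z_j\}.
\]
So if \(\zeta\) stabilizes every \(Z_j\), then \(a_\C(\zeta\alpha)=a_\C(\alpha)\) for all \(\alpha\in Z\). Combining these steps gives \(s(\C)=|K(\C)|\).

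The only delicate point is making sure the residue-field form of the stabilizer theorem is being applied correctly. Theorem~\ref{root:stabilizer} is stated over \(A\), but it holds over \(\F_q=\GR(p,r)\) as the case \(e=1\). The action of \(\mu_N\) on \(Z\) is the residue-level action, and the reduction bijection identifies it with the action of \(\mu_N(B)\) on the lifted roots. One must also check the endpoint convention \(\sigma_N\) carefully: for \(j\) with \(Z_j=Z\), the generator is \(X^N-\bar\lambda\) with period \(N\), and for \(Z_j=\emptyset\) the generator is \(1\), whose ambient period is also \(N\). Once these identifications are in place, the rest is bookkeeping.
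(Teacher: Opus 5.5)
Your proposal is correct and follows the paper's proof essentially step for step: you identify the root sets of the torsion generators \(\bar g_j\) with the threshold sets \(Z_j\), write \(K(\C)\) as the intersection of their stabilizers in \(\mu_N\), and use Theorem~\ref{root:stabilizer} together with the endpoint convention to get \(|\operatorname{Stab}(Z_j)|=\sigma_N(\bar g_j)\). You then pass from the intersection to the gcd of orders in the cyclic group \(\mu_N\) and conclude with Theorem~\ref{code:torsion-period}; your recovery formula \(a_\C(\alpha)=\#\{j:\alpha\in Z_j\}\) simply makes explicit the step the paper asserts without comment.
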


\begin{proof}
The defining root set of \(\Tor_j(\C)\) is
\(Z_j=\{\alpha\in Z:a_\C(\alpha)>j\}\).
A multiplier preserves the function \(a_\C\) if and only if it
preserves every one of these threshold sets.  Hence
\[
 K(\C)=\bigcap_{j=0}^{e-1}\operatorname{Stab}_{\mu_N}(Z_j).
\]
The root-stabilizer description
(Theorem~\ref{root:stabilizer}) identifies the order of each subgroup
with \(\sigma_N(\bar g_j)\).  For an empty or full threshold set the
stabilizer is all of \(\mu_N\), of order \(N\), consistently with
\eqref{code:sigma}.  Since \(\mu_N\) is cyclic, the order of the
intersection is the greatest common divisor of the subgroup orders.
Now apply Theorem~\ref{code:torsion-period}.
\end{proof}

For free ideals the exponent function has only the values \(0\) and
\(e\), so Corollary~\ref{code:colored-stabilizer} reduces to the
stabilizer of the defining set of the generator.  In the cyclic
irreducible case, Theorem~\ref{root:orbit-compression} computes this
stabilizer through the MED index.  For nonfree ideals, the intermediate values distinguish roots with
different positive CRT exponents.

\subsection{Repeated-root lengths}

At repeated-root lengths, the basic primary components need not be chain
rings, so the classification \eqref{code:exponent-tuple} can fail.
The intrinsic decomposition and the free-code characterization still
apply, but the torsion filtration need not determine the intrinsic
period.

\begin{example}
\label{code:repeated-counterexample}
In \(R=\Z_4[X]/(X^4-1)\), let
\[
 f_1=X^2+1,\qquad f_2=X^2+2X+1,\qquad \C_i=(f_i).
\]
These are the monic divisors in
Example~\ref{root:repeated-counterexample}.  The resulting codes are free
of rank \(2\), and Proposition~\ref{code:monic-period} gives
\[
 s(\C_1)=2,\qquad s(\C_2)=1.
\]
Their residue generators coincide:
\(\bar f_1=\bar f_2=X^2+1\).
All torsion codes of a free code equal its residue code, as proved in
Proposition~\ref{app:highest-torsion}.  Hence \(\C_1\) and \(\C_2\)
have identical torsion filtrations but different intrinsic periods.

The coaction in Remark~\ref{code:coaction} detects the same obstruction.
The \(\Z_4\)-valued points of \(\mu_2\) are \(1\) and \(-1\), and both
fix \(f_2\).  However,
\[
 f_2(TX)-f_2(X)=2X(T-1)\ne0
       \quad\text{in }\Z_4[X,T]/(T^2-1).
\]
The inequality follows from the free basis \(1,T\).
The odd projection \(2X\) is not in \((f_2)\), since a nonzero
element of that monic-divisor code cannot have degree less than \(2\).
Thus invariance under these two scalar substitutions does not imply
admissibility of the period \(2\).
\end{example}

Proposition~\ref{coef:repeated-blocks} constructs the canonical local
blocks by substitution from simple-root factors. More generally,
substituting \(Y=X^{p^a}\) in a monic divisor of \(Y^m-\lambda\), with
\(p\nmid m\) and \(\lambda\in A^\times\), gives the following family
of free constacyclic codes.

\begin{proposition}\label{v2:substitution-codes}
Let \(A=\GR(p^e,r)\), let \(p\nmid m\), let \(a\ge1\), and put
\(N=p^am\).  Let \(\lambda\in A^\times\), and let \(G\) be a monic
divisor of \(Y^m-\lambda\).  Define
\[
 \C_0=(G)\subseteq A[Y]/(Y^m-\lambda),\qquad
 \C=(G(X^{p^a}))\subseteq A[X]/(X^N-\lambda).
\]
Then the coordinate interleaver gives
\begin{equation}\label{v2:eq-substitution-core}
 \C\cong\C_0^{\oplus p^a},\qquad
 s(\C)=p^a s(\C_0),\qquad
 \C_{\mathrm{sp}}=(\C_0)_{\mathrm{sp}}.
\end{equation}
The core equality is taken in
\(A[Z]/(Z^{m/s(\C_0)}-\lambda)\).  Moreover, for every \(0\le j<e\),
\begin{equation}\label{v2:eq-substitution-torsion}
 \Tor_j(\C)\cong\overline{\C_0}^{\oplus p^a},\qquad
 s(\Tor_j(\C))=s(\C).
\end{equation}
These formulas include the zero and full codes.
\end{proposition}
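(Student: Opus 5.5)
The plan is to obtain everything from Theorem~\ref{code:projection} at the divisor \(p^a\), followed by Proposition~\ref{code:monic-period} and Theorem~\ref{code:intrinsic}.

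\textbf{Step 1: the interleaving.} Put \(f=G(X^{p^a})\). Then \(f\) divides \(X^N-\lambda\) by the substitution direction of Theorem~\ref{coef:compression}, applied with \(s=p^a\). Hence \(\C=(f)\) is a monic-divisor code. Multiplying the splitting \eqref{code:ambient-splitting} (with \(s=p^a\), \(M=m\)) by \(f=\iota_{p^a}(G)\) gives
\[
 \C=\bigoplus_j X^j\iota_{p^a}((G)).
\]
Theorem~\ref{code:projection} then yields \(\C\cong\C_0^{\oplus p^a}\) through \(\Phi_{p^a}\).

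\textbf{Step 2: period and core.} Apply Proposition~\ref{code:monic-period} to both codes. The cases \(G=1\) and \(G=Y^m-\lambda\) give \(f=1\) and \(f=X^N-\lambda\), with periods \(N=p^am\) and \(m\), and the formula holds. Otherwise the support of \(f\) is \(p^a\) times the support of \(G\), so \(\spn(f)=p^a\spn(G)\) and \(f^\sharp=G^\sharp\). Hence \(s(\C)=p^as(\C_0)\), and both cores equal \((G^\sharp)\) in \(A[Z]/(Z^{m/s(\C_0)}-\lambda)\). For the degenerate cases I would instead argue directly: both cores are the zero or full ideal at length \(1\), consistent with Theorem~\ref{code:intrinsic}.

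\textbf{Step 3: torsion codes.} The main obstacle is that \(N\) is a repeated-root length, so Proposition~\ref{pre:torsion} cannot be used. Instead I use freeness. Both \(\C\) and \(\C_0\) are free, so by the fact cited as Proposition~\ref{app:highest-torsion}, \(\Tor_j(\C)=\overline{\C}\). If that result is unavailable here, argue directly. Given \(p^jv\in\C\), write \(p^jv=bf\) with \(\deg b<N-\deg f\). Expanding in the free basis \(X^if\) and using that the basis reduces to a basis shows \(p^j\mid b\). Hence \(v\equiv b'f \pmod{p^{e-j}}\), and \(\bar v\in(\bar f)\).

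Reduction commutes with the coordinate interleaver. So \(\overline{\C}=\Phi_{p^a}(\overline{\C_0}^{\oplus p^a})\), which gives \(\Tor_j(\C)\cong\overline{\C_0}^{\oplus p^a}\).

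Finally, \(\overline{\C}=(\bar f)\) and \(\bar f=\bar G(X^{p^a})\). Proposition~\ref{code:monic-period} over \(\F_q\) gives \(s(\Tor_j(\C))=\sigma_N(\bar f)\). Since \(p\nmid m\), Theorem~\ref{coef:precision} gives \(\spn(\bar G)=\spn(G)\). So \(\sigma_N(\bar f)=p^a\sigma_m(\bar G)=p^a\sigma_m(G)=s(\C)\), with the endpoint cases checked directly.
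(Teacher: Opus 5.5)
Your proof is correct, and Steps 1 and 2 follow the paper's argument. The only real difference is in the torsion step.

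The paper first pushes $\Tor_j$ through the interleaver to get $\Tor_j(\C)\cong\Tor_j(\C_0)^{\oplus p^a}$. It then applies the simple-root CRT description (Proposition~\ref{pre:torsion}) to the free code $\C_0$ at length $m$, whose CRT exponents are all $0$ or $e$, to conclude $\Tor_j(\C_0)=\overline{\C_0}$.

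You instead use freeness of $\C$ itself at the repeated-root length $N$ to get $\Tor_j(\C)=\overline{\C}$, via Proposition~\ref{app:highest-torsion} or your triangular argument that $p^j$ divides the multiplier $b$. You then interleave the residue code. Your fallback argument is sound: the coefficients of $bf$ in degrees $\deg f,\ldots,N-1$ form a unitriangular system in the coefficients of $b$. It also covers the zero code, which Proposition~\ref{app:highest-torsion} excludes by hypothesis.

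What your route buys is that the torsion isomorphism visibly depends only on freeness, with no CRT machinery. In your version, the hypothesis $p\nmid m$ enters only through Theorem~\ref{coef:precision}, to identify $\sigma_m(\bar G)=\sigma_m(G)$ in the final period equality. The paper uses it both there and in applying the CRT description to $\C_0$.
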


\begin{proof}
Substitution gives \(G(X^{p^a})\mid X^N-\lambda\), and multiplication
of the ambient decomposition by this generator yields
\[
 \C=\bigoplus_{k=0}^{p^a-1}X^k\iota_{p^a}(\C_0).
\]
This proves the first assertion of \eqref{v2:eq-substitution-core}.
If \(\deg G>0\), substitution multiplies every positive support
exponent by \(p^a\), so
\[
 \spn(G(X^{p^a}))=p^a\spn(G).
\]
Proposition~\ref{code:monic-period} gives the period formula.  Writing
\(G(Y)=H(Y^{s(\C_0)})\), with \(H\) support-primitive, then gives
\(G(X^{p^a})=H(X^{p^as(\C_0)})\), and both cores are generated by
\(H\).  For \(G=1\), the period and core assertions follow from
\(s(\C_0)=m\), \(s(\C)=N\), and the length-one full core.

Multiplication by \(p^j\) and reduction commute with the coordinate
interleaver.  Hence
\[
 \Tor_j(\C)\cong\Tor_j(\C_0)^{\oplus p^a}.
\]
The simple-root code \(\C_0\) is free, so its CRT exponents are all
\(0\) or \(e\).  Proposition~\ref{pre:torsion} therefore gives
\(\Tor_j(\C_0)=\overline{\C_0}\) for every \(j\).
The same support calculation over \(\F_q\) gives
\(s(\Tor_j(\C))=p^a s(\overline{\C_0})\).
Theorem~\ref{coef:precision} and
Proposition~\ref{code:monic-period} identify
\(s(\overline{\C_0})=s(\C_0)\), including the endpoint conventions.
This proves \eqref{v2:eq-substitution-torsion}.
\end{proof}

For \(\lambda=1\), products of the canonical blocks in
Proposition~\ref{coef:repeated-blocks} therefore generate codes
satisfying \(s(\C)=s(\Tor_j(\C))\) for every \(0\le j<e\).

\section{Coding-theoretic consequences}
\label{sec:applications}

Throughout this section \(A=\GR(p^e,r)\), \(q=p^r\), and
\(\lambda\in A^\times\).  Unless a simple-root hypothesis is explicitly
stated, the length \(N\) is arbitrary.  Write
\[
 s=s(\C),\qquad M=N/s,\qquad \D=\C_{\mathrm{sp}}.
\]
Theorem~\ref{code:projection} identifies \(\C\) with \(\D^{\oplus s}\)
by a coordinate permutation.  All statements below apply to nonfree
codes unless freeness is assumed explicitly.  Minimum distances refer
to nonzero codes.

\subsection{Weight distribution and covering radius}

For a code \(\mathcal E\subseteq A^\ell\), let \(A_i(\mathcal E)\)
be the number of its words of Hamming weight \(i\), and set
\[
 W_{\mathcal E}(x,y)=
       \sum_{i=0}^{\ell}A_i(\mathcal E)x^{\ell-i}y^i,\qquad
 \rho(\mathcal E)=
       \max_{v\in A^\ell}\min_{c\in\mathcal E}d_H(v,c).
\]
Both quantities are computed in the ambient space \(A^\ell\).

\begin{theorem}
\label{app:weights}
For every constacyclic ideal,
\begin{equation}
 |\C|=|\D|^s,\qquad
 W_\C(x,y)=W_\D(x,y)^s,\qquad
 \rho(\C)=s\rho(\D).
 \label{app:weight-formulas}
\end{equation}
If \(\C\ne0\) and \(d=d_H(\D)\), then
\begin{equation}
 d_H(\C)=d,\qquad A_d(\C)=sA_d(\D).
 \label{app:distance-minwords}
\end{equation}
If \(\D\) is free of rank \(K\), then \(\C\) is free of rank \(sK\).
More generally, each multiplicity in the \(A\)-module type of
\(\C\) is \(s\) times the corresponding multiplicity for \(\D\).
\end{theorem}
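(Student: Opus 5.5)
The whole theorem should follow from the single isomorphism \(\Phi_s:\D^{\oplus s}\to\C\) of Theorem~\ref{code:projection}, taken at \(s=s(\C)\). This map is the restriction of an ambient coordinate permutation \(P:(A^M)^{s}\to A^N\), which sends position \(i\) of the \(j\)-th block to position \(si+j\). I will first check that \(P\) maps \(\D\times\cdots\times\D\) (the product of \(s\) copies) onto \(\C\). It also preserves Hamming weight in the strong additive sense: the weight of an interleaved word is the sum of the weights of its \(s\) components. Every claim is then a statement about a product code.

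\emph{Cardinality and weight enumerator.} The identity \(|\C|=|\D|^s\) is immediate from bijectivity. For the enumerator I write \(W\) as a sum over codewords of \(x^{\ell-\mathrm{wt}}y^{\mathrm{wt}}\). A codeword of \(\C\) is an \(s\)-tuple \((d_0,\ldots,d_{s-1})\) with total length \(sM=N\) and additive weight, so the sum factors into \(\prod_j\sum_{d_j\in\D}x^{M-\mathrm{wt}(d_j)}y^{\mathrm{wt}(d_j)}=W_\D^s\).

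\emph{Minimum distance and minimum-weight words.} For \(\C\ne0\) I have \(\D\ne0\). A nonzero tuple has at least one nonzero component, so its weight is at least \(d\), and placing a minimum word of \(\D\) in one slot attains \(d\). A word of weight exactly \(d\) must have exactly one nonzero component, since two nonzero components would give weight at least \(2d>d\); that component is a minimum word of \(\D\). Choosing the slot in \(s\) ways gives \(A_d(\C)=sA_d(\D)\). The same count also comes from comparing the coefficient of \(x^{N-d}y^d\) in \(W_\D^s\).

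\emph{Covering radius.} The permutation \(P\) is an isometry of \(A^N\) with \((A^M)^s\), where the distance is the sum of blockwise distances. For \(v=(v_0,\ldots,v_{s-1})\), the minimization over \(c\in\D^s\) decouples: \(\min_c d_H(v,c)=\sum_j\min_{c_j\in\D}d_H(v_j,c_j)\). Maximizing over \(v\) also decouples, because the blocks \(v_j\) range independently. Hence \(\rho(\C)=s\rho(\D)\). The only point needing care here is that the max of a sum of independent terms is the sum of the maxes; this holds because the variables are separate.

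\emph{Module type.} Since \(\Phi_s\) is \(A\)-linear and bijective, \(\C\cong\D^{\oplus s}\) as \(A\)-modules. If \(\D\cong\bigoplus_a(p^aA)^{k_a}\), then \(\C\cong\bigoplus_a(p^aA)^{sk_a}\). Uniqueness of the elementary-divisor multiplicities recalled in Section~\ref{sec:preliminaries} identifies the type of \(\C\) as \(s\) times that of \(\D\); freeness of rank \(K\) is the special case \(k_0=K\), which gives freeness of rank \(sK\).

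I expect no real obstacle. The main point to state carefully is that weight and distance are additive across the interleaved blocks, which relies on \(\Phi_s\) being induced by a genuine ambient coordinate permutation rather than merely a module isomorphism.
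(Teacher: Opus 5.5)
Your proposal is correct and follows the paper's proof essentially step for step. Both arguments rest on the coordinate-permutation isomorphism \(\Phi_s\colon\D^{\oplus s}\to\C\), use additivity of Hamming weight across the interleaved blocks for the enumerator, the distance, and the minimum-weight count, and decouple the covering-radius minimization and maximization blockwise. Your appeal to uniqueness of the elementary-divisor multiplicities for the module type makes explicit what the paper summarizes as following from the direct sum. Your step \(2d>d\) also tacitly uses \(d\ge1\), which holds because \(\D\ne0\).
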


\begin{proof}
The cardinality and module-type assertions follow from the direct
product \(\D^{\oplus s}\).
Hamming weight is additive on the coordinate blocks:
\[
 \wt(d_0,\ldots,d_{s-1})=\sum_{j=0}^{s-1}\wt(d_j).
\]
Multiplication of the \(s\) weight enumerators gives the enumerator
identity.
A tuple has minimum nonzero weight precisely when one component is a
minimum-weight word of \(\D\) and all other components are zero.
This proves both formulas in \eqref{app:distance-minwords}.

For an ambient tuple \(v=(v_0,\ldots,v_{s-1})\), the choices of
closest codewords in the different components are independent.
Therefore
\[
 \min_{c\in\D^{\oplus s}}d_H(v,c)
   =\sum_{j=0}^{s-1}\min_{d_j\in\D}d_H(v_j,d_j).
\]
The maximum of this sum is \(s\rho(\D)\), since each component may
be chosen to attain \(\rho(\D)\).  The ambient coordinate
permutation in Theorem~\ref{code:projection} preserves all these
quantities.
\end{proof}

The enumerator, distance, and covering-radius formulas depend only on
the coordinate product and therefore hold over any finite commutative
ring.  Over \(A\), the module-type formula also records each torsion
multiplicity.

\subsection{Euclidean duality and the hull}

For the Euclidean dual defined in Section~\ref{sec:preliminaries}, the
Frobenius-ring identities
\begin{equation}
 |\mathcal E|\,|\mathcal E^\perp|=|A|^\ell,\qquad
 \mathcal E^{\perp\perp}=\mathcal E
 \label{app:frobenius-duality}
\end{equation}
hold for every \(A\)-submodule \(\mathcal E\subseteq A^\ell\), including
nonfree submodules; see \cite{Wood,Wan}.

\begin{theorem}
\label{app:duality}
The Euclidean dual of a \(\lambda\)-constacyclic code is a
\(\lambda^{-1}\)-constacyclic code.  Moreover,
\begin{equation}
 \mathcal S(\C^\perp)=\mathcal S(\C),\qquad
 s(\C^\perp)=s(\C),\qquad
 (\C^\perp)_{\mathrm{sp}}=\D^\perp.
 \label{app:dual-core}
\end{equation}
The last equality is an equality of ideals in
\(R_{M,\lambda^{-1}}\), and the ambient interleaver gives
\begin{equation}
 \C^\perp\cong(\D^\perp)^{\oplus s}.
 \label{app:dual-sum}
\end{equation}
\end{theorem}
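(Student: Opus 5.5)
The plan is to prove that admissible periods of \(\C\) are also admissible for \(\C^\perp\); applying the same argument to \(\C^\perp\), with \(\C^{\perp\perp}=\C\), then gives the reverse inclusion. The first assertion is standard, but I would verify it quickly. If \(v\in\C^\perp\) and \(c\in\C\), then \(\langle \tau_{\lambda^{-1}}v,\tau_\lambda c\rangle=\langle v,c\rangle\), where \(\tau_\mu\) denotes the \(\mu\)-constacyclic shift: the wrap-around coordinate contributes \(\lambda^{-1}v_{N-1}\cdot\lambda c_{N-1}\). Since \(\tau_\lambda\) is a bijection of \(\C\), we get \(\tau_{\lambda^{-1}}v\in\C^\perp\).

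For \(s\in\mathcal S(\C)\), the key observation is that the projections \(\pi_j^{(s)}\) are self-adjoint for the Euclidean form:
\[
\langle\pi_j^{(s)}v,c\rangle=\langle v,\pi_j^{(s)}c\rangle,
\]
because both sides sum over the same coordinate set. So if \(v\in\C^\perp\) and \(c\in\C\), then
\[
\langle\pi_j^{(s)}v,c\rangle=\langle v,\pi_j^{(s)}c\rangle=0,
\]
since \(\pi_j^{(s)}c\in\C\). Thus \(\pi_j^{(s)}v\in\C^\perp\), and \(s\in\mathcal S(\C^\perp)\). Note that admissibility in \(R_{N,\lambda^{-1}}\) is defined through the same coordinate projections.

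By symmetry and \((\C^\perp)^\perp=\C\), this gives \(\mathcal S(\C^\perp)=\mathcal S(\C)\). Theorem~\ref{code:intrinsic} then yields \(s(\C^\perp)=s(\C)\).

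For the core, set \(s=s(\C)\) and use Theorem~\ref{code:projection}, which gives \(\C=\Phi_s(\D^{\oplus s})\) with \(\Phi_s\) a coordinate permutation. Coordinate permutations preserve the Euclidean form, so \(\C^\perp=\Phi_s\bigl((\D^{\oplus s})^\perp\bigr)\). Since \(\D^{\oplus s}\) is a direct product on disjoint coordinate blocks, its dual is \((\D^\perp)^{\oplus s}\):
\begin{itemize}
\item the inclusion \(\supseteq\) is clear;
\item for \(\subseteq\), pair a dual vector with elements having a single nonzero block.
\end{itemize}
This gives \eqref{app:dual-sum}.

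Here \(\D^\perp\) is taken in \(A^M\), and it is a \(\lambda^{-1}\)-constacyclic ideal of \(R_{M,\lambda^{-1}}\) by the first part. So \(\C^\perp=\bigoplus_j X^j\iota_s(\D^\perp)\), and the uniqueness statement in Theorem~\ref{code:projection} identifies the contraction at period \(s=s(\C^\perp)\) as \(\D^\perp\). That is \((\C^\perp)_{\mathrm{sp}}=\D^\perp\).

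The main point needing care is bookkeeping rather than mathematics. The interleaver for \(\lambda^{-1}\) must be the same coordinate map as for \(\lambda\); this holds because \(\Phi_s\) sends position \(i\) of block \(j\) to \(si+j\), independently of the multiplier. The endpoint cases \(\C=0\) and \(\C=R_{N,\lambda}\) swap under duality and both have period \(N\), which is consistent with the result.
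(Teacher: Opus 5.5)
Your proposal is correct and follows essentially the same route as the paper. It uses compatibility of the Euclidean form with the two shifts for the constacyclic property, and self-adjointness of the projections together with \((\C^\perp)^\perp=\C\) for \(\mathcal S(\C^\perp)=\mathcal S(\C)\). For the core it uses the coordinate-permutation interleaver, blockwise duality of \(\D^{\oplus s}\), and uniqueness of the contraction (you cite Theorem~\ref{code:projection} directly where the paper cites Theorem~\ref{code:intrinsic}, which rests on the same contraction formula).
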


\begin{proof}
Let \(\tau_\lambda\) denote the constacyclic shift.  Directly from
the coordinate formula,
\[
 \langle\tau_\lambda(c),\tau_{\lambda^{-1}}(v)\rangle
       =\langle c,v\rangle.
\]
Since \(\tau_\lambda(\C)=\C\), the dual is invariant under
\(\tau_{\lambda^{-1}}\).

For every \(s\mid N\) the projections are self-adjoint:
\(\langle\pi_j^{(s)}v,c\rangle
=\langle v,\pi_j^{(s)}c\rangle\).
If the projections preserve \(\C\), they therefore preserve
\(\C^\perp\).  Applying this implication to \(\C^\perp\) and using
\eqref{app:frobenius-duality} gives the converse.
This proves equality of the admissible-period sets and their maxima.

The coordinate permutation \(\Phi_s\) preserves inner products.  A
tuple \((v_0,\ldots,v_{s-1})\) is orthogonal to \(\D^{\oplus s}\)
if and only if \(v_j\in\D^\perp\) for each \(j\): test against words
with only the \(j\)-th component nonzero.  This proves
\eqref{app:dual-sum}.  Since \(s\) is also the greatest admissible
period of the dual, uniqueness in Theorem~\ref{code:intrinsic}
identifies the core exactly with \(\D^\perp\).
\end{proof}

For a free code \(\C=(f)\), the dual is generated by the monic
reciprocal \(h^\dagger\) of \(h=(X^N-\lambda)/f\), as recalled in
Section~\ref{sec:preliminaries}.  Formula~\eqref{app:dual-core} also
applies to nonfree codes and, with the convention
\(s(0)=s(R_{N,\lambda})=N\), to both endpoint codes.

\begin{corollary}
\label{app:hull}
The Euclidean hull satisfies
\[
 \Hull(\C)\cong\Hull(\D)^{\oplus s}.
\]
Consequently \(\C\) is linear complementary dual (LCD), self-orthogonal,
dual-containing, or self-dual if and only if \(\D\) has the corresponding
property.  Here LCD means that the Euclidean hull is zero.
\end{corollary}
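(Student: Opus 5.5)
The plan is to combine the interleaving isomorphism of Theorem~\ref{code:projection} with the dual-core identification in Theorem~\ref{app:duality}. Write \(s=s(\C)\) and \(\D=\C_{\mathrm{sp}}\). Under the coordinate permutation \(\Phi_s\), the code \(\C\) corresponds to \(\D^{\oplus s}\). By \eqref{app:dual-sum}, the same permutation carries \(\C^\perp\) to \((\D^\perp)^{\oplus s}\). The first step is to note that both descriptions use the same ambient permutation \(\Phi_s\). This holds because \(s(\C^\perp)=s(\C)\) and the interleaver depends only on \(s\). Hence images of intersections are intersections of images. Since a coordinate permutation is a bijection on \(A^N\), this gives
\[
 \Phi_s^{-1}(\Hull(\C))=\D^{\oplus s}\cap(\D^\perp)^{\oplus s}.
\]

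The second step is the componentwise identity
\(\D^{\oplus s}\cap(\D^\perp)^{\oplus s}=(\D\cap\D^\perp)^{\oplus s}\).
Both sides are subsets of \((A^M)^s\) defined by membership conditions on each block separately, so this is immediate. Together with the first step it yields \(\Hull(\C)\cong\Hull(\D)^{\oplus s}\). The isomorphism is a coordinate permutation and therefore also a Hamming isometry.

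For the consequences, I will express each property through the hull and the dual:
\begin{itemize}
\item LCD means \(\Hull=0\);
\item self-orthogonal means \(\Hull=\C\), that is, \(\C\subseteq\C^\perp\);
\item dual-containing means \(\Hull=\C^\perp\);
\item self-dual means \(\C=\C^\perp\).
\end{itemize}
Because \(\Phi_s\) carries \(\C\), \(\C^\perp\) and \(\Hull(\C)\) to the \(s\)-fold sums of \(\D\), \(\D^\perp\) and \(\Hull(\D)\), each containment or equality between the big codes holds if and only if it holds blockwise. Since all \(s\) blocks are identical, this is equivalent to the same relation for \(\D\). For LCD, \(\Hull(\D)^{\oplus s}=0\) if and only if \(\Hull(\D)=0\), using \(s\ge1\).

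There is one subtlety, and it is the main point to check. The dual of a \(\lambda\)-constacyclic code lives in \(R_{N,\lambda^{-1}}\), so the hull is an intersection of submodules of the common ambient space \(A^N\), not of ideals in a single ring. The argument above works entirely at the level of \(A\)-submodules of \(A^N\) and \(A^M\), so this causes no problem. I also need to verify that \(\Phi_s\) is the same coordinate map on both sides even though the constacyclic parameters differ. This holds because \(\Phi_s\) sends position \(i\) of block \(j\) to position \(si+j\), independently of \(\lambda\). The endpoint codes \(0\) and \(R_{N,\lambda}\) are covered by the convention \(s=N\) with length-one cores, and they can be checked directly.
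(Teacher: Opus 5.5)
Your proposal is correct and follows essentially the same route as the paper's proof. The single ambient permutation carries $\C$ and $\C^\perp$ to $\D^{\oplus s}$ and $(\D^\perp)^{\oplus s}$ by \eqref{app:dual-sum}, hence carries $\Hull(\C)$ to $(\D\cap\D^\perp)^{\oplus s}$, and the four properties are then checked blockwise; your added remarks on $\lambda$-independence of the permutation and on $s\ge1$ only make explicit what the paper leaves implicit.
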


\begin{proof}
The same ambient permutation sends \(\C\) to \(\D^{\oplus s}\)
and \(\C^\perp\) to \((\D^\perp)^{\oplus s}\).  It therefore sends
their intersection to \((\D\cap\D^\perp)^{\oplus s}\).
The stated equivalences follow by checking intersections, inclusions,
and equality componentwise.
\end{proof}

\subsection{Distance and the highest torsion code}

Minimum Hamming distance is determined by the highest torsion code
\cite{NortonSalageanDistance}.  Its support period can exceed the
intrinsic period of the original code.

\begin{proposition}
\label{app:highest-torsion}
Let \(0\ne\mathcal E\subseteq A^\ell\) be an \(A\)-linear code and put
\(\mathcal T=\Tor_{e-1}(\mathcal E)\).  Then
\begin{equation}
 d_H(\mathcal E)=d_H(\mathcal T).
 \label{app:torsion-distance}
\end{equation}
Furthermore, \(t(\mathcal E):=\dim_{\F_q}\mathcal T\) is the minimum
number of \(A\)-module generators of \(\mathcal E\).
If \(\mathcal E\) is free of rank \(K\), then
\[
 \Tor_j(\mathcal E)=\overline{\mathcal E}\quad(0\le j<e),
 \qquad
 t(\mathcal E)=K,\qquad
 d_H(\mathcal E)=d_H(\overline{\mathcal E}).
\]
\end{proposition}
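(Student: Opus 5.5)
The plan has three parts: first the socle isometry, then the generator count from the elementary-divisor normal form, and finally the free case.

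\emph{Distance identity.} Consider the map \(\Tor_{e-1}(\mathcal E)\to\mathcal E\cap p^{e-1}A^\ell\), \(\bar v\mapsto p^{e-1}v\). It is well defined because \(p^{e-1}v\) depends only on \(\bar v\), since \(p^{e-1}\cdot pA=0\). It is a bijection. For injectivity, if \(p^{e-1}v=0\) then every coordinate of \(v\) lies in \(pA\), so \(\bar v=0\). For surjectivity, any element of \(\mathcal E\cap p^{e-1}A^\ell\) has the form \(p^{e-1}v\), and then \(v\in(\mathcal E:p^{e-1})\) by definition. The map is also a Hamming isometry, because \(p^{e-1}a=0\) exactly when \(a\in pA\), so \(p^{e-1}v\) and \(\bar v\) have the same support. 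This gives \(d_H(\mathcal E)\le d_H(\mathcal T)\) whenever \(\mathcal T\ne0\). Note that \(\mathcal T\neq0\) because \(\mathcal E\ne0\): any nonzero \(c=p^ju\in\mathcal E\) with \(u\) having a unit coordinate satisfies \(p^{e-1-j}c=p^{e-1}u\ne0\).

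For the reverse inequality, take a minimum-weight word \(c\in\mathcal E\). Let \(j\) be the least valuation among its coordinates and set \(c'=p^{e-1-j}c\). Then \(c'\in\mathcal E\cap p^{e-1}A^\ell\) is nonzero, and \(\operatorname{supp}c'\subseteq\operatorname{supp}c\). Hence \(d_H(\mathcal T)\le\wt(c')\le\wt(c)=d_H(\mathcal E)\), which proves \eqref{app:torsion-distance}.

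\emph{Generator count.} Use the elementary-divisor decomposition \(\mathcal E\simeq\bigoplus_a(p^aA)^{k_a}\). The minimal number of generators equals \(\dim_{\F_q}\mathcal E/p\mathcal E\) by Nakayama's lemma, since \(A\) is local. Each cyclic summand \(p^aA\simeq A/p^{e-a}A\) with \(a<e\) contributes exactly one to this dimension, so the count is \(\sum_ak_a\). The recalled identity \(\dim\Tor_{e-1}=\sum_ak_a\) then gives the claim.

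I expect the only delicate point to be justifying that torsion dimensions are invariant under the non-permutation change of coordinates used for the normal form. If I did not want to rely on the preliminaries for this, I would argue directly. The quotient \((\mathcal E:p^{e-1})/(pA^\ell)\) is identified, via multiplication by \(p^{e-1}\), with \(\mathcal E\cap p^{e-1}A^\ell=\mathcal E[p]\), the \(p\)-torsion of \(\mathcal E\). For a finite \(A\)-module, \(\dim\mathcal E[p]=\dim\mathcal E/p\mathcal E\) equals the number of cyclic summands, which is intrinsic.

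\emph{Free case.} Suppose \(\mathcal E\) is free of rank \(K\). The argument in Theorem~\ref{code:free-characterization} shows that \(\dim\overline{\mathcal E}=K\); that argument uses only freeness, not constacyclicity. The generator count gives \(\dim\Tor_{e-1}=K\). Since the torsion codes form an increasing chain from \(\overline{\mathcal E}\) to \(\Tor_{e-1}(\mathcal E)\), and the two ends have the same dimension, every \(\Tor_j(\mathcal E)\) equals \(\overline{\mathcal E}\). It follows that \(t(\mathcal E)=K\), and \(d_H(\mathcal E)=d_H(\overline{\mathcal E})\) by \eqref{app:torsion-distance}.
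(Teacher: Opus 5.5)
Your proof is correct. The distance identity and your direct generator count (identifying \(\Tor_{e-1}(\mathcal E)\) with the socle \(\mathcal E\cap p^{e-1}A^\ell\), then counting cyclic summands via the structure theorem) are exactly the paper's argument.

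The coordinate-change worry you raise is harmless. For an invertible \(A\)-linear \(g\) on \(A^\ell\) one has \((g\mathcal E:p^j)=g(\mathcal E:p^j)\), so torsion dimensions are invariant even though Hamming weights are not. Your appeal to the recalled identity \(\dim\Tor_{e-1}=\sum_ak_a\) is therefore also legitimate.

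The free case is where you take a genuinely different route. The paper argues as follows:
\begin{itemize}
\item A basis with independent reductions has a unit \(K\times K\) minor, so \(\mathcal E\) is a direct summand, \(A^\ell=\mathcal E\oplus\mathcal F\).
\item If \(p^jv\in\mathcal E\), the \(\mathcal F\)-component \(w\) of \(v\) satisfies \(p^jw=0\), hence \(w\in p^{e-j}\mathcal F\subseteq pA^\ell\) and \(\bar v\in\overline{\mathcal E}\).
\end{itemize}
You instead squeeze the chain. You get \(\dim\overline{\mathcal E}=K\) from residue independence of a basis, and \(\dim\Tor_{e-1}(\mathcal E)=K\) from the generator count you just proved. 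Monotonicity of the \(\Tor_j\) then forces every torsion code to equal \(\overline{\mathcal E}\).

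Your version is shorter and reuses the previous step. The paper's version does not depend on the generator count. It also yields the structural facts that a free code has an information set and a free complement in \(A^\ell\).
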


\begin{proof}
The map
\[
 \F_q^\ell\longrightarrow p^{e-1}A^\ell,\qquad
 \bar v\longmapsto p^{e-1}v
\]
is well defined, is \(\F_q\)-linear for the natural residue-field
structure on \(p^{e-1}A^\ell\), and preserves Hamming weight.  It maps
\(\mathcal T\) bijectively onto \(\mathcal E\cap p^{e-1}A^\ell\).
For a nonzero \(c\in\mathcal E\), let \(a<e\) be the least valuation
of its nonzero coordinates.  Then
\(p^{e-1-a}c\) is a nonzero element of this intersection, and its
support is contained in that of \(c\).
Thus the minimum nonzero weight is attained in this intersection,
proving \eqref{app:torsion-distance}.

By the structure theorem for modules over a finite chain ring,
\(\mathcal E\cong\bigoplus_{\nu=1}^t A/(p^{b_\nu})\), where
\(1\le b_\nu\le e\).  Since \(\dim_{\F_q}(\mathcal E/p\mathcal E)=t\), the integer \(t\)
is the minimum number of generators.  Its socle, the submodule
annihilated by \(p\), also has \(\F_q\)-dimension \(t\).
Inside \(A^\ell\), the elements annihilated by \(p\) are precisely
those in \(p^{e-1}A^\ell\).  The preceding identification of the
socle with \(\mathcal T\) proves the assertion about \(t(\mathcal E)\).

If \(\mathcal E\) is free, its basis matrix has independent reductions,
by the argument in Theorem~\ref{code:free-characterization}.
A \(K\times K\) minor is therefore a unit, and projection onto those
positions identifies \(\mathcal E\) with \(A^K\).  In particular
\(\mathcal E\) is a direct summand of \(A^\ell\).  Write
\(A^\ell=\mathcal E\oplus\mathcal F\), with \(\mathcal F\) free.
If \(p^jv\in\mathcal E\) and \(v=c+w\) in this decomposition, then
\(p^jw=0\), so \(w\in p^{e-j}\mathcal F\) for \(j<e\).
Hence \(\bar w=0\) and \(\bar v\in\overline{\mathcal E}\).
The reverse inclusion holds because \(p^j\mathcal E\subseteq\mathcal E\).
All torsion codes therefore equal \(\overline{\mathcal E}\), which has
dimension \(K\); the distance statement follows from
\eqref{app:torsion-distance}.
\end{proof}

For a nonzero constacyclic code define
\begin{equation}
 \mathcal T=\Tor_{e-1}(\C),\qquad
 t=t(\C)=\dim_{\F_q}\mathcal T,\qquad
 s_*=s(\mathcal T).
 \label{app:star-period}
\end{equation}
The field-code period is the intrinsic period, with \(s_*=N\) when
\(\mathcal T=\F_q^N\).  No restriction on \(N\) is needed.

\begin{lemma}
\label{app:period-comparison}
For every nonzero constacyclic code,
\[
 s(\C)\mid s_*,\qquad s_*\mid t.
 \]
The highest torsion code of the intrinsic core satisfies
\[
 \Tor_{e-1}(\C)\cong\Tor_{e-1}(\D)^{\oplus s}.
 \]
If \(\C\) is free and \(p\nmid N\), then \(s_*=s(\C)\).
\end{lemma}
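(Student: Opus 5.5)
The plan is to prove the displayed isomorphism of highest torsion codes first, because the two divisibilities then follow from the intrinsic-period machinery of Theorems~\ref{code:projection} and~\ref{code:intrinsic}. Put \(s=s(\C)\) and \(\D=\C_{\mathrm{sp}}\), and use the interleaver \(\Phi_s\) of \eqref{code:interleaver}. Since \(\Phi_s\) is a coordinate permutation, it commutes with multiplication by \(p^{e-1}\) and with reduction modulo \(p\).

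For \(v\in A^N\), the condition \(p^{e-1}v\in\C\) splits into conditions on the components \(\pi_k^{(s)}(v)\). Indeed \(\C\) is stable under the projections, so \(p^{e-1}v\in\C\) if and only if \(p^{e-1}\pi_k^{(s)}(v)\in\C\cap V_k^{(s)}=X^k\iota_s(\D)\) for every \(k\). Transporting each component by \(X^{-k}\) and \(\iota_s^{-1}\) shows
\[
 (\C:p^{e-1})=\bigoplus_{k=0}^{s-1}X^k\iota_s\bigl((\D:p^{e-1})\bigr).
\]
Reducing modulo \(p\) gives
\[
 \Tor_{e-1}(\C)=\bigoplus_{k=0}^{s-1}X^k\iota_s\bigl(\Tor_{e-1}(\D)\bigr),
\]
which is the claimed isomorphism \(\Tor_{e-1}(\C)\cong\Tor_{e-1}(\D)^{\oplus s}\). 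Here \(\Tor_{e-1}(\D)\) is a \(\bar\lambda\)-constacyclic code of length \(M\), that is, an ideal of \(\F_q[Y]/(Y^M-\bar\lambda)\). The display is therefore exactly condition (3) of Theorem~\ref{code:projection} over the field \(\F_q\). This theorem holds over any finite commutative ring, so \(s\) is an admissible period of \(\mathcal T\). Theorem~\ref{code:intrinsic} then gives \(s\mid s_*\).

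For \(s_*\mid t\), I would apply the support-primitive decomposition to the field code \(\mathcal T\) itself: \(\mathcal T\cong\mathcal T_{\mathrm{sp}}^{\oplus s_*}\) by a coordinate permutation. Comparing \(\F_q\)-dimensions gives \(t=s_*\dim\mathcal T_{\mathrm{sp}}\). Since \(\C\ne0\), its socle \(\C\cap p^{e-1}A^N\) is nonzero, so \(\mathcal T\ne0\) by Proposition~\ref{app:highest-torsion}, and \(t\ge1\). The case \(\mathcal T=\F_q^N\) is consistent with the convention \(s_*=N=t\).

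For the last claim, let \(\C\) be free and \(p\nmid N\). By Theorem~\ref{code:free-characterization}, \(\C=(f)\) for a monic divisor \(f\) of \(X^N-\lambda\), and \(f\ne X^N-\lambda\) because \(\C\ne0\). Proposition~\ref{app:highest-torsion} gives \(\mathcal T=\overline{\C}=(\bar f)\). Proposition~\ref{code:monic-period}, applied over \(A\) and over \(\F_q\), gives \(s(\C)=\sigma_N(f)\) and \(s_*=\sigma_N(\bar f)\). These are equal by Theorem~\ref{coef:precision} when \(\deg f>0\), and both equal \(N\) when \(f=1\).

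The step needing the most care is the component-wise splitting of \((\C:p^{e-1})\), since a colon ideal is not obviously projection-stable. It works here because stability of \(\C\) under the projections \(\pi_k^{(s)}\), which are coordinate maps commuting with scalar multiplication by \(p^{e-1}\), turns the single condition on \(v\) into separate conditions on its components.
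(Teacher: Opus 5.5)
Your proposal is correct and follows essentially the paper's own argument: splitting the condition \(p^{e-1}v\in\C\) componentwise through the interleaver gives the torsion decomposition, and the support-primitive decomposition of \(\mathcal T\) gives \(s_*\mid t\); the free simple-root case is handled exactly as in the paper via Proposition~\ref{app:highest-torsion}, Proposition~\ref{code:monic-period}, and Theorem~\ref{coef:precision}. The only difference is the order: the paper obtains \(s\mid s_*\) directly from the fact that projections preserving \(\C\) preserve its torsion codes, whereas you read it off the explicit decomposition via condition~(3) of Theorem~\ref{code:projection}, which rests on the same observation.
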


\begin{proof}
Any coordinate projection which preserves \(\C\) preserves every
torsion code, as in the first part of the proof of
Theorem~\ref{code:torsion-period}; that argument does not require
\(p\nmid N\).  Hence \(s\) is an admissible period of \(\mathcal T\),
so \(s\mid s_*\).  Its decomposition into \(s_*\) identical field
cores gives \(s_*\mid\dim_{\F_q}\mathcal T=t\).
The condition \(p^{e-1}v\in\C\) holds if and only if it holds in
each component of the interleaver.  Reducing componentwise proves
the torsion decomposition.

Finally, in the free case Proposition~\ref{app:highest-torsion}
identifies \(\mathcal T\) with the residue code.
At a simple-root length,
Theorem~\ref{coef:precision} and
Proposition~\ref{code:monic-period} give the same period for its
monic generator and its reduction.  The full-code endpoint also
has period \(N\) on both sides.
\end{proof}

\subsection{Singleton bounds for the two cores}

The field Singleton bound gives \(d_H\le L-K+1\) for a nonzero linear
\([L,K]\)-code \cite{HuffmanPless}.  Proposition~\ref{app:highest-torsion}
gives the same bound for a free \(A\)-code of length \(L\) and rank \(K\).

\begin{theorem}
\label{app:singleton}
Let \(0\ne\C\triangleleft R_{N,\lambda}\), with the notation
\eqref{app:star-period}.  Then
\begin{equation}
 d_H(\C)\le\frac{N-t}{s_*}+1
              \le\frac{N-t}{s(\C)}+1.
 \label{app:as-bounds}
\end{equation}
Both upper bounds are integers.
If \(\mathcal T_{\mathrm{sp}}\) is the support-primitive core of
\(\mathcal T=\Tor_{e-1}(\C)\), then
\[
 d_H(\C)=\frac{N-t}{s_*}+1
 \quad\Longleftrightarrow\quad
 \mathcal T_{\mathrm{sp}}\text{ is an MDS code over }\F_q.
 \]
\end{theorem}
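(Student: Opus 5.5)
The plan is to reduce everything to the highest torsion code \(\mathcal T=\Tor_{e-1}(\C)\), a \(\bar\lambda\)-constacyclic code over \(\F_q\) of length \(N\) and dimension \(t\). Since \(\C\ne0\), \(\mathcal T\ne0\). Proposition~\ref{app:highest-torsion} gives \(d_H(\C)=d_H(\mathcal T)\). I would then apply the support-primitive decomposition of Theorem~\ref{code:projection} and Theorem~\ref{code:intrinsic} to \(\mathcal T\) itself. These results hold over any finite commutative ring, in particular over \(\F_q\). This writes \(\mathcal T\cong\mathcal T_{\mathrm{sp}}^{\oplus s_*}\) by a coordinate permutation, where \(\mathcal T_{\mathrm{sp}}\) has length \(N/s_*\). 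Dimensions add over the direct sum, so \(\dim\mathcal T_{\mathrm{sp}}=t/s_*\). This is an integer, matching the divisibility \(s_*\mid t\) in Lemma~\ref{app:period-comparison}.

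Next I would apply Theorem~\ref{app:weights}, whose distance formula holds over any finite commutative ring. It gives \(d_H(\mathcal T)=d_H(\mathcal T_{\mathrm{sp}})\), and \(\mathcal T_{\mathrm{sp}}\ne0\). The field Singleton bound for the linear \([N/s_*,\,t/s_*]\) code \(\mathcal T_{\mathrm{sp}}\) then yields
\[
 d_H(\C)=d_H(\mathcal T_{\mathrm{sp}})\le \frac{N}{s_*}-\frac{t}{s_*}+1=\frac{N-t}{s_*}+1 .
\]
The bound is an integer because \(s_*\mid N\) and \(s_*\mid t\). Equality in the first inequality is exactly equality in the field Singleton bound for \(\mathcal T_{\mathrm{sp}}\), that is, MDS. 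This proves the stated equivalence.

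For the second inequality, Lemma~\ref{app:period-comparison} gives \(s(\C)\mid s_*\), so \(s(\C)\le s_*\). Since \(t\le N\), we have \(N-t\ge0\), and the quotient can only increase when \(s_*\) is replaced by \(s(\C)\). Integrality of \((N-t)/s(\C)\) follows from \(s(\C)\mid s_*\), which divides both \(N\) and \(t\).

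The only delicate points are the endpoint conventions. The first is the case \(\mathcal T=\F_q^N\), where \(s_*=N\), \(t=N\), and the core is the full code of length \(1\). The bound then reads \(d_H\le1\), with equality, and the length-one full space counts as MDS, so these conventions are consistent. The second is checking that the dimension count \(t=s_*\dim\mathcal T_{\mathrm{sp}}\) is valid here. It is, since \(\mathcal T\) is a field code and the interleaver is a linear isomorphism. I do not expect a substantive obstacle beyond these bookkeeping checks.
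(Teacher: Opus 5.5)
Your proposal is correct and follows the paper's proof essentially step for step. You pass to \(\mathcal T=\Tor_{e-1}(\C)\) via Proposition~\ref{app:highest-torsion}, decompose \(\mathcal T\cong\mathcal T_{\mathrm{sp}}^{\oplus s_*}\), equate distances with Theorem~\ref{app:weights}, and apply the field Singleton bound to the \([N/s_*,t/s_*]\) core, with the second inequality and integrality coming from \(s(\C)\mid s_*\), \(s_*\mid N\), \(s_*\mid t\) and \(N-t\ge0\); the paper only adds that \((N-t)/s(\C)+1\) is the Singleton bound of \(\Tor_{e-1}(\D)\).
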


\begin{proof}
The field code \(\mathcal T_{\mathrm{sp}}\) has length \(N/s_*\)
and dimension \(t/s_*\).  Its distance equals that of \(\mathcal T\)
by Theorem~\ref{app:weights}, and that distance equals \(d_H(\C)\)
by Proposition~\ref{app:highest-torsion}.  Applying the Singleton
bound to this field core gives the first inequality in
\eqref{app:as-bounds}.
Its equality condition is exactly the stated MDS condition.
Since \(s\mid s_*\), and \(s_*\) divides both \(N\) and \(t\),
both upper bounds are integers.  The second inequality follows from
\(N-t\ge0\).
Equivalently, the rightmost value in \eqref{app:as-bounds} is the
Singleton bound for \(\Tor_{e-1}(\D)\), which has length \(N/s\)
and dimension \(t/s\).
\end{proof}

The first expression in \eqref{app:as-bounds} is the
highest-torsion refinement of the arithmetic Singleton bound.
The second uses the intrinsic core of \(\C\).  For free codes, the
latter extends the finite-field arithmetic Singleton value of
\cite{AS,SP}.

\begin{corollary}
\label{app:free-singleton}
Let \(0\ne\C=(f)\) be free, where \(f\) is the monic divisor of
\(X^N-\lambda\) generating it.  Define
\begin{equation}
 \bAS(\C):=\frac{\deg f}{\sigma_N(f)}+1.
 \label{app:as-definition}
\end{equation}
Then
\begin{equation}
 d_H(\C)\le\bAS(\C)
          =M-\rank_A\C_{\mathrm{sp}}+1.
 \label{app:free-as}
\end{equation}
The code attains this bound if and only if
\(\C_{\mathrm{sp}}\) is an MDS free \(A\)-code; equivalently,
its residue code is MDS over \(\F_q\).
If \(p\nmid N\), then the highest-torsion refinement coincides
with \(\bAS(\C)\).
\end{corollary}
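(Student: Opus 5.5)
The plan is to read everything off Proposition~\ref{code:monic-period} and Theorem~\ref{app:singleton}. First I would identify \(\bAS(\C)\) with the Singleton value of the intrinsic core. Write \(s=s(\C)\) and \(M=N/s\). Proposition~\ref{code:monic-period} gives \(s=\sigma_N(f)\) and \(\rank_A\C_{\mathrm{sp}}=\rank_A\C/s=(N-\deg f)/s\). Hence
\[
 M-\rank_A\C_{\mathrm{sp}}+1=\frac{N-(N-\deg f)}{s}+1=\frac{\deg f}{\sigma_N(f)}+1,
\]
which proves the equality in \eqref{app:free-as}.

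The endpoint \(f=1\) needs a separate check: there \(\sigma_N(1)=N\), the core is the full code of length \(1\), and both sides equal \(1\). The excluded case \(f=X^N-\lambda\) is the zero code.

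For the inequality, I would use that \(\C\) is free of rank \(K=N-\deg f\). Proposition~\ref{app:highest-torsion} then gives \(t=K\). The second bound in \eqref{app:as-bounds} reads \((N-t)/s(\C)+1=\deg f/s+1=\bAS(\C)\), so \(d_H(\C)\le\bAS(\C)\).

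For the equality condition, Theorem~\ref{app:weights} gives \(d_H(\C)=d_H(\C_{\mathrm{sp}})\). The core is free of rank \(\rank_A\C/s\), because its module type is scaled by \(s\). Thus attaining \(\bAS\) is exactly the statement \(d_H(\C_{\mathrm{sp}})=M-\rank_A\C_{\mathrm{sp}}+1\), which is the definition of MDS for the free core. Proposition~\ref{app:highest-torsion} applied to the core gives \(d_H(\C_{\mathrm{sp}})=d_H(\overline{\C_{\mathrm{sp}}})\). The residue code has the same length \(M\) and dimension equal to the rank. Hence the core is MDS over \(A\) if and only if its residue code is MDS over \(\F_q\).

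Finally, if \(p\nmid N\), Lemma~\ref{app:period-comparison} gives \(s_*=s(\C)\) for free codes. The two expressions in \eqref{app:as-bounds} then coincide, so the highest-torsion refinement equals \(\bAS(\C)\).

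No step looks like a real obstacle. The only care needed is with the endpoint conventions, namely \(f=1\) and the ambient period \(\sigma_N\), and with confirming that the core of a free code is free, which the module-type clause of Theorem~\ref{app:weights} provides.
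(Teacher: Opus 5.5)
Your proposal is correct and follows the paper's own proof essentially step for step. You substitute \(t=N-\deg f\) (from Propositions~\ref{code:monic-period} and~\ref{app:highest-torsion}) and \(s(\C)=\sigma_N(f)\) into the second bound of Theorem~\ref{app:singleton}, transfer distance and rank to the free core and its residue code to obtain the MDS criterion, and invoke Lemma~\ref{app:period-comparison} for the simple-root claim; your separate check of the endpoint \(f=1\), which the paper leaves implicit, is a welcome bit of extra care.
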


\begin{proof}
The free rank is \(t=N-\deg f\), by
Propositions~\ref{code:monic-period} and
\ref{app:highest-torsion}.
Substituting into \eqref{app:as-bounds} gives
\eqref{app:free-as}.
The core is free of rank \(t/s\) and has the same distance as \(\C\).
Its residue code has dimension \(t/s\) and the same distance.
Consequently equality in \eqref{app:free-as} is precisely the
Singleton equality for the core, or for its residue code.
The last assertion follows from
Lemma~\ref{app:period-comparison}.
\end{proof}

\begin{corollary}\label{v2:substitution-distance}
With the notation of Proposition~\ref{v2:substitution-codes}, suppose
\(\deg G<m\).  Then
\begin{equation}\label{v2:eq-substitution-distance}
 d_H(\C)=d_H(\C_0),\qquad
 \bAS(\C)=\bAS(\C_0),\qquad
 s_*=s(\C),
\end{equation}
where \(s_*=s(\Tor_{e-1}(\C))\).
Consequently the highest-torsion bound equals the arithmetic Singleton
bound for \(\C\).  Equality in this bound holds if and only if
\((\C_0)_{\mathrm{sp}}\) is an MDS free \(A\)-code, equivalently if
its residue code is MDS over \(\F_q\).
\end{corollary}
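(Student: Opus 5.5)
The plan is to read everything off Proposition~\ref{v2:substitution-codes} and Theorem~\ref{app:weights}. Since \(\deg G<m\), the code \(\C_0\) is nonzero, and hence so is \(\C\cong\C_0^{\oplus p^a}\). Theorem~\ref{app:weights}, or simply the fact that the interleaver \(\Phi_{p^a}\) of Theorem~\ref{code:projection} is a coordinate permutation onto \(\C_0^{\oplus p^a}\), then gives \(d_H(\C)=d_H(\C_0)\). The same identity also follows from the core equality \(\C_{\mathrm{sp}}=(\C_0)_{\mathrm{sp}}\), because each code has the distance of its core.

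For the arithmetic Singleton values I will use definition \eqref{app:as-definition}. I split according to \(\deg G\). If \(\deg G>0\), then \(f=G(X^{p^a})\) has degree \(p^a\deg G\), and \(\sigma_N(f)=\spn(f)=p^a\spn(G)=p^a\sigma_m(G)\). The factor \(p^a\) cancels in \(\deg f/\sigma_N(f)\), so \(\bAS(\C)=\bAS(\C_0)\). If \(G=1\), both codes are full: the ambient convention gives \(\sigma_N(1)=N\) and \(\sigma_m(1)=m\), so both values equal \(1\). The nonzero hypothesis excludes \(G=Y^m-\lambda\).

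For \(s_*=s(\C)\), I invoke \eqref{v2:eq-substitution-torsion} with \(j=e-1\), which gives \(s(\Tor_{e-1}(\C))=s(\C)\) directly. I expect this to be the only point needing care, since Lemma~\ref{app:period-comparison} yields \(s_*=s(\C)\) only when \(p\nmid N\), and here \(p\mid N\). The substitution proposition supplies exactly the missing equality, because both sides equal \(p^a s(\C_0)\); the endpoint \(G=1\) is covered by the statement that those formulas include the full code. Next, \(\C\) is free with \(t=N-\deg f\), so the highest-torsion bound \((N-t)/s_*+1\) of Theorem~\ref{app:singleton} becomes \(\deg f/s(\C)+1\). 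By Proposition~\ref{code:monic-period}, \(s(\C)=\sigma_N(f)\), so this is \(\bAS(\C)\). Hence the two bounds coincide.

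Finally, for the equality criterion, Corollary~\ref{app:free-singleton} shows that \(\C\) attains \(\bAS(\C)\) if and only if \(\C_{\mathrm{sp}}\) is an MDS free \(A\)-code, equivalently if its residue code is MDS over \(\F_q\). Since \(\C_{\mathrm{sp}}=(\C_0)_{\mathrm{sp}}\) by \eqref{v2:eq-substitution-core}, this is the stated condition on \((\C_0)_{\mathrm{sp}}\).
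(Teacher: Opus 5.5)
Your proposal is correct and follows the paper's proof step for step. You get the distance from the interleaving \(\C\cong\C_0^{\oplus p^a}\) via Theorem~\ref{app:weights}, the equality of \(\bAS\) from \(\deg G(X^{p^a})=p^a\deg G\) and \(\sigma_N(G(X^{p^a}))=p^a\sigma_m(G)\), the identity \(s_*=s(\C)\) from \eqref{v2:eq-substitution-torsion} at \(j=e-1\), and the MDS criterion from the common core together with Corollary~\ref{app:free-singleton}. The differences are cosmetic: you treat \(G=1\) separately where the paper's ambient-period identity covers it uniformly, and you write out the step \((N-t)/s_*+1=\deg f/\sigma_N(f)+1=\bAS(\C)\), which the paper leaves implicit.
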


\begin{proof}
The degree assumption makes both codes nonzero.
Theorem~\ref{app:weights} applied to
\(\C\cong\C_0^{\oplus p^a}\) gives the distance equality.
Since
\[
 \deg G(X^{p^a})=p^a\deg G,\qquad
 \sigma_N(G(X^{p^a}))=p^a\sigma_m(G),
\]
definition \eqref{app:as-definition} gives equality of the arithmetic
Singleton values.  The last equality in
\eqref{v2:eq-substitution-distance} is
\eqref{v2:eq-substitution-torsion} with \(j=e-1\).
The common-core identity in \eqref{v2:eq-substitution-core} and
Corollary~\ref{app:free-singleton} give the stated MDS criterion.
\end{proof}

\begin{example}
\label{app:nonfree-z9}
Let \(A=\Z_9\), \(N=4\), and
\[
 X^4-1=(X-1)(X+1)(X^2+1).
 \]
The quadratic is basic irreducible, since \(X^2+1\) has no root
over \(\F_3\).
Consider the nonfree cyclic ideal
\[
 \C=\bigl((X+1)(X^2+1),\,3(X^2+1)\bigr).
 \]
Its CRT exponent tuple, in the displayed order of factors, is
\((0,1,2)\).  Indeed, the first generator is a unit only in the
\(X-1\) component, and the second is a unit times \(3\) in each of
the two linear components and vanishes in the quadratic component.
Thus \(\C\cong_A A\oplus A/(3)\), \(|\C|=27\), and \(t(\C)=2\).
The torsion generators are
\[
 \bar g_0=X^3+X^2+X+1,\qquad
 \bar g_1=X^2+1.
 \]
Their support periods are \(1\) and \(2\), respectively, so
\[
 s(\C)=1,\qquad s_*=2,\qquad
 \frac{N-t}{s(\C)}+1=3,\qquad\frac{N-t}{s_*}+1=2.
 \]
The words have the explicit description
\[
 \C=\{(a+3b,a,a+3b,a):a\in\Z_9,\ b\in\Z/3\Z\}.
 \]
Indeed, the two generators have coefficient vectors
\((1,1,1,1)\) and \((3,0,3,0)\).  Their span is shift-invariant, since
\((0,3,0,3)=3(1,1,1,1)-(3,0,3,0)\).
The parametrization is injective and has \(27\) words.
Every nonzero word has weight \(2\) or \(4\), and the second
generator has weight \(2\).  Therefore \(d_H(\C)=2\).
The highest-torsion bound is attained, while the intrinsic period
is \(1\).
\end{example}

The same difference can occur for free codes at repeated-root
lengths.  In Example~\ref{code:repeated-counterexample}, the code
\((X^2+2X+1)\) has \(s=1\), \(s_*=2\), and distance \(2\).
Its intrinsic-core value is \(3\), whereas its highest-torsion value
is \(2\).  Thus the simple-root hypothesis in the last assertion of
Corollary~\ref{app:free-singleton} cannot be removed.

\subsection{Products of cyclic factors and MED estimates}

The periods of individual Frobenius orbit factors give upper bounds
for the arithmetic Singleton value of a code generated by their product.
Assume throughout this subsection that \(\lambda=1\) and \(p\nmid N\).
For \(d\mid N\), put
\[
 m_d=\ord_d(q),\qquad
 \omega_d=\omega(d,q),\qquad
 \mu_d=\frac{m_d}{\omega_d},
\]
where \(\omega(d,q)\) is the MED index recalled in
Proposition~\ref{pre:med}, and \(m_1=\omega_1=\mu_1=1\).
The MED compression theorem
(Theorem~\ref{root:orbit-compression}) gives, for each
basic-irreducible factor of root order \(d\),
\[
 \deg F=m_d,\qquad \spn(F)=\mu_d.
 \]
Cancellation of coefficients can make the support period of a product
larger than the gcd of the individual periods.

\begin{theorem}
\label{app:common-order}
Let \(L\ge1\), and let \(F_1,\ldots,F_L\) be distinct
basic-irreducible factors of \(X^N-1\), all of root order \(d\).
Suppose
\[
 f=\prod_{i=1}^L F_i
 \]
generates a nonzero code \(\C=(f)\).  Then
\begin{equation}
 d_H(\C)\le\bAS(\C)\le L\omega_d+1
                    \le2L\ord_{\rad(d)}(q)+1.
 \label{app:common-order-bound}
\end{equation}
\end{theorem}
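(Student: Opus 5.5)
The plan is to prove the three inequalities in \eqref{app:common-order-bound} separately, from left to right.

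\emph{First inequality.} Since \(\C=(f)\) is generated by a monic divisor, it is free and nonzero. Corollary~\ref{app:free-singleton} then gives \(d_H(\C)\le\bAS(\C)\) directly.

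\emph{Second inequality.} Because \(\C\ne0\), we have \(f\ne X^N-1\), and since \(L\ge1\) the generator \(f\) is nonconstant. Hence \(\sigma_N(f)=\spn(f)\) by \eqref{code:sigma}, and
\[
 \bAS(\C)=\frac{\deg f}{\spn(f)}+1=\frac{Lm_d}{\spn(f)}+1 .
\]
It therefore suffices to show that \(\mu_d\mid\spn(f)\). By Theorem~\ref{root:orbit-compression}, each \(F_i\) has the form \(H_i(X^{\mu_d})\), so every \(F_i\) lies in \(A[X^{\mu_d}]\). This subring is closed under products, so \(f\in A[X^{\mu_d}]\), and every positive support exponent of \(f\) is divisible by \(\mu_d\). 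Consequently \(\mu_d\le\spn(f)\), which gives
\[
 \bAS(\C)\le\frac{Lm_d}{\mu_d}+1=L\omega_d+1 .
\]
Note that we only need a lower bound on \(\spn(f)\). Cancellation in the product may make \(\spn(f)\) strictly larger than \(\mu_d\), but that only strengthens the inequality.

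\emph{Third inequality.} Here we use the definition of the MED index. We have \(\omega_d\in\{u,2u\}\) with \(u=\ord_{\rad(d)}(q)\), so \(\omega_d\le2u\), and hence \(L\omega_d+1\le2Lu+1\). The degenerate case \(d=1\) has \(u=\omega_1=1\) and is consistent with this.

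The only substantive step is the second inequality. It rests on the substitution form \(F_i=H_i(X^{\mu_d})\) supplied by Theorem~\ref{root:orbit-compression}, together with the observation that a common substitution exponent survives multiplication. I expect no further obstacle beyond checking the endpoint conventions, namely that \(f\) is neither \(1\) nor the zero-code generator, so that the ambient period \(\sigma_N(f)\) coincides with \(\spn(f)\).
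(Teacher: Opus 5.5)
Your proposal is correct and follows the paper's proof essentially step for step. You use Corollary~\ref{app:free-singleton} for the first inequality, \(f\in A[X^{\mu_d}]\) via Theorem~\ref{root:orbit-compression} to get \(\mu_d\mid\spn(f)\) for the second, and the definition of the MED index for the third, and your explicit check that \(\sigma_N(f)=\spn(f)\) makes precise a detail the paper leaves implicit.
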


\begin{proof}
Every \(F_i\) belongs to \(A[X^{\mu_d}]\), so their product also
belongs to this subring.  Hence \(\mu_d\mid\spn(f)\).
Its degree is \(Lm_d\).
Corollary~\ref{app:free-singleton} therefore gives
\[
 \bAS(\C)
 =\frac{Lm_d}{\spn(f)}+1
 \le\frac{Lm_d}{\mu_d}+1
 =L\omega_d+1.
 \]
The last inequality in \eqref{app:common-order-bound} is the
definition of the MED index: it is either
\(\ord_{\rad(d)}(q)\) or twice that integer.
\end{proof}

\begin{proposition}
\label{app:mixed-orders}
Let \(L\ge1\), and let \(F_1,\ldots,F_L\) be distinct
basic-irreducible factors of \(X^N-1\), with root orders \(d_i\).
Suppose their product \(f\) generates a nonzero code.  Put
\[
 \mu_i=\frac{\ord_{d_i}(q)}{\omega(d_i,q)},\qquad
 \delta=\gcd(\mu_1,\ldots,\mu_L).
 \]
Then
\begin{equation}
 \delta\mid\spn(f),\qquad
 \bAS((f))\le
       \frac{\sum_{i=1}^L\ord_{d_i}(q)}{\delta}+1.
 \label{app:mixed-order-bound}
\end{equation}
The corresponding estimates also hold for the highest-torsion
upper bound \((N-t(\C))/s_*+1\) of an arbitrary nonzero
simple-root cyclic ideal
when \(F_1,\ldots,F_L\) are the basic lifts of the irreducible
factors of the nonconstant generator of \(\Tor_{e-1}(\C)\).
\end{proposition}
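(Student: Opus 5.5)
The plan mirrors the proof of Theorem~\ref{app:common-order}, with the common compression exponent replaced by the gcd \(\delta\).

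\emph{Divisibility.} By Theorem~\ref{root:orbit-compression}, each \(F_i\) is a cyclic Frobenius-orbit factor of root order \(d_i\). It therefore satisfies \(F_i=H_i(X^{\mu_i})\), so \(F_i\in A[X^{\mu_i}]\). Since \(\delta\mid\mu_i\), we have \(A[X^{\mu_i}]\subseteq A[X^\delta]\) for every \(i\). This subring is closed under multiplication, so \(f\in A[X^\delta]\). Every positive support exponent of \(f\) is then divisible by \(\delta\), which gives \(\delta\mid\spn(f)\) when \(f\) is nonconstant. This holds because \(L\ge1\) and each \(F_i\) has positive degree.

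\emph{Bound.} The code \((f)\) is free and nonzero, so Corollary~\ref{app:free-singleton} gives \(\bAS((f))=\deg f/\sigma_N(f)+1\). Because \(\deg f>0\), we have \(\sigma_N(f)=\spn(f)\ge\delta\). The degree is \(\deg f=\sum_i\ord_{d_i}(q)\) by Proposition~\ref{pre:orbit}. Dividing by the larger quantity \(\spn(f)\) only decreases the value, which yields \eqref{app:mixed-order-bound}.

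\emph{Highest-torsion version.} Let \(\C\) be a nonzero simple-root cyclic ideal and \(\mathcal T=\Tor_{e-1}(\C)\). By Proposition~\ref{pre:torsion}, \(\mathcal T=(\bar g_{e-1})\), where \(\bar g_{e-1}\) is a product of distinct residue irreducible factors of \(X^N-1\). Its basic lifts are the \(F_i\), and their product \(g_{e-1}=\prod F_i\) is the canonical lift. The degree of this generator is \(N-t\).

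\begin{itemize}
\item If the generator is nonconstant, then by Theorem~\ref{coef:precision}, \(s_*=s(\mathcal T)=\spn(\bar g_{e-1})=\spn(g_{e-1})\). Applying the first part to \(g_{e-1}\) gives \(\delta\mid s_*\), and hence \((N-t)/s_*+1\le\sum_i\ord_{d_i}(q)/\delta+1\).
\item If \(\mathcal T\) is the full space, there are no factors, \(N-t=0\), and the bound is \(1\). The statement assumes a nonconstant generator, so this case only needs a remark.
\end{itemize}

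The main thing to check carefully is that reduction and lifting preserve the support period and the factor degrees. This is Theorem~\ref{coef:precision}, together with the identification of \(s(\mathcal T)\) with \(\sigma_N(\bar g_{e-1})\) from Proposition~\ref{code:monic-period} over \(\F_q\). Everything else is routine.
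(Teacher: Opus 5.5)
Your proposal is correct and follows the paper's proof essentially step by step. Each \(F_i\) lies in \(A[X^{\mu_i}]\subseteq A[X^\delta]\), the degree of \(f\) is \(\sum_i\ord_{d_i}(q)\), and the bound comes from the free arithmetic Singleton formula; the highest-torsion version then applies the same argument to the lifted generator, whose support period equals \(s_*\) by Theorem~\ref{coef:precision} and Proposition~\ref{code:monic-period}, exactly as the paper does (the paper's passing mention of the common-order estimate is covered by your argument as the special case \(d_i=d\) for all \(i\)).
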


\begin{proof}
Since \(\delta\mid\mu_i\), every \(F_i\) lies in \(A[X^\delta]\).
The same is true of \(f\), proving the divisibility.
Its degree is the sum of the displayed multiplicative orders,
and \eqref{app:mixed-order-bound} follows from
\eqref{app:free-as}.
For the last assertion, let \(\bar g\) be the highest-torsion
generator.  Its basic lift \(g\) has the same support period by
Theorem~\ref{coef:precision}, and
\(\deg g=N-t(\C)\).
Thus
\[
 \frac{N-t(\C)}{s_*}+1=\frac{\deg g}{\sigma_N(g)}+1.
 \]
The preceding argument, and the common-order argument when
applicable, may therefore be applied to \(g\).
If the highest torsion code is full, its generator is \(1\) and
the upper bound is simply \(1\).
\end{proof}

These estimates extend those of \cite{AS,SP} by applying coefficient
stability to the MED periods of the basic factors.  The exact product
period is given by Theorem~\ref{root:stabilizer} and may yield a
strictly smaller bound.

\begin{example}
\label{app:cancellation}
Over \(A=\Z_{3^e}\), let \(u\equiv1\pmod3\) be the Hensel lift
satisfying \(u^2=-2\).  The distinct basic-irreducible factors
\[
 F_+=X^2+uX-1,\qquad F_-=X^2-uX-1
 \]
have support period \(1\), but
\[
 F_+F_-=X^4+1
 \]
has support period \(4\).
Indeed, \(u\) is a unit, and the residue quadratics have
discriminant \(2\), a nonsquare in \(\F_3\).
At cyclic length \(8\), their product generates a free code with
arithmetic Singleton value \(4/4+1=2\).
Its core is the length-\(2\) code generated by \(Y+1\), which has
distance \(2\); hence the original code also has distance \(2\).
The common-order estimate only gives \(2\cdot2+1=5\).
The smaller exact bound reflects the larger stabilizer of the union
of the two root orbits.
\end{example}

\begin{example}
\label{app:unbounded}
Fix a prime \(\ell\ne p\), let \(b\ge2\), and put
\[
 N_b=\ell^b,\qquad
 f_b(X)=\frac{X^{N_b}-1}{X-1}
       =1+X+\cdots+X^{N_b-1}.
 \]
The polynomial \(f_b\) is a proper monic divisor, with
\(\spn(f_b)=1\), and it is reducible: its cyclotomic decomposition
contains the two positive-degree factors
\(\Phi_\ell\) and \(\Phi_{\ell^2}\).
The code \(\C_b=(f_b)\) consists exactly of the constant words
\((a,\ldots,a)\), \(a\in A\).  It is free of rank \(1\), and
every nonzero word has weight \(N_b\).  Consequently
\[
 d_H(\C_b)=\bAS(\C_b)=N_b=\ell^b\longrightarrow\infty.
 \]
Thus the arithmetic Singleton value is unbounded among codes with
reducible generators.  The bounds in
Theorem~\ref{app:common-order} and
Proposition~\ref{app:mixed-orders} depend on the root orders and
common support periods.
\end{example}

\section{Explicit families and computations}\label{sec:examples}

\subsection{A local lift over \texorpdfstring{\(\Z/16\Z\)}{Z/16Z}}

The factor \(G_1=X^3+X+1\) of \(X^7-1\) over \(\F_2\) has the
following successive Hensel lifts, obtained from
Theorem~\ref{coef:hensel}:
\begin{equation}\label{eq:lift-seven}
\begin{array}{c|l}
\text{coefficient ring}&\text{lift}\\ \hline
\Z/2\Z & X^3+X+1\\
\Z/4\Z & X^3+2X^2+X+3\\
\Z/8\Z & X^3+6X^2+5X+7\\
\Z/16\Z& X^3+6X^2+5X+15.
\end{array}
\end{equation}
For the first lift, monic division over \(\Z[X]\) gives remainder
\(2X^2-2\), so \(E=X^2+1\) over \(\F_2\). The complementary-factor
inverse is \(XG_1'=X^3+X\equiv1\pmod{G_1}\). The correction is
therefore \(E\), which gives the displayed factor modulo \(4\).
All four lifts have support period one. The coefficient of \(X^2\)
vanishes modulo \(2\) and first becomes nonzero modulo \(4\).
Thus Hensel lifting preserves the support period without preserving
the individual support positions.

\subsection{Powers of two over \texorpdfstring{\(\GR(3^e,1)\)}{GR(3e,1)}}

Let \(R=\Z/3^e\Z\), and let \(t\in R\) be the unique solution of
\[
 t^2=-2,\qquad t\equiv1\pmod3.
\]
Hensel's lemma gives existence and uniqueness because the derivative
is nonzero at \(t=1\) modulo \(3\).

\begin{proposition}\label{ex:powers-two}
For \(k\geq3\), the monic basic-irreducible factorization is
\begin{align}
 X^{2^k}-1={}&(X-1)(X+1)(X^2+1)\notag\\
 &{}\times\prod_{j=3}^k
 \bigl(X^{2^{j-2}}+tX^{2^{j-3}}-1\bigr)
 \bigl(X^{2^{j-2}}-tX^{2^{j-3}}-1\bigr).
 \label{eq:two-power-factorization}
\end{align}
Each of the last two factors for a fixed \(j\) has support period
\(2^{j-3}\) and support-primitive core \(Y^2\pm tY-1\).
\end{proposition}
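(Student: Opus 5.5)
We will show the factorization by an algebraic identity together with a count of degrees, and then check basic irreducibility and the support periods by reducing modulo \(3\).

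\textbf{The product identity.} Put \(Z=X^{2^{j-3}}\) for \(j\ge3\). Then the product of the two factors indexed by \(j\) is
\[
 (Z^2-1)^2-t^2Z^2=Z^4-2Z^2+1+2Z^2=Z^4+1=X^{2^{j-1}}+1 .
\]
We then use the telescoping identity
\[
 X^{2^k}-1=(X-1)(X+1)(X^2+1)\prod_{j=3}^k\bigl(X^{2^{j-1}}+1\bigr).
\]
This comes from iterating \(X^{2^j}-1=(X^{2^{j-1}}-1)(X^{2^{j-1}}+1)\). Together these give \eqref{eq:two-power-factorization} over \(R\) exactly. The relation \(t^2=-2\) is the only input, so no Hensel argument is needed beyond the existence of \(t\).

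\textbf{Basic irreducibility.} Here \(p=3\) and \(q=3\), so \(3\nmid 2^k\) and Proposition~\ref{pre:orbit} applies. The factor \(X^2+1\) is irreducible over \(\F_3\). For the two factors indexed by \(j\), note that the roots of \(X^{2^{j-1}}+1\) have exact order \(d=2^j\). For \(j\ge3\) we have \(\ord_{2^j}(3)=2^{j-2}\), since the order of \(3\) modulo \(2^j\) is \(2^{j-2}\) for \(j\ge3\). So every basic-irreducible factor of \(\Phi_{2^j}\) has degree \(2^{j-2}\), and since \(\deg\Phi_{2^j}=2^{j-1}\) there are exactly two of them. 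Our two displayed factors are monic of degree \(2^{j-2}\), their reductions are coprime, and their product is \(\Phi_{2^j}\). By the uniqueness in Proposition~\ref{pre:hensel}, they must be these two basic-irreducible factors.

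The coprimality needs a short justification. Suppose a common root \(\alpha\) of the two reductions existed. Subtracting the two factors gives \(2tZ=0\) at \(\alpha\), so \(Z=\alpha^{2^{j-3}}=0\), since \(2t\) is a unit. But \(\alpha\ne0\), because the constant terms of the factors are \(-1\). Alternatively, one can cite Theorem~\ref{root:layer} with \(d=2^j\). The quantities there are \(m_d=2^{j-2}\), \(\rad(d)=2\), \(u=1\), and \(8\mid d\) with \(q\equiv3\pmod4\), so \(\omega=2\), \(\mu_d=2^{j-3}\), and \(D_d=8\). This gives \(\Phi_{2^j}(X)=\Phi_8(X^{2^{j-3}})\), and it reduces the claim to the factorization \(\Phi_8=Y^4+1=(Y^2+tY-1)(Y^2-tY-1)\) into two basic irreducibles. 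Those two quadratics are basic irreducible because their residue discriminant is \(t^2+4\equiv2\), which is a nonsquare in \(\F_3\) (compare Example~\ref{app:cancellation}).

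\textbf{Support period and core.} For the support period, we write each factor indexed by \(j\) as \(H_\pm(X^{2^{j-3}})\) with \(H_\pm(Y)=Y^2\pm tY-1\). The support of \(H_\pm\) is \(\{0,1,2\}\), because \(t\) is a unit and hence nonzero at every precision. So \(\spn(H_\pm)=1\), and therefore \(\spn=2^{j-3}\) for the factor in \(X\), with core \(H_\pm\). This agrees with Theorem~\ref{root:orbit-compression}, where \(\mu=2^{j-3}\) and \(\omega=2\). For \(j=3\) the period is \(1\), and the same statement holds.

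The main obstacle is confirming that the two degree-\(2^{j-2}\) factors are the basic irreducibles themselves rather than products of smaller ones. The route through the order of \(3\) modulo \(2^j\), or through Theorem~\ref{root:layer}, settles this.
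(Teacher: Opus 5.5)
Your proof is correct, but your main argument takes a different route from the paper's. The paper derives both the identity and the irreducibility from Theorem~\ref{root:layer}. It computes $\ord_{2^j}(3)=2^{j-2}$ and $\omega_{2^j}=2$, hence $\mu_{2^j}=2^{j-3}$ and $D_{2^j}=8$. It then checks, via discriminant $2$ and product $Y^4+1$, that $Y^2\pm tY-1$ are the two basic-irreducible factors of $\Phi_8$. The factor correspondence $H\mapsto H(X^{\mu_d})$ then supplies $\Phi_{2^j}=\Phi_8(X^{2^{j-3}})$ and the basic irreducibility of the substituted factors; this is exactly your ``alternative'' paragraph.

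Your primary route instead verifies the factorization by direct multiplication, using only $t^2=-2$ and the telescoping of $X^{2^k}-1$. You get basic irreducibility from a degree count. Every irreducible factor of $\overline{\Phi_{2^j}}$ over $\F_3$ has degree $\ord_{2^j}(3)=2^{j-2}$, and there are exactly two distinct ones, so any monic divisor of degree $2^{j-2}$ is one of them. That unique-factorization step is what your appeal to Proposition~\ref{pre:hensel} actually rests on. It also makes your separate coprimality check unnecessary, though harmless.

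Your version is self-contained and needs only the standard degree of cyclotomic factors over a finite field, with no MED input. The paper's version explains structurally why every layer $j\ge3$ carries the same two quadratic cores. That structural fact is what the subsequent remarks on reusing cores in lifting and storage rely on.
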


\begin{proof}
For \(j\geq3\), one has
\(\ord_{2^j}(3)=2^{j-2}\), and the MED invariant is
\(\omega_{2^j}=2\). Hence the compression exponent of
Theorem~\ref{root:layer} is \(2^{j-3}\), and the reduced order is \(8\).
The two polynomials \(Y^2\pm tY-1\) reduce to irreducible quadratics
over \(\F_3\): their discriminant is \(2\), a nonsquare.
Their product is \(Y^4+1\), because \(t^2=-2\).
They are therefore the basic irreducible factors of \(\Phi_8\).
The factor correspondence in Theorem~\ref{root:layer} proves
irreducibility after the indicated substitutions and yields
\eqref{eq:two-power-factorization}. The linear and quadratic factors
at orders \(1,2,4\) are immediate. Their nonzero coefficients give the
claimed support periods.
\end{proof}

For example, \(t=22\) in \(\Z/81\Z\), and the two basic-irreducible
factors at order \(32\) are
\[
 X^8+22X^4-1,\qquad X^8+59X^4-1.
\]
Both have support period four and quadratic cores. Each associated
length-\(32\) cyclic code is an interleaving of four copies of a
length-eight code generated by the corresponding quadratic.
The arithmetic Singleton bound is \(3\), compared with the ordinary
Singleton bound \(9\).

The \(2k-1\) factors in \eqref{eq:two-power-factorization} have total
core degree \(4k-5\), counting repeated occurrences. The distinct
cores are \(Y-1\), \(Y+1\), and \(Y^2\pm tY-1\), whose degrees sum
to \(6\); in particular, \(X^2+1\) shares the core \(Y+1\) with
\(X+1\). Reusing these four cores reduces both the lifting cost and
the size of the stored coefficient arrays.

\subsection{Lengths dividing \texorpdfstring{\(q+1\)}{q+1}}

\begin{proposition}[Quadratic trace factors]
\label{v2:normone-factors}
Let \(A=\GR(p^e,r)\), let \(q=p^r\), and suppose that
\(n\mid q+1\). Choose a Teichm\"uller unit \(\theta\) of order
\(n\) in \(\GR(p^e,2r)\), and put
\[
 T_j=\theta^j+\theta^{-j}\in A,\qquad
 \varepsilon_n=\begin{cases}1,&2\mid n,\\0,&2\nmid n.\end{cases}
\]
Then the basic-irreducible factorization is
\begin{equation}\label{v2:eq-normone-product}
 X^n-1=(X-1)(X+1)^{\varepsilon_n}
       \prod_{j=1}^{\lfloor(n-1)/2\rfloor}(X^2-T_jX+1).
\end{equation}
All these factors are self-reciprocal. The trace coefficients satisfy
\[
 T_0=2,\qquad T_1=\theta+\theta^{-1},\qquad
 T_{j+1}=T_1T_j-T_{j-1}\quad(j\ge1).
\]
For \(1\le j<n/2\), one has
\[
 T_j=0\quad\Longleftrightarrow\quad
 4\mid n\ \hbox{ and }\ j=n/4;
\]
every other \(T_j\) in this range is a unit. Consequently the unique
quadratic factor with period two, when it occurs, is \(X^2+1\),
whose core is \(Y+1\). All other quadratic factors are
support-primitive and retain three nonzero coefficients at every
precision.
\end{proposition}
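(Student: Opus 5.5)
The plan is to work in \(B=\GR(p^e,2r)\), where the Teichm\"uller unit \(\theta\) exists because \(n\mid q+1\mid q^2-1\). First I check that each \(T_j\) lies in \(A\). Relative Frobenius \(\sigma\) of \(B/A\) acts on Teichm\"uller elements by \(q\)-th powers, so \(\sigma(\theta^j)=\theta^{jq}=\theta^{-j}\), since \(q\equiv-1\pmod n\). Hence \(\sigma(T_j)=T_j\), and \(T_j\) lies in the fixed ring \(A\).

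Next I identify the orbit factors, using Proposition~\ref{pre:orbit} with \(\xi=\theta\). The \(q\)-cyclotomic cosets modulo \(n\) are \(\{j,-j\}\).
\begin{itemize}
\item For \(j\not\equiv -j\pmod n\), that is \(1\le j\le\lfloor(n-1)/2\rfloor\), the coset has size two and gives \((X-\theta^j)(X-\theta^{-j})=X^2-T_jX+1\).
\item The coset of \(0\) gives \(X-1\).
\item When \(n\) is even, the coset \(\{n/2\}\) gives \(X+1\), because \(\theta^{n/2}=-1\).
\end{itemize}
The representatives \(0\), \(n/2\) (when \(n\) is even) and \(1\le j\le\lfloor(n-1)/2\rfloor\) exhaust \(\Z/n\Z\). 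Proposition~\ref{pre:orbit} then gives \eqref{v2:eq-normone-product} as the basic-irreducible factorization. Each quadratic has constant term \(1\) and symmetric coefficients, so \(P^\dagger=P\); the linear factors are self-reciprocal as well. Alternatively, Proposition~\ref{v2:reciprocal-compression} applies, since \(U=\theta^{j(q+1)}=1\). The recurrence follows from \((\theta+\theta^{-1})(\theta^j+\theta^{-j})=T_{j+1}+T_{j-1}\), together with \(T_0=2\).

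The main step is the unit/zero dichotomy for \(T_j\) when \(1\le j<n/2\). I write \(T_j=\theta^{-j}(\theta^{2j}+1)\), so \(T_j\) is a unit times \(\theta^{2j}+1=\theta^{2j}-(-1)\).
\begin{itemize}
\item If \(\theta^{2j}\neq-1\), then the two Teichm\"uller roots of unity \(\theta^{2j}\) and \(-1\) are distinct. Teichm\"uller elements have distinct residues, so their difference is a unit.
\item If \(\theta^{2j}=-1\), then \(T_j=0\) exactly.
\end{itemize}
It remains to see when \(\theta^{2j}=-1\) can occur. For \(p=2\) we have \(-1=1\) in the residue field, but \(-1\) is not the Teichm\"uller element \(1\) once \(e\ge2\), so this case needs care. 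I would argue directly: for \(p=2\), \(n\) is odd because \(n\mid q+1\), and then \(4\nmid n\). I then show \(\theta^{2j}+1\) is a unit when \(\theta^{2j}\neq1\). Its residue is \(\bar\theta^{2j}+1\), which vanishes only if \(\bar\theta^{2j}=1\). That would force \(n\mid 2j\), hence \(n\mid j\) since \(n\) is odd, and this is excluded by \(1\le j<n/2\). For odd \(p\), the condition \(\theta^{2j}=-1\) means \(\theta^{2j}\) has order \(2\), i.e.\ \(n\mid 4j\) but \(n\nmid 2j\). Within the range \(1\le j<n/2\) this happens exactly when \(4\mid n\) and \(j=n/4\).

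Finally I read off the support periods. If \(T_j\) is a unit, the factor \(X^2-T_jX+1\) has exactly three terms at every precision, since a unit stays nonzero modulo every \(p^k\). Its support period is then \(1\). If \(j=n/4\), the factor is \(X^2+1\), with period \(2\) and core \(Y+1\). The same conclusion can be cross-checked against Theorem~\ref{root:orbit-compression} via \(\mu=2/\omega\).
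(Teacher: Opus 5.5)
Your proof is correct and follows essentially the same route as the paper: Frobenius acts on exponents by negation, Proposition~\ref{pre:orbit} gives the orbit factors, and the unit/zero dichotomy for \(T_j\) uses the same split. For \(p=2\), \(n\) is odd and \(\bar\theta^{2j}\ne1\); for odd \(p\), \(\theta^{2j}=-1\) exactly when \(j=n/4\). Your factorization \(T_j=\theta^{-j}(\theta^{2j}+1)\) with distinct Teichm\"uller residues is the paper's residue argument phrased through uniqueness of Teichm\"uller lifts, and the only step left implicit is that \(\theta^{n/2}=-1\) relies on \(n\) even forcing \(p\) odd, which you effectively note later.
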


\begin{proof}
Since \(n\mid q+1\), one has \(p\nmid n\), and relative Frobenius
sends \(\theta^j\) to \(\theta^{qj}=\theta^{-j}\). Thus \(T_j\)
is fixed and belongs to \(A\). The fixed exponents under inversion
are \(0\) and, when \(n\) is even, \(n/2\). In the latter case
\(p\) is odd and \(\theta^{n/2}=-1\). All other exponents form
two-element Frobenius orbits \(\{j,-j\}\), giving the quadratic
factors in \eqref{v2:eq-normone-product}. Their reductions are the
corresponding irreducible orbit polynomials. Each root set is closed
under inversion, so all the factors are self-reciprocal. Expanding
\((\theta+\theta^{-1})(\theta^j+\theta^{-j})\) proves the recurrence.

If \(\bar T_j=0\), then \(\bar\theta^{2j}=-1\). For \(p=2\),
this would give \(n\mid2j\), contrary to \(0<2j<n\).
For odd \(p\), this equality forces \(n\) to be even, and uniqueness
of the Teichm\"uller lift gives \(\theta^{2j}=-1\), equivalently
\(2j\equiv n/2\pmod n\). In the stated range this holds precisely
when \(4\mid n\) and \(j=n/4\), and then \(T_j=0\).
In all other cases \(\bar T_j\ne0\), so \(T_j\) is a unit.
The support-period assertions follow from the displayed quadratics.
\end{proof}

\begin{corollary}[Hulls and self-duality when \(n\mid q+1\)]
\label{v2:normone-codes}
Under the hypotheses of Proposition~\ref{v2:normone-factors}, write
\(X^n-1=\prod_iF_i\), put \(A_i=A[X]/(F_i)\), and let
\(\mathcal C\) be a cyclic code with CRT description
\(\mathcal C\simeq\prod_i p^{a_i}A_i\), where \(0\le a_i\le e\).
Then
\begin{equation}\label{v2:eq-normone-hull}
 \mathcal C^\perp\simeq\prod_i p^{e-a_i}A_i,\qquad
 \Hull(\mathcal C)\simeq\prod_i p^{\max(a_i,e-a_i)}A_i,
\end{equation}
and
\[
 |\Hull(\mathcal C)|
   =q^{\sum_i\min(a_i,e-a_i)\deg F_i}.
\]
In particular, \(\mathcal C\) is LCD if and only if it is free.
There is a self-dual cyclic code of length \(n\) if and only if
\(e\) is even; when it exists, it is uniquely
\(p^{e/2}A^n\).
\end{corollary}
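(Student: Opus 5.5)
The plan is to compute the dual componentwise in the CRT decomposition. The self-reciprocity established in Proposition~\ref{v2:normone-factors} should then make duality preserve each component.

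\emph{Step 1: a general dual formula.} Write \(X^n-1=\prod_iF_i\) and let \(\mathcal C\simeq\prod_ip^{a_i}A_i\). By Lemma~\ref{code:torsion-description}, \(\mathcal C=\sum_jp^j(g_j)\) with \(g_j=\prod_{a_i>j}F_i\). The torsion-duality relation \(\Tor_j(\mathcal C^\perp)=\Tor_{e-1-j}(\mathcal C)^\perp\) and the free-code dual formula (generator \(h^\dagger\)) from Section~\ref{sec:preliminaries} then give the generator of \(\Tor_j(\mathcal C^\perp)\). It is the reciprocal of the complement of \(\bar g_{e-1-j}\), namely
\[
 \prod_{a_i\le e-1-j}\bar F_i^\dagger .
\]
Since \(\lambda=1\) and \(p\nmid n\), the dual is again a simple-root cyclic ideal. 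Proposition~\ref{pre:torsion} says the torsion chain determines the ideal, so \(\mathcal C^\perp\) has exponent \(b_{i^\dagger}\) at the factor \(F_i^\dagger\), where \(b_{i^\dagger}>j\) exactly when \(a_i\le e-1-j\). This yields \(b_{i^\dagger}=e-a_i\). This is the usual ``\(\mathcal C^\perp\simeq\prod p^{e-a_i}A_{i^\dagger}\)'' statement.

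\emph{Step 2: self-reciprocity.} By Proposition~\ref{v2:normone-factors}, every \(F_i\) is self-reciprocal, so \(i^\dagger=i\). Hence
\[
 \mathcal C^\perp\simeq\prod_ip^{e-a_i}A_i .
\]
An intersection of ideals is computed componentwise under the CRT isomorphism. In the chain ring \(A_i\) we have \(p^aA_i\cap p^bA_i=p^{\max(a,b)}A_i\), which gives the hull formula. The cardinality then follows from \eqref{code:module-type}, because \(e-\max(a_i,e-a_i)=\min(a_i,e-a_i)\) and \(|A_i|\) corresponds to \(q^{e\deg F_i}\).

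\emph{Step 3: consequences.} The code is LCD if and only if \(\min(a_i,e-a_i)=0\) for every \(i\), that is, \(a_i\in\{0,e\}\). By Proposition~\ref{pre:crt} this is exactly freeness. The code is self-dual if and only if \(a_i=e-a_i\) for all \(i\), which forces \(e\) to be even and \(a_i=e/2\) everywhere. The resulting ideal is \(p^{e/2}R_{n,1}\), which corresponds to \(p^{e/2}A^n\); this proves both existence and uniqueness. When \(e\) is odd, no integer \(a_i\) satisfies \(2a_i=e\), so no self-dual code exists.

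The main obstacle is Step 1. I must check that the Euclidean dual really acts on CRT exponents by \(a\mapsto e-a\) together with reciprocation of factors, for nonfree ideals and with \(\lambda=\lambda^{-1}=1\). Rather than invoking an external reference, I would derive it from the torsion-duality relation, as sketched above, and carefully handle the index shift \(j\leftrightarrow e-1-j\) and the empty-product conventions.
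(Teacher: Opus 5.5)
Your proof is correct, but it reaches the dual by a different route from the paper.

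The paper works directly in \(R=A[X]/(X^n-1)\). It uses the involution \(\iota\colon u(X)\mapsto u(X^{-1})\) and the fact that the Euclidean inner product of \(u\) and \(v\) is the constant coefficient of \(u\,\iota(v)\). From this it shows \(\mathcal C^\perp=\iota(\operatorname{Ann}_R(\mathcal C))\). Self-reciprocity of every \(F_i\) then means \(\iota\) preserves each CRT component and each ideal \(p^aA_i\), and the annihilator of \(p^{a_i}A_i\) in the chain ring \(A_i\) is \(p^{e-a_i}A_i\).

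You instead pass to residue codes. You combine the cited torsion-duality relation \(\Tor_j(\mathcal C^\perp)=\Tor_{e-1-j}(\mathcal C)^\perp\), the field dual formula with generator \(\bar h^\dagger\), and the fact (Proposition~\ref{pre:torsion}) that the torsion chain determines a simple-root ideal. Together these give \(b_{i^\dagger}=e-a_i\), and you handle the index shift \(j\leftrightarrow e-1-j\) and the empty-product conventions correctly.

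Your version fits the paper's torsion-filtration viewpoint and makes visible where \(a\mapsto e-a\) comes from. However, it depends on the nontrivial Norton--S\u{a}l\u{a}gean duality theorem and on the classification of ideals by their torsion chains. The paper's annihilator identity is elementary and self-contained, and it identifies the dual as an ideal directly rather than through its torsion invariants. Without the self-reciprocity hypothesis, either argument gives the general rule that duality sends exponent \(a_i\) at \(F_i\) to \(e-a_i\) at \(F_i^\dagger\). Your hull, cardinality, LCD, and self-duality steps coincide with the paper's.
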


\begin{proof}
Let \(R=A[X]/(X^n-1)\), and let \(\iota\) be the involution
\(u(X)\mapsto u(X^{-1})\). For coefficient vectors \(u,v\), their
Euclidean inner product is the constant coefficient of
\(u\,\iota(v)\). Since a cyclic code is an ideal, testing
orthogonality against all shifts of its words gives
\[
 \mathcal C^\perp=\iota(\operatorname{Ann}_R(\mathcal C)).
\]
Indeed, all coefficients of \(u\,\iota(v)\) vanish if and only if
the constant coefficients of \(X^ju\,\iota(v)\) vanish for all
\(j\), and every \(X^ju\) remains in \(\mathcal C\).

Every \(F_i\) is self-reciprocal, so \(\iota\) preserves each CRT
component and each of its ideals \(p^aA_i\). The annihilator of
\(p^{a_i}A_i\) is \(p^{e-a_i}A_i\), proving the first formula in
\eqref{v2:eq-normone-hull}. Intersection of the two component ideals
gives the second, and their cardinalities give the hull size.
The hull is zero exactly when every \(a_i\) is \(0\) or \(e\),
which is precisely the freeness criterion in Proposition~\ref{pre:crt}.
Self-duality requires \(a_i=e-a_i\) for every \(i\). Thus \(e\)
must be even and every component must have depth \(e/2\), yielding
the unique code \(p^{e/2}R\), identified with \(p^{e/2}A^n\).
\end{proof}

\subsection{The order \texorpdfstring{\(p+1\)}{p+1}}

Let \(p\) be odd, \(n=p+1\), and \(R=\Z/p^e\Z\). Choose a primitive
Teichm\"uller \(n\)-th root \(\zeta\) in the unramified quadratic
extension of \(R\). Proposition~\ref{v2:normone-factors} gives
\begin{equation}\label{eq:p-plus-one}
 X^{p+1}-1=(X-1)(X+1)
 \prod_{j=1}^{(p-1)/2}(X^2-T_jX+1),
 \qquad T_j=\zeta^j+\zeta^{-j}\in R.
\end{equation}
The trace recurrence in that proposition is the Dickson recurrence
for this family \cite{WangDickson}.

\begin{proposition}\label{ex:p-plus-one}
The polynomial \(X^{p+1}-1\) splits completely over an unramified
extension of \(R\) if and only if the relative degree is even.
The least positive relative degree over which the basic-irreducible
factorization consists entirely of binomials is \(1\) for \(p=3\)
and \(2\) for \(p>3\).
\end{proposition}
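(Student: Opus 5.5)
The plan is to reduce both assertions to arithmetic in \(\Z/(p+1)\Z\), using the orbit description in Proposition~\ref{pre:orbit} and the binomial criterion in Proposition~\ref{root:binomial-extension}. Here \(n=p+1\) and \(q=p\), so \(p\nmid n\). In the arithmetic below, \(q\) is replaced by \(p^t\) for the degree-\(t\) extension \(\GR(p^e,t)\).

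\emph{Complete splitting.} The roots of \(X^{p+1}-1\) are the Teichm\"uller powers of \(\zeta\), which has exact order \(n\). Its basic-irreducible factors are therefore the orbit polynomials of Proposition~\ref{pre:orbit}. Over \(\GR(p^e,t)\), a factor of degree \(m\) splits into \(\gcd(t,m)\) factors of degree \(m/\gcd(t,m)\). Hence \(X^n-1\) splits into linear factors exactly when every orbit size divides \(t\). The largest orbit is \(\cc{n}{p}{1}\), of size \(\ord_{p+1}(p)\), and every other orbit size divides this one. Since \(p\equiv-1\pmod{p+1}\) and \(p+1>2\), we have \(\ord_{p+1}(p)=2\). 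So complete splitting holds iff \(2\mid t\). Equivalently, \(p+1\mid p^t-1\) iff \((-1)^t\equiv1\pmod{p+1}\), which again gives \(t\) even. This agrees with \eqref{eq:p-plus-one}, since the quadratic factors there are irreducible over \(R\).

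\emph{Least binomial degree.} By the last assertion of Proposition~\ref{root:binomial-extension}, the least degree is \(\omega(n,p)\), and I compute it from the definition. Put \(u=\ord_{\rad(p+1)}(p)\); since \(p\equiv-1\) modulo every prime dividing \(p+1\), there are two cases.
\begin{itemize}
\item If \(p+1\) has an odd prime divisor \(l\), then \(p\not\equiv1\pmod l\) but \(p^2\equiv1\pmod{\rad(p+1)}\), so \(u=2\). The exceptional clause cannot raise this, because \(p^2\equiv1\pmod4\) for odd \(p\). Hence \(\omega=2\).
\item If \(p+1=2^k\), then \(\rad(p+1)=2\) and \(u=1\). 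When \(k\ge3\), we have \(8\mid n\) and \(p^u=p\equiv3\pmod4\), so \(\omega=2u=2\). When \(k=2\), that is \(p=3\), we have \(8\nmid4\) and \(\omega=1\). The case \(k=1\) would give \(p=1\), so it does not occur.
\end{itemize}
Thus \(\omega=1\) for \(p=3\) and \(\omega=2\) for every odd \(p>3\).

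\emph{Checks and main obstacle.} For \(p=3\), \eqref{eq:p-plus-one} gives \((X-1)(X+1)(X^2+1)\), and all three factors are binomials over \(R\) itself. For \(p>3\), degree \(2\) suffices by complete splitting, and degree \(1\) fails. One way to see the failure is that \(\omega=2\nmid1\). Alternatively, Proposition~\ref{v2:normone-factors} gives a quadratic \(X^2-T_jX+1\) with \(T_j\) a unit: take \(j=1\), since \(n>4\) forces \(1\ne n/4\). The main point of care is the case analysis for \(\omega\), namely the \(8\mid n\) clause with \(q^u\equiv3\pmod4\). I will verify it against the criterion in Proposition~\ref{pre:med}: the coset of \(1\) is ED over \(\F_p\) iff \(\rad(n)\mid p-1\), together with \(p\equiv1\pmod4\) when \(8\mid n\). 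For \(p=3\) this holds, since \(\rad(4)=2\mid 2\) and \(8\nmid4\). For \(p>3\) it fails: either an odd \(l\mid p+1\) cannot divide \(p-1\), or \(8\mid p+1\) with \(p\equiv3\pmod4\).
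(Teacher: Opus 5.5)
Your proof is correct. The complete-splitting part is the paper's argument in different words: the paper notes that relative Frobenius over \(\GR(p^e,t)\) acts on exponents modulo \(p+1\) by multiplication by \(p^t\equiv(-1)^t\). This is equivalent to your observation that every orbit size divides \(\ord_{p+1}(p)=2\).

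For the binomial assertion, your main route differs. You specialize the last assertion of Proposition~\ref{root:binomial-extension} and compute the MED index \(\omega(p+1,p)\) from its definition. The paper never computes \(\omega\) here. It works directly with the factorization \eqref{eq:p-plus-one}: for \(p=3\) the only quadratic is \(X^2+1\). For \(p>3\) there are \((p-1)/2\ge2\) quadratics, and by Proposition~\ref{v2:normone-factors} at most one has zero trace. Hence some quadratic stays non-binomial over every odd-degree extension.

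Your route gives the full set of admissible degrees (the multiples of \(\omega\)) at once and places the example inside the general MED framework. It also shows that for Mersenne primes \(p=2^k-1\) with \(k\ge3\) one has \(\ord_{\rad(p+1)}(p)=1\), so the obstruction comes entirely from the exceptional \(8\mid n\) clause.

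The paper's route is more elementary, needs no case analysis on \(\omega\), and names the obstructing factor concretely. Your backup argument with \(j=1\) is essentially the paper's argument, with an explicit choice of a unit-trace quadratic in place of the counting step.
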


\begin{proof}
Over \(\GR(p^e,t)\), relative Frobenius acts on exponents modulo
\(p+1\) by multiplication by \(p^t\). Since
\(p\equiv-1\pmod{p+1}\), this action is negation for odd \(t\) and
the identity for even \(t\). Thus an odd-degree extension leaves all
quadratic factors basic irreducible, while an even-degree extension
splits them into linear factors. There is at least one quadratic
factor because \(p\ge3\), so the first assertion follows.

For \(p=3\), the sole quadratic factor is \(X^2+1\), so every
basic-irreducible factor is already a binomial over \(R\).
For \(p>3\), there are \((p-1)/2\ge2\) quadratic factors. By
Proposition~\ref{v2:normone-factors}, at most one has zero linear
coefficient. At least one quadratic is therefore nonbinomial, and
it remains so in every odd-degree extension. A quadratic unramified
extension splits all factors into linear binomials, proving the
second assertion.
\end{proof}

For \(p\ge5\), pairing distinct quadratic factors with opposite
traces gives quartic generators with support period two or four.

\begin{proposition}\label{v2:opposite-trace-codes}
Let \(p\ge5\) be prime, let \(R=\Z/p^e\Z\), and put
\(n=p+1=2m\).  Retain the primitive root and traces in
\eqref{eq:p-plus-one}, and write
\[
 F_j(X)=X^2-T_jX+1\qquad(1\le j<m).
\]
For \(1\le j<m\) with \(2j\ne m\), put
\[
 P_j(X)=F_j(X)F_{m-j}(X),\qquad
 \C_j=(P_j)\subseteq R[X]/(X^n-1).
\]
Then \(\C_j\) is nonzero and free of rank \(n-4\), and
\begin{equation}\label{v2:eq-opposite-product}
 P_j(X)=X^4-T_{2j}X^2+1.
\end{equation}
Its support period is
\begin{equation}\label{v2:eq-opposite-period}
 \spn(P_j)=
 \begin{cases}
 4,&8\mid n\ \text{and }j\in\{n/8,3n/8\},\\
 2,&\text{otherwise}.
 \end{cases}
\end{equation}
Its minimum distance is
\begin{equation}\label{v2:eq-opposite-distance}
 d_H(\C_j)=
 \begin{cases}
 3,&m\text{ is odd and }\gcd(j,m)=1,\\
 2,&\text{otherwise}.
 \end{cases}
\end{equation}
In particular, \(\C_j\) attains its arithmetic Singleton bound if and
only if either \(m\) is odd and \(\gcd(j,m)=1\), or the first case of
\eqref{v2:eq-opposite-period} holds.  These are exactly the cases in
which its support-primitive core is MDS.
\end{proposition}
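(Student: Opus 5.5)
The plan is to get the product formula from the relation between opposite traces, read the support period off the middle coefficient, and compute the distance on the support-primitive core via its residue code.

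\emph{Product formula and freeness.} Since \(n=2m\) and \(\zeta\) has order \(n\), we have \(\zeta^m=-1\), so
\[
 T_{m-j}=\zeta^m(\zeta^{-j}+\zeta^j)=-T_j .
\]
Hence
\[
 F_jF_{m-j}=(X^2+1)^2-T_j^2X^2 .
\]
Expanding \((\zeta^j+\zeta^{-j})^2\) gives \(T_j^2=T_{2j}+2\), which yields \eqref{v2:eq-opposite-product}. The hypothesis \(2j\ne m\) makes \(F_j\) and \(F_{m-j}\) distinct factors in \eqref{eq:p-plus-one}. So \(P_j\) is a monic divisor of \(X^n-1\) of degree \(4<n\), since \(n\ge6\). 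Proposition~\ref{code:monic-period} then shows that \(\C_j\) is nonzero and free of rank \(n-4\).

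\emph{Support period.} By \eqref{v2:eq-opposite-product}, \(\spn(P_j)\) is \(4\) if \(T_{2j}=0\) and \(2\) otherwise.
\begin{itemize}
\item The argument in the proof of Proposition~\ref{v2:normone-factors} applies to any index, reducing \(2j\) modulo \(n\) and using \(T_{n-i}=T_i\). It shows that \(T_{2j}\) is either \(0\) or a unit, and that it vanishes exactly when \(\zeta^{4j}=-1\).
\item Teichm\"uller uniqueness turns this into \(4j\equiv n/2\pmod n\).
\item For \(1\le j<m\) this means \(4j\in\{n/2,3n/2\}\), which forces \(8\mid n\) and \(j\in\{n/8,3n/8\}\). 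This is \eqref{v2:eq-opposite-period}, valid at every precision.
\end{itemize}

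\emph{Distance via the core.} By Theorem~\ref{app:weights}, \(d_H(\C_j)=d_H(\D)\) for the core \(\D\).
\begin{itemize}
\item In the period-\(4\) case the core is generated by \(Y+1\) in length \(n/4\), so it has distance \(2\). This fits the ``otherwise'' branch, since \(8\mid n\) makes \(m\) even.
\item In the period-\(2\) case the core is \((H)\) with \(H=Y^2-T_{2j}Y+1\mid Y^m-1\). Its roots are \(\beta^{\pm1}\), where \(\beta=\zeta^{2j}\). By Proposition~\ref{app:highest-torsion}, the distance equals that of the residue code \((\bar H)\) over \(\F_p\).
\end{itemize}
That residue code has length \(m\) and dimension \(m-2\), so its distance is at most \(3\). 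It has no words of weight \(1\), since the roots are nonzero.

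\emph{The weight-\(2\) criterion (main step).} A weight-\(2\) word is, up to a shift, \(Y^k-c\) with \(0<k<m\). It lies in \((\bar H)\) exactly when \(\bar\beta^k=\bar\beta^{-k}=c\), that is, when \(\bar\zeta^{4jk}=1\). This is equivalent to \(n\mid 4jk\), i.e. \(m\mid 2jk\).
\begin{itemize}
\item If \(m\) is even, \(k=m/2\) works, so the distance is \(2\).
\item If \(m\) is odd, the condition becomes \(m\mid jk\). A solution with \(0<k<m\) exists iff \(\gcd(j,m)>1\).
\end{itemize}
Thus the distance is \(3\) exactly when \(m\) is odd and \(\gcd(j,m)=1\), which proves \eqref{v2:eq-opposite-distance}. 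I expect this step to need the most care: weight-\(2\) words must be reduced to binomials via invertibility of \(X\), and divisibility over \(\F_p\) must be matched with root conditions. The latter holds because \(\bar H\) is squarefree with simple roots in \(\F_{p^2}\).

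\emph{Attaining the arithmetic Singleton bound.} By Corollary~\ref{app:free-singleton}, \(\bAS(\C_j)=4/\spn(P_j)+1\).
\begin{itemize}
\item In the period-\(2\) case this is \(3\), attained iff \(m\) is odd and \(\gcd(j,m)=1\).
\item In the period-\(4\) case it is \(2\), always attained.
\end{itemize}
The same corollary identifies attainment with the core being MDS, completing the proof.
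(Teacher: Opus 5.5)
Your proof is correct and follows essentially the same route as the paper: the identities $T_{m-j}=-T_j$ and $T_j^2=T_{2j}+2$, the criterion $\zeta^{4j}=-1$ for the middle coefficient, passage to the length-$m$ residue code with a weight-two test on the roots $\bar\zeta^{\pm2j}$, and Corollary~\ref{app:free-singleton} for the equality cases. The only difference is minor: the paper uses the period-two interleaving $P_j(X)=H_j(X^2)$ for every $j$, so the case $T_{2j}=0$ needs no separate treatment through the core $Y+1$, and it states the weight-two condition through the order $d=m/\gcd(j,m)$ of $\bar\zeta^{2j}$ rather than as $m\mid 2jk$.
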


\begin{proof}
Since \(\zeta^m=-1\), one has
\[
 T_{m-j}=-T_j,\qquad T_j^2-2=T_{2j}.
\]
Multiplying the two quadratics proves
\eqref{v2:eq-opposite-product}.  The condition \(2j\ne m\) makes them
distinct, so \(P_j\) is a monic divisor of \(X^n-1\).
Its degree is \(4<n\), proving the rank and nonvanishing assertions.
The support period is four precisely when \(T_{2j}=0\).  This is
equivalent to
\[
 \zeta^{4j}=-1,\qquad\text{or equivalently}
 \qquad 4j\equiv n/2\pmod n.
\]
For \(1\le j<n/2\), the solutions are \(j=n/8\) and \(j=3n/8\),
and they exist exactly when \(8\mid n\).  Otherwise the middle
coefficient in \eqref{v2:eq-opposite-product} is nonzero and the
support period is two.

Set
\[
 H_j(Y)=Y^2-T_{2j}Y+1,
 \qquad \mathcal E_j=(H_j)\subseteq R[Y]/(Y^m-1).
\]
The roots \(\zeta^{2j},\zeta^{-2j}\) show that
\(H_j\mid Y^m-1\).  Since \(P_j(X)=H_j(X^2)\), the interleaver
identifies \(\C_j\) with \(\mathcal E_j^{\oplus2}\).
Thus \(d_H(\C_j)=d_H(\mathcal E_j)\), and
Proposition~\ref{app:highest-torsion} identifies this distance with
that of the residue code \(\overline{\mathcal E_j}\).

Let \(\alpha=\bar\zeta^{2j}\), of order
\(d=m/\gcd(j,m)\).  The restrictions on \(j\) give \(d>2\), and
\(\alpha^p=\alpha^{-1}\ne\alpha\); hence \(\bar H_j\) is its
irreducible quadratic polynomial over \(\F_p\).
The residue code has parameters \([m,m-2]\), contains no word of
weight one, and has distance at most three by the Singleton bound.
It contains a word of weight two exactly when
\[
 \alpha^h\in\F_p^\times\qquad\text{for some }1\le h<m.
\]
Indeed, a two-term word vanishing at \(\alpha\) gives this condition
by dividing its two terms.  Conversely, such an \(h\) gives the word
\(Y^h-\alpha^h\), of degree less than \(m\), in the residue code.
Since \(\alpha^p=\alpha^{-1}\), the condition is equivalent to
\(d\mid2h\).  Its least positive solution is
\(d/\gcd(d,2)\), which is less than \(m\) unless \(d=m\) is odd.
This proves \eqref{v2:eq-opposite-distance}.

Finally,
\[
 \bAS(\C_j)=\frac{4}{\spn(P_j)}+1.
\]
Comparison with \eqref{v2:eq-opposite-period} and
\eqref{v2:eq-opposite-distance} gives the equality cases.
Corollary~\ref{app:free-singleton} identifies them with the MDS cases
for the support-primitive core.
\end{proof}

\subsection{A free code attaining the arithmetic bound}

Let \(A=\GR(p^e,r)\), let \(p\nmid N\), and choose \(s\mid N\)
with \(M=N/s\geq2\). The cyclic code
\[
 \mathcal C=(X^s-1)\subseteq A[X]/(X^N-1)
\]
has support-primitive core
\(\mathcal D=(Y-1)\subseteq A[Y]/(Y^M-1)\).
The core consists of the words whose coordinate sum is zero.
It has rank \(M-1\), contains the word \((1,-1,0,\ldots,0)\), and
contains no word of weight one. Hence
\[
 d_H(\mathcal C)=d_H(\mathcal D)=2
       =\frac{\deg(X^s-1)}s+1.
\]
The arithmetic Singleton bound is attained for every \(e\) and \(r\).
The ordinary Singleton bound for \(\mathcal C\) is \(s+1\), so the
arithmetic bound is strictly sharper when \(s>1\).

\section{Concluding remarks}\label{sec:conclusion}

For a nonconstant monic divisor at a simple-root length, coefficient
compression, the multiplicative stabilizer of its root set, and the
interleaving of the associated constacyclic code determine the same
support period. Hensel uniqueness preserves this period at every coefficient
precision. For cyclic Frobenius-orbit polynomials, the MED invariant
gives the support period and core degree. For arbitrary simple-root
ideals, the intrinsic period is the greatest common divisor of the
periods of the torsion codes.

The coordinate-projection decomposition remains valid at arbitrary
lengths, and the highest torsion code determines the minimum Hamming
distance of every nonzero code over \(A\). At repeated-root lengths,
nonzero free codes obtained by a
\(p^a\)-power substitution from simple-root generators preserve
the support-primitive core, distance, and arithmetic Singleton value.
For general repeated-root ideals, even the complete torsion
filtration need not determine the intrinsic period.

\appendix
\renewcommand{\thetheorem}{\Alph{section}.\arabic{theorem}}
\section{Construction and bit complexity}\label{sec:complexity}

We estimate the bit cost of constructing the unordered
basic-irreducible factorization of \(X^n-1\), where \(p\nmid n\).
The output representations are defined below, and the arithmetic
bounds use standard polynomial algorithms
\cite{ModernComputerAlgebra,CantorKaltofen}. Integer preprocessing,
residue-field factorization, ring lifting, and output are charged
separately. Matching factors to prescribed root labels is an
additional task.

\subsection{Representation and deterministic lifting}

Fix a supplied basic irreducible polynomial
\(\Pi(Z)\in(\Z/p^e\Z)[Z]\) of degree \(r\), and represent
\[
 A_s=(\Z/p^s\Z)[Z]/(\Pi\bmod p^s),\qquad
 \rho_s=\lceil s\log_2p\rceil.
\]
Each element occupies \(\Theta(r\rho_s)\) bits. The construction of
\(\Pi\) is separate from the costs below.
Let \(M_{\Z}(b)\geq b\) be a regular integer multiplication bound,
with \(M_{\Z}(b)/b\) nondecreasing and
\(M_{\Z}(2b)=O(M_{\Z}(b))\). We write \(\widetilde O\) to suppress
polylogarithmic factors in the indicated size parameters.
Let \(M_A(d,s)\) bound polynomial multiplication, modular
multiplication, and monic division at degrees \(O(d)\) over \(A_s\),
including any necessary precomputation.
We choose these bounds monotone in \(d,s\) and superadditive in \(d\).

\begin{lemma}\label{alg:arithmetic}
One may take
\[
 M_A(d,s)=O\bigl(d^2r^2[M_{\Z}(\rho_s)+\rho_s]\bigr)
\]
with schoolbook polynomial arithmetic, and
\[
 M_A(d,s)=\widetilde O(dr\rho_s)
\]
with fast arithmetic. Addition, differentiation, multiplication by a
prime-subring scalar, and exact coefficientwise division by a known
power of \(p\) cost
\(O(dr[M_{\Z}(\rho_s)+\rho_s])\).
For either bound, the total cost over a precision-doubling schedule
has the same order as the cost at the final precision.
\end{lemma}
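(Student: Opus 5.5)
The lemma has three parts, and I will prove them in the order of the statement: the schoolbook bound, the fast bound, and the cost of linear operations, followed by the precision-doubling claim.

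\textbf{Coefficient arithmetic.} Elements of \(A_s\) are polynomials of degree below \(r\) in \(Z\), with coefficients in \(\Z/p^s\Z\) of \(\rho_s\) bits.
\begin{itemize}
\item Adding or subtracting two such elements takes \(r\) modular additions, so \(O(r\rho_s)\).
\item Multiplying two elements is done in two stages. First, a schoolbook product in \(Z\) uses \(O(r^2)\) integer multiplications, each costing \(M_{\Z}(\rho_s)\), followed by reductions modulo \(p^s\). Second, the product is reduced modulo the monic \(\Pi\), which costs \(O(r^2)\) further ring operations. The total is \(O(r^2[M_{\Z}(\rho_s)+\rho_s])\).
\item Schoolbook multiplication and monic division of polynomials of degree \(O(d)\) over \(A_s\) use \(O(d^2)\) such ring operations. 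This gives the first bound.
\item Monic division needs no inversion, since the leading coefficient is \(1\). Modular multiplication is one multiplication followed by one monic division.
\end{itemize}

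\textbf{The fast bound.} I would use Kronecker substitution. A polynomial of degree \(O(d)\) in \(X\) over \(A_s\) is a bivariate array in \(X\) and \(Z\) with integer entries below \(p^s\).
\begin{itemize}
\item Pack the array into a single integer of \(O(dr(\rho_s+\log(dr)))\) bits, leaving padding for carries.
\item Multiply the two packed integers once.
\item Unpack the result and reduce each entry modulo \(p^s\). Then reduce the \(Z\)-degree modulo \(\Pi\), which is again a fast multiplication by precomputed reciprocal data for \(\Pi\).
\end{itemize}
This gives \(\widetilde O(dr\rho_s)\), using \(M_{\Z}(b)=\widetilde O(b)\).

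Monic division by \(G\) uses Newton iteration on the reversed polynomial to compute \(\mathrm{rev}(G)^{-1}\) modulo \(X^{k}\). Because \(G\) is monic, the constant term of its reverse is \(1\), so every step is a ring operation and no inversion is needed. The precomputation costs \(O(M_A)\), by the standard geometric-sum argument for Newton iteration.

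\textbf{Linear operations.} Addition, differentiation, and multiplication by a scalar from \(\Z/p^s\Z\) act on \(O(dr)\) integer coefficients, each at cost \(O(M_{\Z}(\rho_s)+\rho_s)\). Differentiation multiplies by small integers \(i\le d\).

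Exact division by \(p^k\) is one of two things, and either fits the stated cost:
\begin{itemize}
\item an integer division of each representative by \(p^k\), costing \(O(M_{\Z}(\rho_s))\) by Newton division;
\item multiplication by a precomputed inverse of \(p^k\), which is legitimate because the result is determined only modulo \(p^{s-k}\).
\end{itemize}

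\textbf{Precision doubling.} Over a schedule \(s_i\) with \(s_{i+1}\le 2s_i\), the bit sizes \(\rho_{s_i}\) grow geometrically up to \(\rho_e\). Regularity of \(M_{\Z}\) means \(M_{\Z}(b)/b\) is nondecreasing, which gives \(M_{\Z}(\rho_{s_i})\le(\rho_{s_i}/\rho_e)M_{\Z}(\rho_e)\). Summing over the schedule therefore gives \(O(M_{\Z}(\rho_e))\).

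The sizes \(\rho_{s_i}\) shrink at least geometrically going backwards from the final precision. One caveat is the ceiling in \(\rho_s\), which changes sizes by at most one bit per level. These extra bits are absorbed because \(\rho_s\ge1\) and the number of levels is \(O(\log e)\).

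The main obstacle is not conceptual but bookkeeping: stating the Kronecker packing size precisely enough that the \(\widetilde O\) absorbs the padding, and justifying that the reduction modulo \(\Pi\) does not add a factor of \(r\).
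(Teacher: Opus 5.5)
Your plan is the paper's proof. It uses schoolbook arithmetic with $O(r^2)$ integer operations per coefficient product and reduction modulo $\Pi$, and Kronecker substitution with $O(\rho_s+\log(dr))$ bits per slot. Division uses Newton iteration for the reciprocal of the reversed monic divisor, which has constant term $1$ and so needs only ring operations. Regularity of $M_{\Z}$ and precisions summing to $O(e)$ give the geometric sum over the doubling schedule. Two details need repair.

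\textbf{Differentiation.} The multipliers $i\le d$ are not small relative to the precision. When $\log_2 d$ greatly exceeds $\rho_s$, for instance $s=1$ at the start of lifting with $d$ close to $n$, multiplying a coordinate by the integer $i$ costs more than $O(M_{\Z}(\rho_s)+\rho_s)$. Reducing $i$ modulo $p^s$ from scratch costs $\Theta(\log i)$ bit operations for each $i$, which is too much for the same reason. Your stated cost for differentiation is therefore unjustified in that regime. The paper's fix is to maintain $i\bmod p^s$ in a running counter, updated by one addition per step at cost $O(\rho_s)$. Then each coefficient costs $O(r[M_{\Z}(\rho_s)+\rho_s])$, which gives the claimed $O(dr[M_{\Z}(\rho_s)+\rho_s])$.

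\textbf{Exact division.} Your second option, ``multiplication by a precomputed inverse of $p^k$'', is not valid as justified. For $k\ge1$, $p^k$ has no inverse in any $\Z/p^j\Z$ with $j\ge1$, and ``the result is determined only modulo $p^{s-k}$'' does not change this. If you meant a Jebelean-style inverse modulo a power of $2$, that works only for odd $p$ and needs a different justification. Drop this option. Your first option suffices: integer division of each representative by the known $p^k$, using a precomputed reciprocal.
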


\begin{proof}
A coefficient product followed by reduction modulo \(\Pi\) takes
\(O(r^2)\) operations on \(\rho_s\)-bit integers by schoolbook
arithmetic, proving the first bound. For the fast bound, encode the
two-variable coefficient array by Kronecker substitution, allowing
\(O(\rho_s+\log(dr+1))\) bits per slot for coefficient growth.
Fast integer multiplication and fast monic reduction give the stated
soft bound. For fast division, the reversed monic divisor \(J(T)\) has constant
term one. Its reciprocal is refined by
\[
 V_{2u}=V_u(2-JV_u)\pmod {T^{2u}},
 \qquad 1-JV_{2u}=(1-JV_u)^2\pmod {T^{2u}},
\]
using only ring operations. The remaining operations act on \(dr\)
stored coordinates. Differentiation uses successive additions to
maintain its integer multipliers modulo \(p^s\). The target precisions
sum to \(O(e)\), and regularity bounds the resulting geometric sum by
a constant multiple of the final-precision bound.
\end{proof}

For each \(d\mid n\), let \(m_d=\ord_d(q)\), let \(\omega_d\) be the
MED invariant of Proposition~\ref{pre:med}, and set
\[
 \mu_d=m_d/\omega_d,\qquad D_d=d/\mu_d,\qquad
 \mathscr D=\{D_d:d\mid n\}.
\]
Use \(m_1=\omega_1=\mu_1=D_1=1\). The construction from
Theorem~\ref{root:layer} has three stages.

\begin{enumerate}
\item Compute the arithmetic data \(d,m_d,\omega_d,\mu_d,D_d\), and
group the divisors \(d\) by their common value \(D_d\).
\item For each distinct \(D\), factor \(\Phi_D\) over \(\F_q\).
Lift its factors using Theorem~\ref{coef:hensel} with ambient
polynomial \(Y^D-1\).
\item For each original layer \(d\), output \(H(X^{\mu_d})\) for
every stored factor \(H\) of \(\Phi_{D_d}\).
\end{enumerate}

Let \(\mathfrak A(n,q)\) denote the cost of a deterministic implementation
of the first step, including integer factorization, divisor construction,
order computations, and
the modular powers used to represent \(q=p^r\). It excludes both
construction of \(\Pi\) and finite-field polynomial factorization.
Put
\begin{align}
 s_D&=\varphi(D),& w_D&=\ord_D(q),&
 L_n&=\lceil\log_2(n+1)\rceil,\notag\\
 W&=\sum_{d\mid n}\varphi(D_d),&
 U&=\sum_{D\in\mathscr D}s_D,&
 V&=\sum_{D\in\mathscr D}s_Dw_D.
 \label{eq:compressed-sizes}
\end{align}
The conventions \(\varphi(1)=w_1=1\) apply. The total degree of
the cores, with repetition, is \(W\), whereas \(U\) is the total
degree to be lifted after shared cores are identified. In particular,
\[
 U\leq W,\qquad V\leq U^2\leq W^2.
\]
Indeed, the primitive \(d\)-th layer has
\(\varphi(d)/m_d=\varphi(D_d)/\omega_d\) factors of core degree
\(\omega_d=w_{D_d}\).

In a fixed-stride representation, all nonleading coefficients of the
core \(H\), including zeros, are stored together with its stride and
degree. If \(h\) is the total number of factors, this representation
has size
\begin{equation}\label{eq:stride-size}
 S_{\mathrm{stride}}=\Theta(Wr\rho_e+hL_n).
\end{equation}
An explicit coefficient--exponent list uses
\(O(W(r\rho_e+L_n))\) bits. Repeated arrays may be shared, and additional zero coefficients may
be omitted. The stated sizes charge each occurrence separately and
include all coefficients of each core, irrespective of further
cancellations.

\begin{theorem}\label{alg:compressed}
Suppose that the arithmetic data and the factorizations of
\(\Phi_D\bmod p\) over \(\F_q\), for \(D\in\mathscr D\), are supplied.
The ring lifting and fixed-stride output can be constructed
deterministically in
\begin{equation}\label{eq:fast-compressed}
 \widetilde O(Ure\log p)+O(S_{\mathrm{stride}})
 =\widetilde O(Wre\log p+WL_n)
\end{equation}
bit operations. In the coefficient-computation term, the suppressed logarithmic
factors depend only on the compressed sizes and the ring parameters.
\end{theorem}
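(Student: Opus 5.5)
We account separately for the lifting of the shared cores and for the output. For each distinct \(D\in\mathscr D\), the residue factorization of \(\Phi_D\) is supplied. It consists of \(s_D/w_D\) pairwise coprime irreducible factors of degree \(w_D\), and these are also coprime to the remaining factors of \(Y^D-1\).

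Each factor \(G_1\) is lifted independently by iterating Theorem~\ref{coef:hensel}, with ambient polynomial \(Y^D-1\) and the precision schedule \(s\mapsto s+\min(s,e-s)\). This uses \(O(\log e)\) steps and reaches precision \(e\). One step at precision \(s+t\) has four parts:
\begin{enumerate}
\item computing \(\remn_G(Y^D-1)\) by repeated squaring of \(Y\) modulo \(G\), which takes \(O(\log D)\) modular multiplications at degree \(w_D\);
\item exact division by \(p^s\);
\item forming \(YG'\), multiplying by \(D^{-1}\) modulo \(p^t\), and computing one modular product;
\item one correction addition.
\end{enumerate}
By Lemma~\ref{alg:arithmetic}, the cost of a step is \(O(\log D\, M_A(w_D,e))\). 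Over the doubling schedule the total has the same order as the final-precision cost. The inverse \(D^{-1}\bmod p^e\) is computed once per \(D\), in \(\widetilde O(\log p^e)\).

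Summing over the \(s_D/w_D\) factors and using the fast bound \(M_A(w,e)=\widetilde O(wr\rho_e)\) with superadditivity, the contribution of one \(D\) is \(\widetilde O(s_D r e\log p)\), since \(\log D\le L_n\) is absorbed into the soft-O. Summing over \(\mathscr D\) gives \(\widetilde O(Ure\log p)\).

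\textbf{Main obstacle.} The delicate point is ensuring that the \(\log D\) exponentiation factor, and any dependence on \(D\) itself rather than on \(w_D\), stays polylogarithmic in the compressed sizes. Naive reduction of \(Y^D\) would cost \(O(D)\), and \(D\) can be much larger than \(s_D\). I would use square-and-multiply modulo \(G\) at each precision, which costs \(O(\log D)\) products of degree below \(2w_D\). I would also note that \(\log D\le L_n=O(\log n)\). To justify that this logarithm depends only on the compressed sizes, I would observe that \(D\le\) some polynomial in \(\varphi(D)\), which gives \(\log D=O(\log s_D)\).

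Proposition~\ref{coef:enlargement} confirms that lifting at order \(D\) rather than at \(n\) gives the same factors. Theorem~\ref{root:layer} shows that \(H(X^{\mu_d})\), for \(H\) ranging over the lifted factors of \(\Phi_{D_d}\), is exactly the basic-irreducible factorization of each layer. The union over \(d\mid n\) is therefore the factorization of \(X^n-1\).

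For the output stage, I would write, for each layer \(d\) and each stored core \(H\), its coefficient array together with \(\mu_d\) and \(\deg H=\omega_d\), encoded in \(O(L_n)\) bits. Copying costs are linear in the representation size, giving \(O(S_{\mathrm{stride}})\) by \eqref{eq:stride-size}. Finally, \(U\le W\) and \(h\le W\) (each factor has core degree at least \(1\)) turn the total into \(\widetilde O(Wre\log p+WL_n)\). This is the right-hand side of \eqref{eq:fast-compressed}.
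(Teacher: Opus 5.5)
Your proposal is correct and takes the paper's route. You lift each shared core factorwise via Theorem~\ref{coef:hensel} with ambient polynomial \(Y^D-1\), using binary powering of \(Y^D\) modulo each core and summing over the doubling schedule. You absorb the powering factor through a polynomial bound of \(D\) in \(\varphi(D)\), and charge the output to \(S_{\mathrm{stride}}\) with \(h,U\le W\). The one step to make explicit is that bound. The paper uses \(D\le2\varphi(D)^2\), obtained by multiplying \(\varphi(\ell^a)^2/\ell^a=\ell^{a-2}(\ell-1)^2\ge1\) over prime powers, the only exception being \(1/2\) for \(2^1\). It should be applied as \(\log_2D\le1+2\log_2s_D\), because \(\log D=O(\log s_D)\) fails literally at \(D=2\), where \(s_D=1\).
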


\begin{proof}
For \(D\geq2\), binary modular powering of \(Y^D\) uses
\[
 \eta(D)=\lfloor\log_2D\rfloor+\operatorname{pop}_2(D)-1
\]
products, where \(\operatorname{pop}_2(D)\) counts binary ones.
Set \(\eta(1)=0\). There are \(s_D/w_D\) factors of degree \(w_D\).
After this powering, the local correction uses a bounded number of
modular products and a linear number of coefficient operations. Lemma~\ref{alg:arithmetic},
summed over the lifting precisions, therefore gives
\(\widetilde O((\eta(D)+1)s_Dr\rho_e)\) for this layer.
The scalar inverse \(D^{-1}\bmod p^e\) is computed once and shared,
at cost \(\widetilde O(\log(D+1)+\rho_e)\).

The inequality
\begin{equation}\label{eq:totient-size}
 D\leq2\varphi(D)^2=2s_D^2
\end{equation}
absorbs \(\eta(D)\) into logarithmic factors in the compressed size.
To prove it, multiply the ratios
\(\varphi(\ell^a)^2/\ell^a=\ell^{a-2}(\ell-1)^2\)
over the prime powers dividing \(D\). Every ratio is at least one,
except the ratio \(1/2\) for \(2^1\).
Summing the layer costs gives the first term in
\eqref{eq:fast-compressed}. Emitting the repeated core arrays and
stride fields costs \(O(S_{\mathrm{stride}})\). Since
\(h\leq W\) and \(U\leq W\), the displayed bound follows.
\end{proof}

For schoolbook arithmetic, put
\[
 K=\sum_{D\in\mathscr D}(\eta(D)+1)s_Dw_D.
\]
The same proof gives the explicit schoolbook bound
\begin{align*}
 O\bigl(&r^2K[M_{\Z}(\rho_e)+\rho_e]
       +Ur[M_{\Z}(\rho_e)+\rho_e]+S_{\mathrm{stride}}\bigr)\\
 &\quad+\sum_{\substack{D\in\mathscr D\\D>1}}
            \widetilde O(\log(D+1)+\rho_e).
\end{align*}
With \(M_{\Z}(b)=O(b^2)\), the dependence on the coefficient
precision is quadratic.

\subsection{Finite-field factorization and total cost}

\begin{theorem}\label{alg:complete}
Given the coefficient-ring presentation, the complete unordered
basic-irreducible factorization of \(X^n-1\) can be constructed by a
Las Vegas algorithm with expected bit cost
\begin{equation}\label{eq:full-cost}
 \mathfrak A(n,q)+
 \widetilde O\bigl(V(r\log p)^2+Wre\log p+WL_n\bigr).
\end{equation}
The output may use either fixed strides or explicit
coefficient--exponent pairs. Randomization is confined to the
finite-field factorization.
\end{theorem}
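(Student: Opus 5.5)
The plan is to assemble the three stages of Theorem~\ref{root:layer}'s construction and charge each separately. Stage one costs exactly \(\mathfrak A(n,q)\) by definition. Stage three, together with the ring lifting, is Theorem~\ref{alg:compressed}, which contributes \(\widetilde O(Wre\log p+WL_n)\) once the residue factorizations of \(\Phi_D\) are supplied. The only new ingredient is therefore the Las Vegas factorization over \(\F_q\) of \(\Phi_D\bmod p\) for each distinct \(D\in\mathscr D\). We need to show that this costs expected \(\widetilde O(V(r\log p)^2)\) bit operations in total, and that the explicit coefficient--exponent output is within the same bound.

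For the finite-field stage we first compute \(\Phi_D\bmod p\) over \(\F_p\subseteq\F_q\). One route is by exact division of \(Y^D-1\) by \(\Phi_{D'}\) for proper divisors \(D'\mid D\). A cleaner route is the Möbius product \(\prod_{k\mid D}(Y^{D/k}-1)^{\mu(k)}\), evaluated by fast division of sparse binomials. Its cost is polynomial in \(D\), and \eqref{eq:totient-size} makes it quasi-linear in \(s_D\) up to logarithmic factors.

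Since \(\Phi_D\) is squarefree with all irreducible factors of the known degree \(w_D=\ord_D(q)\), no distinct-degree stage is needed. We apply equal-degree factorization (Cantor--Zassenhaus, or the Kaltofen--Shoup variant when \(p=2\) via trace maps) over \(\F_q\), with \(\F_q\) represented as \(\F_p[Z]/(\bar\Pi)\). The dominant step computes \(q^{w_D}\)-type powers, or traces, modulo \(\Phi_D\). This takes \(O(w_D\log q)\) modular multiplications at degree \(s_D\), each costing \(\widetilde O(s_D r\log p)\) bits by Lemma~\ref{alg:arithmetic} at \(e=1\). The result is expected \(\widetilde O(s_Dw_D(r\log p)^2)\), and summing over \(\mathscr D\) gives \(\widetilde O(V(r\log p)^2)\).

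The recursion of equal-degree splitting only adds a logarithmic factor in the expected number of rounds. The gcd computations cost \(\widetilde O(s_Dr\log p)\) per round, so they are dominated by the powering. Randomness enters only here: a failed split is detected by its degrees, so the algorithm remains Las Vegas.

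For the output alternative, converting the fixed-stride representation into explicit exponent lists costs \(O(W(r\rho_e+L_n))\). Each exponent \(\mu_d i\le n\) is written as an \(L_n\)-bit integer, produced by repeated addition of \(\mu_d\), so this is already absorbed in \(Wre\log p+WL_n\).

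The main obstacle is justifying the \((r\log p)^2\) factor uniformly. This needs two facts: the number of Frobenius powerings per split is \(O(w_D)\) in \(\F_q\)-operations on the order of \(r\log p\) squarings, and the logarithmic factors hidden in the soft-O depend only on \(s_D\), \(r\), \(\log p\), with \(\log D\) absorbed via \eqref{eq:totient-size}. If the standard Cantor--Zassenhaus count gives only \(s_D\cdot w_D\cdot\log q\) products at degree \(s_D\), we instead use the iterated Frobenius or modular-composition technique so that the exponent \(q^{w_D}\) contributes \(O(w_D\log q)\) products rather than a worse count.
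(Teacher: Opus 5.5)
Your outline matches the paper's proof. Both construct \(\Phi_D\bmod p\) from the M\"obius product and split it by equal-degree splitting, using the \((q^{w}-1)/2\) power (or a trace when \(p=2\)). Both count \(O(w_D\log q)\) modular products per pass on blocks of total degree at most \(s_D\), over an expected \(O(1+\log(s_D/w_D))\) passes, and finish with Theorem~\ref{alg:compressed} and stride additions.

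The one step whose justification fails is the cost of constructing \(\Phi_D\bmod p\). You call this cost ``polynomial in \(D\)'' and say that \eqref{eq:totient-size} makes it quasi-linear in \(s_D\). It does not. The inequality \(D\le 2s_D^2\) only gives \(\log D=O(\log s_D)\), so it can absorb \(\log D\) or \(\eta(D)\) into logarithmic factors, which is how Theorem~\ref{alg:compressed} uses it. Applied to a cost of order \(D\), it yields order \(s_D^2\), and such a term is not covered by \eqref{eq:full-cost}. For example, take \(n=q-1\) and \(D=n\). Then \(w_D=1\), \(V=W=n\), and \(s_D^2=\varphi(n)^2\) is far beyond \(\widetilde O(V(r\log p)^2+Wre\log p+WL_n)\).

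The missing ingredient is that the whole M\"obius tree has size within polylogarithmic factors of \(\varphi(D)\). Let \(\ell_1<\cdots<\ell_k\) be the primes dividing \(D\). The binomials \(Y^{D/u}-1\), for \(u\mid\rad(D)\), have total degree
\[
D\prod_i(1+\ell_i^{-1})\le\varphi(D)\frac{(k+1)(k+2)}{2}.
\]
This holds because \(\ell_i\ge i+1\) gives \(\prod_i\ell_i/(\ell_i-1)\le k+1\) and \(\prod_i(1+\ell_i^{-1})\le(k+2)/2\). Form the numerator and denominator in balanced product trees and perform one exact monic division. With \(b=r\lceil\log_2p\rceil\), this costs \(\widetilde O(s_D(b+\log(D+1)))\), which sums to \(\widetilde O(Ub+UL_n)\) and is absorbed in \eqref{eq:full-cost}.

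The tree is what makes this work. Processing the \(2^{k}\) sparse binomials one at a time, as ``fast division of sparse binomials'' suggests, carries an extra factor of order \(2^{\omega(D)}\), which is not polylogarithmic in \(D\) in general.

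Two smaller points. Your closing hedge is unnecessary: binary powering to \((q^{w}-1)/2\), or \(rw\) squarings for the absolute trace when \(p=2\), already uses \(O(w\log q)\) products per block. Superadditivity over the blocks of one pass then gives the per-pass bound without modular composition. Also, for \(p=2\) the method you need is the absolute-trace form of Cantor--Zassenhaus, not Kaltofen--Shoup.
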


\begin{proof}
For \(D=1\), the residue factor is \(Y-1\). For \(D>1\), construct
\(\Phi_D\bmod p\) using the M\"obius product
\[
 \Phi_D(Y)=\prod_{u\mid D}(Y^{D/u}-1)^{\mu_{\mathrm M}(u)},
\]
placing positive and negative exponents in a numerator and a
denominator, respectively, and performing one exact monic division.
If \(\ell_1<\cdots<\ell_k\) are the prime divisors of \(D\), the sum
of all leaf degrees is
\[
 D\prod_{i=1}^k(1+\ell_i^{-1})
 \leq \varphi(D)\frac{(k+1)(k+2)}2.
\]
Indeed, \(\ell_i\geq i+1\) gives
\(\prod_i\ell_i/(\ell_i-1)\leq k+1\) and
\(\prod_i(1+\ell_i^{-1})\leq(k+2)/2\).
Product trees and monic division therefore construct this polynomial
in \(\widetilde O(s_D(b+\log(D+1)))\) bit operations, where
\(b=r\lceil\log_2p\rceil\).

All irreducible factors of \(\Phi_D\) over \(\F_q\) have degree
\(w=w_D\). Write \(s=s_D\). Use equal-degree splitting
\cite{CantorZassenhaus}, maintaining blocks whose product is
\(\Phi_D\). For a block \(B\) of degree greater than \(w\), choose
a uniform residue \(a\) modulo \(B\). For odd \(p\), split with
\[
 \gcd\bigl(a^{(q^w-1)/2}-1,B\bigr).
\]
For \(p=2\), use the absolute trace and split with
\[
 \gcd\left(\sum_{i=0}^{rw-1}a^{2^i},B\right).
\]
In each pass, apply this step once to every block of degree greater
than \(w\), retaining any block for which the split is trivial.
For two distinct irreducible factors in the same block, the random
components are independent by the Chinese remainder theorem.
For odd \(p\), each lies in the selected half with probability
\(\theta=(q^w-1)/(2q^w)\), and
\(\theta^2+(1-\theta)^2\leq5/9\). For \(p=2\), the nonzero absolute
trace functional gives nonseparation probability \(1/2\).
After \(j\) passes, the probability that any pair remains in the same
block is at most \(\binom{s/w}{2}(5/9)^j\). Summing this tail gives
an expected \(O(1+\log(s/w))\) passes.

Powering or trace evaluation uses \(O(w\log q)=O(wb)\) modular
products per block. Fast arithmetic, monic gcds, and uniform sampling
therefore cost \(\widetilde O(swb^2)\) per layer in expectation.
The total degree of the active blocks is at most \(s\) in each pass.
At termination, every block has degree \(w\) and is therefore
irreducible. The preceding tail bound proves almost-sure termination.
Summing over \(D\) gives \(\widetilde O(Vb^2+UL_n)\), including
construction of the cyclotomic polynomials. Combine this with
Theorem~\ref{alg:compressed}. Explicit exponents are emitted by
successive additions of the stride at cost \(O(L_n)\) per position,
which is included in \(WL_n\).
\end{proof}

\subsection{Certification and simultaneous lifting}
\label{v2:certification-batching}

The compressed factorization admits a certificate on the stored core
polynomials. The arithmetic data and the residue factorizations are
inputs to this certificate.

\begin{proposition}[Certificate for compressed factors]
\label{v2:factorization-certificate}
Let \(p\nmid n\), and suppose that the data
\(\mu_d,D_d\), for \(d\mid n\), satisfy
Theorem~\ref{root:layer}. For each \(D\in\mathscr D\), let
\(\mathcal G_D\) be a certified complete set of distinct monic
irreducible factors of \(\Phi_D\bmod p\) over \(\F_q\).
For every \(g\in\mathcal G_D\), suppose that a monic polynomial
\(H_{D,g}\in A[Y]\) satisfies
\begin{equation}\label{v2:core-certificate}
 \overline{H_{D,g}}=g,\qquad
 \deg H_{D,g}=\deg g,\qquad
 \remn_{H_{D,g}}(Y^D-1)=0.
\end{equation}
Then
\[
 \Phi_D(Y)=\prod_{g\in\mathcal G_D}H_{D,g}(Y)
\]
for every \(D\in\mathscr D\), and
\begin{equation}\label{v2:certified-output}
 X^n-1=
 \prod_{d\mid n}\ \prod_{g\in\mathcal G_{D_d}}
       H_{D_d,g}(X^{\mu_d})
\end{equation}
is the complete factorization into distinct monic basic-irreducible
polynomials. Thus the full factorization can be certified without
expanding the substitutions in \eqref{v2:certified-output}.
\end{proposition}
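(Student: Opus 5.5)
The argument reduces everything to Hensel uniqueness for the squarefree polynomial \(Y^D-1\) (with \(p\nmid D\), since \(D\mid d\mid n\)), followed by the substitution correspondence of Theorem~\ref{root:layer}. Fix \(D\in\mathscr D\). By \eqref{v2:core-certificate}, each \(H_{D,g}\) is a monic divisor of \(Y^D-1\) over \(A\), since monic division with zero remainder gives divisibility. It has the same degree as \(g\) and reduces to \(g\). Because \(g\) is irreducible over \(\F_q\), \(H_{D,g}\) is basic irreducible.

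Next compare with the canonical lifts. Over \(\F_q\), the factorization \(Y^D-1=\prod_{D'\mid D}\Phi_{D'}\) is squarefree, and the elements of \(\mathcal G_D\) are pairwise distinct irreducible factors of \(\Phi_D\bmod p\), hence pairwise coprime. By Proposition~\ref{pre:hensel}, each divisor \(g\) of the squarefree polynomial \(Y^D-\bar 1\) has a unique monic lift of the same degree dividing \(Y^D-1\). Hence \(H_{D,g}\) is that lift. Apply the same proposition to the coprime residue factorization \(Y^D-1=\prod_{g\in\mathcal G_D}g\cdot\prod_{D'\mid D,\,D'\neq D}\overline{\Phi_{D'}}\). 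The lift of \(\Phi_D\bmod p\) is then \(\Phi_D\) itself, since \(\Phi_D\in\Z[Y]\) is a monic divisor of \(Y^D-1\). The lift of the block \(\prod_g g\) equals \(\prod_g H_{D,g}\), because the product of the unique factor lifts is a monic divisor lifting the block; uniqueness of lifts of grouped coprime factorizations gives this. Therefore \(\Phi_D=\prod_g H_{D,g}\). The certified completeness of \(\mathcal G_D\) is used precisely here: it ensures the product of the \(g\) is all of \(\Phi_D\bmod p\).

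For the global statement, write \(X^n-1=\prod_{d\mid n}\Phi_d(X)\) in \(\Z[X]\). By \eqref{root:cyclotomic-substitution} of Theorem~\ref{root:layer}, which the supplied data \(\mu_d,D_d\) are assumed to satisfy, \(\Phi_d(X)=\Phi_{D_d}(X^{\mu_d})=\prod_g H_{D_d,g}(X^{\mu_d})\). This yields \eqref{v2:certified-output}. The last clause of Theorem~\ref{root:layer} states that substituting \(X^{\mu_d}\) into the basic-irreducible factors of \(\Phi_{D_d}\) gives the basic-irreducible factors of \(\Phi_d\). This applies to the \(H_{D_d,g}\), since they are those factors.

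Distinctness follows from squarefreeness. All factors divide \(X^n-1\), whose reduction is squarefree because \(p\nmid n\). Two equal factors would give a square factor in the reduction, which is impossible. Equivalently, distinct layers \(d\) have disjoint root orders.

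The main point needing care is justifying that the product of independently certified lifts equals \(\Phi_D\), rather than merely dividing \(Y^D-1\). I would handle it by grouping the coprime residue factorization as above and invoking the uniqueness clause of Proposition~\ref{pre:hensel}. Products of lifts of coprime factors are monic lifts dividing \(Y^D-1\), so they coincide with the unique lift of the grouped factor. The certificate checks only the compressed \(H_{D,g}\), so the substitutions are never expanded.
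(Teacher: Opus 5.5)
Your proof is correct, but you handle the key step, \(\prod_g H_{D,g}=\Phi_D\), differently from the paper.

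The paper never identifies \(H_{D,g}\) with a Hensel lift. It writes \(Y^D-1=\Phi_DK_D\) and notes that \(g\) is coprime to \(\overline{K_D}\) by squarefreeness. Hence \(K_D\) is a unit modulo \(H_{D,g}\), and the certificate \(\remn_{H_{D,g}}(Y^D-1)=0\) forces \(H_{D,g}\mid\Phi_D\). Pairwise comaximality of the \(H_{D,g}\) (intersection equals product) then gives \(\prod_g H_{D,g}\mid\Phi_D\). Equality follows from monicity and the degree count \(\sum_g\deg g=\varphi(D)\).

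You instead invoke the uniqueness clause of Proposition~\ref{pre:hensel}. Each certified \(H_{D,g}\) must equal the factor \(F_g\) in the single simultaneous lift of the coprime residue factorization of \(Y^D-1\). So \(\prod_g H_{D,g}\) divides \(Y^D-1\) and is a monic lift of \(\overline{\Phi_D}\) of degree \(\varphi(D)\), hence equals \(\Phi_D\). Divisibility of the product, as opposed to each factor separately, really does need either that simultaneous factorization or comaximality. You correctly supply the former.

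Your route is shorter once Hensel uniqueness is granted, and it shows that the certificate pins down the canonical lifts exactly. The paper's route is more self-contained: it uses only comaximality of polynomials with coprime reductions, which is the mechanism behind Hensel uniqueness, and not the uniqueness statement itself, which the paper quotes as standard without proof.

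Your distinctness argument, via squarefreeness of \(X^n-\bar1\), is a valid uniform substitute for the paper's appeal to distinct root orders across layers. The global step through Theorem~\ref{root:layer} is the same as the paper's.
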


\begin{proof}
Fix \(D\in\mathscr D\), and write
\(Y^D-1=\Phi_D(Y)K_D(Y)\) in \(A[Y]\). Since \(p\nmid D\),
the reduction of \(Y^D-1\) is squarefree. Therefore
\(\gcd(g,\overline{K_D})=1\) for every \(g\in\mathcal G_D\),
so \(H_{D,g}\) and \(K_D\) are comaximal. Reducing the product
identity modulo \(H_{D,g}\), condition~\eqref{v2:core-certificate}
and the invertibility of \(K_D\) in \(A[Y]/(H_{D,g})\) give
\(H_{D,g}\mid\Phi_D\).

Distinct \(H_{D,g}\) are pairwise comaximal because their reductions
are distinct irreducibles. Hence their product divides \(\Phi_D\):
for pairwise comaximal ideals, intersection equals product.
The product and \(\Phi_D\) are monic and have the same degree
\(\sum_g\deg g=\varphi(D)\), so they are equal. Each factor is
basic irreducible by its reduction. Theorem~\ref{root:layer} now
identifies the factors in each primitive-order \(d\) layer, and
\(\prod_{d\mid n}\Phi_d=X^n-1\) proves
\eqref{v2:certified-output}. Distinct layers have distinct root
orders, and the factors within each layer are distinct.
\end{proof}

The divisibility tests in \eqref{v2:core-certificate} use modular
powering at the core degrees. By \eqref{eq:totient-size}, their
coefficient cost is covered by the
\(\widetilde O(Ure\log p)\) term in
Theorem~\ref{alg:compressed}; reading the chosen output representation
has its stated output cost. Certification of the integer data and
residue factorizations remains part of the preprocessing.

For simultaneous lifting, the local corrections are the coordinates
of a single global error under a Chinese remainder isomorphism.
This description holds for every monic polynomial with squarefree
reduction.

\begin{proposition}[Local corrections and the global error]
\label{v2:crt-correction}
Let \(f\in A[X]\) be monic of degree \(n\ge1\), with squarefree
reduction. Let \(1\le t\le s\) and \(s+t\le e\), and suppose
that \(h\ge1\) monic polynomials of positive degrees
\(d_1,\ldots,d_h\) form a factorization of \(f\) over \(A_s\).
Choose monic lifts \(G_i\in A_{s+t}[X]\) of the same degrees and put
\[
 P=\prod_{i=1}^hG_i,\qquad
 E=\frac{f-P}{p^s}\in A_t[X],\qquad
 g_i=G_i\bmod p^t,\qquad Q_i=\prod_{j\ne i}g_j.
\]
Then \(\deg E<n\). There are unique polynomials
\(\Delta_i\in A_t[X]\), with \(\deg\Delta_i<d_i\), such that
\begin{equation}\label{v2:crt-error-identity}
 E=\sum_{i=1}^h\Delta_iQ_i.
\end{equation}
They are given by
\[
 \Delta_i=\remn_{g_i}(E Q_i^{-1}),
\]
where \(Q_i^{-1}\) is taken modulo \(g_i\), and they satisfy
\begin{equation}\label{v2:simultaneous-product}
 f=\prod_{i=1}^h(G_i+p^s\Delta_i)
       \qquad\text{in }A_{s+t}[X].
\end{equation}
The updated factors are independent of the chosen monic lifts \(G_i\).
If \(f=X^n-\lambda\), with \(p\nmid n\) and
\(\lambda\in A^\times\), the inverse is the one in
\eqref{eq:inverse}:
\[
 Q_i^{-1}\equiv(n\lambda)^{-1}Xg_i'\pmod{g_i}.
\]
\end{proposition}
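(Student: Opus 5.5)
The approach is the Chinese remainder theorem over \(A_t[X]\), followed by an expansion of the product modulo \(p^{s+t}\).

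\emph{Degree of \(E\).} Both \(f\) and \(P\) are monic of degree \(n=\sum_i d_i\), so \(f-P\) has degree below \(n\). Its reduction modulo \(p^s\) vanishes because \(\prod_i G_i\equiv f\) modulo \(p^s\). Hence \(E=(f-P)/p^s\) is well defined by the exact coefficientwise division of the preliminaries, and \(\deg E<n\).

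\emph{Coprimality of the \(g_i\).} Their reductions \(\bar g_i\) are pairwise coprime, since their product is the squarefree \(\bar f\). By Proposition~\ref{pre:hensel}, the \(g_i\) are pairwise comaximal in \(A_t[X]\). The CRT map
\[
 A_t[X]/(g_1\cdots g_h)\longrightarrow\prod_i A_t[X]/(g_i)
\]
is therefore an isomorphism, and each \(Q_i\) is a unit modulo \(g_i\).

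\emph{Existence and uniqueness of the \(\Delta_i\).} Consider the \(A_t\)-linear map
\[
 (\Delta_1,\ldots,\Delta_h)\longmapsto\sum_i\Delta_iQ_i,\qquad \deg\Delta_i<d_i.
\]
It goes from a free module of rank \(n\) to polynomials of degree \(<n\). Reducing the image modulo \(g_i\) kills every term except \(\Delta_iQ_i\), because \(g_i\mid Q_j\) for \(j\ne i\). Thus the composite with the CRT isomorphism is \(\Delta_i\mapsto \Delta_iQ_i \bmod g_i\) in each component. This is bijective, since \(Q_i\) is a unit there and \(\{\deg<d_i\}\) is a set of representatives modulo \(g_i\). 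The map is therefore an isomorphism, which gives uniqueness. Taking \(\Delta_i=\remn_{g_i}(EQ_i^{-1})\) matches \(E\) in every CRT component. Since both sides have degree \(<n\) and are congruent modulo the monic \(\prod g_i\) of degree \(n\), they are equal.

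\emph{Product identity.} Expand \(\prod_i(G_i+p^s\Delta_i)\) modulo \(p^{s+t}\). Terms with two or more corrections carry \(p^{2s}\), and \(2s\ge s+t\), so they vanish. What remains is
\[
 P+p^s\sum_i\Delta_i\prod_{j\ne i}G_j.
\]
Here only \(\prod_{j\neq i}G_j\bmod p^t=Q_i\) matters, so the sum equals \(P+p^sE=f\).

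\emph{Independence of the lifts.} The updated factors are monic lifts of the given factorization modulo \(p^s\), and their product is \(f\) modulo \(p^{s+t}\). Uniqueness in Proposition~\ref{pre:hensel}, applied over \(A_{s+t}\), shows that they do not depend on the chosen \(G_i\).

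\emph{The binomial case.} For \(f=X^n-\lambda\), reduce modulo \(p^t\) to get \(f\equiv g_iQ_i\). Then repeat the derivation of \eqref{eq:inverse} with \(g_i\) in place of \(G_s\): differentiate, reduce modulo \(g_i\), and use \(X^n\equiv\lambda\).

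The main obstacle is the degree bookkeeping in the existence and uniqueness step, namely checking that the sum matches \(E\) exactly and not merely modulo \(\prod g_i\). This is settled by the rank-\(n\) count together with \(\deg E<n\).
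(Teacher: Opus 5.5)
Your proof is correct and follows the paper's argument step for step. The shared steps are comaximality of the \(g_i\), congruences modulo each \(g_i\) combined with \(\deg E<n\) for existence and uniqueness of the \(\Delta_i\), the expansion using \(2s\ge s+t\), and differentiation of \(f=g_iQ_i\) for the binomial inverse.

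The only divergence is independence of the lifts. You appeal to Hensel uniqueness (Proposition~\ref{pre:hensel} over \(A_{s+t}\)). The paper instead checks directly that replacing \(G_i\) by \(G_i+p^sU_i\) turns \(E\) into \(E-\sum_iU_iQ_i\), and hence \(\Delta_i\) into \(\Delta_i-U_i\), which keeps the argument self-contained. Both routes are valid.
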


\begin{proof}
The leading terms of \(f\) and \(P\) cancel, and
\(f-P\in p^sA_{s+t}[X]\). Thus exact coefficientwise division
by \(p^s\) defines \(E\in A_t[X]\) with \(\deg E<n\).
The polynomials \(g_i\) are pairwise comaximal, because their
reductions form a factorization of the squarefree polynomial
\(\bar f\). In particular, \(Q_i\) is invertible modulo \(g_i\).
The displayed formula for \(\Delta_i\) gives
\[
 E-\sum_i\Delta_iQ_i\equiv0\pmod{g_j}
 \qquad(1\le j\le h).
\]
Pairwise comaximality makes this difference divisible by
\(\prod_i g_i\). Its degree is below \(n\), so it is zero.
The same congruences prove uniqueness. Expanding the product in
\eqref{v2:simultaneous-product}, all terms containing two corrections
vanish because \(2s\ge s+t\); the remaining terms are
\(P+p^s\sum_i\Delta_iQ_i=P+p^sE=f\).

To check independence of the chosen lifts, replace \(G_i\) by
\(G_i+p^sU_i\), where \(\deg U_i<d_i\) and
\(U_i\in A_t[X]\). The new global error is
\(E-\sum_iU_iQ_i\), so uniqueness in
\eqref{v2:crt-error-identity} replaces \(\Delta_i\) by
\(\Delta_i-U_i\). The updated factors are consequently unchanged.
For \(f=X^n-\lambda\), the factorization
\(f=\prod_i g_i\) in \(A_t[X]\) gives, after differentiation,
\(Q_i g_i'\equiv nX^{n-1}\pmod{g_i}\). Multiplication by \(X\)
and the relation \(X^n\equiv\lambda\pmod{g_i}\) yield the inverse
formula.
\end{proof}

\begin{proposition}[Batched extraction of local errors]
\label{v2:batched-errors}
Use the notation of Proposition~\ref{v2:crt-correction}, and put
\(\ell_h=\lceil\log_2(h+1)\rceil\). A product tree at precision
\(p^{s+t}\), followed by a remainder tree at precision \(p^t\),
computes all local errors
\begin{equation}\label{v2:batched-local-errors}
 E_i=\frac{\remn_{G_i}(f)}{p^s}
     =\remn_{g_i}(E)\in A_t[X].
\end{equation}
Their total bit cost is
\begin{equation}\label{v2:batched-error-cost}
 O\!\left(
   \ell_h M_A(n,s+t)
   +nr\bigl[M_{\Z}(\rho_{s+t})+\rho_{s+t}\bigr]
 \right).
\end{equation}
If \(f=X^n-\lambda\), with \(p\nmid n\) and
\(\lambda\in A^\times\), and the common inverse
\((n\lambda)^{-1}\) is supplied, all corrected factors are obtained
within the same bound. They agree with the factorwise lifts from
Theorem~\ref{coef:hensel}.
\end{proposition}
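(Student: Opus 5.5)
The argument has two parts: first the identity \(E_i=\remn_{g_i}(E)\) and the tree computation, then the correction step for \(f=X^n-\lambda\). For the identity, write \(f=P+p^sE\) in \(A_{s+t}[X]\). Since \(G_i\mid P\), we get \(\remn_{G_i}(f)=\remn_{G_i}(p^sE)\) at precision \(p^{s+t}\). Monic division commutes with multiplication by \(p^s\): if \(E=UG_i+V\) with \(\deg V<d_i\), then \(p^sE=(p^sU)G_i+p^sV\). Moreover \(p^sV\) depends only on \(V\bmod p^t\), and \(G_i\equiv g_i\pmod{p^t}\). Exact division by \(p^s\), via the isomorphism \(A_t\to p^sA_{s+t}\), then gives \(\remn_{G_i}(f)/p^s=\remn_{g_i}(E)\). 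This reduces the problem to computing \(E\) once and then all its remainders.

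The algorithm is as follows. Build a balanced product tree over the \(G_i\) at precision \(p^{s+t}\); its root is \(P\). Form \(f-P\) and divide coefficientwise by \(p^s\) to obtain \(E\). Reduce the stored tree nodes modulo \(p^t\) to get the subtree products of the \(g_i\). Then run a remainder tree: at each node, reduce the parent's remainder modulo the node product, starting from \(E\). Each level of either tree involves polynomials whose degrees sum to at most \(n\). By superadditivity of \(M_A\) in \(d\) and monotonicity in precision, one level costs \(O(M_A(n,s+t))\), and there are \(O(\ell_h)\) levels. The subtraction, the exact division by \(p^s\), and the coefficient reductions modulo \(p^t\) are linear coefficient operations, so Lemma~\ref{alg:arithmetic} charges them \(O(nr[M_{\Z}(\rho_{s+t})+\rho_{s+t}])\). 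Together these give \eqref{v2:batched-error-cost}.

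For \(f=X^n-\lambda\), we use Proposition~\ref{v2:crt-correction}. It gives \(\Delta_i=\remn_{g_i}(EQ_i^{-1})=\remn_{g_i}(E_i\cdot(n\lambda)^{-1}Xg_i')\), and this is exactly \eqref{eq:local-correction} of Theorem~\ref{coef:hensel} with \(E\) replaced by \(E_i\). The two agree because the \(E\) of Theorem~\ref{coef:hensel} is \(\remn_{G_i}(f)/p^s\), which equals \(E_i\) by the identity above. Computing each \(\Delta_i\) requires one differentiation, one scalar multiplication, and one modular product of degree \(O(d_i)\). Superadditivity bounds their total by \(O(M_A(n,t))\), and the final additions \(G_i+p^s\Delta_i\) are linear. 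Both costs are absorbed in the bound. The main obstacle is the bookkeeping in the remainder tree. Each tree node must be reduced modulo \(p^t\) before it is used as a divisor. We also need the fact that remainders modulo subtree products are compatible, namely \(\remn_{g_i}(\remn_{\prod}(E))=\remn_{g_i}(E)\), which holds because \(g_i\) divides the subtree product. Once these two points are in place, the level-wise cost estimate is routine.
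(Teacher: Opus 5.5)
Your proposal is correct and follows the paper's proof essentially step for step. It uses the identity obtained from \(G_i\mid P\) together with compatibility of monic division with scaling by \(p^s\) and with reduction, then a product tree at precision \(p^{s+t}\), reduction of the nodes modulo \(p^t\), a remainder tree costed level by level via superadditivity, and leaf corrections from the inverse formula. The differences are cosmetic: you match the correction with \eqref{eq:local-correction} directly (its \(E\) is your \(E_i\)) rather than invoking the uniqueness in Theorem~\ref{coef:hensel}, and the mod-\(p^t\) reduction of all tree nodes touches \(O(\ell_h n)\) coefficients, so it is absorbed by the \(\ell_h M_A(n,s+t)\) term rather than the linear one.
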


\begin{proof}
The product \(P\) is divisible by each \(G_i\), so
\(\remn_{G_i}(f)=\remn_{G_i}(p^sE)\).
Monic division commutes with reduction, and its remainder is linear
in the dividend. Consequently exact division by \(p^s\) gives
\eqref{v2:batched-local-errors}. In particular, once \(E\) is
computed, the remainder tree uses only precision \(p^t\).

Construct a balanced binary product tree with leaves \(G_i\).
Its height is at most \(\ell_h\), and the total degree at each
level is at most \(n\). Superadditivity of \(M_A\) bounds the
multiplications at one level by \(O(M_A(n,s+t))\). This proves
the product-tree cost \(O(\ell_hM_A(n,s+t))\).
Subtracting \(P\) from \(f\) and dividing its nonleading
coefficients by \(p^s\) has the second cost in
\eqref{v2:batched-error-cost}.

Reduce the node products modulo \(p^t\). Starting with \(E\) at
the root, reduce each incoming remainder modulo the one or two child
products. At a node of degree \(d\), the incoming remainder has
degree below \(d\), so the child reductions cost
\(O(M_A(d,t))\). Summing over one level gives
\(O(M_A(n,t))\), again by superadditivity. Thus the remainder tree
costs \(O(\ell_hM_A(n,t))\), which is absorbed in
\eqref{v2:batched-error-cost} by monotonicity in the precision.
The leaves contain the \(E_i\).

For the binomial ambient polynomial, the inverse formula in
Proposition~\ref{v2:crt-correction} gives the local corrections using
\(O(M_A(d_i,t))\) bit operations at each leaf, together with the linear
coefficient operations of Lemma~\ref{alg:arithmetic}. Their total
cost is absorbed by \eqref{v2:batched-error-cost}, since
\(\sum_iM_A(d_i,t)\le M_A(n,t)\). Multiplication by the general
\(A_t\)-scalar \((n\lambda)^{-1}\) is included in these polynomial
multiplication costs. The CRT identity proves the
simultaneous product relation, and Theorem~\ref{coef:hensel}
identifies the updated factors with the factorwise lifts.
\end{proof}

\subsection{Dense output, orbit labels, and primary blocks}

For dense output, let \(m_C\) be the degrees of the basic factors and
put
\[
 \Delta_2=\sum_Cm_C^2,\qquad
 n\leq\Delta_2\leq n\max_Cm_C\leq n^2.
\]
If the residue factorization is supplied, factorwise application of
Theorem~\ref{coef:hensel} has fast cost
\(\widetilde O((\eta(n)+1)nre\log p)\), and schoolbook cost
\[
 O\bigl((\eta(n)+1)r^2\Delta_2[M_{\Z}(\rho_e)+\rho_e]
          +nr[M_{\Z}(\rho_e)+\rho_e]\bigr),
\]
apart from the shared scalar inverse. The exact root order can
replace \(n\) in each factor's powering cost by
Proposition~\ref{coef:enlargement}.

Alternatively, Proposition~\ref{v2:batched-errors} computes all
factor remainders simultaneously. Summing its bound over the
precision-doubling schedule gives fast cost
\(\widetilde O(nre\log p)\); the schoolbook estimate replaces
\((\eta(n)+1)\Delta_2\) by
\(\lceil\log_2(h+1)\rceil n^2\).
The dense coefficient arrays have exactly \(n\) nonleading slots.
Thus, given the residue factors, batched lifting is quasi-linear in
the dense output size, as in standard fast Hensel lifting
\cite{ModernComputerAlgebra}.

For prescribed cyclotomic labels, the leader formula in
Proposition~\ref{pre:med} computes the least element of
\(c_{n/q}(\gamma)\) from \(\omega_C\) modular iterates modulo
\(\omega_Cn/m_C\). Given the representatives and the required
residues of \(q\), all leaders cost
\[
 O\bigl(WM_{\Z}(L_n)\bigr).
\]
If increasing primitive leader lists at every order \(d\mid n\)
are supplied, scaling by \(n/d\) and merging the lists costs
\[
 O\bigl(h[M_{\Z}(L_n)+L_n\log(\tau(n)+1)]\bigr),
\]
where \(\tau(n)\) is the number of divisors of \(n\).
The merge is needed only for globally ordered output. Constructing
the representative lists and matching unordered factors to specified
roots have separate costs. The required modular powers of \(q=p^r\)
are included in \(\mathfrak A(n,q)\).

For \(N=p^am\) over \(\Z/p^e\Z\), with \(p\nmid m\),
Proposition~\ref{coef:repeated-blocks} constructs the canonical local
blocks by multiplying every simple-root stride by \(p^a\).
Consequently their expected construction cost is
\[
 \mathfrak A(m,p)+
 \widetilde O\bigl(V_m(\log p)^2+W_me\log p+
                              W_m\log(N+1)\bigr).
\]

For the family in Proposition~\ref{ex:powers-two},
\(\mathscr D=\{1,2,8\}\), \(U=6\), \(W=4k-5\), and \(h=2k-1\).
Every displayed coefficient is nonzero. With fixed-width coefficients
and ordinary binary exponents, the explicit sparse output has size
\(\Theta(ke+k^2)\) bits: the exponent lengths summed over the layers
are \(\Theta(k^2)\). Lifting the two fixed quadratics and emitting
the output takes \(\widetilde O(ke+k^2)\) bit operations, while dense
coefficient output requires \(\Omega(2^ke)\) bits.

\section*{Acknowledgements}

The authors acknowledge the assistance of GPT in refining the language
and presentation of the manuscript.

\end{document}